\title{Fine-Grained Complexity of Computing \\ Degree-Constrained Spanning Trees}
\titlerunning{Degree-Constrained Spanning Trees}  
\author{Narek Bojikian}{Humboldt Universität zu Berlin, Germany}{bojikian@hu-berlin.de}{https://orcid.org/0000-0003-1072-4873}{}
\author{Alexander Firbas}{TU Wien, Austria}{alexander.firbas@tuwien.ac.at}{https://orcid.org/0009-0007-2049-2144}{FWF Project 10.55776/Y1329 and WWTF Project 10.47379/ICT22029}
\author{Robert Ganian}{TU Wien, Austria}{rganian@gmail.com}{https://orcid.org/0000-0002-7762-8045}{FWF Project 10.55776/Y1329 and WWTF Project 10.47379/ICT22029}
\author{Hung P. Hoang}{TU Wien, Austria}{phoang@ac.tuwien.ac.at}{https://orcid.org/0000-0001-7883-4134}{FWF Projects 10.55776/Y1329 and ESP1136425}
\author{Krisztina Szil\'{a}gyi}{Czech Technical University, Czechia}{krisztina.szilagyi@fit.cvut.cz}{https://orcid.org/0000-0003-3570-0528}{Supported under the project Robotics and advanced industrial production (reg. no. CZ.02.01.01/00/22\_008/0004590) and CTU Global Postdoc Fellowship Program.}
\authorrunning{N. Bojikian, A. Firbas, R. Ganian, H. P. Hoang, K. Szil\'{a}gyi}
\keywords{Parameterized complexity, Structural parameters, Clique-width, fine-grained complexity, Spanning tree}  
\newtheorem{longlemma}[theorem]{Lemma}
\newtheorem{longcorollary}[theorem]{Corollary}
\newtheorem{longclaim}[theorem]{Claim}
\newtheorem{longdefinition}[theorem]{Definition}
\newtheorem{longalgorithm}[theorem]{Algorithm}
\newcolumntype{h}{>{\hsize=.6\hsize}X}  \newcolumntype{L}[1]{>{\raggedright\arraybackslash}m{#1}}
\newcolumntype{C}[1]{>{\centering\arraybackslash}m{#1}}
\newcommand{\bigoh}{\mathcal{O}}
\newcommand{\idv}{\mathbf{e}}
\newcommand{\ord}{\mathscr{O}}
\renewcommand{\deg}{\ensuremath{\mathrm{deg}}}
\newtheorem{procedure}{Procedure}
\newtheorem{algorithm}{Algorithm}
\newtheorem*{definition*}{Definition}
\newcommand\xqed{   \leavevmode\unskip\penalty9999 \hbox{}\nobreak\hfill
  \quad\hbox{\textcolor{black}{$\blacktriangle$}}}
\newcommand{\para}[1]{\subparagraph*{#1}}
 \def\hmargin{0.0em}
\newcommand{\problembox}[3]{
  \par\addvspace{\abovedisplayskip}
  \noindent
  \makebox[\textwidth]{     \fboxsep=0.5em
    \fbox{       \begin{minipage}{\dimexpr\textwidth-2\fboxsep-2\fboxrule-\hmargin\relax}
        \textsc{#1}\par
        \smallskip
        \begin{tabularx}{\textwidth}{@{}l@{\hspace{0.5em}}X@{}}
          \textbf{Input:} & #2 \\
          \textbf{Question:} & #3
        \end{tabularx}
        \vspace{-0.33em}
      \end{minipage}     }   }
  \par\addvspace{\belowdisplayskip}
}
\newcommand{\sdmstp}{\textup{\textsc{Specified Degree MST}}\xspace}
\newcommand{\setmstp}{\textup{\textsc{Set of Degrees MST}}\xspace}
\newcommand{\dsets}{\ensuremath{D}}
\newcommand{\dvecs}{\ensuremath{\mathcal{D}}}
\newcommand{\weightf}{\ensuremath{w}}
\DeclareMathOperator{\ctw}{ctw}
\DeclareMathOperator{\pw}{pw}
\DeclareMathOperator{\tw}{tw}
\DeclareMathOperator{\cw}{cw}
\DeclareMathOperator{\repr}{patt}
\DeclareMathOperator{\leaf}{L}
\newcommand{\sleaf}{\mathbf{L}}
\newcommand{\sinner}{\mathbf{I}}
\DeclareMathOperator{\iso}{z}
\DeclareMathOperator{\mset}{M}
\DeclareMathOperator{\elpat}{A}
\DeclareMathOperator{\eltarget}{t}
\DeclareMathOperator{\allpairs}{\alpha_{\infty}}
 \newcommand{\costbd}{\ensuremath{B}}
 \newcommand{\degreq}{\ensuremath{D}}
 \newcommand{\pwinst}{\mathcal{I}_{\mu,\nu}}
\newcommand{\ctwinst}{\mathcal{I}_{M}}
 \newcommand{\dtree}{\ensuremath{\mathcal{T}}}
\newcommand{\bags}{\ensuremath{\mathcal{B}}}
\newcommand{\nodes}{\ensuremath{\mathcal{V}}}
\newcommand{\droot}{\ensuremath{r}}
 \DeclareMathOperator{\lab}{lab}
\newcommand{\syntaxtree}{\mathcal{T}}
\newcommand{\rrule}{\pi}
\newcommand{\ostar}{\oh^*}
\newcommand{\oh}{\mathcal{O}}
 \newcommand{\LL}{\ensuremath{\boldsymbol{L}}}
\newcommand{\RR}{\ensuremath{\boldsymbol{R}}}
\newcommand{\noc}{\ensuremath{\uparrow}}
\newcommand{\indices}{\ensuremath{\mathcal{I}}}
 \newcommand{\maxreq}{\ensuremath{r}}  \newcommand{\cspdom}{\ensuremath{\beta}}
\newcommand{\ctbags}[1][{\ell}]{\ensuremath{{\mathcal{B}_{#1}}}}
\newcommand{\nctbags}[1][{\ell}]{\ensuremath{{\mathcal{B}_{#1}^{*}}}}
 \definecolor{defblue}{rgb}{0, 0.4, 0.796}
\newcommand{\defi}[1]{\textcolor{defblue}{\emph{#1}}}
\newcommand{\ncuts}{\ensuremath{C}}
\newcommand{\zfr}{\ensuremath{\mathbb{Z}_4}}
\newcommand{\cnc}{Cut\&Count\xspace}
\newcommand{\pos}{positive\xspace}
\newcommand{\sumstack}[1]{{\sum\limits_{\substack{#1}}}}
\begin{document}

\maketitle

\begin{abstract}
We investigate the computation of minimum-cost spanning trees satisfying prescribed vertex degree constraints: Given a graph $G$ and a constraint function $D$, we ask for a (minimum-cost) spanning tree $T$ such that for each vertex $v$, $T$ achieves a degree specified by $D(v)$. Specifically, we consider three kinds of constraint functions ordered by their generality---$D$ may either assign to each vertex a list of admissible degrees, an upper bound on the degree, or a specific degree. Using a combination of novel techniques and state-of-the-art machinery, we obtain an almost-complete overview of the fine-grained complexity of these problems taking into account the most classical structural graph parameters of the input graph $G$. In particular, we present SETH-tight upper and lower bounds for these problems when parameterized by pathwidth and cutwidth, an ETH-tight algorithm parameterized by clique-width, and a nearly SETH-tight algorithm parameterized by treewidth.

In order to obtain our upper bound for clique-width, we develop a novel technique of double representation through ``requirement shifting''.
Using this technique, we also obtain an ETH-tight single-exponential \XP\ algorithm for the \textsc{Exact Leaf Spanning Tree} problem parameterized by clique-width, which settles the final remaining open case for clique-width from the classical Cut and Count of Cygan et al.\ [FOCS 2011, TALG 2022]. This shows the versatility of our technique and its potential applicability to other problems as well. 
Additionally, in order to establish our lower and upper bounds we introduce a number of tools which may be of independent interest, including lazy coloring and ``asymptotic'' SETH-based reductions for structural parameters.
\end{abstract}

\newpage
\setcounter{page}{1}
 \section{Introduction}\label{sec:intro}
\noindent
Algorithms for computing minimum-cost spanning trees of graphs are, in many ways, cornerstones of computer science: classical results such as the algorithms of Prim or Kruskal are often among the first graph algorithms presented to undergraduate students. And yet, in many situations it is necessary to compute not only a spanning tree, but one that satisfies additional constraints. In this article, we investigate the computation of (minimum-cost) spanning trees that satisfy prescribed constraints on the degrees of the tree's vertices. 

More precisely, given a graph $G$ with polynomially-bounded edge weights and a constraint function $D \colon V(G)\rightarrow 2^{\mathbb{N}}$, our aim is to determine whether there is a spanning tree $T$ of $G$ such that for each $v\in V(G)$, $\deg_T(v)\in D(v)$---and if the answer is positive, output one of minimum cost. Depending on the form of the constraint function, we distinguish between the following three computational problems:

\begin{enumerate}
\item in \textsc{Set of Degrees Minimum Spanning Tree (MST)}, $D$ maps each vertex to a set of integers;
\item in \textsc{Bounded Degree MST}, all sets in the image of $D$ are of the form $\{1,\dots,d\}$ for some $d\in\mathbb{N}$;
\item in \textsc{Specified Degree MST}, the image of $D$ is a set of singletons.
\end{enumerate}

While the above problems are in fact ordered from most to least general\footnote{Every instance of \textsc{Specified Degree MST} can either be trivially rejected or stated as an equivalent instance of \textsc{Bounded Degree MST} on the same input graph $G$; see Section~\ref{sec:prelims}.}, the considered types of constraints can each be seen as natural and fundamental in their own right. Indeed, the constraints in \textsc{Bounded Degree MST} appear as a crucial base case in the \emph{Thin Tree Conjecture}~\cite{Goddyn,DBLP:conf/focs/KleinO23} and have been extensively studied in the approximation setting where one is allowed to violate the constraints by an additive constant~\cite{FurerR92,DBLP:conf/focs/Goemans06,SinghL15}. At the same time, \textsc{Specified Degree MST} and \textsc{Set of Degrees MST} form natural counterparts to the classical \textsc{$f$-Factor} and \textsc{General $f$-Factor} problems~\cite{tutte1952factors,Cornuejols88,GabowS21,GabowS21a}, respectively. The \NP-hard connected variants of the latter two problems have been studied in the literature as well~\cite{ellingham2002connected,CornelissenHMNR18,GanianNORR19}.
It is worth noting that both \textsc{Set of Degrees MST} and \textsc{Specified Degree MST} have also been studied in the literature under the same name ``\textsc{Degree Constrained Spanning Tree}''~\cite{FominGLS12,Nederlof13} and were shown to admit $2^{\bigoh(n)}$ algorithms.

While classical minimum-cost spanning trees can be computed  on general graphs in almost linear time~\cite{Chazelle00a,Chazelle00}, 
we cannot hope to achieve such an outcome for even the easiest of the three problems considered here---indeed, \textsc{Specified Degree MST} is \NP-hard as it admits a straightforward reduction from \textsc{Hamiltonian Path}. Naturally, this only rules out efficient algorithms on general graphs; in reality, the actual running time bounds will necessarily depend on the structural properties of the input graphs---for instance, all three degree-constrained MST problems admit trivial linear-time algorithms on trees. In this article, we investigate the fine-grained running time bounds for solving these problems taking into account the structure of the input graphs and obtain a surprisingly tight classification under the \emph{Exponential Time Hypothesis} (ETH)~\cite{ImpagliazzoPZ01} along with its \emph{strong} variant (SETH)~\cite{ImpagliazzoP01}.

In fact, we are not the first to investigate tight running time bounds for a connectivity problem with respect to structural parameters. The \cnc technique introduced by Cygan et al.~\cite{DBLP:conf/focs/CyganNPPRW11, CyganNPPRW22} was designed specifically to target connectivity problems under structural parameterizations, and in particular treewidth.
For many of the studied problems, they also provided SETH-based lower bounds that rule out any significant improvements to the base of the exponent.
In particular, they obtained a SETH-tight $\bigoh^*(4^{\tw(G)})$ algorithm for the \textsc{Exact Leaf Spanning Tree} problem---a variant of the spanning tree problem, where one requires the solution tree to have a given number $k$ of degree one vertices---where $\tw(G)$ is the treewidth of the input graph and $\bigoh^*$ suppresses constant as well as polynomial factors of the input.

After the aforementioned work, there has been a series of papers studying the fine-grained complexity of different connectivity problems (using \cnc and other techniques) under structural parameters~\cite{BodlaenderCKN15, DBLP:conf/stacs/BojikianCHK23, DBLP:conf/soda/CurticapeanLN18, HegerfeldK20, DBLP:conf/wg/HegerfeldK23}.
In particular, as mentioned in \cite{bojikianK25}, Hegerfeld and Kratsch~\cite{HegerfeldK23} and Bojikian and Kratsch~\cite{DBLP:conf/icalp/BojikianK24, DBLP:conf/iwpec/BojikianK25, bojikianK25} provided SETH-tight algorithms for all but one of the connectivity problems, whose SETH-tight complexity relative to treewidth was studied in~\cite{DBLP:conf/focs/CyganNPPRW11, CyganNPPRW22}, when parameterized by clique-width, which is a well-established dense counterpart to treewidth. The single problem which was left open in that series of works was \textsc{Exact Leaf Spanning Tree}.

Aside from the above line of research, the complexity of constrained variants of the \textsc{(Minimum) Spanning Tree} problem was investigated in other related works as well. For example, Gargano and Rescigno~\cite{GarganoR25} studied the variant of computing a spanning tree with minimum number of branching vertices parameterized by modular-width. Moreover, Ganian et al.~\cite{DBLP:conf/stacs/GanianHO11} studied the directed version of degree constrained spanning tree, providing an XP algorithm for this problem when parameterized by clique-width.

\smallskip
\noindent
\textbf{Contributions.}\quad
The techniques used to obtain the above results---or any other known results on computing constrained spanning trees (see, e.g.,~\cite{viana2021unifying})---do not seem to translate into our setting. In fact, as our first contribution (Theorem~\ref{thm:tw_lb}) we rule out an algorithm with single exponential running time with constant dependence on the parameter, in contrast to the treewidth-based bounds discussed above.
More precisely, we establish an SETH-based lower bound that essentially excludes any algorithm solving \textsc{Specified Degree MST} (the easiest of the three problems considered here) in time faster than $\bigoh^*((2r)^{\pw(G)})$, where $r$ is the maximum degree requirement in the image of $D$, even if $G$ is edge-unweighted.
We complement this lower bound with a tight algorithmic upper bound that holds for the most general of the three kinds of degree constraints and for polynomially bounded weight functions. In particular, we obtain a randomized algorithm solving \textsc{Set of Degrees MST} in time $\bigoh^*((2r)^{\pw(G)})$ (Theorem~\ref{thm:pw-ub}) and show that this algorithm can also be translated to treewidth, albeit with an ``almost optimal'' running time of $\bigoh^*((2r+2)^{\tw(G)})$ (Theorem~\ref{thm:tw-ub}).

The above results raise the question of whether one can also solve these problems on more general inputs, notably unweighted dense graphs.\footnote{Weights in simple dense graphs such as cliques can be used to model instances on arbitrary graphs.} More concretely, can we efficiently compute degree-constrained spanning trees by exploiting the classical graph parameter \emph{clique-width} ($\cw$)? 
As our next contribution, in Theorem~\ref{thm:cw-ub} we obtain a highly non-trivial deterministic $n^{\bigoh(\cw)}$-time algorithm solving \textsc{Set of Degrees MST} on unweighted $n$-vertex graphs\footnote{As is common in related fine-grained upper bounds, we assume that a corresponding decomposition is provided as a witness~\cite{FominGLS14,BergougnouxKK20}.}---a result which is tight under the ETH~\cite{FominGLSZ19} and generalizes the known algorithm for \textsc{Hamiltonian Path} under the same parameterization~\cite{BergougnouxKK20}.

Under current complexity assumptions, neither \cref{thm:pw-ub} nor \cref{thm:cw-ub} can be improved to yield fixed-parameter algorithms for our problems of interest under the considered structural parameterizations---indeed, any such algorithm would necessarily violate the aforementioned lower bounds for instances with sufficiently large $r$.
However, the question still remains whether one can obtain (tight) fixed-parameter algorithms under different, more restrictive parameterizations. As our third and final set of contributions, we obtain tight ``fixed-parameter'' running time bounds with respect to the graph parameter \emph{cutwidth} $\ctw$~\cite{chung1985cutwidth,ThilikosSB05}. In particular, we design a randomized algorithm solving \textsc{Set of Degrees MST} in time $\bigoh^*(3^{\ctw(G)})$ (Theorem~\ref{thm:ctw-ub}) and a complementary SETH-based lower bound which excludes running times of the form, e.g., $\bigoh^*(2.999^{\ctw(G)})$, even for \sdmstp in the edge-unweighted case (Theorem~\ref{thm:ctw_lb}).

We summarize all these results in Table~\ref{tbl:results}. We remark that all our upper bounds work for the most general variant, namely \setmstp, while all our lower bounds are obtained for the most restrictive variant, \sdmstp, implying that, surprisingly, all three problems have the same complexity for each of the considered parameters.

Finally, we show that our technique for the clique-width upper bound can be adapted to obtain an ETH-tight single-exponential XP algorithm for the \textsc{Exact Leaf Spanning Tree} problem parameterized by clique-width. This settles the last remaining fine-grained question arising from the classical work of Cygan et al.~\cite{DBLP:conf/focs/CyganNPPRW11, CyganNPPRW22}, as mentioned in the very recent work of Bojikian and Kratsch~\cite{bojikianK25}.

\begin{table}
\centering
    \begin{tabular}{c|c|c}
        \multirow{2}{*}{Parameter} & \textsc{Specified Degree MST} & \setmstp \\
        & Lower bound & Upper bound \\
        \hline
        Clique-width &  $n^{o(k)}$~\cite{FominGLSZ19} & $n^{\oh(k)}$ (Thm.~\ref{thm:cw-ub}) \\        
        Treewidth & $\ostar\big((2r - \varepsilon)^k\big)$ (Thm.~\ref{thm:tw_lb}) & $\ostar\big((2r+2)^k\big)$ (Thm.~\ref{thm:tw-ub}) \\
        Pathwidth & $\ostar\big((2r - \varepsilon)^k\big)$ (Thm.~\ref{thm:tw_lb}) & $\ostar\big((2r)^k\big)$ (Thm.~\ref{thm:pw-ub}) \\
        Cutwidth & $\ostar\big((3 - \varepsilon)^k\big)$ (Thm.~\ref{thm:ctw_lb}) & $\ostar\big(3^k\big)$ (Thm.~\ref{thm:ctw-ub}) \\
    \end{tabular}
    \smallskip
 \caption{Overview of our results. Here, $k$ always denotes the respective graph parameter, $r$ is the maximum degree requirement in the image of $D$ and $n$ is the order of the input graph. Clique-width lower bound~\cite{FominGLSZ19} is ETH-based, while all the lower bounds provided in this paper are based on SETH. All lower bounds and the clique-width algorithm are obtained for the edge-unweighted variants.}
\label{tbl:results}
\end{table}

\smallskip
\noindent
\textbf{Paper Organization.}\quad We provide a high-level overview of the proof techniques and technical contributions in the upcoming Section~\ref{sec:techniques}. After introducing the necessary preliminaries in the subsequent Section~\ref{sec:prelims}, we begin our exposition by presenting our algorithm for dense graph parameters in Section~\ref{sec:cw}. Next, in Section~\ref{sec:ubtechniques} we introduce the machinery that will be employed for the remaining upper bounds, including \cnc, the Isolation Lemma and our problem-specific modifications thereof.
We then proceed with the algorithms and lower bounds for treewidth (Section~\ref{sec:tw}), pathwidth (Section~\ref{sec:pw}) and cutwidth (Section~\ref{sec:ctw}).

\section{Technical Overview}\label{sec:techniques}
\subsection{The Clique-Width Algorithms}
\subparagraph*{High-level Summary.}
To obtain our algorithm for \textsc{Set of Degrees MST} on (unweighted) graphs parameterized by clique-width, our high-level aim is to employ the typical leaf-to-root dynamic programming approach used in almost all applications of structural graph parameters. There, one intuitively seeks to identify and maintain a correspondence between a suitably defined notion of a ``partial solution'' in the graph constructed so far---the processed graph---and a (typically combinatorial and compact) representation of that partial solution that is stored as an entry in our records. 
However, attempting to follow this approach directly for any of our variants over a clique-width decompositions fails, as the number of natural ``representations''  of partial solutions is not bounded by $n^{f(\cw)}$ (for any $f$). 
In order to circumvent this issue and obtain the sought-after single-exponential \XP\ algorithm, we will ``decompose'' these initial representations into collections of more concise representations for partial solutions. 

Towards this, we use a two-step representation procedure which characterizes the properties of the processed graph with a set of ``nice'' footprints that together are compatible with the same set of partial solutions in the rest of the graph. Note that the representation via footprints that we obtain is, by its nature, \emph{existential}: instead of representing each partial solution \emph{Sol} in the processed graph by a single footprint (as is standard), we use a set of nice footprints. Crucially, we provide a guarantee that for every full solution extending \emph{Sol} there exists at least one footprint that can be extended to a (potentially very different) full solution.

\subparagraph*{Setup.}
Our main contribution is a ``requirement shifting'' technique, that allows us, in the first place, to build a natural representation of a partial solution, but also more crucially, to reduce each such natural representation into a set of ``nice'' representations.
Concretely, at each step of a clique-decomposition, one builds a partial labeled graph $H$ of the final graph $G$ with the vertices of $H$ partitioned into $k$ label classes. 
 In a hypothetical solution, each vertex of $H$ will still receive a certain number of additional edges that have not been created yet. We call this number the (\emph{remaining}) \emph{requirement} of the vertex, whereas the intersection of the hypothetical solution with $H$ forms a spanning forest in that graph.

In a clique-decomposition, all vertices of the same label in $H$ will  act as twins (i.e., receive the same added neighbors) in future steps of the decomposition.
It is not difficult to show that one can accumulate 
 the requirements of all vertices of the same label in each connected component of a partial solution to a single vertex.
To see why this is valid, note that if the current partial solution is to be extended to a full solution in the final graph, then we can reroute the edges attached to vertices of the same label in a connected component to a single vertex representing all these vertices, without changing the degree requirements, the connectivity or the acyclicity of the rest of the graph. We refer to Figure~\ref{fig:patsexample}(a)-(b) for an illustration of this idea.

Note that a connected component of zero requirement cannot be connected to any other component in the rest of the graph. Therefore, such a component either forms a final solution (if it spans all vertices of the graph), or the considered partial solution cannot be extended to a full solution. Thus, we can assume that there is at most one connected component with requirement zero.

\begin{figure}[h!]
    \begin{subfigure}[t]{.3\textwidth}
        \centering
        \begin{minipage}[c][50pt]{0.99\textwidth}
            \centering
        \includegraphics[page=1]{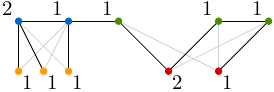}
        \end{minipage}
        \title{(a)}
    \end{subfigure}
    \hfill
    \begin{subfigure}[t]{.3\textwidth}
        \centering
        \begin{minipage}[c][50pt]{0.99\textwidth}
            \centering
        \includegraphics[page=2]{graphics/pats.pdf}
        \end{minipage}
        \title{(b)}
    \end{subfigure}   
    \hfill
    \begin{subfigure}[t]{.26\textwidth}
        \centering
        \begin{minipage}[c][50pt]{0.99\textwidth}
            \centering
        \[
        \left\langle
        \begin{pmatrix}
        3\\ 1\\
        \end{pmatrix},
        \begin{pmatrix}
        0\\ 2\\
        \end{pmatrix}
        \right\rangle
        \]
        \end{minipage}
        \title{(c)}
    \end{subfigure}
    \caption{(a) An example of a partial solution (top row), and its completion into a solution in the full graph. We depict the first label in blue, and the second in green. (b) The partial solution that results from accumulating the requirements of each label in each connected component to a single vertex. (c) We depict the pattern corresponding to the partial solution in (a).}
    \label{fig:patsexample} 
\end{figure}

Therefore, a natural representation $R$ of a partial solution $F$ in $H$ can be defined as a multiset of vectors of length $k$, where for each connected component $C$ of $F$ we add a vector to $R$ with the $i$th coordinate of this vector is defined by the total remaining degree requirement of all vertices with label $i$ in that component (see Figure~\ref{fig:patsexample}(c)). We call this representation a \defi{pattern} for $F$. Using standard arguments, one can show that this builds a correct representation of the family of partial solutions, and that can be extended correctly along the decomposition, counting the total number of solutions in the graph. This basic representation via patterns forms the foundation for our arguments.
However, the total number of such patterns is not bounded by $n^{f(k)}$ for any function $f$, as each coordinate of each vector can take values up to $n$, and a dynamic program directly building on these would need to store multisets of up to $n$ such vectors.

\subparagraph*{The Shifting Technique.}
In order to overcome this problem, we introduce a second layer of representation, that serves as the baseline for our technique.
The end goal is to reduce each pattern to a compact family of ``nice'' patterns that we use to index our dynamic programming tables. Formally, we define nice patterns as follows:
\begin{definition*}
We call a pattern $A$ \defi{nice} if $A$ contains at most one $\mathbf{0}$ vector and for each coordinate $i$, $A$ contains at most one non-unit vector $v$ such that $v_i\geq 1$.
  \end{definition*}

In order to achieve this, we show that each pattern $R$ that is not nice can be reduced into two new ``nicer'' patterns $R_1$, $R_2$ (i.e., each of $R_1$, $R_2$ has strictly fewer non-zero entries in non-unit vectors than $R$).
Crucially, $R$ behaves as an ``OR'' of $R_1$ and $R_2$. 
We remark that this existence-preserving guarantee differs from the parity-preserving representations typically employed in conjunction with the \emph{\cnc} technique---in fact, our approach yields deterministic algorithms parameterized by clique-width.

In particular, we show on one hand that both $R_1$ and $R_2$ are weaker than $R$ in the sense that each valid extension of either of $R_1$, $R_2$ is also a valid extension for $R$. On the other hand, we also prove that $R_1$ and $R_2$ together are exactly as strong as $R$.
      In order to build $R_1$ and $R_2$ from $R$, we branch over the ways in which two components of a partial solution are connected to each other in the final graph, which allows to shift (the rest of) the degree requirements between these two components. This allows us to accumulate requirements in fewer ``big'' components, resulting in nice patterns.

It is not hard to see that the total number of nice patterns is bounded by $n^{\oh(k)}$ which matches our desired running time. However, since creating a nice pattern requires us to apply the shifting operation exhaustively, a brute-force recursive application of shifting 
 could lead to an exponential blow-up in the number of patterns before they become nice. In order to overcome this problem, we apply the shifting operation in a carefully structured way which guarantees that the size of the intermediate pattern families never exceeds $n^{\oh(k)}$---see Figure~\ref{fig:smartshift} for an overview and Subsection~\ref{sub:cw-smartshiftingetc} for details.

\begin{figure}[ht!]
    \centering
\begin{tikzpicture}[scale=1]
    \pic{flowchart};
\end{tikzpicture}
\caption{
    A high-level overview of the procedure of building a nice-pattern representation from a given pattern. The procedure $\operatorname{ReduceToNice}$ takes as input a single pattern and outputs an equivalent family of nice patterns. Crucially, it fixes $k$ special vectors (one per coordinate) that will receive all the shifted requirements. It makes a single call to $\operatorname{ProcessBig}$ that makes sure that the supports of these special vectors form a partition of all labels. Both $\operatorname{ReduceToNice}$ and $\operatorname{ProcessBig}$ make calls to $\operatorname{ReduceVector}$, turning a single vector nice. Finally, $\operatorname{ReduceVector}$ makes calls to $\operatorname{ShiftVector}$ that branches over the ways the vector will connect to the rest of the solution, and shifts the rest of the requirements accordingly, replacing this vector with a set of at most $k$ ``nicer'' vectors. We show that the output of the last procedure (and hence, all of them) is equivalent to exhaustive application of the two shifting operations $\pi_1$ and $\pi_2$ that branch on a single coordinate, which correspond to the patterns $R_1$ and $R_2$ described above.}
    \label{fig:smartshift}
\end{figure}

A second challenge the algorithm needs to overcome is that when one would need to add edges inside $H$, it is not immediately obvious how this changes our patterns. We describe how this challenge can be overcome in the next paragraph which details the algorithm's operation, but to 
 facilitate the presentation of our ideas we design our dynamic programming algorithm along the closely related notion of NLC-width and NLC-decompositions. This is without loss of generality, as a clique-decomposition can be transformed into an NLC-decomposition of the same width in polynomial time~\cite{johansson1998clique}.
  The advantage of NLC-width here is that it avoids adding edges into the graph processed so far, which is important for dealing with the opaque representations we need to use for the algorithm.

\subparagraph*{The Algorithm.} We process the NLC-decomposition in a bottom-up manner, distinguishing two types of nodes: introduce vertex nodes, and join nodes. The latter performs a disjoint union of two previously constructed graphs, adds edges between these two graphs according to a set of specified label pairs, and relabels vertices based on a specified relabeling function (see Section~\ref{sec:prelims} for details). 

For an introduce vertex node, the created graph contains a single partial solution, namely the forest consisting of the single introduced vertex. Here, instead of keeping a single entry in our table for this partial solution, at this stage we already account for all possible target requirements of the created vertex and keep a separate entry for each option. 

For a join node, on a high level, we start by combining each pair of nice patterns at the two children nodes. For each such pair, our goal is to iterate over all possible ways of selecting the edges between them to include in the solution. This will result in a new family of partial solutions; after that, we apply the corresponding relabeling operations on the resulting family. We will then represent the resulting partial solutions using a family of nice patterns via the process described in Figure~\ref{fig:smartshift}, thus completing the computation of our records for the node.
 
The most challenging part of the algorithm is to handle the aforementioned edge selection step. Note that a join operation could add $\Omega(n^2)$ edges between a pair of partial solutions, meaning that a direct approach would iterate over an exponential number of new partial solutions. To overcome this challenge, we first observe that adding an edge between two ``unit-requirements'' creates a connected component of zero-requirement, therefore we restrict to choosing at most one such edge.
Using this, we can show that the exponentially-many ways of adding edges collapse into $n^{\oh(k)}$ possible resulting patterns. 
 
In particular, we distinguish three types of edges: the (at most one) edge between two unit components, edges between two big components, and edges between a big component and a unit component. 
 The algorithm first exhaustively branches over the first two types of edge. Second, we branch to determine the number of edges of the third type incident to each vertex of a big component, and the number of unit components of each label incident to these ``type-3'' edges. Note that each 
 type-3 edge ``consumes'' the requirement of this unit component, whereas the exact choice of which unit components are connected to which big components does not affect the resulting representation.
 Therefore, for each such choice, it suffices to verify the validity of the choice. We show that this final verification can be done via a reduction to the well-known \textsc{Bipartite $b$-Matching} problem, which is solvable in polynomial time~\cite{cunningham1998combinatorial}.
  This allows us to establish our first main result:

\begin{restatable}{theorem}{cwub}
 \label{thm:cw-ub}
There exists an algorithm that, given an $n$-vertex instance $(G, D)$ of unweighted \textup{\textsc{Set of Degrees MST}} and an NLC-decomposition of $G$ using $k$ labels, solves the problem in time $n^{\bigoh(k)}$.
 \end{restatable}

\subparagraph*{Exact Leaf Spanning Tree.}
Before we proceed with the remaining results for degree-constrained MST problems, we show that the shifting technique developed for Theorem~\ref{thm:cw-ub} can also be adapted to obtain an ETH-tight algorithm for the \textsc{Exact Leaf Spanning Tree} problem parameterized by clique-width. 
Again, a partial solution of this problem is a spanning forest, where we mainly care about the number of leaves of each label in a connected component of the forest, and whether at least one inner vertex of each label exists in that component. Therefore, a natural definition of a \defi{pattern} representing a partial solution consists of a vector that counts the number of isolated vertices of each label, and a multiset that contains an element for each remaining component. Each such element is a pair consisting of a vector counting the number of leaves of each label, and a set of labels that contain the labels of the inner vertices in that component.

In contrast to the previous section, instead of shifting requirements, intuitively we shift leaves (and inner vertices) between connected components of a partial solution. This requires us to allow, among others, the construction of connected components consisting of a single inner vertex. We achieve this by adding an additional ``virtual'' label $0$, that will allow us to build artificial leaves and inner vertices when needed, whose count will not be considered in the final solution. After that we can apply a similar shifting technique as before, where we guess which leaves or inner vertices connect two components in the final solution, and shift all the other leaves and inner vertices between them accordingly.

The same notion of compatibility and equivalence of partial solutions as used for \textsc{Set of Degrees MST} carries over to this setting as well. The main difficulty arises when we attempt to resolve the bottleneck of adding edges between two partial solutions. In particular, the previous solution via using $b$-matching will not apply to this setting---indeed, 
 it is possible to create an exponential number of valid non-collapsing representations by adding different sets of join edges to a pair of partial solutions. This means that iterating over all possible representations instead of sets of edges will still not suffice to achieve our targeted running time bound. Intuitively, this issue arises because for \textsc{Exact Leaf Spanning Tree}, the result of adding an edge between a unit component and a big component depends on the label of the isolated vertex, whereas for \textsc{Set of Degrees MST} this was not the case.
 
We overcome this problem by introducing a more fine-grained notion of equivalence, where we require two equivalent partial solutions to not only complete the same set of partial solutions in the rest of the graph (as was the case for \textsc{Set of Degrees MST}), but additionally that the number of added edges between each pair of labels is the same.
  We note that, while this notion of equivalence is stronger than the one defined previously, the shifting operation employed 
 there would also preserve this stronger notion of equivalence. 
Crucially, with this stronger notion of equivalence we can avoid iterating over all subsets of edges, and can instead add the edges one by one and reduce the resulting partial solutions after adding each edge. While more technically involved, this approach avoids the aforementioned exponential blow-up and yields an algorithm with the desired running time.
       
\begin{restatable}{theorem}{elst-cwub}
 \label{thm:elst-cw-ub}
There exists an algorithm that, given an $n$-vertex graph $G$, an integer $k$ and an NLC-decomposition of $G$ using $k$ labels, decides whether $G$ contains a spanning tree of exactly $k$ leaves in time $n^{\bigoh(k)}$.
 \end{restatable}

\subsection{Cut \& Count}
To obtain our algorithms for pathwidth, treewidth and cutwidth, we make use of the aforementioned \cnc technique, that was introduced to deal with connectivity constraints in parameterized algorithms~\cite{CyganNPPRW22}. This technique is a two step randomized approach: First, using the isolation lemma, one can reduce a decision version of a connectivity problem to the counting (modulo 2) version. After that, instead of counting  the solutions, the technique counts the so called ``consistent cuts'' of all relaxed (non-connected) solutions. This is equivalent to counting the number of ways to partition the vertices of a relaxed solution into two sides of a cut, such that each connected component is fully contained in one side of the cut, and by fixing the side of the cut of a single solution vertex. As a result, a relaxed solution admits an odd number of cuts if and only if it is connected. Therefore, non-connected solutions cancel out in the total count (modulo 2).
Our algorithms combine this technique with a degree-counting argument.

Unfortunately, a direct application of this technique would not be able to break the $(2r+2)^{k}$ barrier even for pathwidth, thus keeping a gap between the upper bound and our best achievable lower bound for each of these parameters.
We close this gap for pathwidth by a novel ``lazy coloring'' technique, where the algorithm decides dynamically when to ``fix'' or ``forget'' the side of a consistent cut a vertex belongs to in a partial solution, depending on the current degree of that vertex in a partial solution (instead of statically fixing the side when the vertex is introduced).

\begin{restatable}{theorem}{pwub}
 \label{thm:pw-ub}
    There exists a Monte Carlo algorithm that, given an instance $(G, w, \dsets, \costbd)$ of the \setmstp problem together with a path decomposition of $G$ of width $\pw$, solves the problem in time $\ostar\big((2r)^{\pw}\big)$.
    The algorithm produces false negatives only, and outputs the right answer with probability at least one half.
 \end{restatable}

In order to get a tight algorithm when parameterized by cutwidth, we use a given linear arrangement of width $\ctw$ to build a path decomposition of the same width, with some additional desired properties. Using a state counting argument based on the AM/GM inequality~\cite{DBLP:journals/tcs/JansenN19}, we show that the previous dynamic programming algorithm for pathwidth, based on our ``lazy coloring'' technique, only defines $3^{\ctw}$ states at each bag of this path decomposition, which yields a SETH-tight single exponential bound for cutwidth as well. 

\begin{restatable}{theorem}{ctwub}
 \label{thm:ctw-ub}
    There exists a Monte Carlo algorithm that, given an instance $(G, w, \dsets, \costbd)$ of the \setmstp problem, together with a linear arrangement of $G$ of width $\ctw$, solves this problem in time $\ostar(3^{\ctw})$. The algorithm produces false negatives only, and outputs the right answer with probability at least one half.
 \end{restatable}

Finally, while we cannot achieve the same improvement for treewidth, obtaining the current $\ostar\big((2r+2)^{k}\big)$ bound already requires non-trivial fast convolution techniques. We follow the approach of van Rooij~\cite{Rooij20} by applying multidimensional fast Fourier transformation. While a trivial non-cyclic application would add a factor $2$ to the base of the running time, we show, following an argument from \cite{Rooij20} for generalized domination problems, that this additional factor can be omitted by the utilization of the so-called ``filters'' that cancel out cyclic dependencies while only adding a polynomial factor to the running time. As a result we obtain the following.

\begin{restatable}{theorem}{twub}
 \label{thm:tw-ub}
    There exists a Monte Carlo algorithm that, given an instance $(G, w, \dsets, \costbd)$ of the \setmstp problem together with a nice tree decomposition of $G$ of width $\tw$, solves the problem in time $\ostar\big((2r + 2)^{\tw}\big)$. 
    The algorithm produces false negatives only and outputs the right answer with probability at least one half.
 \end{restatable}

\subsection{Lower Bounds}

For our tight SETH-based lower bounds for pathwidth and cutwidth, on a high level we follow the approach of Lokshtanov, Marx, and Saurabh~\cite{DBLP:journals/talg/LokshtanovMS18}---but with a twist. 
We start with an overview of this approach, which is essentially a reduction from a constraint satisfiability problem~\cite{MR4121882} to our problem, \textsc{Specified Degree MST}.
For each variable, we construct a sequence of small components, where two consecutive components are separated by a cut-vertex; see \cref{fig:pw-lb-summary} for an illustration for pathwidth.
These components encode the value of the corresponding variable, and we use the specified degree constraint to make sure that they encode the same value.
Then for each constraint of the satisfiability instance, we have a gadget to combine a component from each involved variable to determine if the constraint is satisfied.
Finally, we need to ensure that the solution is a spanning tree.

\begin{figure}[ht!]
    \centering
    \includegraphics[width=.8\textwidth]{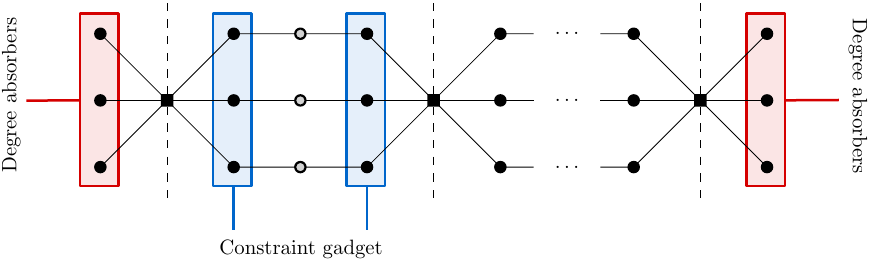}
    \caption{A schematic of the vertex gadget in the lower bound construction for pathwidth. This gadget corresponds to a variable of the satisfiability problem. The cut-vertices (drawn as squares) have the same degree requirement. The value of the variable is roughly represented by how many edges of the solution spanning tree are on the left side of each cut-vertex; if the degree requirements of the gray vertices are one, then this value is consistent among all cut-vertices of the gadget.}
    \label{fig:pw-lb-summary}
\end{figure}

While the description above follows a general template for a SETH-reduction, our reductions also incorporate a new technique.
This technique arises from the difficulty of obtaining a classical ``one-to-one'' reduction to our problem using standard gadgets. 
Intuitively, several constraints imposed by the desired lower bounds and the problem conflict with each other.
For example, due to the specified degree constraint, in a one-to-one reduction, we may want to connect the first and the last components of the variable gadget so that the total degrees are preserved regardless of the value of the variable.
However, it is not obvious how to do so while keeping both the pathwidth of the constructed graph and the maximum degree of the solution small.
To circumvent this issue, we employ Turing reductions instead of the usual one-to-one reductions used for SETH; to the best of our knowledge, this is the first time such reductions are used for structural parameters. 
In the example above, with this approach, instead of connecting the first and the last components of the variable gadget, we can instead have some \emph{degree absorbers} to accommodate the excess degree at each end of the gadget; although we do not know exactly what this excess degree is, we can show that there are only a polynomial number of possibilities across all variable gadgets.
Using this technique, we can prove the desired result for pathwidth.

\begin{restatable}{theorem}{twlb}
 \label{thm:tw_lb}
    For every $\maxreq \geq 3$ and every $\varepsilon\in \mathbb{Q}^+$, unweighted \sdmstp cannot be solved in time $\ostar((2\maxreq-\varepsilon)^{\pw})$ unless the Strong Exponential Time Hypothesis fails.
 \end{restatable}

We note that such a lower bound is highly unlikely to hold for $\maxreq = 2$, since this case is equivalent (up to a polynomial factor in the running time) to the \textsc{Hamiltonian Path} problem, which can be solved in (SETH-tight) running time $\ostar((2+\sqrt 2)^{\pw})$~\cite{CyganKN18}.

There is also another highly non-standard behavior in our reduction specific to cutwidth.
Instead of producing graphs with cutwidth $n+\delta$ where $n$ is the input complexity measure and $\delta$ is some fixed constant, we can only obtain graphs with cutwidth $\omega\cdot n+\delta$, where $\omega>1$. 
Under normal circumstances, this would not allow us to obtain a tight bound under SETH.
However, here we construct a family of reductions that allows us to set $\omega$ arbitrarily close to $1$ and show that this still suffices to achieve the following desired bound.

\begin{restatable}{theorem}{ctwlb}
 \label{thm:ctw_lb}
    Assuming SETH, the unweighted \sdmstp problem cannot be solved in time $\ostar((3-\varepsilon)^{\ctw})$ for any $\varepsilon>0$.
 \end{restatable}

\section{Preliminaries}\label{sec:prelims}
\noindent
We assume basic familiarity with graph terminology~\cite{Diestel}, the Exponential Time Hypothesis (ETH) along with its strong variant (SETH)~\cite{ImpagliazzoPZ01, ImpagliazzoP01} and the $\bigoh^*$ notation which suppresses polynomial factors of the input size. 
For a positive integer $k$, we use $[k]$ to denote the set $\{1,\dots,k\}$ and $[k]_0$ the set $\{0, 1, \dots, k\}$. 
All vectors we consider are indexed by the set $[k]$ for some value of $k$.
We denote the unit vector of the $i$-th dimension with $\mathbf{e}_i$ and the zero vector with $\mathbf{0}$. We denote multisets using angle brackets.

For a function $g:U\rightarrow V$ that is not explicitly defined as a weight function, and for a set $C\subseteq U$, we denote by $g(C)$ the set $\{g(x)  \colon x \in C \}$. If $g$ is explicitly defined as a weight function, where $V$ is some ring, we denote by $g(C)$ the sum $\sum_{x\in C} g(x)$.
 
Let $G = (V,E)$ be a graph.
For a set of edges $F \subset E$, we denote by $G[F]$ the graph $(V',F)$, where $V'$ is the set of endpoints of $F$.
For a subgraph $H$ of $G$ and a vertex $v$ of $G$, we denote by $\deg_H(v)$ the degree of $v$ in $H$.
For convenience, when $H = G[F]$ for a set of edges $F$, we use $\deg_F(v)$ for $\deg_{G[F]}(v)$.
We denote by $\cc(G)$ the set of all connected components of $G$ and by
$N_G(v)$ the neighborhood of $v$ in~$G$. 

For a ground set $U$, a mapping $f:U\rightarrow V$ and a set $S\subseteq U$, we define the restriction $f|_S: S\rightarrow V$ as the restriction of $f$ to $S$.
For some element $u$ (not necessarily in $U$), and a value $v\in V$, we define the mapping $f[u\mapsto v]: U\cup \{u\}\rightarrow V$ as the mapping
\[
f[u\mapsto v](u') = \begin{cases}
v &\colon \text{if } u' = u,\\
f(u') &\colon \text{otherwise}.
\end{cases}
\]

The \textsc{Bipartite $b$-Matching} problem takes as input a bipartite graph $(A\cup B, E)$ together with a vertex capacity function $c:A\cup B\rightarrow \mathbb{N}_0$ (and possibly an edge capacity function $c':E\rightarrow \mathbb{N}_0$), and asks whether there is an assignment $f:E\rightarrow \mathbb{N}_0$ such that for each vertex $v\in A\cup B$, it holds that $c(v)=\sum_{uv\in E}f(uv)$ (and for each $e\in E$, it holds that $f(e)\leq c'(e)$).
In this work, we are interested in the restricted case where $c'(e) = 1$ for each edge $e\in E$. 
It is well-known that this problem can be solved in polynomial time by a straightforward reduction to the \textsc{Maximum Flow} problem. We refer to \cite[Section 5.5]{cunningham1998combinatorial} for a deeper discussion, in particular to Theorem 5.35 of this book.

\smallskip
\noindent
\textbf{Degree-Constrained Spanning Tree Problems.}\quad

We formally define the different variants of the degree-constrained spanning tree problems studied in this work.

\problembox{(Weighted) Set of Degrees MST}{Graph $G = (V,E)$, edge weight $w: E \to \mathbb{R}_{\geq 0}$, function $D:V(G)\rightarrow 2^{\mathbb{N}}$, bound $\costbd \in \mathbb{R}_{\geq 0}$}{Does $G$ admit a spanning tree $T$ of cost at most $\costbd$ such that $\deg_T(v)\in D(v)$ for each $v\in V(G)$?}

\textsc{Bounded Degree MST} and \textsc{Specified Degree MST} are defined analogously, but there the images of $D$ are sets of the form $[i]$ or $\{i\}$ (for some integer $i$), respectively. We use $r$ to denote $\max_{v\in V(G),x\in D(v)}x$.
For the unweighted versions of the problems above, we omit the edge weights $w$ and the cost bound $\costbd$ from the input and hence instances merely consist of a pair $(G,D)$. 
The unweighted versions of all three considered problems are \NP-hard as they admit a straightforward reduction from \textsc{Hamiltonian Path}.
Clearly,  (weighted) \textsc{Set of Degrees MST} generalizes both (weighted) \textsc{Bounded Degree MST} and (weighted) \textsc{Specified Degree MST} by definition. There is also a trivial weight- and graph-preserving reduction from (weighted) \textsc{Specified Degree MST} to (weighted) \textsc{Bounded Degree MST} 
where instances of \textsc{Specified Degree MST} not satisfying $\sum_{v\in V(G)} d(v) = 2n-2$ can be rejected, while an instance that satisfies this equality can be turned into an equivalent instance of \textsc{Bounded Degree MST} by replacing $D(v) = \{i\}$ for each $v\in V(G)$ with $D(v) = [i]$.

\smallskip
\noindent \textbf{Structural Parameters.}\quad
We now define the graph parameters considered in this work. 
\smallskip

\noindent \emph{Treewidth and pathwidth.}\quad 
A \defi{tree decomposition} of a graph $G=(V,E)$ is a pair $(\dtree, \bags)$ where $\dtree$ is a tree and $\bags:V(\dtree)\rightarrow 2^V$ assigns the nodes of $\dtree$ to subsets of $V$ such that
         \begin{itemize}
        \item for each $v\in V$ there exists $x\in V(\dtree)$ with $v \in \bags(x)$,
        \item for each $e\in E$ there exists $x\in V(\dtree)$ with $e\subseteq \bags(x)$,
        \item for each $v\in V$ the set $\{x\in V(\dtree)\colon v \in \bags(x)\}$ induces a connected subgraph of $\dtree$.
    \end{itemize}
         A \defi{path decomposition} is a tree decomposition $(\dtree, \bags)$, where $\dtree$ is a simple path.
    When the decomposition $(\dtree, \bags)$ is clear from context, we denote $B_v = \bags(v)$, and $\nodes = V(\dtree)$.
    Further, we will sometimes denote the decomposition by $\dtree$ and assume that $\bags$ is implicitly given.

    We define the width of a tree decomposition $(\dtree, \bags)$ as  $\tw(\dtree,\bags) = \max_{x\in\nodes} |\bags(x)|-1$. The \defi{treewidth} (\defi{pathwidth}) of a graph $G$, denoted by $\tw(G)$ ($\pw(G)$) is the smallest width of a tree decomposition (path decomposition) of $G$.
    Finally, a path decomposition can be defined by a sequence of sets $B_1, \dots, B_h$ that corresponds to the path decomposition $(P_h, \mathcal{B})$, where $P_h$ is a simple path over the vertices $v_1, \dots, v_h$, and $\mathcal{B}=\{B_i\colon i\in[h]\}$, where $B_i$ is the bag corresponding to the node $v_i$ for $i\in[h]$.

In the following, we define the notion of a nice tree decomposition. Different (close) notions of nice decompositions were defined in the literature, the first of which was defined by Kloks \cite{DBLP:books/sp/Kloks94}. It is well-known that given a tree (or path) decomposition of a graph $G$, one can construct a nice tree (or path, respectively) decomposition of the same width and polynomial size in $|V|$ in polynomial time \cite{DBLP:books/sp/Kloks94}. Such a decomposition can be easily modified to meet our requirements in the same running time.
We will use the following notion of a nice tree decomposition to simplify the presentation of our algorithms; note that we assume, w.l.o.g., that leaf bags are empty and all vertices are introduced later on.

\begin{definition}[Nice tree decomposition]
    A \defi{nice tree decomposition} $(\dtree, \mathcal{B})$ of a graph $G = (V,E)$ is called a tree decomposition of $G$ where $\dtree$ is rooted at a node $r\in\nodes$, and it holds for each $x\in\nodes$ that $x$ has one of the following types:
    \begin{itemize}
        \item A leaf node, where $x$ is a leaf of $\dtree$. We have $B_x = \emptyset$ and define $G_x = (V_x, E_x)$ as an empty graph.
        \item An introduce vertex node $(v \in V)$, where $x$ has a single child $x'$ and $B_x = B_{x'}\cup \{v\}$. We define $G_x = (V_x, E_x)$ as the graph obtained from $G_{x'}$ by adding the vertex $v$ to $V_{x'}$.
        \item An introduce edge node $(e \in E)$, where $x$ has a single child $x'$ and $B_x = B_{x'}$. We define $G_x = (V_x, E_x)$ as the graph obtained from $G_{x'}$ by adding the edge $e$ to $E_{x'}$.
        \item A forget vertex node $(v \in V)$, where $x$ has a single child $x'$ and $B_{x'} = B_x\cup \{v\}$. We define the graph $G_x = G_{x'}$.
        \item A join node, where $x$ has exactly two children $x_1$ and $x_2$, and $B_{x_1} = B_{x_2} = B_x$. We define $G_x$ as the graph resulting from the disjoint union of $G_{x_1}$ and $G_{x_2}$ by identifying both copies of each vertex $v\in B_x$ (i.e., we consider each copy as a distinct vertex when computing the union, and after that we identify the copies).
    \end{itemize}
    Moreover, it holds for a nice tree decomposition that $B_r = \emptyset$ and for each $e\in E$ there exists exactly one node $x$ that introduces $e$. Note that $G_r$ is isomorphic to $G$. A \defi{nice path decomposition} is a nice tree decomposition that does not have any join nodes.
\end{definition}

\noindent \emph{NLC-Width.}\quad 
To present our algorithmic result for clique-width we will use a closely related graph measure called \defi{NLC-width}, which we define below. 
A labeled graph $G$ is a graph equipped with a vertex labeling $\texttt{lab}_G : V(G)\rightarrow \mathbb{N}$ that assigns to each vertex an integer. We call a labeled graph a $k$-graph, if its labels are from the set $[k]$.
A \emph{(labeled) subgraph} of a labeled graph $G$ is a labeled graph that is a subgraph of $G$ that preserves the labels of vertices.

An \defi{initial $k$-graph} $\bullet_i$ is a graph consisting of a single vertex labeled $i$. Given an \defi{edge mapping} $\alpha\subseteq [k]^2$ and a \defi{relabeling function} $\beta: [k]\rightarrow [k]$, we define the \defi{join operation} $\oplus_\alpha^\beta$ (where we omit the superscript if $\beta$ is the identity) as follows.\footnote{We remark that in the literature, the relabeling functions are typically applied via separate operations after joins; this is merely a cosmetic difference that streamlines the presentation of our algorithm.}
For two $k$-graphs $G_1$ and $G_2$, $G_1 \oplus_\alpha^\beta G_2$ is the $k$-graph obtained by:
\begin{enumerate}
\item performing the disjoint union of $G_1$ and $G_2$, then
\item adding an edge between every vertex $v_1\in V(G_1)$ and every vertex $v_2\in V(G_2)$ such that $(\texttt{lab}(v_1),\texttt{lab}(v_2))\in \alpha$, and finally
\item for each $v\in V(G_i)$, changing the label of $v$ to $\beta(\lab(v))$.
\end{enumerate}

Naturally, join operations can be chained together, and we call an algebraic expression using initial $k$-graphs as atoms and join operations as binary operators an NLC-decomposition of the graph produced by the expression. A graph $G$ has \defi{NLC-width} $k$ if $k$ is the minimum integer such that (some vertex labeling of) $G$ admits an NLC-decomposition involving at most $k$ labels. 
We denote by $\syntaxtree$ the tree corresponding to the NLC-decomposition and by $\nodes=V(\syntaxtree)$ the set of all nodes of the tree. For $x\in\nodes$, we denote by $G_x$ the graph obtained by the algebraic expression $x$ and by $V_x$ the set $V(G_x)$. For some set $S$ and a function $f:V(G)\rightarrow S$, we write $f_x$ for $f|_{V_x}$.

It is well known that for every graph $G$ with NLC-width \texttt{NLCw} and clique-width \texttt{cw}, $\texttt{NLCw} \leq \texttt{cw} \leq 2\cdot \texttt{NLCw}$~\cite{johansson1998clique,BodlaenderH12}; 
  moreover, this relationship is constructive and so known algorithms for computing approximately-optimal decompositions for clique-width~\cite{FominK24} 
 also yield approximately-optimal NLC-decompositions. Intuitively, NLC-width can be seen as an analogue of clique-width where edge addition and relabeling occurs in a ``single-shot fashion'' rather than gradually.

\smallskip
\noindent \emph{Cutwidth.}\quad
A linear arrangement $\ell = v_1,\dots, v_n$ of a graph $G$ is a linear ordering of the vertices of $G$. Let $V_i = \{v_1, \dots, v_i\}$, and $\overline{V}_i = V\setminus V_i$. We define the cut-graph $H_i = G[V_i, \overline{V}_i]$ as the bipartite graph induced from $G$ by the edges having exactly one endpoint in $V_i$. The width of a linear arrangement $\ell$ is defined by $w(\ell) =\max\{|E(H_i)|\colon i\in[n]\}$. The cutwidth of a graph $G$ (denoted $\ctw(G)$) is defined as the smallest width of a linear arrangement of $G$. A linear arrangement of minimum width can be computed in time $2^{\bigoh(\ctw(G)^2)}\cdot n^{\bigoh(1)}$~\cite{GiannopoulouPRT19}, and there is no single-exponential algorithm for the problem unless the ETH fails~\cite{KorhonenL23}.

\section{Dense Graph Parameters}\label{sec:cw}
\noindent
For the remainder of this section,
let $(G, D)$ be a given unweighted instance of \textsc{Set of Degrees MST},
where $G$ is an $n$-vertex graph of NLC-width $k$, and let us consider a given NLC-decomposition 
$\phi$ of $G$ using $k$ labels. 

  In this section, all vectors are integer-valued and $k$-dimensional (unless specified otherwise).
For a node $x \in \nodes$, we denote $\lab_x = \lab_{G_x}$. 
The aim of this section is to establish the following theorem:

\cwub*

\subsection{Partial Solutions and Patterns}

Let us first define the set of objects that we are working with. Firstly, as usual, we have the notion of partial solutions. In order to define partial solutions, we need the notion of fixed forests. A fixed forest is a standalone graph with a requirement function that specifies how much the degree of each vertex will  increase ``during further processing'' (this is different from the requirement function of the problem definition). 

 \begin{definition}
A \defi{fixed forest} is a tuple $(F,g)$ of a $k$-graph $F$ that is a forest and a function $g : V(F) \to \mathbb{N}_0$.
For an instance $(H, D')$, we define a \defi{partial solution} $(F,g)$ in $(H, D')$ as a fixed forest where $F$ is a spanning forest of $H$ and 
$g(v)+\deg_F(v)\in D'(v)$.
\end{definition}

We represent each fixed forest by a multiset of vectors, where we have one vector per connected component. This vector captures how many edges we need to add to vertices of each label in the given connected component. 

 \begin{definition}
         A \defi{$k$-pattern} is a multiset of at most $n$ vectors in $([n]_{0})^k$ (we omit $k$ when clear from context). Let $\mathcal{P}$ be the family of all patterns.
    For a fixed forest $(F,g)$, its \defi{corresponding pattern} $\repr(F,g)$ is the multiset that contains exactly the vectors $v(C)$ for each connected component $C$ of $F$, where $v(C)$ is defined by
    \[\big(v(C)\big)_i=\sum_{\substack{u\in V(C),\\\lab(u)=i}} g(u), \text{ for $i \in [k]$}.\]
\end{definition}

\begin{remark}
    Except in the description of \cref{def:cw_algo}, all patterns we handle in this work are $k$-patterns where $k$ is the width of the NLC-expression as defined above.
\end{remark}

The definition above maps each fixed forest to a pattern. Next, we define a mapping in the other direction, i.e., a mapping that assigns a canonical fixed forest to each pattern. 

 \begin{definition}\label{def:canonical_forest}
Given a pattern $A \in \mathcal{P}$, we define the \defi{canonical fixed forest}, denoted $(F_A, f_A)$, as the fixed forest containing one path for each vector $v\in A$ constructed as follows.
 For each $i \in [k]$, we create a vertex $u^A_{v,i}$ with label $i$ and assign $f_A(u^A_{v,i}) = v_i$.
Next, we construct the path $(u^A_{v, 1},...,u^A_{v, k})$.
Finally, we obtain the $k$-graph~$F_A$ by taking the disjoint union of all such paths.
\end{definition}

In our dynamic programming table, at each node of the $k$-expression, the records that we store will contain the corresponding patterns of all partial solutions.

 \begin{definition}\label{def:record}
Given an instance $(H, D')$ we define the \defi{record} $\mathcal{R}_{(H, D')}$ of $(H, D')$ as the set of the patterns corresponding to all partial solutions of $(H, D')$. 
For $x\in \nodes$, we define $\mathcal{R}_x=\mathcal{R}_{(G_x, D_{x})}$. 
\end{definition}

\para{Compatibility and Equivalence.}
Let us now introduce the notion of compatibility and equivalence of the objects that we are working with.
We aim to replace partial solutions with simpler ones that behave the same, i.e., they can be extended to a solution in the same way. Informally, two fixed forests are compatible if they can be combined to obtain a spanning tree satisfying both requirement functions simultaneously.

 \begin{definition}    
\label{def:cw_compatible_forests}
For two fixed forests $(F, g)$ and $(F', g')$ and an edge mapping $\alpha\subseteq [k]^2$, an \defi{$\alpha$-spanning tree}
for $(F, g)$ and $(F', g')$ is a spanning tree~$T$ of $F\oplus_{\alpha} F'$ such that $E(F), E(F')\subseteq E(T)$, and $\deg_T = (g+\deg_F)\cup (g'+\deg_{F'})$.
We say that $(F, g)$ and $(F', g')$ are \defi{$\alpha$-compatible} if there exists an $\alpha$-spanning tree
for $(F, g)$ and $(F', g')$.
\end{definition}

Let us now observe some basic properties of $\alpha$-compatibility.
Firstly, note that $\alpha$-compatibility is not symmetric, i.e., if $(F, g)$ and $(F', g')$ are $\alpha$-compatible, that does not imply that $(F', g')$ and $(F, g)$ are $\alpha$-compatible.
(Instead, $(F', g')$ and $(F, g)$ are $\alpha^R$-compatible, where $\alpha^R = \{(y, x) \, : \, (x, y) \in \alpha\}$.) It is also easy to see that if $(F,g)$ and $(F',g')$ are $\alpha$-compatible, then $(F,g)$ and $(F',g')$ are $\alpha'$-compatible for all $\alpha'\supseteq \alpha$.

As we will see in the later proofs, it is sometimes useful to work with fixed forests with disjoint sets of labels. In particular, we observe the following:

\begin{observation}\label{obs:cw_label_shifting}
    Let $(F,g)$ and $(F',g')$ be two fixed forest with labels in $[k]$ and let $\alpha\subseteq [k]^2$. Let $F_{\text{rel}}$ be the graph with labels in $\{k+1,\dots, 2k\}$ obtained by increasing the label of each vertex in $F$ by $k$ and $\alpha_{\text{rel}}=\{(i+k,j)\colon (i,j)\in\alpha\}$. We have that $(F,g)$ and $(F',g')$ are $\alpha$-compatible if and only if $(F_{\text{rel}}, g)$ and $(F',g')$ are $\alpha_{\text{rel}}$-compatible. 
\end{observation}

Towards deriving reduction rules later, we now define the notion of equivalence, denoted by $\simeq$, for a range of different objects, starting from fixed forests. Informally, fixed forests $(F, g)$ and $(F', g')$ are equivalent if for all $\alpha$,  $\alpha$-compatibility with $(F, g)$ implies $\alpha$-compatibility of $(F', g')$ and vice versa.
If the implication only goes one way, we can speak of weaker fixed forests. 
We can extend the notion of weakness and equivalence to sets of fixed forests.
  \begin{definition}
\label{def:cw_equiv_ff}
    A set $R$ of fixed forests is \defi{weaker} than a set $R'$ of fixed forests, denoted by $R \preceq R'$, if for each edge mapping $\alpha\subseteq [k]^2$ and for any fixed forest $(F^*, g^*)$ it holds that if $(F^*,g^*)$ is $\alpha$-compatible with some fixed forest $F\in R$, then it is $\alpha$-compatible with some fixed forest $F' \in R'$.
    We call $R$ and $R'$ \defi{equivalent} ($R \simeq R'$), if both $R \preceq R'$ and $R' \preceq R$ hold.

    Given two fixed forests $(F,g)$ and $(F',g')$, we say that $(F,g)$ is \defi{weaker} than (resp., \defi{equivalent} to) $(F',g')$ if the singleton set $\{(F,g)\}$ is weaker than (resp., equivalent to) the singleton set $\{(F',g')\}$.
\end{definition}

Definition~\ref{def:cw_equiv_ff} also allows us to extend the notion of equivalence to patterns and sets of patterns. Recall that each pattern $A$ is associated with a canonical fixed forest $(F_A, f_A)$.

\begin{definition}
\label{def:equiv_patt}
    Let $A, A'\in\mathcal{P}$. We say $A$ is \defi{weaker} than $A'$ if $(F_A, f_A)$ is weaker than $(F_{A'}, f_{A'})$.
   We call $A$ and $A'$ \defi{equivalent}, if their canonical fixed forests are equivalent, i.e., $(F_A, f_A)\simeq (F_{A'}, f_{A'})$. 

   Let $\mathcal{R}, \mathcal{R}'\in 2^\mathcal{P}$ and let $F_\mathcal{R}=\{(F_A, f_A):\: A\in \mathcal{R} \}$, $F_{\mathcal{R}'}=\{(F_A, f_A):\: A\in \mathcal{R}' \}$. We say that $\mathcal{R}$ is \defi{weaker} than $\mathcal{R}'$ if $F_\mathcal{R}$ is weaker than $F_{\mathcal{R}'}$. We say $\mathcal{R}$ and $\mathcal{R}'$ are \defi{equivalent} if $F_\mathcal{R}\simeq F_{\mathcal{R}'}$.
\end{definition}

It is easy to see that all notions of equivalence above are equivalence relations.
The above definitions allow us to speak about equivalent records. Intuitively, two instances having equivalent records means that they behave the same, i.e., we can replace an instance by a simpler instance that has an equivalent record. 

Corollary~\ref{cor:cw_equiv_repr} below states that in order to prove that two fixed forests are equivalent, it suffices to look at their patterns.
By definition, two patterns are equivalent if their \emph{canonical} fixed forests are equivalent.
We show that the reverse is true: two fixed forests are equivalent if their corresponding patterns are equivalent.

For convenience, we first prove this for the case when the patterns are the same in the following lemma.

\begin{longlemma}
\label{lem:cw_same_repr}
    Let $(F, g), (F',g')$ be two fixed forests. If $\repr(F,g) = \repr(F',g')$, then $(F,g)\simeq (F',g')$.
\end{longlemma}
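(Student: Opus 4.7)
The plan is to prove $(F,g) \preceq (F',g')$; the reverse inequality follows by swapping the roles of the two forests. Fix an edge mapping $\alpha \subseteq [k]^2$ and a fixed forest $(F'', g'')$ together with an $\alpha$-spanning tree $T$ witnessing compatibility with $(F,g)$. The equality $\repr(F,g) = \repr(F',g')$ of multisets yields a bijection $\sigma \colon \cc(F) \to \cc(F')$ with matching component vectors $v(C) = v(\sigma(C))$ for every $C$.

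The key idea is to contract every component of $F$ and every component of $F''$ inside $T$ down to a single super-vertex. Because each contracted set is a subtree of $T$ (using $E(F), E(F'') \subseteq E(T)$), the result $T^*$ is a \emph{simple} tree on $|\cc(F)|+|\cc(F'')|$ super-vertices whose edges are in bijection with the join edges of $T$. By the degree condition on $T$, for each $C \in \cc(F)$ and each label $i$, the number of join edges of $T$ whose $F$-endpoint lies in $V(C)$ with label $i$ equals $\sum_{u \in V(C),\, \lab(u)=i} g(u) = v(C)_i$. I would then build $T'$ by taking $E(F') \cup E(F'')$ and, for each edge of $T^*$ between super-vertices $C$ and $C''$ that is realized in $T$ by a join edge $\{u,u''\}$, inserting a new join edge $\{u',u''\}$ with $u' \in V(\sigma(C))$ chosen so that $\lab(u') = \lab(u)$ and, globally, each $u' \in V(\sigma(C))$ receives exactly $g'(u')$ incident join edges. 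This label-preserving distribution is feasible because, for each label $i$, there are $v(C)_i = v(\sigma(C))_i$ slots and $\sum_{u' \in V(\sigma(C)),\, \lab(u')=i} g'(u') = v(\sigma(C))_i$.

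To conclude, I would verify that $T'$ is an $\alpha$-spanning tree of $F' \oplus_\alpha F''$. The inserted edges respect $\alpha$ by label preservation; the degree requirement on $V(F')$ holds by construction, and on $V(F'')$ is inherited verbatim from $T$ since each $u''$-endpoint is reused. Contracting components of $F'$ and $F''$ inside $T'$ yields the same simple tree $T^*$ (with super-vertices for $\cc(F)$ relabeled via $\sigma$), so $T'$ is connected; combined with the edge count $|V(F')| + |V(F'')| - 1$, this forces $T'$ to be a spanning tree. The main obstacle is ensuring that the local reshuffle of join endpoints inside each component $\sigma(C)$ cannot create a cycle. This is handled uniformly by the contracted-tree viewpoint: any hypothetical cycle in $T'$ would project to a nontrivial closed walk in the simple tree $T^*$, which is impossible, so global acyclicity is certified independently of how the join endpoints are redistributed within each component.
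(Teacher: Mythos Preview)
Your proposal is correct and follows essentially the same approach as the paper's proof: both use the multiset equality to fix a bijection between the components of $F$ and $F'$, then rebuild the spanning tree by keeping every $F''$-endpoint of a join edge and redistributing the $F$-endpoints inside the matched $F'$-component, label by label, using that the component vectors agree. Your explicit contraction to the quotient tree $T^*$ is exactly the mechanism behind the paper's one-line justification that ``a cycle in $T'$ would correspond to a cycle in $T$''; together with the edge count it gives the same spanning-tree conclusion.
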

 \begin{proof}
    Let $R=\repr(F, g)=\repr(F', g')$. Fix an edge mapping $\alpha\in [k]^2$ and a fixed forest $(\tilde{F}, \tilde{g})$. Suppose that $(\tilde{F}, \tilde{g})$ is $\alpha$-compatible with $(F, g)$, and let $T$ be an $\alpha$-spanning tree for $(\tilde{F}, \tilde{g})$ and $(F,g)$. 

    Note that for each pair of connected components $C$ in $F$ and $\tilde{C}$ in $\tilde{F}$, there is at most one edge in $T$ between them (otherwise we would have a cycle in $T$).
    Further, let $\hat{E} = E(T) \setminus (E(F) \cup E(\tilde{F}))$.
    Since $\deg_T = (g+\deg_F)\cup (\tilde{g}+\deg_{\tilde{F}})$, it holds that 
    \begin{itemize}
        \item[(*)] For every $i \in [k]$ and every connected component $C$ of $F$ corresponding to a vector $w$ in $R$, we have $\sum_{\substack{x \in C\\ \lab_F(x) = i}} \deg_{\hat{E}}(x) = w_i$. 
    \end{itemize} 

    We initialize $T'$ as an empty graph on $V(F') \cup V(\tilde{F})$. 
    For each connected components $C$ of $F$ and $\tilde{C}$ of $\tilde{F}$, if there is an edge $uv$ such that $u \in C$, $v \in \tilde{C}$, $\lab_F(u)=i$, $\lab_{\tilde{F}}(v) = j$, for some $i,j \in [k]$, then we add an edge to $T'$ as follows.
    Let $C'$ be the connected component of $F'$ that corresponds to the same (copy of a) vector $w \in R$ as $C$. 
    Find a vertex $u' \in C'$ such that $\lab_{F'}(u') = i$ and $u'$ is incident to less than $g'(u')$ edges in $T'$ up to now.
    We then add the edge $u'v$ to $T'$.

    We show that such a $u'$ exists.
    By construction, before adding $u'v$, the total number of edges in $T'$ incident to a vertex of label $i$ in $C'$ is less than $\sum_{x \in C : \lab_F(x) = i} \deg_{\hat{E}}(x)$.
    By (*), this number is less than $w_i$.
    Since $C$ and $\tilde{C}$ correspond to the same vector $w \in R$, $w_i = \sum_{x \in C' : \lab_{F'}(x) = i} g'(i)$.
    Hence, such a $u'$ must exist.

    Further, by the same argument above, after processing all pairs in $\cc(F') \times \cc(\tilde{F})$, the graph $T'$ obtained up till now satisfies that $\deg_{T'} = g' \cup \tilde{g}$.
    Further, by construction it is also a subgraph of $F' \oplus_{\alpha} \tilde{F}$ and does not contain any edge in $E(F') \cup E(\tilde{F})$.
    
    Therefore, as the final step, we add $E(F') \cup E(\tilde{F})$ to $T'$.
    We now have $\deg_{T'} = (g' + \deg_{F'}) \cup (\tilde{g} + \deg_{\tilde{F}})$.
    Further, $T'$ is a tree, since a cycle in $T'$ would correspond to a cycle in $T$ by the construction above.
    Hence $T'$ is an $\alpha$-spanning tree for $(\tilde{F},\tilde{g})$ and $(F',g')$.

    The above implies that $(F,g) \preceq (F',g')$.
    By an analogous argument, we also obtain $(F',g') \preceq (F,g)$.
    Hence, $(F,g) \simeq (F',g')$, as required.
\end{proof} 
\begin{corollary}
\label{cor:cw_equiv_repr}
    Let $(F, g), (F',g')$ be two fixed forests. If $\repr(F,g)\simeq\repr(F',g')$, then $(F,g)\simeq (F',g')$.
\end{corollary}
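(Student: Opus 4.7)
The plan is to bridge $(F,g)$ and $(F',g')$ via their canonical fixed forests and then invoke Lemma~\ref{lem:cw_same_repr} together with transitivity of $\simeq$. The key observation I would verify first is that for every pattern $A \in \mathcal{P}$ the canonical fixed forest satisfies $\repr(F_A, f_A) = A$. This is an almost immediate unpacking of Definition~\ref{def:canonical_forest}: the connected components of $F_A$ are exactly the paths $(u^A_{v,1}, \ldots, u^A_{v,k})$ indexed by vectors $v \in A$, and for each such component $C$ and each label $i \in [k]$, the only vertex of $C$ with label $i$ is $u^A_{v,i}$, whose $f_A$-value is $v_i$. Hence the vector $v(C)$ assigned to $C$ by $\repr$ is exactly $v$, and the multisets match.

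With that in hand, I would apply Lemma~\ref{lem:cw_same_repr} twice: once to $(F,g)$ and $(F_{\repr(F,g)}, f_{\repr(F,g)})$ to conclude
\[
(F,g) \simeq (F_{\repr(F,g)}, f_{\repr(F,g)}),
\]
and once to $(F',g')$ and $(F_{\repr(F',g')}, f_{\repr(F',g')})$ to conclude
\[
(F',g') \simeq (F_{\repr(F',g')}, f_{\repr(F',g')}).
\]
Both applications are legitimate because, by the observation above, both forests in each pair carry the same pattern.

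Next, the hypothesis $\repr(F,g) \simeq \repr(F',g')$ unfolds, by Definition~\ref{def:equiv_patt}, precisely to
\[
(F_{\repr(F,g)}, f_{\repr(F,g)}) \simeq (F_{\repr(F',g')}, f_{\repr(F',g')}).
\]
Chaining the three equivalences and using transitivity of $\simeq$ (explicitly mentioned in the text as an equivalence relation) yields $(F,g) \simeq (F',g')$, which is the desired conclusion.

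I do not expect any real obstacle: the proof is essentially a one-line chain of equivalences, and the only substantive ingredient is Lemma~\ref{lem:cw_same_repr}, which has already been established. The only subtlety worth being careful about is the verification $\repr(F_A, f_A) = A$, since Definition~\ref{def:equiv_patt} defines equivalence of patterns through canonical forests rather than through the pattern $\repr$ map, so one must explicitly check that the canonical construction is indeed a ``section'' of $\repr$ in order for the bridging step to apply cleanly.
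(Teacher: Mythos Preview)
Your proposal is correct and follows essentially the same approach as the paper's proof: bridge via the canonical fixed forests using Lemma~\ref{lem:cw_same_repr}, unfold the hypothesis via Definition~\ref{def:equiv_patt}, and chain by transitivity. You are in fact slightly more careful than the paper, which applies Lemma~\ref{lem:cw_same_repr} without explicitly noting the needed fact $\repr(F_A, f_A) = A$.
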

\begin{proof}
    By Lemma~\ref{lem:cw_same_repr}, $(F,g) \simeq (F_{\repr(F,g)}, f_{\repr(F,g)})$, and $(F',g') \simeq (F_{\repr(F',g')},f_{\repr(F',g')})$.
    By \cref{def:equiv_patt}, $(F_{\repr(F,g)}, f_{\repr(F,g)}) \simeq (F_{\repr(F',g')}, f_{\repr(F',g')})$.
    Hence, $(F,g) \simeq (F',g')$.
\end{proof} 
\para{Nice Patterns and Reduction Rules.}

In order to get the claimed running time, we have to reduce the set of records in our dynamic programming tables into a smaller set. In order to achieve this we define the family of nice patterns, whose size is bounded by $n^{\bigoh(k)}$. We describe reduction rules that can be used to turn any set of patterns into an equivalent set of nice patterns efficiently.

 \begin{definition}
We call a pattern $A$ \defi{nice} if $A$ contains at most one $\mathbf{0}$ vector and for each coordinate $i\in[k]$, $A$ contains at most one vector $v$ such that $v$ is not a unit vector and $v_i\geq 1$.
We denote by $\mathcal{P}^*\subseteq \mathcal{P}$ the family of all nice patterns.
\end{definition}

\begin{observation}
\label{obs:sizenice}
It holds that $|\mathcal{P}^*|\leq n^{\bigoh(k)}$, since the set of the non-unit non-zero vectors forms a partition of a subset of the labels.
\end{observation}

\subsection{Requirement Shifting and Pattern Representation}
\label{sub:cw-smartshiftingetc}

Now we define a new structure called ``ordered pattern'' that allows us to identify and manipulate specific vectors of a pattern more systematically.

\begin{definition}\label{def:ord_pat}
    An \defi{ordered pattern} is a pair $\ord = (O,b)$, where $O = O_1,\dots, O_m$ is a sequence of vectors
    and $b : [k] \rightarrow [m]\cup\{\perp\}$ is a mapping that assigns to each coordinate $\ell\in[k]$ an index $j\in [m]$ such that $(O_j)_{\ell} \geq 1$, or $\perp$ if no such index exists. We then call $j$ the \defi{big index} of $\ell$ with respect to $b$, and we call $b$ a \defi{big-vector mapping} of $O$.

    Given a pattern $A\in\mathcal{P}$, an \defi{ordered pattern of $A$} is an ordered pattern $\ord_A = (O_A, b_A)$, where $O_A$ is a sequence of the vectors in $A$. Given an ordered pattern $(O, b)$ over $m$ vectors, we define the \defi{underlying pattern} $p(O,b)$ as the multiset $\langle O_j \colon j\in [m] \rangle$. 
    We call a family of ordered patterns $\ord$ weaker than (or equivalent to) another family $\ord'$, if the underlying pattern $p(\ord)$ is weaker than (or equivalent to) the underlying pattern $p(\ord')$.
\end{definition}

We first define bad indices and bad coordinates as the ``obstacles'' that we need to remove from an ordered pattern to make it nice:

\begin{definition}
    Let $\ord = (O, b)$ be an ordered pattern over $m$ vectors with $O = O_1, \dots, O_m$.    
    For $i\in[m]$ we define the set of \defi{bad coordinates} of $i$ as $Z_{\ord}(i) = \emptyset$ if $O_i$ is a unit or zero vector. Otherwise, we define
    \[
    Z_{\ord}(i) = \big\{\ell\in[k]\colon b(\ell)\neq i \land (O_i)_{\ell}\geq 1\big\}.
    \]

    We define the set of \defi{bad indices} of $\ord$ as 
    $I_{\ord} \coloneq \big\{i\in[m]\colon Z_{\ord}(i) \neq \emptyset \big\}$.
\end{definition}

\begin{observation}\label{obs:cw_ord_nice}
    Let $\ord$ be an ordered pattern with at most one zero vector, such that $I_{\ord} = \emptyset$. Then $\ord$ is nice.
\end{observation}

Now we introduce two operations that allow to ``shift'' requirements between vectors in an ordered pattern. Intuitively, this corresponds to rerouting edges in a solution. Essentially, we show that the patterns resulting from these operations on a pattern correctly replace this pattern, i.e., the initial pattern behaves like an ``OR'' of the resulting patterns, which means that the initial pattern can be extended to a solution if and only if at least one of the resulting patterns can be extended to a solution.

\begin{definition}
\label{def:cw_pi}
    We define the functions \defi{$\rrule_1, \rrule_2$} that take as input an ordered pattern $\ord=(O,b)$ over $m$ vectors, a bad index $i\in I_\ord$ and a coordinate $\ell\in Z_\ord(i)$, and output ordered patterns $\rrule_1(O, b, i, \ell) \coloneqq (O^1, b^1)$ and $\rrule_2(O, b, i, \ell)\coloneqq(O^2, b^2)$, defined as follows. Let $j=b(\ell)$. For $h\in[m]$ we have
    \[
    (O^1)_h = \begin{cases}
        O_j + O_i - \idv_{\ell} & \colon h = j,\\
        \idv_{\ell} & \colon h = i,\\
        O_h & \colon \text{otherwise,}
    \end{cases}
     (O^2)_h = \begin{cases}
        O_j + (O_i)_{\ell} \cdot \idv_{\ell} & \colon h = j,\\
        O_i - (O_i)_{\ell} \cdot \idv_{\ell} & \colon h = i,\\
        O_h & \colon \text{otherwise.}
    \end{cases}
    \]
    We define $b^2 = b$, and $b^1$ results from $b$ by assigning each coordinate $\ell' \in [k]$ with $b(\ell') = i$ to $j$ instead, i.e., $b^1(\ell') = j$ if $b(\ell') = i$, and $b^1(\ell') = b(\ell')$ otherwise.
\end{definition}
Note that applying $\rrule_1, \rrule_2$ only affects vectors $O_i$ and $O_j$. In both cases, we subtract a specific vector from $O_i$ and add it to $O_j$. We call this process \emph{shifting} from index $i$ to index $j$ in $(O,b)$.
In \cref{fig:cw_example_rrule2} we provide an example describing how the canonical fixed forest of a pattern is transformed by $\rrule_1$ or $\rrule_2$.
\begin{figure}[t!]
    \centering
    \includegraphics[page=3]{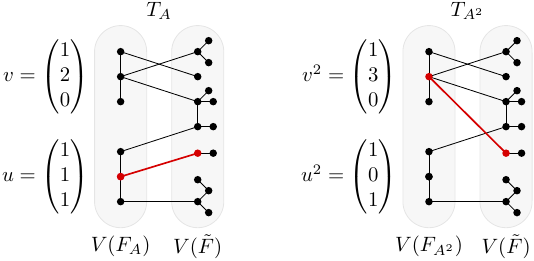}
    
    \vspace{2em}
    \includegraphics[page=1]{graphics/cw_figures.pdf}
    \caption{
    Let $(O,b)$ be the ordered pattern such that $O = v, u$ for $v, u$ as given in the figure, $b(1) = b(2) = 1$ and $b(3) = 2$. 
    Let $A = p(O,b)$, $A^1 = p(\rrule_1(O, b, 2, 2))$ and $A^2 = p(\rrule_2(O, b, 2, 2))$.
     For some edge-mapping $\alpha$, fixed forest $(\tilde{F}, \tilde{g})$, and $\alpha$-spanning tree $T_A$ for $(F_A, f_A)$ and $(\tilde{F}, \tilde{g})$, if $T_A$ contains a path that connects the tree corresponding to $v$ with the second vertex of that corresponding to $u$ (see the bold path in the top left figure), we can obtain an $\alpha$-spanning tree $T_{A^1}$ for $(F_{A^1}, f_{A^1})$ and $(\tilde{F}, \tilde{g})$ by ``moving'' the blue edges (i.e., edges incident to the tree corresponding to $u$ and not on the path); see top right figure.
    If there is no such path, then we can ``move'' the red edge incident to the second vertex of the tree corresponding to $u$ instead; see bottom figures. This gives an intuitive explanation of why~$\{A\}\preceq \{A^1, A^2\}$.}
    \label{fig:cw_example_rrule2}
\end{figure}
 We use the following lemma to prove the correctness of the reduction rules $\rrule_1$ and $\rrule_2$.
\begin{longlemma}
\label{lem:low_level_shifting}
Let $\alpha\subseteq [k]^2$ be an edge mapping, $(\tilde{F}, \tilde{g})$ be a fixed forest, $A$ be a pattern,
$x, y$ be vectors in $A$,
$T$ be an $\alpha$-spanning tree for $(\tilde{F},\tilde{g})$ and $(F_{A}, f_A)$,
and $P$ be a path from a vertex of $\{ u^A_{x,\ell} \colon \ell \in [k] \}$ to a vertex of $\{ u^A_{y,\ell} \colon \ell \in [k] \}$ in $T$.
Further, let $z'$ be a vector with $z'_\ell = 
 |(N_{T}(u^A_{x,\ell}) \cap V(\tilde{F})) \setminus V(P)|$ for $\ell \in [k]$,
 $z$ be a vector such that $z \leq z'$, 
and let $A^* = (A \setminus \langle x,y \rangle) \cup \langle x - z, y + z \rangle$.
 Then there is an $\alpha$-spanning tree $T^*$ for $(\tilde{F}, \tilde{g})$ and $(F_{A^*}, f_{A^*})$.
\end{longlemma}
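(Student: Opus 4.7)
The plan is to construct $T^*$ via an explicit edge-swap. Intuitively, the $F_A$-components corresponding to $x$ and $y$, together with $P$, form a ``transport corridor'' along which degree can be shifted from the $x$-side to the $y$-side; the side condition $S_i \subseteq V(\tilde F)\setminus V(P)$ will guarantee that this corridor is never severed by the modifications. Concretely, identify $V(F_A)$ with $V(F_{A^*})$ via $u^A_{x,i}\mapsto u^{A^*}_{x-z,i}$, $u^A_{y,i}\mapsto u^{A^*}_{y+z,i}$, and $u^A_{v,i}\mapsto u^{A^*}_{v,i}$ for the remaining vectors $v$; this is a $k$-graph isomorphism since all corresponding vertices retain their labels. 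Using $z\leq z'$, pick for each $i \in [k]$ a set $S_i \subseteq (N_T(u^A_{x,i}) \cap V(\tilde F)) \setminus V(P)$ of size $z_i$, and set $T^* = (T \setminus E^-) \cup E^+$ where
\[
E^- = \bigcup_{i\in[k]}\{u^A_{x,i} s : s \in S_i\}, \qquad E^+ = \bigcup_{i\in[k]}\{u^A_{y,i} s : s \in S_i\}.
\]

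The next step is to verify that $T^*$ is a spanning tree of $F_{A^*}\oplus_\alpha \tilde F$. The label-consistency of the new edges with $\alpha$ is immediate because $u^A_{y,i}$ and $u^A_{x,i}$ carry the same label $i$, so each $u^A_{y,i}s$ inherits the admissibility of $u^A_{x,i}s$ under $\alpha$. Since $|E^+|=|E^-|$, it only remains to show connectivity, which I would do by applying the swaps one at a time and maintaining the invariant that all edges of $E(F_A)\cup E(P)$ remain in the current tree (true because every removed edge is a cross edge whose $\tilde F$-endpoint lies outside $V(P)$). This invariant yields a ``corridor'' path from $u^A_{x,i}$ to $u^A_{y,i}$ in the current tree for each $i$. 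Consequently, when the swap replacing $u^A_{x,i}s$ by $u^A_{y,i}s$ is applied, deleting $u^A_{x,i}s$ disconnects $s$ from $u^A_{x,i}$ (and hence from $u^A_{y,i}$, via the corridor), and inserting $u^A_{y,i}s$ restores connectivity; the inserted edge is genuinely new, otherwise the corridor path together with $u^A_{x,i}s$ and $u^A_{y,i}s$ would form a cycle in the current tree, contradicting acyclicity.

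Finally, the degree condition follows directly from the bookkeeping: each swap decreases $\deg(u^A_{x,i})$ by one, increases $\deg(u^A_{y,i})$ by one, leaves the $\tilde F$-endpoint's degree unchanged, and does not touch any other vertex. Combined with $f_{A^*}(u^{A^*}_{x-z,i}) = x_i - z_i$, the symmetric identity for $y+z$, and $\deg_{F_{A^*}} = \deg_{F_A}$ under the identification (both are the same disjoint union of length-$k$ paths), this produces $\deg_{T^*} = (f_{A^*}+\deg_{F_{A^*}})\cup(\tilde g + \deg_{\tilde F})$. The containments $E(F_{A^*}) \subseteq E(T^*)$ and $E(\tilde F) \subseteq E(T^*)$ are immediate since we removed only cross edges. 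The main obstacle is the connectivity argument when several swaps are applied simultaneously; processing them sequentially and exploiting $S_i \cap V(P) = \emptyset$ to keep the corridor intact throughout the process resolves it.
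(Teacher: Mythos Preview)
Your proposal is correct and takes essentially the same approach as the paper: both identify $F_A$ with $F_{A^*}$ and swap, for each $i$, $z_i$ cross edges at $u^A_{x,i}$ (avoiding $P$) over to $u^A_{y,i}$. The only stylistic difference is in the connectivity argument---the paper rebuilds explicit paths in $T^*$ via the relabeling map $\lambda$ and a case analysis on where a path first/last meets $V(P)\cup X\cup Y$, whereas you process the swaps one at a time and use the invariant that $E(F_A)\cup E(P)$ is untouched; your ``genuinely new'' check corresponds exactly to the paper's disjointness argument for $E^*_1$ and $E^*_2$.
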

\begin{proof}
    Let $\hat{E} = E(T) \setminus (E(\tilde{F}) \cup E(F_{A}))$.
    Further, let $X = \{u^{A}_{x,\ell} \colon \ell \in [k]\}$, $Y = \{u^{A}_{y,\ell} \colon \ell \in [k]\}$, $X^* = \{u^{A^*}_{x-z,\ell} \colon \ell \in [k]\}$, and $Y^* = \{u^{A^*}_{y+z,\ell} \colon \ell \in [k]\}$.

    Intuitively, we obtain $T^*$ from $T$ as follows.
    For $\ell \in [k]$, we choose an arbitrary set of $z_\ell$ edges incident to $u^A_{x,\ell}$ in $\hat{E} \setminus E(P)$ and replace the endpoint $u^A_{x,\ell}$ of these edges with $u^A_{y,\ell}$.
    After that, we rename the vertices in $F_{A}$ to get the vertices of $F_{A^*}$.

    Formally, we define a mapping $\lambda: V(\tilde{F} \oplus_{\alpha} F_{A}) \to V(\tilde{F} \oplus_{\alpha} F_{A^*})$ as follows.
    For $w \in V(\tilde{F}) \cup V(F_{A})$,
    \[
    \lambda(w)=
    \begin{cases}
        u^{A^*}_{x-z,\ell} & \text{if $w = u^{A}_{x,\ell}$ for some $\ell \in [k]$,} \\
        u^{A^*}_{y+z,\ell} & \text{if $w = u^{A}_{y,\ell}$ for some $\ell \in [k]$,} \\
        u^{A^*}_{x',\ell} & \text{if $w = u^{A}_{x',\ell}$ for some $x' \in A \setminus \langle x,y \rangle$, $\ell \in [k]$,} \\
        w & \text{if $w \in V(\tilde{F})$.}
    \end{cases}
    \]    
    
    We then construct $T^*$ from $T$ in three steps.
    In Step 1, for each $\ell \in [k]$, we arbitrarily mark $z_{\ell}$ edges in $\hat{E} \setminus E(P)$ incident to $u^A_{x,\ell}$.
    (Since $z \leq z'$, at least $z_{\ell}$ such edges exist.)
    For each marked edge $u^A_{x,\ell} w$, we add an edge $u^{A^*}_{y+z,\ell}w$ to $T^*$.
    In Step 2, for each unmarked edge $ww'$ in $\hat{E}$, we add the edge $\lambda(w)\lambda(w')$ to $T^*$.
    Finally, in Step 3, we add $E(\tilde{F}) \cup E(F_{A^*})$ to $T^*$.
    
    For $t \in [3]$, let $E^*_t$ be the set of edges added to $T^*$ in Step $t$.
    $E^*_3$ is trivially disjoint from $E^*_1$ and $E^*_2$.
    We show that $E^*_1$ and $E^*_2$ are also disjoint.
    For the sake of contradiction, suppose that this is not the case.
    Then there must be an edge $u^{A^*}_{y+z,\ell}w$ that is in both $E^*_1$ and $E^*_2$.
    This implies that $u^A_{x,\ell}w$ and $u^A_{y,\ell}w$ are edges in $T$.
    However, $u^A_{x,\ell}$ and $u^A_{y,\ell}$ then have two distinct paths in $T$ between them, one being $(u^A_{x,\ell},w,u^A_{y,\ell})$ and the other being the concatenation of the path in $X$ between $u^A_{x,\ell}$ and an endpoint of $P$, the path $P$, and the path in $Y$ between the other endpoint of $P$ and $u^A_{y,\ell}$, a contradiction.
    Since $E^*_1$, $E^*_2$, and $E^*_3$ are pairwise disjoint, and since $T$ is an $\alpha$-spanning tree for $(\tilde{F},\tilde{g})$ and $(F_A, f_A)$, it is easy to verify that $\deg_{T^*} = (\tilde{g} + \deg_{\tilde{F}}) \cup (f_{A^*} + \deg_{f_{A^*}})$.
    
    It is also easy to see that $T^*$ is a subgraph of $\tilde{F} \oplus_{\alpha} F_{A^*}$ with $E(\tilde{F}), E(F_A) \subseteq E(F_{A^*})$.
    Hence, it remains to show that $T^*$ is a spanning tree of $\tilde{F} \oplus_{\alpha} F_{A^*}$.
    Let $P^*$ be the path obtained from $P$ by replacing any vertex $w$ on the path by $\lambda(w)$.
    Let $U = V(P) \cup X \cup Y$ and $U^* = V(P^*) \cup X^* \cup Y^*$.
    Observe that $T^*[U^*]$ is exactly $P^* \cup F_{A^*}[X^* \cup Y^*]$ and is a tree.
    Hence, there is a path in $T^*[U^*]$ between any two vertices.
    Next, for two arbitrary distinct vertices $w$ and $w'$ in $\tilde{F} \oplus_{\alpha} F_A$, let $Q$ be the path between $w$ and $w'$ in $T$.
    If $Q$ contains no vertex in $U$, then replacing every vertex $w'' \in Q$ by $\lambda(w'')$, we obtain a path between $\lambda(w)$ and $\lambda(w')$ in $T^*$.
    Otherwise, let $w_1$ and $w_2$ be the first and last vertex in $Q$ that is also in $U$ (they may be identical).
    We then obtain a path between $\lambda(w)$ and $\lambda(w')$ in $T^*$ from $Q$ as follows.
    For each vertex $w''$ in $Q$ before $w_1$ or after $w_2$, we replace it by $\lambda(w'')$.
    If the edge preceding $w_1$ in $Q$ is marked, $w_1$ must be of the form $u^A_{x,\ell}$; we then replace $w_1$ in $Q$ by $u^{A^*}_{y+z,\ell}$.
    Otherwise, we replace $w_1$ by $\lambda(w_1)$.
    Let $w^*_1$ be the vertex that we replace $w_1$ with.
    Similarly, we replace $w_2$ in $Q$ with $w^*_2$, which is either $u^{A^*}_{y+z,\ell}$ if the edge succeeding $w_2$ in $Q$ is marked and $w_2$ is of the form $u^A_{x,\ell}$, or $\lambda(w_2)$ otherwise.
    Then we remove all vertices between $w_1$ and $w_2$ in $Q$ and insert the path between $w^*_1$ and $w^*_2$ in $T^*[U^*]$.
    It is easy to verify that the obtained path is indeed a path in $T^*$.
    Since $\lambda$ is a bijection, this implies that $T^*$ is connected.
    Further, by construction, $|E(T^*)| = |E(T)| = |V(\tilde{F} \oplus_{\alpha} F_{A})|-1 = |V(\tilde{F} \oplus_{\alpha} F_{A^*})|-1$.
    Hence, $T^*$ is a spanning tree, as required.
\end{proof}

\begin{figure}[tb]
    \centering
    \includegraphics[page=2]{graphics/cw_figures.pdf}
    \caption{
    Example illustrating how $T_{A^1}^*$ is derived from $T_{A^1}$ in the proof of Lemma~\ref{lem:cw_equiv_rrule_1}.
    }
    \label{fig:cw_equiv_rrule_1}
\end{figure}

\begin{longlemma}\label{lem:cw_equiv_rrule_1}
    Let $\ord=(O,b)$ be an ordered pattern over $m$ vectors. For an index $i \in [m]$ and bad coordinate $\ell\in Z_{\ord}(i)$, let $(O^1, b^1) = \rrule_1(O, b, i, \ell)$. Then it holds that $(O^1, b^1) \preceq (O, b)$.
\end{longlemma}
\begin{proof}
    Let $j=b(\ell)$, $A = p(O, b)$, $A^1 = p(O^1, b^1)$, $u = O_i$, $v = O_j$, $u' = (O^1)_i$, $v' = (O^1)_j$.
    Note that $A^1 = (A \setminus \langle u, v \rangle) \cup \langle u', v' \rangle $ and that $(O^1, b^1) \preceq (O, b)$ is equivalent to $A^1 \preceq A$, which we will now show.
    
    Let $\alpha\subseteq [k]^2$ be an edge mapping, and let $(\tilde{F}, \tilde{g})$ be a fixed forest that is $\alpha$-compatible with the fixed forest of $A^1$, $(F_{A^1}, f_{A^1})$.
    We need to show that $(\tilde{F}, \tilde{g})$ is $\alpha$-compatible with the fixed forest of $A$, $(F_A, f_A)$, that is, there is an $\alpha$-spanning tree $T_A$ for $(\tilde{F}, \tilde{g})$ and $(F_A,f_A)$.

    By the $\alpha$-compatibility of $(\tilde{F}, \tilde{g})$ and $(F_{A^1}, f_{A^1})$, there is an $\alpha$-spanning tree for them.
    Let $P$ be a shortest path from $\{ u^{A^1}_{v',t} \colon t \in [k] \}$ to $\{ u^{A^1}_{u',t} \colon t \in [k] \}$ in $T_{A^1}$.
    Further, let $t(v')$ be the index of the first vertex of $P$ in $\{ u^{A^1}_{v',t} \colon t \in [k] \}$.
    We let $z'$ be a vector with $z'_t = 
 |(N_{T_{A^1}}(u^{A^1}_{v',t}) \cap V(\tilde{F})) \setminus V(P)|$ for $t \in [k]$.
 Observe that $z' = v' - \idv_{t(v')}$.

 We distinguish two cases.
 First, suppose that $z \coloneqq v' - v \leq z'$.
 Then, we can apply Lemma~\ref{lem:low_level_shifting} to $\alpha, (\tilde{F}, \tilde{g}), A^1, v', u', T_{A^1}, P$, and $z$. Since, $(u, v) = (u' + z, v' - z)$, this yields a spanning tree satisfying the conditions of this lemma.

 Otherwise, we proceed as follows.
 Observe that the last vertex of $P$ is $u^{A^1}_{u',\ell}$, since $u' = \idv_{\ell}$, and observe that the edge $u^{A^1}_{v',t(v')}u^{A^1}_{u',\ell}$ cannot be in $T_{A^1}$, as it is not in $\tilde{F} \oplus_\alpha F_{A}$. Hence $P$ counts at least three vertices; we denote the
    second vertex of $P$ by $a$, and the penultimate vertex of $P$ by $b$.
    Finally, let $c \in N_{T_{A^1}}(u^{A^1}_{v',\ell}) \cap V(\tilde{F})$, which exists since $v'_{\ell} \geq 1$.

    We obtain $P^*$ by replacing the first edge of $P$, $u_{v',t(v')}a$,
    with $u_{u',t(v')}a$, 
    and the last edge, $bu_{u',\ell}$, with $bu_{v',\ell}$.
    Further, we obtain $T_{A^1}^*$ from $T_{A^1}$ by performing the same two edge substitutions, and further, we replace the edge $u^{A^1}_{v',\ell}c$ with $u^{A^1}_{u',\ell}c$.
    See \cref{fig:cw_equiv_rrule_1}.

We define the vectors
\begin{align*}
    u'^* &= u' + \idv_{t(v')} \text{, and} \\
    v'^* &= v' - \idv_{t(v')}.
\end{align*}
Further, let $A^{1*} = (A^1 \setminus \langle u', v' \rangle) \cup \langle u'^*, v'^* \rangle$.
We will now argue that $T_{A^1}^*$ is an $\alpha$-spanning tree for $(\tilde{F}, \tilde{g})$ and $(F_{A^{1*}}, f_{A^{1*}})$, where we identify $u^{A^1}_{v',t} = u^{A^{1*}}_{v'^*,t}$ and $u^{A^1}_{u',t} = u^{A^{1*}}_{u'^*,t}$ for all $t \in [k]$, and $u^{A^1}_{w,t} = u^{A^{1*}}_{w,t}$ for all $w \in A^1 \setminus \langle u', v' \rangle$ and all $t \in [k]$.
First, observe that
    $T_{A^1} - \{u^{A^1}_{v',\ell}c, u^{A^1}_{v',t(v')}a, u^{A^1}_{u',\ell}b\}$ consists of four connected components $C_1,C_2,C_3,C_4$ with
    $c \in V(C_1)$,
    $u^{A^1}_{v',\ell}, u^{A^1}_{v',t(v')} \in V(C_2)$,
    $a,b \in V(C_3)$, and
    $u^{A^1}_{u',\ell} \in V(C_4)$.
Edge $u^{A^1}_{u',\ell}c$ connects components $C_1$ and $C_4$,
edge $u^{A^1}_{u',t(v')}a$ connects components $C_4$ and $C_3$, 
and
edge $u^{A^1}_{v',\ell}b$ connects components $C_3$ and $C_2$, 
yielding that $T_{A^1}^*$ is a tree spanning $V(T_{A^1})$.
Second, observe that for each edge $xy$ we replaced with $x'y'$,
we have $\lab(x) = \lab(x')$ and $\lab(y) = \lab(y')$, and thus, since all replaced edges are in $E(F_{A^1} \oplus_\alpha \tilde{F}) \setminus (E(F_{A^1}) \cup E(\tilde{F}))$,
so are all new edges.
Third, observe that the degree of $u^{A^1}_{v',t(v')}$ in $T_{A^1}^*$ is decreased by one compared to $T_{A^1}$, while the degree of $u^{A^1}_{u',t(v')}$ in $T_{A^1}^*$ is increased by one compared to $T_{A^1}$.
Thus $T_{A^1}^*$ is an $\alpha$-spanning tree for $(\tilde{F}, \tilde{g})$ and $(F_{A^{1*}}, f_{A^{1*}})$. Furthermore, observe that $P^*$ is a path connecting $u^{A^1}_{u',t(v')}$ and $u^{A^1}_{v',\ell}$ in $T_{A^1}^*$.
Towards applying Lemma~\ref{lem:low_level_shifting}, we define the vectors
\begin{align*}
    (z'^*)_{t} &= \big|\big(N_{T^*_{A^1}}(u^{A^1}_{v',t}) \cap V(\tilde{F})\big) \setminus V(P)\big| \text{ for } t \in [k] \text{, and}\\
    z^* &= v'^* - v,
\end{align*}
and observe that $z'^* = z' - \idv_{\ell}$.
It remains to show that $z^* \leq z'^*$.
By simple algebraic manipulations, the condition is equivalent to $v_{\ell} \geq 1$, which holds by precondition.
Finally, we apply Lemma~\ref{lem:low_level_shifting} to $\alpha, (\tilde{F}, \tilde{g}), A^{1*}, v'^*, u'^*, T_{A^1}^*, P^*$, and $z^*$. Since, $(u'^* + z^*, v'^* - z^*) = (u, v)$, we obtain a spanning tree satisfying the conditions of this lemma.
\end{proof}

\begin{longlemma}\label{lem:cw_equiv_rrule_2}
    Let $\ord=(O,b)$ be an ordered pattern over $m$ vectors. For an index $i \in [m]$ and a bad coordinate $\ell\in Z_{\ord}(i)$, let $(O^2, b^2) = \rrule_2(O, b, i, \ell)$. Then it holds that $(O^2, b^2) \preceq (O, b)$.
\end{longlemma}
\begin{proof}
    Let $j=b(\ell)$, $A = p(O, b)$, $A^2 = p(O^2, b^2)$, $u = O_i$, $v = O_j$, $u' = (O^2)_i$, $v' = (O^2)_j$.
    Note that $A^2 = (A \setminus \langle u, v \rangle) \cup \langle u', v' \rangle $ and that $(O^2, b^2) \preceq (O, b)$ is equivalent to $A^2 \preceq A$, which we will now show.
    
    Let $\alpha\subseteq [k]^2$ be an edge mapping, 
    and let $(\tilde{F}, \tilde{g})$ be a fixed forest that is $\alpha$-compatible with the fixed forest of $A^2$, $(F_{A^2}, f_{A^2})$.
    We need to show that there is an $\alpha$-spanning tree for $(\tilde{F}, \tilde{g})$ and $(F_{A}, f_{A})$.

    By the $\alpha$-compatibility of $(\tilde{F}, \tilde{g})$ and $(F_{A^2}, f_{A^2})$, there is an $\alpha$-spanning tree $T_{A^2}$ for $(\tilde{F}, \tilde{g})$ and $(F_{A^2}, f_{A^2})$.
    Let $P$ be a shortest path from $\{ u^{A^2}_{v',t} \colon t \in [k] \}$ to $\{ u^{A^2}_{u',t} \colon t \in [k] \}$ in $T_{A^2}$.
    Further, let $t(v')$ be the index of the first vertex of $P$ in $\{ u^{A^2}_{v',t} \colon t \in [k] \}$.

     We aim to invoke Lemma~\ref{lem:low_level_shifting}.
    Let $z'$ be a vector with $z'_{t} = 
 |(N_{T_{A^2}}(u^{A^2}_{v',t}) \cap V(\tilde{F})) \setminus V(P)|$ for $t \in [k]$.
 Further, let $z = u_{\ell} \cdot \idv_{\ell}$.
 We want to show that $z \leq z'$. Observe that this is equivalent to $z_{\ell} \leq z'_{\ell}$.
 If $t(v') \neq \ell$, we have $z'_{\ell} = v_{\ell} + u_{\ell}$, hence $z_{\ell} = u_{\ell} \leq z'_{\ell}$.
 If otherwise $t(v') = \ell$, we have $z'_{\ell} = v_{\ell} + u_{\ell} -1 \geq u_{\ell}$ by the precondition $v_{\ell} \geq 1$.
 Hence $z_{\ell} = u_{\ell} \leq z'_{\ell}$.

Finally, we apply Lemma~\ref{lem:low_level_shifting} to $\alpha, (\tilde{F}, \tilde{g}), A^2, v', u', T_{A^2}, P$, and $z$. Since, $(u' + z, v' - z) = (u, v)$, this yields a spanning tree satisfying the conditions of this lemma.
\end{proof}

\begin{longlemma}\label{lem:cw_equiv_rrule_12}
    Let $\ord=(O,b)$ be an ordered pattern over $m$ vectors. For an index $i \in [m]$ and a bad coordinate $\ell\in Z_{\ord}(i)$, let $(O^1, b^1) = \rrule_1(O, b, i, \ell)$ and $(O^2, b^2)=\rrule_2(O,b,i,\ell)$. Then it holds that $\{ (O, b) \} \preceq \{ (O^1, b^1), (O^2, b^2) \}$.
\end{longlemma}
\begin{proof}
    Let $j=b(\ell)$, $A = p(O, b)$, $A^1 = p(O^1, b^1)$, $A^2 = p(O^2, b^2)$,
    $u = O_{i}$, $v = O_{j}$,
    $u'_1 = (O^1)_{i}$, $v'_1 = (O^1)_{j}$, and
    $u'_2 = (O^2)_{i}$, $v'_2 = (O^2)_{j}$.
    Note that
    $A^1 = (A \setminus \langle u, v \rangle) \cup \langle u'_1, v'_1 \rangle$,
    $A^2 = (A \setminus \langle u, v \rangle) \cup \langle u'_2, v'_2 \rangle$,
    and that $\{ (O, b) \} \preceq \{ (O^1, b^1), (O^2, b^2) \}$ is equivalent to $\{ A \} \preceq \{ A^1, A^2 \}$, which we will now show.

    Let $\alpha\subseteq [k]^2$ be an edge mapping, and let $(\tilde{F}, \tilde{g})$ be a fixed forest that is $\alpha$-compatible with the fixed forest of $A$, $(F_{A}, f_{A})$.
    We need to show that there is an $\alpha$-spanning tree for $(\tilde{F}, \tilde{g})$ and $(F_{A^1}, f_{A^1})$, or there is an $\alpha$-spanning tree for $(\tilde{F}, \tilde{g})$ and $(F_{A^2}, f_{A^2})$.
    
    By the $\alpha$-compatibility of $(\tilde{F}, \tilde{g})$ and $(F_{A}, f_{A})$, there is an $\alpha$-spanning tree $T_{A}$ for $(\tilde{F}, \tilde{g})$ and $(F_{A}, f_{A})$.
    Let $P$ be a shortest path from $\{ u^{A}_{u,t} \colon t \in [k] \}$ to $\{ u^{A}_{v,t} \colon t \in [k] \}$ in $T_{A}$.
    Further, let $t(u)$ be the index of the first vertex of $P$ in $\{ u^{A}_{u,t} \colon t \in [k] \}$, and let $z'$ be a vector with $z'_t = 
 |(N_{T_{A}}(u^{A}_{u,t}) \cap V(\tilde{F})) \setminus V(P)|$ for $t \in [k]$.

    First, suppose $t(u) = \ell$.
    Let $z = u - u_\ell \cdot \idv_\ell$.
    For $t \neq \ell$, we have $z_t = z'_t$, and for $t = \ell$, we have $z_t = 0$. Hence $z \leq z'$.
    We apply Lemma~\ref{lem:low_level_shifting} to $\alpha, (\tilde{F}, \tilde{g}), A^1, u, v, T_A, P$, and $z$. Since, $(u - z, v + z) = (u'_1, v'_1)$, this yields an $\alpha$-spanning tree for $(\tilde{F}, \tilde{g})$ and $(F_{A^1}, f_{A^1})$.
    
    Conversely, suppose $t(u) \neq \ell$.
    Let $z = u_\ell \cdot \idv_\ell$.
    For $t \neq \ell$, we have $z_t = 0$, and for $t = \ell$, we have $z_{t} = z'_t$. Hence $z \leq z'$.
    We apply Lemma~\ref{lem:low_level_shifting} to $\alpha, (\tilde{F}, \tilde{g}), A^2, u, v, T_A, P$, and $z$. Since, $(u - z, v + z) = (u'_2, v'_2)$, this yields an $\alpha$-spanning tree for $(\tilde{F}, \tilde{g})$ and $(F_{A^2}, f_{A^2})$.
\end{proof} 
Combining Lemmas~\ref{lem:cw_equiv_rrule_1}, \ref{lem:cw_equiv_rrule_2} and~\ref{lem:cw_equiv_rrule_12} immediately yields the following corollary:
\begin{corollary}
\label{cor:cw_equiv_rrule12}
     Let $\ord=(O,b)$ be an ordered pattern over $m$ vectors, index $p \in I_\ord$ and coordinate $\ell\in Z_{\ord}(\ell)$ and 
    let $(O^1, b^1) = \rrule_1(O, b, p, \ell)$, $(O^2, b^2)=\rrule_2(O,b,p,\ell)$. Then it holds that $\{ (O, b) \} \simeq \{ (O^1, b^1), (O^2, b^2) \}$.
\end{corollary}

\para{Subroutines.}
Based on the functions $\pi_1$ and $\pi_2$, we formulate a subroutine that transforms a single pattern into an equivalent family of nice patterns. 

We now define the first subroutine, $\operatorname{ShiftVector}$, which takes as input an ordered pattern $(O, b)$, and a bad index $i\in I_{O,b}$, and outputs a set of ordered patterns, that result from an exhaustive application of the operations $\rrule_1$ and $\rrule_2$ to all bad coordinates of $i$.

Intuitively, we obtain the $j$th ordered pattern in the output from $(O, b)$ as follows. Let $z_{j}$ be the $j$th smallest bad coordinate of index $i$. We shift from index $i$ everything except one unit, i.e., the vector at index $i$ becomes $\idv_{z_{j}}$. The units from the first $j-1$ bad coordinates are shifted to the corresponding big indices, and all other units (i.e., units from coordinates that were not bad and all except one unit from $z_{j}$) are shifted to $b(z_{j})$. If $i$ was a big vector for some coordinates, $b(z_{j})$ becomes the big vector for those coordinates instead. Finally, the last ordered pattern in the output is obtained by shifting all bad coordinates from $i$ to the respective big vectors.

\begin{procedure}
\label{def:cw_shift_vector}
    The operation \defi{$\operatorname{ShiftVector}$} takes as input a pair $(\ord, i)$, for an ordered pattern $\ord=(O, b)$ with $O= O_1,\dots, O_m$, and $i\in I_{\ord}$. 
    Let $Z_{\ord}(i) = \{z_1, \dots, z_{\ell}\}$.
    The operation outputs a set of pairs $\{(\ord^1, p_1),\dots,(\ord^{\ell+1}, p_{\ell+1})\}$, where $p_{j}=b(z_{j})$ for $j\in [\ell]$, $p_{\ell+1}=\perp$, and $\ord^j = (O^j, b^j)$ for $j\in[\ell+1]$ are ordered vectors defined as follows:
    Let $v=O_{i}$.
    
    For $j \in [\ell]$ and $d\in[m]$ we define the vector $w = (O^j)_d$ from $w'=O_d$ as follows. We distinguish four different cases:
    \begin{itemize}
        \item If $d = b(z_{j})$, we define 
    \[
    w_z = \begin{cases}
        w'_z + v_z & \colon z \in \{z_1, \dots, z_{j-1}\} \land b(z_{j}) = b(z),\\
        w'_z & \colon z \in \{z_1, \dots, z_{j-1}\} \land b(z_{j}) \neq b(z),\\
        w'_z + v_z - 1 & \colon z = z_{j},\\
        w'_z + v_z & \colon \text{otherwise.}
    \end{cases}
    \]
    \item If $d=b(z)$ and $d\neq b(z_{j})$ for some $z\in\{z_1,\dots,z_{j-1}\}$, we define $w_z=w'_z+v_z$ and $w_{z'}=w'_{z'}$ for $z'\neq z$. 
    \item If $d = i$: we define $w = \idv_{z_{j}}$.

         \item If $d=b(z)$ and $d\neq b(z_{j})$ for some $z\in\{z_1,\dots,z_{j-1}\}$, we define $w_z=w'_z+v_z$ and $w_{z'}=w'_{z'}$ for $z'\neq z$. 
    \item Otherwise, we define $w=w'$.
\end{itemize}
    We define $b^j$ as the mapping such that $b^j(z) = b(z_{j})$ for each coordinate $z$ with $b(z) = i$, and $b^j(z) = b(z)$ for all other coordinates $z$.
    
    Finally, we define $b^{\ell+1} = b$. For $d\in[m]$, we define the vector $w = (O^{\ell+1})_d$ from $w' = O_d$ as follows:
    \[
    w_z =
    \begin{cases}
        0 &\colon d = i \land z \in Z_{O,b}(i),\\
        w'_z + v_z & \colon d = b(z) \land z \in Z_{O,b}(i),\\
        w'_z &\colon \text{otherwise.}
    \end{cases}
    \]
    \xqed
\end{procedure}

By definition, \cref{def:cw_shift_vector} outputs ordered patterns where the index $i$ is no longer bad, and the only (possibly) newly created bad index is given by the output pair. The following observation follows:

\begin{observation}
    Let $(\ord', p)$ be some pair in the output of $\operatorname{ShiftVector}(\ord, i)$ with $\ord'= (O', b')$. Then the following points hold:
    \begin{itemize}
        \item either $I_{\ord'} = I_{\ord}\setminus \{i\}$, or $I_{\ord'} = (I_{\ord}\setminus \{i\}) \cup p$, and
        \item $b^j([k])\subseteq b([k])$.
    \end{itemize}
    In particular, it holds for the unique pair $(\ord, p)$ with $p=\perp$ that $I_{\ord'} = I_{\ord}\setminus \{i\}$.
\end{observation}

\begin{lemma}\label{lem:cw_shift_vector_equiv}
    Given an ordered pattern $\ord = (O, b)$ and an index $i\in I_{\ord}$. Let $P$ be the set resulting from $\operatorname{ShiftVectorL}(\ord, i)$ and let $S=\{\ord\colon (\ord, p)\in P\}$. Then $S$ is equivalent to $\ord$.
\end{lemma}

\begin{proof}
    Let $Z_\ord(i)=\{z_1,\dots, z_t\}$, and let $\ord^1, \dots, \ord^{t+1}$ be the set of ordered patterns as defined in \cref{def:cw_shift_vector}.
    We define the sequence of ordered patterns $(Y^0, y^0),\dots,(Y^{\ell}, y^{\ell})$, where  $(Y^0, y^0) = (O,b)$, and $(Y^j, y^j) = \rrule_2(Y^{j-1}, y^{j-1}, i, z_j)$ for all $j\in[\ell]$. Then it holds for each $j\in[\ell]$ that $\ord^j = \rrule_1(Y^{j-1}, y^{j-1}, i, z_j)$. We show by induction over $j$ that $\{\ord^1, \dots, \ord^j\}\cup \{(Y^j, y^j)\}$ is equivalent to $\{(O, b)\}$. The lemma then follows from the observation that $\ord^{\ell+1} = (Y^{\ell}, y^{\ell})$.

    For $j=0$ the claim holds trivially since $(Y^0, y^0) = (O, b)$.
    Assume now that the claim holds for $j-1$, i.e., that $\{\ord^1, \dots, \ord^{j-1}\}\cup \{(Y^{j-1}, y^{j-1})\}$ is equivalent to $\{(O, b)\}$. Since it holds that $\ord^{j} = \rrule_1(Y^{j-1}, y^{j-1}, i, z_j)$ and $(Y^j, y^j) = \rrule_2(Y^{j-1}, y^{j-1}, i, z_j)$, it holds again by \cref{cor:cw_equiv_rrule12} that $\{\ord^1, \dots, \ord^{j}\}\cup \{(Y^j, y^j)\}$ is equivalent to $\{(O, b)\}$.
\end{proof}

Based on this operation, we define the recursive procedure $\operatorname{ReduceVector}$ that we use as a subroutine in our final process that turns any given pattern into an equivalent family of nice patterns. Informally, the $\operatorname{ReduceVector}$ operation selects an index $i$ and outputs a family of patterns where $i$ is no longer a bad index, and no other indices become bad.

\begin{procedure}
    \label{def:cw_reduce_vector}
    The recursive operation \defi{$\operatorname{ReduceVector}$} takes as input an ordered pattern $\ord$ over $m$ vectors and an index $i\in[m]$, and outputs a family of ordered patterns as follows:

    If $O_i$ is not a bad vector with respect to $b$, then we output the family $\{\ord\}$.
    Otherwise, let $\{(\ord^1, p_1),\dots,(\ord^{\ell+1}, p_{\ell+1})\}= \operatorname{ShiftVector}(\ord,i)$, and let $(\ord^{\ell+1}, p_{\ell+1})$ be the unique pair with $p_{\ell+1}=\perp$.
    Then for $j\in[\ell]$, we define $C_j = \operatorname{ReduceVector}(\ord^j,p_j)$.
    We define the output of the operation as $C = \{\ord^{\ell+1}\} \cup \bigcup_{j\in[\ell]} C_j$.
\end{procedure}

Now we prove some properties of this operation that we need later to prove the correctness of our procedures. We start with the following definition:

\begin{definition}
    Let $(O,b)$ be an ordered pattern over $m$ vectors and $i\in[m]$. We define the tree \defi{$\mathcal{T}_{O, b, i}$} as the recursion tree of the operation $\operatorname{ReduceVector}(O, b, i)$ (we omit the ordered pattern $O,b$ when clear from context). We also label each node of $\mathcal{T}_{O,b,i}$ corresponding to a call $(O', b', i')$ with the index $i'$.
\end{definition}

\begin{observation}\label{obs:cw_reduce_vector}
    For a recursion tree $\mathcal{T}_{O,b,i}$ and a node $\operatorname{ReduceVector}(O',b',i')$ corresponding to a child of the root of $\mathcal{T}_{O,b,i}$, all the following points hold:
     
    \begin{enumerate}
        \item $(O')_i = \idv_{d}$ for some $d\in Z_{O,b}(i)$, and hence, $i\notin I_{O',b'}$,\label{it:obsredvec_unit}
        \item $(b')^{-1}(i) = \emptyset$,\label{it:obsredvec_no_big}
        \item $i' = b(d)$ for some $d \in Z_{O,b}(i)$, and \label{it:obsredvec_shift_to_big}
        \item $b'([k]) \subseteq b([k])$.\label{it:obsredvec_big_increases}
    \end{enumerate}
\end{observation}

\begin{lemma}\label{lem:cw_reduce_vector_distinct_labels}
    There cannot exist two distinct nodes on the same root-leaf path of the recursion tree $\mathcal{T}_{O,b,i}$ sharing the same label $j\in[m]$.
\end{lemma}

\begin{proof}
    By induction over the depth of the nodes in the recursion tree, and by items (2) and (4) of \cref{obs:cw_reduce_vector}, for a node corresponding to a call $\operatorname{ReduceVector}(O',b',i')$, and for a descendant node corresponding to a call $\operatorname{ReduceVector}(O'',b'',i'')$, it holds that $(b'')^{-1}(i')= \emptyset$. The lemma follows then by item (3) of \cref{obs:cw_reduce_vector}.
\end{proof}

\begin{lemma}\label{lem:cw_reduce_vector_size}
    Let $\mathcal{T}_{O,b,i}$ be the recursion tree of a call $\operatorname{ReduceVector}(O, b, i)$.
    Then $\mathcal{T}_{O,b,i}$ has at most $k^k$ nodes.
    In particular, the procedure runs in time $k^{\oh(k)}$,
    and it holds that $|\operatorname{ReduceVector}(O, b, i)|\leq k^{\bigoh(k)}$.
\end{lemma}

\begin{proof}
    A call of $\operatorname{ReduceVector}(O, b, i)$ makes at most $|Z_{O,b}(i)|\leq k$ recursive calls, and hence, each node of $\mathcal{T}_{O,b,i}$ has at most $k$ children. Moreover, it holds by \cref{lem:cw_reduce_vector_distinct_labels} that the labels assigned to the nodes of each root-leaf path of the recursion tree $\mathcal{T}_{O,b,i}$ are pairwise distinct. Hence, $\mathcal{T}_{O,b,i}$ has height at most $k$. It follows that $\mathcal{T}_{O,b,i}$ has at most $k^{k}$ nodes. The bound on the running time follows then, since each node of the recursion tree, aside from the recursive calls, only makes a call of $\operatorname{ShiftVector}$, which runs in polynomial time. 
    
    Now let $T_h$ be the maximum size of a family returned by a call $\operatorname{ReduceVector}(O, b, i)$ with a recursion tree of height $h$. Then it holds that $T_0 = 1$, and for $h\geq 1$, it holds that $T_h\leq k\cdot T_{h-1} + 1\leq (k+1)T_{h-1}$. This implies that $T_k\leq (k+1)^k\leq k^{\bigoh(k)}$.
\end{proof}

\begin{lemma}\label{lem:cw_reduce_vector_equiv}
    Let $(O,b)$ be an ordered pattern over $m$ vectors and let $i\in[m]$. Then the operation $\operatorname{ReduceVector}(O, b, i)$ produces a family of ordered patterns equivalent to $\{(O,b)\}$, such that $I_{O',b'}\subseteq I_{O,b}\setminus\{i\}$ and $b'([k])\subseteq b([k])$ for each pattern $(O',b')$ in the resulting family.
\end{lemma}

\begin{proof}
    Let $(O^0,b^0,i_0),\dots,(O^r,b^r,i_r)$ be a sequence of calls of the $\operatorname{ReduceVector}$ operation corresponding to some root-leaf path of the recursion tree $\mathcal{T}_{O,b,i}$. In particular, $i_0,\dots,i_r$ are the labels assigned to the nodes of this path. Intuitively, in each iteration the coordinates of $i_j$ are shifted to either their big indices or to $i_{j+1}$, so $i_{j+1}$ is the only index that can turn bad in $O^{j+1}$ with respect to $b^{j+1}$. Formally, we have that $I_{O^{j+1},b^{j+1}}\subseteq (I_{O^j,b^j}\setminus\{i_j\})\cup \{i_{j+1}\}$.
    It follows by induction, that $I_{O^r,b^r}\subseteq I_{O,b}\setminus\{i\}$ holds for the last pattern $(O^r,b^r)$ in the sequence.

    Moreover, by \cref{obs:cw_reduce_vector}, $b^{j+1}([k])\subseteq b^j([k])$, and hence, $b^{r}([k])\subseteq b([k])$ holds also by induction. 
    The equivalence of the families also holds by induction using \cref{lem:cw_shift_vector_equiv}, since we only replace a pattern with the family resulting from an application of $\operatorname{ShiftVector}$ on this pattern.
\end{proof}

 Finally, we define the $\operatorname{ReduceToNice}$ operation that, given a pattern $A$, outputs a family of nice patterns equivalent to $A$. The first step of this operation is a subroutine called $\operatorname{ProcessBig}$ that takes care of the big vectors, i.e., ensures that the big vectors have no bad coordinates.

\begin{procedure}\label{def:cw_process_big}
    We define the subroutine \defi{$\operatorname{ProcessBig}$} that takes an ordered pattern $(O,b)$ as input, and outputs a family of patterns as follows:
    Let $H_0 = \{(O,b)\}$. We compute $H_j$ from $H_{j-1}$ for $j\in[k]$ as follows:
    For each pattern $(O', b')\in H_{j-1}$, we define the set $C_{O',b'}$, where $C_{O',b'} = \{(O',b')\}$ if $I_{O',b'}\cap b'([k]) = \emptyset$, and $C_{O',b'} = \operatorname{ReduceVector}(O', b', i)$ for the smallest index $i\in I_{O',b'}\cap b'([k])$ otherwise. Then we define $H_j = \bigcup_{(O',b')\in H_{j-1}} C_{O',b'}$.
    
    We define $H_j = \bigcup_{\ord'\in H_{j-1}} C_{\ord'}$. Let $s$ be the smallest integer such that $H_j = H_{j+1}$. The operation outputs the family $H_s$.
\end{procedure}

\begin{lemma}\label{lem:cw_process_big}
    It holds for each ordered pattern $(O,b)$, that $\operatorname{ProcessBig}(O,b)$ runs in time $k^{\bigoh(k)}$ and outputs a family $H$ of ordered patterns equivalent to $(O,b)$, such that for each pattern $(O',b')\in H$, it holds that $I_{O',b'}\cap b'([k]) = \emptyset$, $b'([k])\subseteq b([k])$ and $O_j = O'_j$ for all $j\notin b([k])$. Moreover, the process finishes after at most $k$ iterations.
\end{lemma}

\begin{proof}
    The last claim follows from the fact, that for each ordered pattern $(O',b') \in H_j$ with $(O',b')\in \operatorname{ReduceVector}(O'',b'',i)$ for some ordered pattern $(O'',b'')\in H_{j-1}$, it holds that $i\in b''([k])\cap I_{O'',b''}$. It holds by \cref{lem:cw_reduce_vector_equiv} that $I_{O',b'} \subseteq I_{O'',b''}\setminus \{i\}$. Hence, it holds that $I_{O',b'}\cap b'([k])$ is strictly smaller than $I_{O'',b''}$. Therefore, it holds by item (3) of \cref{obs:cw_reduce_vector} that largest size of such a set over all patterns in $H_j$ decreases after each iteration. Hence, it must hold after $k$ iterations for all ordered patterns $(O',b')\in H_j$ that $I_{O',b'}\cap b'([k]) = \emptyset$.

    The claim $b'([k])\subseteq b([k])$ holds by \cref{obs:cw_reduce_vector}. The claim $O_j = O_j'$ for all $j\notin b([k])$ also holds by \cref{obs:cw_reduce_vector},
    since we only call $\operatorname{ReduceVector}$ on indices in $b([k])$, which only shifts indices to $b([k])$. It holds that $H_j$ is equivalent to $H_{j-1}$ by \cref{lem:cw_reduce_vector_equiv}, since we only replace a pattern with the family resulting from an application of $\operatorname{ReduceVector}$ on this pattern. It follows by induction over $j$ that $H$ is equivalent to $(O,b)$.

    We claim that $|H_j|\leq k^{\bigoh(k)}$ for all $j\geq 0$. The lemma then follows by \cref{lem:cw_reduce_vector_size}. Let $(O',b')\in H_j$, then it holds for each index $i\in I_{O',b'}$ that 
    $O'_i = O_i$. Namely, each time we shift to an index $i$ (possibly making it a bad index), we directly call $\operatorname{ReduceVector}$ on $i$, and hence, by \cref{lem:cw_reduce_vector_equiv}, $i$ cannot be a bad index in the patterns resulting form such a call. Hence, each pattern of $H_j$ can be uniquely identified by the indices of $b([k])$ that were ``reduced'' (i.e., indices $i$ such that $i\not\in I_{O',b'}$), and by their respective values. Let $S$ be the set of these indices that do not correspond to unit or zero vectors. Since all the indices of $S$ are not bad indices, the non-zero coordinates of their corresponding vectors must form a partition of a subset of $[k]$. Hence, we have at most $3^k\cdot k^k \cdot (k+1)^k = k^{\bigoh(k)}$ possible patterns in $H_j$, where the first term counts for each index $i\in b([k])$ whether $O'_i = O_i$, $O'_i$ is a unit or zero vector, or $i \notin I_{O',b'}$ and $O_i$ is not a unit or zero vector. The second term counts the number of partitions of the non-zero coordinates, and the third term counts the different unit and zero vectors for each index corresponding to a unit or zero vector $O'_i$.
\end{proof}

Now we are ready to describe the procedure $\operatorname{ReduceToNice}$, that takes as input a pattern and outputs a family of nice patterns equivalent to the input:

\begin{procedure}\label{def:cw_reduce_to_nice}
    The procedure \defi{$\operatorname{ReduceToNice}$} takes as input a pattern $A\in\mathcal{P}$, and outputs a family of patterns.
    The procedure first fixes an arbitrary ordered pattern $\ord_A\coloneqq(O,b)$ of $A$. It computes the family $S_0=\operatorname{ProcessBig}(O,b)$.
         
    Let $\{i_1,\dots,i_s\} \coloneqq I_{O,b}\setminus b([k])$.
    For $j\in[s]$, we define $ S_j$ as follows:
    Let $ S'_j = \bigcup_{(O',b')\in  S_{j-1}} \operatorname{ReduceVector}(O', b', i_j)$.
    Then the family $ S_j$ results from $ S'_j$ as follows: We iterate over all patterns $(O',b')\in  S'_j$, and we add $(O', b')$ to $ S_j$ if the following two conditions hold: first if $(O',b')$ contains at most one zero vector, and second if $ S_j$ does not already contain another pattern $(O'',b'')$ that shares the same underlying pattern with $(O',b')$, i.e., $p(O',b') = p(O'',b'')$. In other words, for each pattern $p\in\mathcal{P}$, we only add one ``representative'' ordered pattern of $p$ to $ S_j$ if such a pattern exists in $ S'_j$.
    The operation outputs the family
    $\big\{p(\ord)\colon \ord\in S_s\big\}$.
    \xqed
\end{procedure}

In the following we show that this subroutine runs in time $n^{\bigoh(k)}$, and correctly produces a family of nice patterns equivalent to the input pattern.

\begin{lemma}\label{lem:cw_reduce_to_nice}
    It holds for all $j\in[s]_0$ and $(O',b')\in  S'_j$ that $I_{O',b'} = \{i_{j+1},\dots, i_s\}$ and $b'([k]) \subseteq b([k])$. Further, it also holds that $O'_{\ell} = O_{\ell}$ for all $\ell\in\{i_{j+1},\dots,i_s\}$.
\end{lemma}
\begin{proof}
    We prove the claim by induction over $j\in[s]_0$.
    For $j=0$, the claim holds by \cref{lem:cw_process_big}, since $I_{O',b'} = I_{O,b}\setminus b([k])$ for all $(O',b')\in  S_0$. Moreover, $b'([k]) \subseteq b([k])$ holds by Lemma~\ref{lem:cw_reduce_vector_equiv}.
     
    Now assume the claim holds for $j-1$. Let $(O',b')$ be a pattern in $ S'_{j}$, and let $(O^*, b^*)$ be a pattern in $ S_{j-1}$ such that $(O',b')\in \operatorname{ReduceVector}(O^*, b^*, i_j)$.
    Then it holds by \cref{lem:cw_reduce_vector_equiv} that 
    $I_{O',b'}\subseteq I_{O^*,b^*}\setminus\{i_j\}$ and $b'([k])\subseteq b^*([k])$. Moreover, it holds by the induction hypothesis that $I_{O^*,b^*} = \{i_{j},\dots,i_s\}$, and $b^*([k])\subseteq b([k])$. It follows that $I_{O',b'} \subseteq \{i_{j+1},\dots,i_s\}$, and $b'([k])\subseteq b([k])$.

    Finally, the inclusion $\{i_{j+1},\dots,i_s\}\subseteq I_{O',b'}$ follows from the fact that $\operatorname{ShiftVector}$ only shifts to big indices, and that it holds by induction hypothesis that $b^*([k])\subseteq b([k])$ is distinct from all indices $i_1,\dots, i_s$. It also holds by the same observation, that $O'_{i_{\ell}} = O^*_{i_{\ell}} = O_{i_{\ell}}$ for all $\ell\in\{j+1,\dots,s\}$.
\end{proof}

\begin{corollary}\label{cor:cw_reduce_to_nice_output}
    For each $A\in\mathcal{P}$, the operation $\operatorname{ReduceToNice}(A)$ produces a family of nice patterns equivalent to $A$.
\end{corollary}
\begin{proof}
    It holds by \cref{lem:cw_reduce_to_nice} that $I_{O',b'} = \emptyset$ for all $(O',b')\in  S_s$. Since we remove all patterns containing at least two zero vectors, it holds by \cref{obs:cw_ord_nice} that the output family is nice.
    
    It holds by \cref{lem:cw_process_big} that $S_0$ is equivalent to $\{\ord_A\}$.
    We show that $S_j$ is equivalent to $S_{j-1}$ for all $j\in[s]$, and hence, it holds by induction over $j\in[s]_0$ that $ S_j$ is equivalent to $\{\ord_A\}$.
    The family $ S'_i$ is equivalent to $ S_{i-1}$ for all $i\in[s]$ by \cref{lem:cw_reduce_vector_equiv}. The family $ S_i$ is then equivalent to $ S'_i$ by the fact that we only remove patterns that are either equivalent to other patterns in the family, or contain at least two zero vectors. The latter patterns are not compatible with any partial solutions, and hence, are equivalent to the empty family, and they can be removed safely.
\end{proof}

\begin{lemma}\label{lem:cw_reduce_to_nice_size}
    It holds for each $j\in[s]_0$ that $| S_j| \leq |\mathcal{P^*}| \leq n^{\bigoh(k)}$.
\end{lemma}
\begin{proof}
	    For $j=0$ this holds by \cref{lem:cw_process_big}. Now assume $j>0$.
    By \cref{lem:cw_reduce_to_nice}, all patterns in $ S_j$ have the same set of bad indices and identical vectors at bad indices, i.e., it holds for all $(O',b')\in  S_j$, $I_{O',b'} = \{i_{j+1}, \dots, i_s \}$, and for $i\in I_{O',b'}$, $O'_i = O_i$.
    Let $ S_j^*$ be the family resulting from $ S_j$ by removing all vectors $\{O_i\colon i\in \{i_{j+1}, \dots, i_s \}\}$. Then $ S_j^*$ contains nice patterns only, each containing at most $n$ vectors. Moreover, for any two patterns $(O',b'),(O'',b'')\in  S_j^*$, it holds that $p(O',b') \neq p(O'',b'')$, since otherwise, the original ordered patterns in $ S_j$ would share the same underlying pattern, a contradiction to our choice of of $ S_j$. Hence, the size of $ S_j^*$ (and hence, of $ S_j$) is bounded by the number of nice patterns with at most $n$ vectors (i.e., $|\mathcal{P^*}|$), which is at most $n^{\bigoh(k)}$, by \cref{obs:sizenice}.
\end{proof}

\begin{corollary}\label{cor:cw_reduce_to_nice_time}
    The operation $\operatorname{ReduceToNice}(A)$ runs in time $n^{\bigoh(k)}$ for each pattern $A\in\mathcal{P}$.
\end{corollary}
\begin{proof}
    It holds by \cref{lem:cw_process_big} that $\operatorname{ProcessBig}(O, b)$ runs in time $k^{\bigoh(k)}$.
    For each $j\in [s]$, we apply $\operatorname{ReduceVector}$ on each element of the family $ S_{j-1}$. This runs in time $| S_{j-1}|\cdot k^{\bigoh(k)}\poly(n) \leq n^{\bigoh(k)}$ by Lemmas~\ref{lem:cw_reduce_vector_size} and \ref{lem:cw_reduce_to_nice_size}, since each call of the $\operatorname{ShiftVector}$ runs in polynomial time.
    Moreover, it holds by these two lemmas that $ S'_j$ has size $n^{\bigoh(k)}$. Hence, $ S_j$ can be computed from $ S'_j$ in time $(n^{\bigoh(k)})^2\poly(n) = n^{\bigoh(k)}$. 
    Since $s\leq n$, the total running time is $n^{\bigoh(k)}$.
\end{proof}

\subsection{The Algorithm}

Before defining the algorithm, we describe an auxiliary operator that corresponds to relabeling vertices in the NLC-expression, i.e., it describes how a pattern corresponding to a partial solution changes after a relabel operation.
 \begin{definition}
    For a relabeling function $\beta : [k] \to [k]$ and      for $v\in \mathbb{N}^{k}$, we define $\rho_{\beta}(v) = v'$, where $v'_i = \sum\limits_{j \in \beta^{-1}(i)} v_j$.
    Given a pattern $A$, we define
    $\rho_{\beta}(A) = \langle \rho_{\beta}(v)\colon v\in A\rangle$.
\end{definition}

Moreover, we define the notion of ``unit components'' and ``big components'' that will be used in the algorithm:

\begin{definition}
    Given a nice pattern $A$, we call a connected component of the forest $F_A$ a \defi{unit component}, if it corresponds to a unit vector of $A$.
Similarly, a component of $F_A$ is a \defi{zero component}, if it corresponds to a zero vector. We call a connected component of $F_A$ a \defi{big component}, if it is neither a unit nor a zero component.
\end{definition}

Note that while this definition closely relates to the definitions of unit and big indices in an ordered pattern, the main difference is that here we forbid big components to correspond to unit vectors, where allowing such was necessary in the definitions of big indices.

Now, we describe our algorithm. At a high level, at each join node of the NLC-decomposition, we iterate over all pairs of patterns representing partial solutions at the two children of the current node. For each such pair of patterns, we iterate over all possible ways of combining them by adding edges between the partial solutions corresponding to these two patterns. Finally, we execute the relabeling procedure on each resulting pattern, and represent the outcome with its nice equivalent using the $\operatorname{ReduceToNice}$ operation.

The main difficulty lies in guessing the edges between a pair of partial solutions at the children nodes. By~\cref{obs:cw_label_shifting}, without loss of generality we can shift the labels of one partial solution by $k$, and subsequently unify the two partial solutions. Thus we obtain a graph with labels in $[2k]$, which has at most $2k$ big components, at most $n$ unit components and at most two zero components. 

We will add edges between these components in three stages: first, we add the edges whose both endpoints lie in big components. We follow by adding edges whose both endpoints lie in unit components. Finally, we add edges that connect a unit component to a big component. The first two types of edges can be added in a straightforward way. Namely, we can add at most $2k-1$ edges between big components (adding more edges would yield a cycle), and at most one edge between two unit components (adding an edge between two unit components disconnects them from the rest of the graph, so the only case in which this leads to a solution is when the current forest consists of only two unit components). Therefore, we can simply iterate over all such choices.

However, the third type of edges, those between unit and big components, has to be treated with more care. The naive approach of iterating over all such possibilities leads to a $k^n$ algorithm at best. Instead, we start by observing that after adding these edges, some unit components will remain unchanged, while the rest be ``absorbed'' by big components, without adding additional requirement to the big components they attach to. This holds, since by definition, they only have unit requirements, which are satisfied by the added edge. 
In other words, the total requirement of a big component decreases by the number of unit components attached to each label, regardless of the types of the unit components attached. Therefore, we can guess two things independently:
Firstly, for each of the $2k$ types of unit components, we guess how many components of each type will be absorbed. Secondly, for each label, by definition, there exists at most one vertex of this label in some big component having non-zero requirement. We then guess the number of unit vectors that will attach to each such vertex. To verify the validity of our guesses, we need to find a set of edges between unit and big components the realizes them. We achieve this by a call of a \textsc{Bipartite $b$-Matching} oracle. Although different valid solutions to the \textsc{Bipartite $b$-Matching} problem lead to different resulting forests, it holds by the argument above, that all these  forests admit the same pattern. Therefore, a single solution of the \textsc{Bipartite $b$-Matching} instances suffices to construct the resulting pattern.  

 \begin{algorithm}
\label{def:cw_algo}
    For each $x\in\nodes$, we define the families $T_x \subseteq \mathcal{P}^*$ recursively over $\syntaxtree$ as follows. For an initial node $\bullet_i$, let $v$ be the only vertex of $\bullet_i$; we define $T_x = \{\langle t\cdot \idv_i \rangle:\: t\in D(v) \}$.
    For a join node $x\in\nodes$ with $G_x = G_{x_1} \oplus_{\alpha}^{\beta} G_{x_2}$, we define $P_x = \{(A_1,A_2)\colon A_1\in T_{x_1}, A_2\in T_{x_2}\}$. 
    The algorithm computes the table $T_x$ in three steps, where it first computes two intermediate tables $T^{(1)}_x$ and $T^{(2)}_x$ as follows:
    
    First, we define the table $T^{(1)}_x$.
    For each pair $(A_1, A_2)\in P_x$, let $F_1 := F_{A_1}$, and $F_2$ be the forest obtained from $F_{A_2}$ by relabeling each vertex of label $i$ to label $k+i$ for all $i\in[k]$, and let $\alpha' = \{(i, k+j)\colon (i,j)\in \alpha\}$. Let $\hat{G}$ be the disjoint union of $F_1$ 
    and $F_2$, and let
    $\hat{E} = E(F_1\oplus_{\alpha'} F_2) \setminus E(\hat{G})$,
    i.e.\ $\hat{E}$ is the set of edges added to $\hat{G}$ by $\alpha'$. Finally, let $f:[2k]\rightarrow \mathbb{N}_0$ be defined as $f(i) = f_{A_1}(i)$ for $i\in[k]$ and $f(i) = f_{A_2}(i-k)$ for $i\in\{k+1,\dots, 2k\}$.
        
    Let $B$ be the set of vertices in big components with degree requirement at least one, and $U$ be the set consisting of the unique vertex of requirement one in each unit component. For each subset $E'\subseteq \hat{E}$ containing at most $2k-1$ edges between vertices of $B$, and at most one edge between vertices of $U$ and no other edges, let $G'$ be the graph resulting from $\hat{G}$ by adding the edges of $E'$. If it holds that $G'$ is acyclic, and it holds for all vertices $v\in \hat{G}$ that $\deg_{E'}(v) \leq f(v)$, we do the following. 

     Define $f' = f - \deg_{E'}$. 
    Note that for each $i\in[2k]$, the set $B$ contains at most one vertex of label $i$. 
    Let $b,u\in \mathbb{N}_0^{2k}$ be the vectors defined as follows: We define $b_i = f'(u)$, for the unique vertex $u \in B$ labeled $i$ if such a vertex exists, or $b_i = 0$ otherwise. We define $u_i = |\{u\in U\colon \lab_{G'}(u) = i\}|$, i.e.\ $u_i$ is the number of unit components, whose single vertex of requirement one is labeled $i$. For each choice of vectors $b'\leq b$ and $u'\leq u$, where $\leq$ is a componentwise inequality, let $H$ be the bipartite graph defined by $H=(B\cup [2k], E_H)$, where $E_H = \{(v,i)\in B\times [2k]\colon (\lab_{G'}(v), i)\in\alpha'\}$. We define the capacity function $c:V(H)\to \mathbb{N}_0$ as $c(v) = b'(v)$ for $v\in B$ and $c(i) = u'(i)$ for $i\in [2k]$. The algorithms verifies whether $(H,c)$ is a YES-instance of the \textsc{Bipartite $b$-Matching} problem.
    
    If this is the case, let $g$ be a solution of the \textsc{Bipartite $b$-Matching} instance $(H,c)$. We construct the graph $G''$ from $G'$ as follows: we assign $g(v, i)$ unit vertices of label $i$ to the vertex $v$, in such a way that each unit vertex is assigned to at most one vertex of $B$. Let $M$ be the set of edges consisting of a single edge between each such unit vertex and its assigned vertex $v$. We define $G''$ as the graph resulting from $G'$ by adding $M$ to it, and we define the mapping $f'': V(G'')\to \mathbb{N}_0$ as $f''(v) = f'(v)-\deg_M(v)$. Next, relabel the vertices of $G''$ by changing the labels of vertices in $V(F_2)$ from $i$ to $i-k$, thus obtaining a graph $G''_{\text{rel}}$ with labels in $[k]$.

    If $\repr(G''_{\text{rel}}, f'')$ is a pattern with at most one zero vector, then we add this pattern to $T^{(1)}_x$.  
    Next, we define     
        $T^{(2)}_x = \{ \rho_{\beta}(A) \colon A \in T^{(1)}_x\}$.
    Finally, we set $T_x =\bigcup_{A\in T^{(2)}_x} \operatorname{ReduceToNice}(A)$. The algorithm outputs YES if $\langle \mathbf{0}\rangle\in T_r$ at the root node $r$, or outputs NO otherwise.
    \xqed
\end{algorithm}

Now we show that the computed table entries are equivalent to the records at each node:

\begin{lemma}
\label{lem:cw_alg_corr}
    It holds for each $x\in \nodes$ that $T_x$ is equivalent to the record $\mathcal{R}_x$, i.e., $T_x\simeq \mathcal{R}_x$.
\end{lemma}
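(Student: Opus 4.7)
We proceed by induction on the NLC-decomposition tree $\syntaxtree$. For the base case, $x = \bullet_i$ is an initial node with sole vertex $v$. The partial solutions of $(G_x, D_x)$ are exactly the tuples $(F, g)$ with $F$ the edgeless graph on $\{v\}$ and $g(v) \in D(v)$; each such partial solution has corresponding pattern $\langle g(v) \cdot \idv_i \rangle$. Thus $\mathcal{R}_x = \{\langle t \cdot \idv_i \rangle : t \in D(v)\} = T_x$, establishing equivalence directly.

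For the inductive step, suppose $G_x = G_{x_1} \oplus_\alpha^\beta G_{x_2}$ with $T_{x_i} \simeq \mathcal{R}_{x_i}$ for $i \in [2]$. The key structural observation is a canonical decomposition: every partial solution $(F, g)$ of $(G_x, D_x)$ uniquely corresponds to a triple $((F_1, g_1), (F_2, g_2), E')$, where $F_i$ is the restriction of $F$ to $V(G_{x_i})$, $E' = E(F) \setminus (E(F_1) \cup E(F_2))$ is the set of cross edges used, and $g_i(v) = g(v) + \deg_{E'}(v)$ for $v \in V(G_{x_i})$; one verifies that $(F_i, g_i)$ is a partial solution of $(G_{x_i}, D_{x_i})$, and that this decomposition is reversible whenever $F_1 \cup F_2 \cup E'$ is acyclic and $g_i(v) \geq \deg_{E'}(v)$. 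Moreover, since the labels of $G_x$ are obtained by applying $\beta$ to the original labels from $G_{x_1}, G_{x_2}$, the pattern $\repr(F, g)$ in the post-$\beta$ label space equals $\rho_\beta$ applied to the analogous pattern computed using the pre-$\beta$ labels.

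Next, we trace through the algorithm's computation of $T_x$. By the inductive hypothesis combined with \cref{lem:cw_same_repr}, the set of canonical fixed forests $\{(F_A, f_A) : A \in T_{x_i}\}$ is equivalent to the set of all actual partial solutions of $(G_{x_i}, D_{x_i})$. Joining these canonical fixed forests via all admissible subsets $E'$ of the cross edges of $\hat G = F_{A_1} \oplus_\alpha F_{A_2}$, filtering by acyclicity and retaining only patterns with at most one zero vector, produces precisely $T^{(1)}_x$. This set is equivalent to the pre-$\beta$ analog of $\mathcal{R}_x$: the zero-vector filter is safe because patterns with multiple zero vectors represent partial forests with disjoint already-completed components that can never be reconnected, hence cannot participate in any $\alpha$-compatible completion to a spanning tree. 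Relabeling via $\rho_\beta$ then yields $T^{(2)}_x \simeq \mathcal{R}_x$, and finally \cref{lem:cw_reducetonice} gives $T_x = \bigcup_{A \in T^{(2)}_x} \textsc{ReduceToNice}(A) \simeq T^{(2)}_x$, so $T_x \simeq \mathcal{R}_x$ as required.

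The main technical obstacle is establishing that the join step preserves equivalence, i.e., that substituting canonical fixed forests from $T_{x_i}$ for actual partial solutions of $(G_{x_i}, D_{x_i})$ in the decomposition yields an equivalent combined record. The subtlety is that canonical fixed forests differ structurally from actual partial solutions (they are disjoint unions of $k$-vertex paths, one per vector); nevertheless, \cref{lem:cw_same_repr} guarantees they are indistinguishable with respect to $\alpha$-compatibility with any external fixed forest, and the algorithm's join operation interacts with them solely through labels, degree requirements, and connectivity, all of which are faithfully captured by the canonical representation.
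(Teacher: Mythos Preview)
Your overall induction scheme matches the paper's, and your treatment of the base case, the zero-vector filter, and the final application of \cref{lem:cw_reducetonice} is fine. However, the paragraph you label ``main technical obstacle'' does not actually resolve that obstacle; it restates it and then appeals to intuition.

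The gap is this. The inductive hypothesis gives you $T_{x_i} \simeq \mathcal{R}_{x_i}$ as \emph{sets} of patterns, i.e., for any single external fixed forest $(\tilde F,\tilde g)$ and any single edge mapping, compatibility with some element of one set implies compatibility with some element of the other. It does not give you a pattern-by-pattern correspondence, and it says nothing about a \emph{three-way} interaction (side $x_1$, side $x_2$, plus a future external forest). Your invocation of \cref{lem:cw_same_repr} only shows that two fixed forests with the \emph{same} pattern are interchangeable; it does not let you replace a partial solution $(F_1,g_1)\in\mathcal{R}_{x_1}$ by some unrelated $A_1\in T_{x_1}$ while simultaneously keeping control of how $F_2$, the cross edges $E'$, and the external forest fit together. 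The sentence ``the algorithm's join operation interacts with them solely through labels, degree requirements, and connectivity'' is exactly what needs to be proved, not assumed.

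The paper fills this gap with a concrete currying trick. Given an external $(F',g')$ that is $\alpha'$-compatible with a pattern in $\mathcal{R}'_x$ (the pre-$\beta$ record), it temporarily relabels $F'$ into a fresh label range $\{k+1,\dots,2k\}$ to obtain $F'_{\text{rel}}$, so that $F_2$ together with $F'_{\text{rel}}$ can be packaged as a \emph{single} external fixed forest interacting with $F_1$ via a \emph{single} edge mapping $\alpha_R\cup\alpha'_{\text{rel}}$. Now the inductive hypothesis for $x_1$ applies directly and produces some $A_1\in T_{x_1}$; repeating the trick with the roles swapped yields $A_2\in T_{x_2}$; finally one reads off the cross-edge set $E'$ from the resulting spanning tree and checks it lands in $T^{(1)}_x$. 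Without this (or an equivalent) device, you have no way to invoke the set-level equivalence one side at a time, and your argument does not go through.
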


\begin{proof}
    We prove this lemma by induction over $\syntaxtree$. 
    Let $x$ be a leaf node corresponding to an operation $\bullet_i$, i.e.\ $G_x$ consists of a single vertex $v$ labeled $i$.
    Hence, the set of all partial solutions at $x$ is $\big\{(G_x, \{v\mapsto t\})\colon t\in D(v)\big\}$. It follows that $\mathcal{R}_x=\{\langle t\cdot \idv_i\rangle\colon t\in D(v) \}=T_x$.

    For an inner node $x\in\nodes$ with children $x_1$ and $x_2$, it holds by the induction hypothesis that $T_{x_1}\simeq \mathcal{R}_{x_1}$ and $T_{x_2}\simeq\mathcal{R}_{x_2}$. Since $T_x$ results from $T^{(2)}_x$ by applying $\operatorname{ReduceToNice}$ on each of its patterns, and since the union operation preserves equivalence, it suffices by \cref{cor:cw_reduce_to_nice_output} to show that $T_x^{(2)}\simeq \mathcal{R}_x$. 
    Let $G_x'=G_{x_1}\oplus_\alpha G_{x_2}$ and let $\mathcal{R}'_x$ be the record of $(G_x', D_x)$. Let $\alpha^R=\{(i, j)\in [k]^2\colon (j, i)\in \alpha  \}$.

    \begin{longclaim}\label{clm:alg_equiv}
        We have $\mathcal{R}'_x\simeq T_x^{(1)}$.
    \end{longclaim}
    \begin{claimproof}[Proof of \cref{clm:alg_equiv}]
    Let $R\in \mathcal{R}'_x$ and let $(F, g)$ be a partial solution of $(G_x', D_x)$ such that $R=\repr (F, g)$. 
    Let $(\tilde{F}, \tilde{g})$ be a fixed forest that is $\tilde{\alpha}$-compatible with $R$ (i.e. with the canonical fixed forest of $R$). By~\cref{obs:cw_label_shifting}, without loss of generality we may assume that $\tilde{F}$ has labels in $\{2k+1,\dots, 3k\}$ and correspondingly $\tilde{\alpha}\subseteq \{2k+1,\dots,3k\}\times [k]$.
    By \cref{lem:cw_same_repr}, $(\tilde{F}, \tilde{g})$ is $\tilde{\alpha}$-compatible with $(F, g)$. Let $T$ be the $\tilde{\alpha}$-spanning forest for $(\tilde{F}, \tilde{g})$ and $(F, g)$.
    We aim to construct an entry in $T_x^{(1)}$ that $(\tilde{F}, \tilde{g})$ is $\tilde{\alpha}$-compatible with. 
        
   Let $F_1=F\cap G_{x_1}$ and $F_2=F\cap G_{x_2}$. we define the mappings $g_1:V_{x_1}\rightarrow \mathbb{N}$ and $g_2:V_{x_2}\rightarrow \mathbb{N}$  with $g_1=g + \deg_F-\deg_{F_1}$, $g_2= g + \deg_F-\deg_{F_2}$, where we apply the sum pointwise, and we restrict the domains to $V_{x_1}$ and $V_{x_2}$, respectively.
   Note that $(F_1, g_1)$ is a partial solution of $(G_{x_1}, D_{x_1})$ and $(F_2, g_2)$ is a partial solution of $(G_{x_2}, D_{x_2})$, so $\repr(F_1, g_1)\in \mathcal{R}_{x_1}$ and $\repr(F_2, g_2)\in \mathcal{R}_{x_2}$. 
   Intuitively, our goal is to identify an entry of $P_x$ that corresponds to $(F_1, g_1)$ and $(F_2, g_2)$, and to provide a valid selection of edges between the corresponding forests resulting in a compatible pattern.
   
   Let $H_1$ be the labeled forest containing vertices $V(F_2)\cup V(\tilde{F})$, and edges $\{uv\in E(T)\colon u\not\in V(F_1), \: v\not\in V(F_1) \}$ (i.e., the forest obtained by deleting $F_1$ from $T$).
   The graph $\hat{H} := H_1\oplus_{\alpha^R\cup\tilde{\alpha}} F_1$ has vertices $V(F_2)\cup V(\tilde{F})\cup V(F_1)$. Since the vertices of $F_2$ only use labels in $[k]$, the vertices of $\tilde{F}$ only use labels in $\{k+1, \dots, 2k\}$, and since $\alpha^R$ and $\tilde{\alpha}$ are disjoint, the edge set of $\hat{H}$ is obtained by taking the edges between $F_2$ and $F_1$ corresponding to $\alpha^R$ and the edges between $\tilde{F}$ and $F_1$ corresponding to $\tilde{\alpha}$. Thus $T$ is a spanning tree in $\hat{H}$.

   Therefore, $(H_1, (\deg_T-\deg_{F_1})\cup(\deg_T-\deg_{H_1}))$ is $(\alpha^R\cup\tilde{\alpha})$-compatible with $(F_1, g_1)$.
   Since $\repr(F_1, g_1)\in \mathcal{R}_{x_1}$ and $\mathcal{R}_{x_1}\simeq T_{x_1}$ (by induction hypothesis), there is an element $A_1\in T_{x_1}$ such that $(H_1,(\deg_T-\deg_{F_1})\cup(\deg_T-\deg_{H_1}))$ is $(\alpha^R\cup\tilde{\alpha})$-compatible with $(F_{A_1}, f_{A_1})$. Let $T_1$ be the corresponding $(\alpha^R\cup\tilde{\alpha})$-spanning tree.

   Let $H_2$ be the labeled forest defined by the vertex set $V(F_{A_1})\cup V(\tilde{F})$ and the edges $\{uv\in E(T_1)\colon u\not\in V(F_2), v\not\in V(F_2) \}$ (i.e., the forest obtained by deleting $F_2$ from $T_1$). Analogously, $(H_2, (\deg_{T_1}-\deg_{F_2})\cup (\deg_{T_1}-\deg_{H_2}))$ is $(\alpha\cup\tilde{\alpha})$-compatible with $(F_2, g_2)$,  
 and there is an element $A_2\in T_{x_2}$ such that $(H_2, (\deg_{T_1}-\deg_{F_2})\cup (\deg_{T_1}-\deg_{H_2}))$ is $(\alpha\cup\tilde{\alpha})$-compatible with $(F_{A_2}, f_{A_2})$. Let $T_2$ be the corresponding $(\alpha\cup \tilde{\alpha})$-spanning tree.

Now we obtained a pair $(A_1, A_2)\in P_x$, and it remains to describe the set of edges that need to be added.
   Let $E_{\text{add}}=\{uv\in E(T_2)\colon u\in V(F_{A_1}), v\in V(F_{A_2}) \}$.
       Note that the labels of endpoints of edges in $E_{\text{add}}$ belong to $\alpha\cup\tilde{\alpha}$. However, the vertices of $F_{A_1}$ and $F_{A_2}$ do not have labels in $\{2k+1, \dots, 3k\}$, so the endpoints of all edges in $E_{\text{add}}$ have labels in $\alpha$, i.e., $E_{\text{add}}\subseteq E(F_{A_1}\oplus_\alpha F_{A_2})\setminus (E(F_{A_1})\cup E(F_{A_2}))$. By construction, $T_2$ witnesses that 
   \begin{equation}\tag{*}\label{eq:cw_correctness}
   (\tilde{F}, \tilde{g}) \text{ is } \tilde{\alpha}\text{-compatible with } 
    (F_{A_1}\cup F_{A_2}\cup E_{\text{add}}, (f_{A_1}\cup f_{A_2})-\deg_{E_{\text{add}}}).
   \end{equation}

   Let $F_1'=F_{A_1}$ and $F_2'$ be the forest obtained from $F_{A_2}$ by relabelling each vertex of label $i$ to $k+i$. Since $F_{A_2}$ and $F_2'$ have the same underlying vertex set, we can regard $E_{\text{add}}$ as edges between $F_1'$ and $F_2'$. 
   Set $f_1=f_{A_1}$ and $f_2(i+k)=f_{A_2}(i)$ for $i\in [k]$.
   
   Let $\hat{G}$ be the disjoint union of $F_1'$ and $F_2'$ and $B$ the set of vertices with non-zero requirement of the union of all big components of $\hat{G}$ and $U$ the set of vertices of non-zero requirement of the union of all unit components.
   Since $E_{\text{add}}$ is a subgraph of a tree, it is acyclic. In particular, since $|B|\leq 2k$, $E_{\text{add}}$ contains at most $2k-1$ edges between vertices of $B$. Additionally, $E_{\text{add}}$ contains at most one edge between unit components of $\hat{G}$ (otherwise the connectivity constraints would not be satisfied). It remains to show that the edges in $E_{\text{add}}$ between big and unit components correspond to a solution of a \textsc{Bipartite $b$-Matching} instance.

   Let $E_{UB}\subseteq E_{\text{add}}$ be the set of edges between unit and big components of $\hat{G}$. Consider the \textsc{Bipartite $b$-Matching} instance with vertices $B\cup [2k]$, edges $
   E_M=\{(v,i)\in B\times [2k]\colon (\lab_{\hat{G}}(v),i)\in \alpha_{\text{rel}}\}$ and capacities $c(v)=\deg_{E_{UB}}(v)$ for $v\in B$ and $c(i)=|\{u\in U\colon \lab_{\hat{G}}(u)=i, \deg_{E_{UB}}(u)=1 \}|$. Note that the algorithm will, in some iteration, solve this instance of \textsc{Bipartite $b$-Matching}.
   
   Using $E_{UB}$, we will construct a solution to this instance as follows. Let $m(v,i)=|\{u\in U\colon  \lab_{\hat{G}}(u)=i, uv\in E_{UB}\}|$ for all $v\in B, i\in [2k]$. Now for each $v\in B$, the capacity constraint is satisfied as $\sum_{iv\in E_{M}} m(v,i)=|\{u\in U\colon uv\in E_{UB} \}|=\deg_{E_{UB}}(v)$. Similarly, for $i\in [2k]$, $\sum_{iv\in E_{M}} m(v,i)=|\{uv\in E_{UB}\colon \lab_{\hat{G}}(u)=i \}|=c(i)$ as each vertex in $U$ has degree zero or one in $E_{UB}$. 

   Let $m'$ be the solution to this \textsc{Bipartite $b$-Matching} instance obtained by the algorithm and $M$ the corresponding set of edges constructed by the algorithm. Consider the pattern added to $T_x^{(1)}$ in this step. 
   For each vertex $v\in B$, its original requirement, $(f_1\cup f_2)(v)$, will decrease by $|\{vv'\in E_{\text{add}}\colon v'\in B\}|+\deg_M(v)=|\{vv'\in E_{\text{add}}\colon v'\in B\}|+c(v)$. For each $i\in [2k]$, the pattern will have $c(i)$ zero vectors, and the remaining $|\{u\in U\colon \lab_{\hat{G}}(u)=i, \deg_{E_{UB}}=0\}|$ unit vectors will remain unchanged. Thus the pattern constructed in this iteration is equal to the pattern of $(\hat{G}\cup E_{\text{add}}, (f_1\cup f_2)-\deg_{E_{\text{add}}})$. By~\cref{eq:cw_correctness} and \cref{obs:cw_label_shifting}, $(\tilde{F}, \tilde{g})$ and $(\hat{G}\cup E_{\text{add}}, (f_1\cup f_2)-\deg_{E_{\text{add}}})$ are $\tilde{\alpha}'$-compatible, where $\tilde{\alpha}'=\tilde{\alpha}\cup \{(i,j+k)\colon (i,j)\in \tilde{\alpha}\}$. Therefore, after relabelling, we have obtained a pattern in $T_x^{(1)}$ that is $\tilde{\alpha}$-compatible with $(\tilde{F},\tilde{g})$.

    The proof in the other direction can be obtained by reversing the direction of implications.
    \end{claimproof}
   \begin{longclaim}\label{clm:alg}
       We have $\mathcal{R}_x\simeq T_x^{(2)}$.
   \end{longclaim}
    \begin{claimproof}[Proof of \cref{clm:alg}]
       Let $R\in \mathcal{R}_x$ be a pattern and let $(H, h)$ be a partial solution with $\repr(H, h)=R$. Let $\alpha\subseteq[k]^2$ and let $(F, g)$ be a fixed forest that is $\alpha$-compatible with $(H, h)$ and let $T$ be the corresponding $\alpha$-spanning tree. We aim to construct an element of $T_x^{(2)}$ such that $(F, g)$ is $\alpha$-compatible with its canonical fixed forest.
       
       Let $(H_{\varphi}, h)$ be the partial solution whose pattern belongs to $\mathcal{R}'_x$, such that $H$ is obtained from $H_{\varphi}$ by applying the relabeling mapping $\beta$ to $H_\varphi$.
       Let $\alpha'=\{(i,j):\: (i, \beta(j))\in \alpha \}$. Note that $F\oplus_{\alpha'} H_\varphi$ has the same edge set as $F\oplus_\alpha H$, so
       $(F, g)$ is $\alpha'$-compatible with $(H_\varphi, h)$ (with $\alpha'$-spanning tree $T$). Since $\repr(H_\varphi, h)\in \mathcal{R}'_x$, and since it holds by \cref{clm:alg_equiv} that $\mathcal{R}'_x\simeq T_x^{(1)}$, there must exist a pattern $A\in T_{x}^{(1)}$ such that $(F, g)$ is $\alpha'$-compatible with $(F_A, f_A)$. Let $T_1$ be the corresponding $\alpha'$-spanning tree. Let $B=\rho_\beta(A)$. 
       Note that $B\in T_{x}^{(2)}$. It remains to prove that $(F, g)$ is $\alpha$-compatible with $B$. Note that $F_B$ is the same labeled forest as $F_A$, since $\rho_\beta$ only affects the requirement function.
       Let us construct an $\alpha$-spanning tree $T_2$ in $F\oplus_\alpha F_A$ as follows. For each edge $uv\in T_1$, where $u\in V(F)$ and $v\in V(F_A)$, we do the following. Let $\lab_{F_A}(v)=i$ and let $C$ be the connected component of $v$ in $F_A$. We add to $T_2$ the edge between $u$ and the unique vertex in $C$ with label $\beta(i)$. Note that, since the edge $uv$ had labels $(\lab_F(u), i)\in \alpha'$, the newly constructed edge of $T_2$ has labels $(\lab_{F}(u), \beta(i))\in \alpha$.
       Finally, we add to $T_2$ all edges inside $F$ and $F_A$. It is easy to see that $T_2$ is a spanning tree in $F\oplus_\alpha F_A$. For each vertex $u\in V(F)$, $\deg_{T_2}(u)=\deg_{T_1}(u)$, and for each vertex $v\in V(F_A)$, $\deg_{T_2}(v)$ is equal to the sum of $\deg_{T_1}(v')$, where $v'$ is a vertex in the same connected component as $v$ and $\lab_{F_A}(v')\in \beta^{-1}(\lab_{F_A}(v))$. Thus $(F, g)$ is $\alpha$-compatible with $B$. 
       
       The other direction can be obtained by reversing the direction of implications.   
   \end{claimproof}
   Lastly, \cref{clm:alg} and \cref{cor:cw_reduce_to_nice_output} imply that $T_x\simeq \mathcal{R}_x$, as desired.
\end{proof} 
The following corollary tells us that the input is a YES-instance if and only if the table entry at the root contains the pattern $\langle \mathbf{0} \rangle$. Informally, the pattern $\langle\mathbf{0}\rangle$ corresponds to a partial solution with one connected component, where all degree constraints are satisfied, i.e., a solution. 
\begin{corollary}
\label{cor:cw_zero_acc}
    The instance $(G,D)$ is a YES-instance if and only if $T_r$ contains $\langle \mathbf{0}\rangle$.
\end{corollary}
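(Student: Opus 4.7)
The plan is to chain together two simple biconditionals: the first identifies the YES-instance condition with membership of $\langle \mathbf{0}\rangle$ in $\mathcal{R}_r$, and the second transfers that membership to $T_r$ via the equivalence $T_r \simeq \mathcal{R}_r$ supplied by \cref{lem:cw_alg_corr}.

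First I would argue that $(G,D)$ is a YES-instance if and only if $\langle \mathbf{0}\rangle \in \mathcal{R}_r$. If $T$ is a spanning tree of $G$ with $\deg_T(v) \in D(v)$ for every $v$, then the pair $(T,g)$ with $g$ identically zero is a partial solution of $(G_r,D_r)=(G,D)$, and its pattern is $\langle \mathbf{0}\rangle$ because $T$ has one connected component and every requirement is zero. Conversely, if some partial solution $(F,g)$ has pattern $\langle \mathbf{0}\rangle$, then by the definition of $\repr$ together with the nonnegativity of $g$, the function $g$ must vanish identically and $F$ must be connected; hence $F$ is a spanning tree of $G$ realizing the degree constraints.

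Second, I would show that $\langle \mathbf{0}\rangle \in \mathcal{R}_r$ if and only if $\langle \mathbf{0}\rangle \in T_r$ by applying the equivalence to the ``empty probe'' obtained by taking $\alpha = \emptyset$ and $(F'', g'')$ to be the empty fixed forest. Unfolding \cref{def:cw_compatible_forests}, any $\alpha$-spanning tree for $(F'', g'')$ and a canonical forest $(F_A, f_A)$ must equal $F_A$ itself, which simultaneously forces $F_A$ to be connected and $f_A \equiv 0$. I would then verify that the unique pattern $A$ whose canonical forest satisfies both requirements is $A = \langle \mathbf{0}\rangle$: any additional vector in $A$ produces an extra connected component of $F_A$ (breaking connectivity), and any nonzero entry in a vector of $A$ forces a nonzero value of $f_A$ (violating the degree condition). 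Consequently, $(F'', g'')$ is $\alpha$-compatible with the canonical forest of some pattern in a set $R \in \{\mathcal{R}_r, T_r\}$ precisely when $\langle \mathbf{0}\rangle \in R$, and the equivalence $T_r \simeq \mathcal{R}_r$ immediately gives the desired biconditional.

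The main (and essentially only) delicate point is the uniqueness check identifying the empty probe with the pattern $\langle \mathbf{0}\rangle$; once that is in place, chaining the two biconditionals together with \cref{lem:cw_alg_corr} yields the corollary.
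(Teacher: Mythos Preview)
Your proposal is correct and follows essentially the same approach as the paper: both identify the YES-instance condition with $\langle\mathbf{0}\rangle\in\mathcal{R}_r$, and then use the equivalence $T_r\simeq\mathcal{R}_r$ from \cref{lem:cw_alg_corr} together with the observation that the empty fixed forest with $\alpha=\emptyset$ is $\alpha$-compatible precisely with the canonical forest of $\langle\mathbf{0}\rangle$. Your uniqueness check is spelled out in more detail than the paper's (which simply asserts it is easy to see), but the underlying argument is the same.
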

\begin{proof}
    It is easy to see that $\langle \mathbf{0}\rangle$ is the only pattern that is $\emptyset$-compatible with the empty graph.
    
    Suppose $\langle\mathbf{0}\rangle\in T_r$. Since $T_r\simeq R_r$ and $\mathbf{0}$ is $\emptyset$-compatible with the empty graph, there is an element in $\mathcal{R}_r$ that is $\emptyset$-compatible with the empty graph, i.e., $\langle\mathbf{0}\rangle\in \mathcal{R}_r$. The partial solution that corresponds to $\langle \mathbf{0}\rangle$ is a forest in $G$ with one connected component that satisfies all degree requirements, i.e., it is a solution to our input instance.

    For the other direction, let $T$ be a spanning tree in $G$ that satisfies all degree requirements. Then $\repr(T, \mathbf{0})=\langle \mathbf{0}\rangle\in \mathcal{R}_r$. By the same argument as above, $\langle\mathbf{0}\rangle\in T_r$.
\end{proof} 
Finally, we bound the running time of our algorithm:
\begin{lemma}\label{lem:cw_time}
    Algorithm~\ref{def:cw_algo} runs in time $n^{\bigoh(k)}$.
\end{lemma}

\begin{proof}
For a join node $x\in \nodes$, let $x_1, x_2\in \nodes$ be its children. 
We claim that we can compute $T_x$ in time $n^{\bigoh(k)}$ given  $T_{x_1}$ and $T_{x_2}$.

Firstly, by construction and \cref{cor:cw_reduce_to_nice_output}, $T_{x_1}$ and $T_{x_2}$ are sets of nice patterns. 
By Observation~\ref{obs:sizenice}, this implies that $|T_{x_1}|, |T_{x_2}| \leq n^{\bigoh(k)}$.
Therefore, $|P_x| \leq n^{\bigoh(k)}$.

Secondly, we bound the complexity of computing $T^{(1)}_x$.
Since $|B|\leq 2k$ and there are at most $n$ unit components, the set $E'$ can be chosen in $n^2\cdot {4k^2 \choose 2k-1}=n^2\cdot k^{\bigoh(k)}$. Since $u$ and $b$ are $2k$-dimensional vectors, there are $n^{\bigoh(k)}$ ways to choose $u'$ and $b'$. The \textsc{Bipartite $b$-Matching} problem is polynomial time solvable. Thus computing $T^{(1)}_x$ takes $n^{\bigoh(k)}$ time.
    
Thirdly, since the relabeling operation does not increase the number of vectors, we have $|T^{(2)}_x| \leq |T^{(1)}_x|$.
Finally, to obtain $T_x$, we perform $|T^{(2)}_x|$ calls of $\operatorname{ReduceToNice}$, each of which takes time $n^{\bigoh(k)}$, by \cref{cor:cw_reduce_to_nice_time}.
Hence, in total, this step takes time $n^{\bigoh(k)}$.
The lemma then follows.
\end{proof} 
Now we are ready to prove the main theorem for clique-width.

\begin{proof}[Proof of Theorem~\ref{thm:cw-ub}]
    Given an instance $(G,D)$ of \setmstp together with an NLC-decomposition of width $k$, we run Algorithm~\ref{def:cw_algo}. By Corollary~\ref{cor:cw_zero_acc}, the input instance is a YES-instance if and only if the table entry at the root contains the pattern $\langle \mathbf{0}\rangle$. The total running time is $n^{\bigoh(k)}$, by Lemma~\ref{lem:cw_time}.
\end{proof}

\section{Upper Bound Techniques}\label{sec:ubtechniques}
\noindent
Except for our clique-width algorithm, all our upper bounds rely on the \cnc technique that counts the number of weighted solutions (modulo two) in the given running time. We make use of the isolation lemma to reduce the decision version of this problem to the counting (modulo two) version with high probability.

\para{Cut and Count.}
Let $(G=(V,E), \weightf, \dsets, \costbd)$ be a weighted \setmstp instance, with $n = |V|$ and $m = |E|$. 
Further, let $d(v) = \max \dsets(v)$ for each vertex $v\in V$, and let $\maxreq = \max_{v\in V} d(v)$ be the maximum requirement over all vertices.
Next, let $W = \max_{e\in E} \weightf(e)$, $\overline W = [(n-1)\cdot W]_0$, and $\overline M = [m]_0$.
We will later fix a value $Z\in \mathbb{N}$ that is bounded polynomially in $m$, and a second weight function $w':E\rightarrow [Z]_0$. 
We set $\overline{Z} = [(n-1)\cdot Z]_0$.
Similar to \cite{CyganNPPRW22}, we define the family of all solutions, relaxed solutions, and consistent cuts.

    A \defi{solution} is a set of edges $F\subseteq E$, such that $G[F]$ is a tree, and $\deg_F(v) \in \dsets(v)$ for each $v\in V$. Let $\mathcal{S}$ be the family of all solutions. We define families of weighted solutions, where for $\omega_1 \in \overline W$ and $\omega_2 \in \overline Z$ we define:
    $\mathcal{S}[\omega_1, \omega_2] = \big\{F\in \mathcal{S}\colon w(F) = \omega_1, w'(F) = \omega_2 \big\}$.
    Our goal is to compute the parity of $\mathcal{S}[\omega_1,\omega_2]$ for all values $\omega_1, \omega_2$. We achieve this by counting consistent cuts (defined below) using the \cnc technique.

    We define a \defi{relaxed solution} as a set of edges $F\subseteq E$ where $|F|=n-1$, and for each vertex $v$ it holds that $\deg_F(v)\in \dsets(v)$. 
    Let $\mathcal{R}$ be the family of all relaxed solutions.
    Similarly, we define families of weighted relaxed solutions: 
    $\mathcal{R}[\omega_1, \omega_2] = \big\{F\in \mathcal{R}\colon w(F) = \omega_1, w'(F) = \omega_2\}$.    

\begin{definition}
\label{def:usingconsistentcuts}
    Given a set of edges $F\in\mathcal{R}$, a \defi{consistent cut} of $G[F]$ is a pair $(F,c)$ where $c:V\rightarrow \{\LL,\RR\}$ is a mapping such that for each connected component $C\in \cc(G[F])$ it holds that $|c(C)| = 1$. For a relaxed solution $F\in \mathcal{R}$, let $\mathcal{C}(F)$ be the family of all consistent cuts of $F$. Let $\mathcal{C}$ be the family of all consistent cuts of all relaxed solutions. We define families of consistent cuts of weighted relaxed solutions $\mathcal{C}[\omega_1,\omega_2]$ as follows:
    \[
    \mathcal{C}[\omega_1, \omega_2] = \big\{(F,c) \in 
    \mathcal{R}[\omega_1,\omega_2] \times \{\LL,\RR\}^V\colon (F, c) \text{ is a consistent cut of } G[F]\big\}.
    \]
    Furthermore, let $\ncuts[\omega_1, \omega_2] = \big|\mathcal{C}[\omega_1, \omega_2]\big|$.
\end{definition}

We use a slightly modified version of the \cnc technique that was first used by Hegerfeld and Kratsch \cite{HegerfeldK23}. They prove the following result:

\begin{lemma}[\cite{HegerfeldK23}]
\label{tw-ub-lem:cnc}
    It holds that $C[\omega_1, \omega_2] \equiv_4 2 \cdot \big|\mathcal{S}[\omega_1, \omega_2]\big|$.
\end{lemma}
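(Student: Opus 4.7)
The plan is to partition $\mathcal{C}[\omega_1, \omega_2]$ by the underlying relaxed solution and exploit the $2^{|\cc(G[F])|}$ structure of its consistent cuts. First I would write
\[
\mathcal{C}[\omega_1, \omega_2] = \sum_{F \in \mathcal{R}[\omega_1, \omega_2]} |\mathcal{C}(F)|
\]
and then pin down $|\mathcal{C}(F)|$ for an arbitrary $F \in \mathcal{R}$. Since, by the problem definition, $D(v)$ consists only of positive integers, every vertex $v \in V$ satisfies $\deg_F(v) \geq 1$; in particular $V$ coincides with the vertex set of $G[F]$, and a consistent cut amounts exactly to a choice of one of the two sides in $\{\LL, \RR\}$ for each connected component of $G[F]$. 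Hence $|\mathcal{C}(F)| = 2^{|\cc(G[F])|}$.

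Next I would split the sum according to whether $F$ lies in $\mathcal{S}[\omega_1, \omega_2]$. Any $F \in \mathcal{S}[\omega_1, \omega_2]$ is by definition a spanning tree of $G$, so $|\cc(G[F])| = 1$ and the contribution is exactly $2$. On the other hand, any $F \in \mathcal{R}[\omega_1, \omega_2] \setminus \mathcal{S}[\omega_1, \omega_2]$ satisfies $|F| = n-1$ but $G[F]$ fails to be a tree; since a graph on $n$ vertices with $n-1$ edges is a tree if and only if it is connected, such an $F$ must be disconnected with at least two components, so its contribution $2^{|\cc(G[F])|}$ is divisible by $4$.

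Combining these two cases would yield
\[
\mathcal{C}[\omega_1, \omega_2] = 2 \cdot |\mathcal{S}[\omega_1, \omega_2]| + \sum_{F \in \mathcal{R}[\omega_1, \omega_2] \setminus \mathcal{S}[\omega_1, \omega_2]} 2^{|\cc(G[F])|} \equiv 2 \cdot |\mathcal{S}[\omega_1, \omega_2]| \pmod{4},
\]
as claimed. The argument is the standard Cut\&Count cancellation scheme: disconnected (non-tree) relaxed solutions automatically cancel modulo $4$, while each true solution contributes exactly $2$ from its two consistent sides. There is essentially no technical obstacle; the only subtlety worth flagging is the positivity of $D(v)$, which prevents isolated vertices from appearing in $G[F]$ and thus guarantees that the number of consistent cuts equals $2^{|\cc(G[F])|}$ cleanly, rather than being inflated by a free choice of side for each isolated vertex.
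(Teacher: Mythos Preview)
Your proposal is correct and follows essentially the same Cut\&Count cancellation argument as the paper: partition by relaxed solution, observe $|\mathcal{C}(F)| = 2^{|\cc(G[F])|}$, and note that connected $F$'s contribute $2$ while disconnected ones contribute a multiple of $4$. If anything, you are slightly more careful than the paper in explicitly invoking the positivity of $D(v)$ to rule out isolated vertices; the only cosmetic slip is that you write $\mathcal{C}[\omega_1,\omega_2]$ (the set) in a few places where you mean its cardinality $C[\omega_1,\omega_2]$.
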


\begin{proof}
    It holds for each $F\in \mathcal{R}$ that if $G[F]$ is connected, then $G[F]$ is acyclic as well, since $|F| = n-1$. Hence, it holds that $F\in \mathcal{S}$ if and only if $F\in \mathcal{R}$ and $G[F]$ is connected.

    Let $F\in\mathcal{R}$. Since all vertices of each connected component must be assigned one of the two values $\LL$ and $\RR$, and since we can choose this value independently for each connected component, it holds that $|\mathcal{C}(F)| = 2^{|\cc(G[F])|}$. Therefore, it follows that
    \[
    C[\omega_1, \omega_2] = \sumstack{F\in \mathcal{R}[\omega_1, \omega_2]\\G[F] \text{ is connected}} 2 + \sumstack{F\in \mathcal{R}[\omega_1, \omega_2]\\G[F] \text{ is disconnected}} 2^{|\cc(G[F])|} \equiv_4 2 \cdot \big|\mathcal{S}[\omega_1,\omega_2]\big|,
    \]
    where the congruence holds, since the second summand is always congruent to zero modulo four, as $G[F]$ contains at least two connected components whenever $G[F]$ is disconnected.
\end{proof} 
In other words, it holds that $C[\omega_1, \omega_2]$ is always even, and that $\big|\mathcal{S}[\omega_1, \omega_2]\big|$ is odd, if and only if $C[\omega_1, \omega_2] \equiv_4 2$ (i.e., if $C[\omega_1, \omega_2]/2$ is odd). Hence, one can decide the parity of $|\mathcal{S}[\omega_1, \omega_2]|$ by computing $C[\omega_1, \omega_2]$ modulo 4. We use the isolation lemma to show that this suffices to solve the decision version with high probability.

\para{Isolation Lemma.} We say that a function $\omega:U\rightarrow \mathbb{Z}$ \defi{isolates} a set family $\mathcal{F}\subseteq 2^U$ if there exists a unique $S'\in\mathcal{F}$ with $\omega(S')=\min_{S\in \mathcal{F}}\omega(S)$.
Mulmuley et al.~\cite{MulmuleyVV87} showed that one can isolate any set family with high probability by choosing the weight function uniformly at random in a large enough space.
\begin{longlemma}[\cite{MulmuleyVV87}]\label{tw-ub-lem:iso-basic}
Let $\mathcal{F}\subseteq 2^U$ be a set family over a universe $U$ with $|\mathcal{F}|>0$, and let $N>|U|$ be an integer. For each $u\in U$, choose a weight $\omega(u)\in \{1,2,\dots,N\}$ uniformly and independently at random. Then it holds that $P[\omega \text{ isolates } \mathcal{F}]\geq 1-|U|/N$.
\end{longlemma}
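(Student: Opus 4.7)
The plan is to follow the classical element-isolation argument: bound, for each $u \in U$, the probability of a ``bad event'' that prevents $\omega$ from distinguishing sets that contain $u$ from those that do not, and then apply a union bound over $u \in U$.

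First I would set up the key definitions. For a fixed element $u \in U$, call $u$ \emph{singular} (with respect to $\omega$) if
\[
\min\{\omega(S) : S \in \mathcal{F},\ u \in S\} \;=\; \min\{\omega(S) : S \in \mathcal{F},\ u \notin S\},
\]
where a minimum over an empty set is taken to be $+\infty$. The central combinatorial observation is that if $\omega$ fails to isolate $\mathcal{F}$, then some $u \in U$ is singular: indeed, if $S_1 \ne S_2$ are two minimum-weight sets in $\mathcal{F}$, pick any $u \in S_1 \triangle S_2$; then $u$ witnesses the equality above since both $S_1$ and $S_2$ attain the overall minimum weight. Hence
\[
\Pr[\omega \text{ does not isolate } \mathcal{F}] \;\le\; \sum_{u \in U} \Pr[u \text{ is singular}].
\]

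Next I would bound each term. Fix $u$ and condition on the weights $\omega(v)$ for $v \ne u$. Under this conditioning, the quantities
\[
\alpha_u \;=\; \min\Bigl\{ \textstyle\sum_{v \in S,\, v \ne u} \omega(v) : S \in \mathcal{F},\ u \in S\Bigr\},
\qquad
\beta_u \;=\; \min\bigl\{\omega(S) : S \in \mathcal{F},\ u \notin S\bigr\}
\]
are completely determined and do not depend on $\omega(u)$. The element $u$ is singular exactly when $\omega(u) + \alpha_u = \beta_u$, i.e.\ $\omega(u) = \beta_u - \alpha_u$. Since $\omega(u)$ is uniform on $\{1,\dots,N\}$ and independent of $\{\omega(v) : v \ne u\}$, there is at most one value of $\omega(u)$ satisfying this equation, so conditionally the singularity probability is at most $1/N$; taking expectation over the other weights yields $\Pr[u \text{ is singular}] \le 1/N$.

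Combining the two displays gives $\Pr[\omega \text{ does not isolate } \mathcal{F}] \le |U|/N$, hence $\Pr[\omega \text{ isolates } \mathcal{F}] \ge 1 - |U|/N$. No step is really an obstacle here; the only subtlety to handle carefully is the ``if not isolating, then some $u$ is singular'' implication (which uses the existence of two distinct minimum-weight sets, and therefore needs $|\mathcal{F}| > 0$ only implicitly via the well-definedness of the minimum), and the independence argument used to reduce the per-element probability to $1/N$.
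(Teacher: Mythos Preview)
Your argument is correct and is exactly the classical proof of the Isolation Lemma. Note, however, that the paper does not give its own proof of this statement: it is stated as a cited result from~\cite{MulmuleyVV87} and used as a black box, so there is nothing in the paper to compare your proof against.
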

 Based on their result, we prove the following lemma:

\begin{lemma}
\label{tw-ub-cor:isolation}
    Let $Z = 2 \cdot m$. We choose $w':E\rightarrow [Z]_0$, where we choose $w'(e)$ independently and uniformly at random for each edge $e\in E$. Let $\omega_1 \in \overline W$ be the smallest weight of a solution $F\in \mathcal{S}$. Then with probability at least $1/2$, there exists a value $\omega_2 \in \overline Z$ such that $\mathcal{S}[\omega_1, \omega_2]$ contains a single solution only.
\end{lemma}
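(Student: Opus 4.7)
The plan is to apply the isolation lemma (Lemma~\ref{tw-ub-lem:iso-basic}) directly to a carefully chosen sub-family of solutions. Specifically, set the universe $U = E$ (so $|U| = m$), fix $N = Z = 2m$, and consider the set family
\[
\mathcal{F} \;=\; \{F \in \mathcal{S} : w(F) = \omega_1\} \;\subseteq\; 2^E,
\]
i.e., the family of \emph{minimum $w$-weight} solutions. Since $\omega_1$ is defined as the smallest weight of a solution in $\mathcal{S}$, the family $\mathcal{F}$ is nonempty whenever the instance has at least one solution (if it does not, the statement is vacuous). Note also that $N = 2m > m = |U|$, so the hypotheses of Lemma~\ref{tw-ub-lem:iso-basic} are satisfied.

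By Lemma~\ref{tw-ub-lem:iso-basic}, since $w'(e)$ is chosen uniformly and independently from $[Z]_0 = \{0,1,\dots,2m\}$ for each $e\in E$ (which contains $\{1,\dots,N\}$ as a sub-range, so the same bound applies), the function $w'$ isolates $\mathcal{F}$ with probability at least $1 - |U|/N = 1 - m/(2m) = 1/2$. Whenever this event occurs, there is a unique element $F^* \in \mathcal{F}$ minimizing $w'$ over $\mathcal{F}$; set $\omega_2 := w'(F^*)$, which lies in $\overline{Z}$ since $|F^*| = n-1$ and each edge contributes at most $Z$.

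It remains to verify that $\mathcal{S}[\omega_1, \omega_2] = \{F^*\}$. Certainly $F^* \in \mathcal{S}[\omega_1,\omega_2]$ by construction. Conversely, if $F' \in \mathcal{S}[\omega_1, \omega_2]$, then $w(F') = \omega_1$ puts $F'$ in $\mathcal{F}$, and $w'(F') = \omega_2 = \min_{F \in \mathcal{F}} w'(F)$. By the uniqueness asserted by the isolation event, this forces $F' = F^*$. The only subtlety to verify is that the $w'$-weights are drawn from $[Z]_0$ rather than $\{1,\dots,N\}$ exactly as in the statement of Lemma~\ref{tw-ub-lem:iso-basic}; however, allowing the weight $0$ only enlarges the sample space by a constant factor and the same probability bound of $1 - m/(N+1) \ge 1/2$ still holds (or one can simply re-prove the one-line union bound over ``bad edges'' directly for the shifted range). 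I expect no real obstacle here: the entire argument is a textbook application of Lemma~\ref{tw-ub-lem:iso-basic} once one identifies $\mathcal{F}$ as the correct family to isolate, with the mild book-keeping that $\omega_2$ is a well-defined element of $\overline{Z}$.
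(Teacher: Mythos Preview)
Your proof is correct and follows exactly the same approach as the paper: apply Lemma~\ref{tw-ub-lem:iso-basic} with $U=E$ and $\mathcal{F}=\{F\in\mathcal{S}:w(F)=\omega_1\}$. You spell out a few details the paper leaves implicit (the range discrepancy between $[Z]_0$ and $\{1,\dots,N\}$, and why $\mathcal{S}[\omega_1,\omega_2]$ is exactly $\{F^*\}$), but the argument is the same.
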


\begin{proof}
    Let $U = E$, and $\mathcal{F} = \{F\in\mathcal{S}\colon w(F) = \omega_1\}$. By \Cref{tw-ub-lem:iso-basic}, it holds that $P[\omega \text{ isolates } \mathcal{F}]\geq 1-|E|/Z = 1/2$.
\end{proof} 
In our remaining algorithmic results, we show how to compute the values $C[\omega_1, \omega_2]$ (modulo $4$) in the given running time.

\section{Treewidth}\label{sec:tw}
\noindent
Since our lower bound for treewidth follows directly from the one for pathwidth, our aim in this section is to obtain the following algorithmic upper bound for \textsc{Set of Degrees MST}:

\twub*

Towards establishing Theorem~\ref{thm:tw-ub}, let $(\syntaxtree, \mathcal{B})$ be a nice tree decomposition of $G$ of width $\tw$ rooted at $r$.
Intuitively, our algorithm uses records to store the following information about each vertex $v$ in the bag: on which side of the consistent cut it lies (\textbf{L} or \textbf{R}; see Definition~\ref{def:usingconsistentcuts}) and the degree of $v$ that has been used up so far. Our aim is to compute the size of the weighted families $\mathcal{C}[\omega_1, \omega_2]$ modulo $4$ in order to apply Lemma~\ref{tw-ub-lem:cnc}.
At its core, we can formalize it as follows:

\begin{algorithm}  
\label{alg:tw}
For a vertex set $S$, we call a mapping $c:S\rightarrow \{\LL,\RR\}$ a \defi{coloring}.
For $x\in\nodes$, we define the family of indices at $x$ as
$\indices_x = ([r]_0)^{B_x}\times \{\LL,\RR\}^{B_x}$. 

Let $x\in\nodes$, and let $x'$ be the child of $x$ if $x$ has a single child, or $x_1$ and $x_2$ be the children of $x$ if it has two.
For $(a,\omega_1,\omega_2)\in\overline M \times \overline W \times \overline Z$,
we define the tables $T_x[a,\omega_1,\omega_2]\in (\zfr)^{\indices_x}$, where for $(f,c)\in \indices_x$ we define the value of $T_x[a, \omega_1,\omega_2][f,c]$ as follows:
    \begin{itemize}
         \item Leaf node introducing vertex $v$:          
        Let $\phi$ be the pair $(\emptyset_v, \emptyset_c)$, where $\emptyset_v$ is a vector over an empty set, and $\emptyset_c$ is the empty coloring.
        We define $T_x[0,0,0][\phi] = 1$, and $T_x[a,\omega_1, \omega_2][f,c] = 0$ for all other values $a,\omega_1,\omega_2,f,c$.

        \item Introduce vertex $v$: Let $f'$ and $c'$ be the restrictions of $f$ and $c$ to $B_{x'}$. We define
        \[
            T_x[a, \omega_1, \omega_2][f,c] = 
            \begin{cases}
                T_{x'}[a, \omega_1, \omega_2][f', c']& \text{if } f(v) = 0,\\
                0 & \text{otherwise}.                
            \end{cases}
        \]

        \item Introduce edge $e=\{u,v\}$: Let $f'$ be the function resulting from $f$ by decreasing $f(u)$ and $f(v)$ by one. Let $\omega_1' = \omega_1-w(e)$ and $\omega_2' = \omega_2-w'(e)$. We define
        \[
            T_x[a, \omega_1, \omega_2][f,c] = 
            \begin{cases}
                \!\begin{aligned}
                    T_{x'}&[a - 1, \omega_1', \omega_2'][f',c]\\
                    &+T_{x'}[a, \omega_1, \omega_2][f,c]
                \end{aligned}&  \text{if } c(u) = c(v) \land f(u),f(v)>0,\\
                T_{x'}[a, \omega_1, \omega_2][f,c] &  \text{otherwise}.
            \end{cases}
        \]
        
        \item Forget vertex $v$: Let
        \[
            T_x[a, \omega_1, \omega_2][f,c]= 
            \sumstack{o\in \dsets(v)\\X\in\{\LL,\RR\}}
            T_{x'}[a, \omega_1, \omega_2][f[v\mapsto o], c [v\mapsto X]].
        \]

        \item Join node: Let
        \[
            T_x[a, \omega_1, \omega_2][f,c] = 
            \sumstack{a^1+a^2 = a\\\omega_1^1 + \omega_1^2 = \omega_1\\ \omega_2^1 + \omega_2^2 = \omega_2}
            \sum\limits_{f^1+f^2=f}
            T_{x_1}[a^1,\omega_1^1, \omega_2^1][f^1,c] \cdot T_{x_2}[a^2,\omega_1^2, \omega_2^2][f^2,c].  \xqed
        \]
    \end{itemize}    
 \end{algorithm}

We formalize the intuition about the records via the following definition, and show that the two notions match in Lemma~\ref{lem:twdynprogtables}.

\begin{definition}
For $x\in \nodes$, $(a, \omega_1, \omega_2)\in \overline M \times \overline W \times \overline Z$, and $(f,c) \in \indices_x$, we define the sets $A_x[a,\omega_1, \omega_2][f,c]$ as follows:
\begin{align*}
    A_x[a, \omega_1, \omega_2][f,c] = \{(F, z)\in& 2^{E_x}\times \{ \LL,\RR\}^{V_x}\colon\\
    &|F| = a\ \land w(F) = \omega_1\ \land w'(F) = \omega_2\ \land \\
    &z|_{B_x} = c|_{B_x}\ \land\\
    &\deg_F(v) \in \dsets(v) \text{ for all } v \in V_x \setminus B_x\ \land\\
    &\deg_F(v) = f(v) \text{ for all } v \in B_x\ \land\\
    &z \text{ is a consistent coloring of } V_x\}.
\end{align*}    
\end{definition}

\begin{longlemma}  
\label{lem:twdynprogtables}
Let $x\in\nodes$. It holds for all $(a, \omega_1, \omega_2)\in \overline M \times \overline W \times \overline Z$, $(f,c)\in \indices_{x}$ that
\[
T_x[a, \omega_1, \omega_2][f,c] \equiv_4 \big|A_x[a, \omega_1, \omega_2][f,c]\big|.
\]
\end{longlemma}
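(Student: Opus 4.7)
The plan is to prove the lemma by structural induction on the nice tree decomposition, showing at each node $x$ that the recurrence defining $T_x$ describes precisely (modulo $4$) a partition of $A_x[a,\omega_1,\omega_2][f,c]$ into pieces coming from the child(ren). The base case (leaf) is immediate: since $V_x$, $E_x$, and $B_x$ are empty, the only admissible object is $(\emptyset,\emptyset)$, counted by $T_x[0,0,0][\phi] = 1$, and all other entries of $A_x$ are empty.

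For the introduce-vertex step, the new vertex $v$ is isolated in $G_x$, so $\deg_F(v)=0$ is forced (killing all entries with $f(v)>0$), and $v$ forms a singleton component whose color must equal $c(v)$; restricting to $V_{x'}$ yields a bijection with $A_{x'}$. For the introduce-edge step with $e=\{u,v\}$, I would partition $A_x$ according to whether $e\in F$: the branch $e\notin F$ inherits directly from $A_{x'}$, while $e\in F$ requires $f(u),f(v)\ge 1$ (so child degrees stay non-negative) and $c(u)=c(v)$ (else $u$ and $v$ would share a component in $G_x[F]$ with distinct colors, violating consistency); the resulting child entry is $A_{x'}[a-1,\omega_1-w(e),\omega_2-w'(e)][f',c]$. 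For the forget-vertex step, the sets $A_x[\cdot,\cdot,\cdot][f,c]$ for fixed $(f,c)$ partition by the pair $(o,X)=(\deg_F(v),z(v))\in \dsets(v)\times\{\LL,\RR\}$, exactly matching the summation over these choices.

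The main obstacle is the join step. Here I would exploit that each edge of $G$ is introduced at a unique node, so $E_{x_1}\cap E_{x_2}=\emptyset$, giving a bijection between $A_x[a,\omega_1,\omega_2][f,c]$ and pairs in $A_{x_1}[\cdots]\times A_{x_2}[\cdots]$ whose parameters add to the target and whose colorings agree on $B_x$, via $(F,z)\mapsto \bigl((F\cap E_{x_i},z|_{V_{x_i}})\bigr)_{i\in\{1,2\}}$. The delicate direction is to check that gluing two colorings individually consistent on $G_{x_1}[F^1]$ and $G_{x_2}[F^2]$ yields a consistent coloring on $G_x[F]$: this holds because any path in $G_x[F]$ decomposes into subpaths each lying entirely in one child subgraph with endpoints in $B_x$, and the endpoint colors propagate across bags through the shared value $c$. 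Finally, since $\zfr$ respects addition and multiplication of residues, applying the induction hypothesis termwise to the double sum proves $T_x[a,\omega_1,\omega_2][f,c] \equiv_4 |A_x[a,\omega_1,\omega_2][f,c]|$, closing the induction.
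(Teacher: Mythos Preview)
Your proposal is correct and follows essentially the same structural induction as the paper's proof, with the same case analysis and the same bijection/partition arguments at each node type. If anything, your join case is slightly more explicit than the paper's: you spell out why two individually consistent colorings glue to a consistent coloring on $G_x[F]$ via the path-decomposition argument through $B_x$, whereas the paper simply asserts the bijection without dwelling on this point.
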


\begin{proof}
    We prove the claim by induction over $\dtree$. The base case is a leaf node $x$, where the claim holds by definition, since $G_x$ is an empty graph. Let $x$ be an internal node, and assume that the claim holds for all children of $x$. We show that the claim holds for $x$ by distinguishing the different kinds of nodes $x$. Let $x'$ be the child of $x$ if $x$ has a single child, and $x_1$ and $x_2$ be the children of $x$ otherwise.
    \begin{itemize}
        \item Introduce vertex $v$: It holds that $\deg_x(v) = 0$, and hence, for $f(v) \neq 0$ it holds that $A_x[a,\omega_1, \omega_2][f,c]= \emptyset$. Otherwise, let $f'$ be the restriction of $f$ to $B_{x'}$, and $c'$ be the restriction of $c$ to $B_{x'}$. Then one can define a bijection between $A_x[a,\omega_1, \omega_2][f,c]$ and $A_{x'}[a,\omega_1, \omega_2][f',c']$ by assigning to a pair $(F, z)$ the pair $(F, z')$, where $z'$ is the restriction of $z$ to $B_x$. Hence, it holds that
        \[
        T_x[a,\omega_1, \omega_2][f,c] 
        \equiv_4 \big|A_x[a,\omega_1, \omega_2][f,c]\big|
        = \big|A_{x'}[a,\omega_1, \omega_2][f',c']\big|
        \equiv_4 T_{x'}[a,\omega_1, \omega_2][f',c'].
        \]
        
        \item Introduce edge $e=\{u,v\}$: If $c(u)$ and $c(v)$ are not consistent or $f(u) = 0$ or $f(v) = 0$, then it holds for all $(F, z) \in A_x[a, \omega_1, \omega_2][f,c]$ that $e\notin F$. It follows in this case that $A_x[a,\omega_1, \omega_2][f,c] = A_{x'}[\omega_1, \omega_2][f,c]$. Assuming this is not the case, then it holds for each pair $(F, z)\in A_x[a,\omega_1, \omega_2][f,c]$ that either $e\notin F$, and hence $(F, z)\in A_{x'}[a,\omega_1, \omega_2][f,c]$, or $e\in F$, and $(F\setminus \{e\}, z)\in A_{x'}[a-1,\omega_1 - w(e), \omega_2 - w'(e)][f', c]$, where $f'$ is obtained from $f$ by decreasing $f(u)$ and $f(v)$ by one.
        Hence, it holds in this case that
        \begin{align*}
        T_x[a,\omega_1, \omega_2][f,c]
        &\equiv_4 \big|A_x[a,\omega_1, \omega_2][f,c]\big| \\
        &= \big|A_{x'}[a,\omega_1, \omega_2][f,c]\big| \dot\cup \big|A_{x'}[a-1,\omega_1 - w(e), \omega_2 - w'(e)][f', c]\big|\\
        &\equiv_4 T_{x'}[a,\omega_1, \omega_2][f,c] + T_{x'}[a-1,\omega_1 - w(e), \omega_2 - w'(e)][f', c].
        \end{align*}

        \item Forget vertex $v$: The family $A_x[a,\omega_1, \omega_2][f,c]$ equals the disjoint union of all sets $A_{x'}[a,\omega_1, \omega_2][f[v\mapsto o], c[v\mapsto X]]$ for all $o\in \dsets(v)$ and $X\in \{\LL,\RR\}$. This holds, since for each pair $(F, z) \in A_x[a,\omega_1, \omega_2][f,c]$, it must hold that $\deg_F(v) \in \dsets(v)$. Hence, for $o = \deg_F(v)$, and $X = z(v)$, it holds that $(F, z)\in A_{x'}[a,\omega_1, \omega_2][f[v\mapsto o], c[v\mapsto X]]$. It follows that
        \begin{align*}
	    T_x[a,\omega_1, \omega_2][f,c]
        &\equiv_4 \big|A_x[a,\omega_1, \omega_2][f,c]\big|\\
        &= \sumstack{o\in \dsets(v)\\X\in \{\LL,\RR\}} \big|A_{x'}[a,\omega_1, \omega_2][f[v\mapsto o], c[v\mapsto X]]\big| \\
        &\equiv_4 \sumstack{o\in \dsets(v)\\X\in\{\LL,\RR\}}
        T_{x'}[a,\omega_1, \omega_2][f[v\mapsto o], c[v\mapsto X]].
        \end{align*}
        
        \item Join node: Let $a^1,a^2,\omega_1^1,\omega_1^2,\omega_2^1,\omega_2^2 \in\mathbb{N}$ such that $a^1+a^2=a$, $\omega_1^1 + \omega_1^2 = \omega_1$ and $\omega_2^1 + \omega_2^2 = \omega_2$.
        Let $f_1,f_2$ be two vectors, such that $f_1 + f_2 = f$. Then for each pair $(F_1, z_1) \in A_{x_1}[a^1,\omega_1^1, \omega_2^1][f_1, c]$ and $(F_2, z_2) \in A_{x_2}[a^2\omega_1^2, \omega_2^2][f_2, c]$ it holds that
        \[
        (F_1\dot\cup F_2, z_1 \cup z_2)\in A_x[a,\omega_1, \omega_2][f,c].
        \]
        Moreover, for each pair $(F, z)\in A_x[a,\omega_1, \omega_2][f,c]$ there exists values $a^1,a^2,w_1^1,w_1^2,w_2^1+w_2^2,F_1, z_1, F_2, z_2$ with 
        $(F_1, z_1) \in A_{x_1}[a^1,\omega_1^1, \omega_2^1][f_1, c]$,
        $(F_2, z_2) \in A_{x_2}[a^2,\omega_1^2, \omega_2^2][f_2, c]$, such that $F = F_1\dot\cup F_2$ and $z = z_1 \cup z_2$. It follows that
        \begin{align*}
	  T_x[a,\omega_1, \omega_2][f,c]
        &\equiv_4 \big|A_x[a,\omega_1, \omega_2][f,c]\big| \\
        &= \sumstack{a^1+a^2=a\\w_1^1 + w_1^2 = \omega_1\\w_2^1 + w_2^2 = \omega_2}\sum\limits_{f_1+f_2 = f} \big|A_{x_1}[a^1,\omega_1^1, \omega_2^1][f_1, c]\big| \cdot \big|A_{x_2}[a^2,\omega_1^2, \omega_2^2][f_2, c]\big|\\
        &\equiv_4 \sumstack{a^1+a^2=a\\w_1^1 + w_1^2 = \omega_1\\w_2^1 + w_2^2 = \omega_2} \sumstack{f_1 + f_2 = f} T_{x_1}[a^1,\omega_1^1, \omega_2^1][f_1, c] \cdot T_{x_2}[a^2,\omega_1^2, \omega_2^2][f_2, c].
        \end{align*}
    \end{itemize}
\end{proof} 

As an immediate consequence of Lemma~\ref{lem:twdynprogtables}, we obtain:

\begin{lemma}\label{tw-ub-cor:tables-count-cuts}
    For each $(\omega_1,\omega_2) \in \overline W\times \overline Z$, it holds that $T_{\droot}[n-1,\omega_1, \omega_2][\phi] \equiv_4 C[\omega_1, \omega_2]$.
\end{lemma}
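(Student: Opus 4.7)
The statement is essentially an immediate corollary of \cref{lem:twdynprogtables} specialized to the root. The plan is to instantiate that lemma at $x = \droot$ and then verify that the combinatorial set $A_\droot[n-1,\omega_1,\omega_2][\phi]$ coincides on the nose with the set of consistent cuts $\mathcal{C}[\omega_1,\omega_2]$.

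First I would observe that, by the definition of a nice tree decomposition, $B_\droot = \emptyset$, so $\indices_\droot$ is the singleton $\{\phi\}$, where $\phi = (\emptyset_v, \emptyset_c)$. Applying \cref{lem:twdynprogtables} at $\droot$ with parameters $(a,\omega_1,\omega_2) = (n-1,\omega_1,\omega_2)$ and $(f,c) = \phi$ yields
\[
T_\droot[n-1,\omega_1,\omega_2][\phi] \equiv_4 \bigl|A_\droot[n-1,\omega_1,\omega_2][\phi]\bigr|.
\]
Thus it only remains to identify the right-hand side with $C[\omega_1,\omega_2]$.

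Next I would unpack the definition of $A_\droot[n-1,\omega_1,\omega_2][\phi]$. Since $G_\droot = G$, we have $V_\droot = V$ and $E_\droot = E$; since $B_\droot = \emptyset$, the clauses ``$z|_{B_\droot} = c|_{B_\droot}$'' and ``$\deg_F(v) = f(v)$ for all $v \in B_\droot$'' are vacuous, while the clause ``$\deg_F(v) \in \dsets(v)$ for all $v \in V_\droot \setminus B_\droot$'' becomes the requirement that $\deg_F(v) \in \dsets(v)$ for \emph{every} vertex $v$ of $G$. Together with $|F| = n-1$, $w(F) = \omega_1$ and $w'(F) = \omega_2$, this is exactly the condition that $F \in \mathcal{R}[\omega_1,\omega_2]$. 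The remaining clause ``$z$ is a consistent coloring of $V_\droot$'' asserts precisely that $z$ is constant on each connected component of $G[F]$, i.e., that $(F,z) \in \mathcal{C}(F)$ in the sense of \cref{def:usingconsistentcuts}.

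Combining these observations, $A_\droot[n-1,\omega_1,\omega_2][\phi]$ is, by definition, exactly the set $\mathcal{C}[\omega_1,\omega_2]$ of consistent cuts of weighted relaxed solutions. Its cardinality is $C[\omega_1,\omega_2]$, so chaining with the congruence above we conclude $T_\droot[n-1,\omega_1,\omega_2][\phi] \equiv_4 C[\omega_1,\omega_2]$, as required. There is no real obstacle here; the only thing to be careful about is to invoke $B_\droot = \emptyset$ (a property of nice tree decompositions) at the very start so that $\phi$ is the unique index and the ``bag-side'' conditions trivialize, leaving exactly the consistency of the cut as the non-vacuous content.
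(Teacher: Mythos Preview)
Your proposal is correct and follows exactly the approach the paper takes: the paper states this lemma as an immediate consequence of \cref{lem:twdynprogtables}, and your argument is precisely the natural unpacking of that instantiation at the root (using $B_\droot=\emptyset$ and $G_\droot=G$ to identify $A_\droot[n-1,\omega_1,\omega_2][\phi]$ with $\mathcal{C}[\omega_1,\omega_2]$).
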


\para{Efficient Join Operations.}
In order to obtain the desired running time, we first show how one can process a join node efficiently using fast convolution methods. We base our approach on the fast convolution technique introduced by van Rooij \cite{Rooij20} that applies a multidimensional fast Fourier transformation to speed up the processing of join nodes. Instead of the usual trick of eliminating cyclic dependencies by doubling the underlying field, van Rooij applies filters over the total count of indices to cancel cyclic dependencies. We follow the same approach, in order to get the claimed running time.
We cite the following result from \cite{Rooij20} with wording revised to our setting.

\begin{longdefinition}[Multidimensional Fast Fourier Transform]
    Let $h_1, \dots, h_k \in \mathbb{N}$, and let $N = \mathbb{Z}_{h_1} \times \dots \times \mathbb{Z}_{h_k}$.
    For $f,f'\in N$, we write $f+f'$ for the componentwise addition in $\mathbb{Z}_{h_1} \times \dots \times \mathbb{Z}_{h_n}$, and we write $f\oplus f'$ for the componentwise addition in $\mathbb{Z}^k$.

    Let $T_1, T_2 \in (\zfr)^{N}$ be two tensors. We define the join operation $T_1\otimes T_2 \in (\zfr)^{N}$, where for $x\in N$ it holds that:
    \[
    T_1 \otimes T_2 [x]= \sumstack{a,b\in N\\a+b = x} T_1[a] \cdot T_2[b].
    \]
\end{longdefinition}

\begin{longlemma}[{\cite[Proposition 4]{Rooij20}}]\label{tw-ub-lem:mltdimfft}
    Given two tensors $T_1, T_2 \in (\zfr)^{N}$, one can compute the tensor $T_1\otimes T_2$ in time $\ostar(|N| \cdot \log |N|)$.
\end{longlemma}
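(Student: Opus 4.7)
The plan is to reduce the $k$-dimensional convolution to a sequence of one-dimensional cyclic convolutions, one along each coordinate axis, and then implement each 1D cyclic convolution via an FFT. The key structural observation is that $\otimes$ is \emph{separable}: because addition in $N$ is coordinate-wise, the operator $T_1 \otimes T_2$ factors as the composition, in any order, of the $k$ partial operators $\otimes_i$ that perform a cyclic convolution of length $h_i$ along the $i$-th axis while leaving the remaining $k-1$ indices fixed.

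Given this separability, the algorithm processes the axes one at a time. At step $i$, the intermediate tensor is viewed as a collection of $|N|/h_i$ independent vectors of length $h_i$ (indexed by the frozen choices of the other coordinates), and each of these vectors undergoes a forward cyclic DFT of length $h_i$. After completing the forward transforms along every axis for both $T_1$ and $T_2$, the transformed tensors are multiplied pointwise, and then the inverse DFT is applied along each axis in reverse. A standard 1D FFT along the $i$-th axis costs $\bigoh(|N|\log h_i)$ arithmetic operations in the working ring, and summing over $i \in [k]$ gives $\bigoh(|N| \log |N|)$ operations in total, yielding the claimed bound up to polynomial factors in the input size.

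The main obstacle is that the target ring $\zfr$ is not a field and does not contain primitive $h_i$-th roots of unity, so a ``native'' DFT over $\zfr$ is unavailable. I would address this by lifting the computation to an auxiliary ring $R$ in which the requisite roots exist and in which $\zfr$ embeds (or onto which $\zfr$ is the image of a ring homomorphism). Concretely, one can lift each entry of $T_1, T_2$ to $\mathbb{Z}$ (with bounded representatives), compute the convolution over $\mathbb{Z}$ via an FFT over $\mathbb{C}$ or over a sufficiently large prime field where roots of unity of the needed orders exist, and finally reduce the resulting coefficients modulo $4$. Because reduction mod $4$ is a ring homomorphism, it commutes with addition and multiplication and hence with the convolution, so the output is the correct tensor $T_1 \otimes T_2$ in $\zfr^N$. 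Alternatively one may apply a Schönhage–Strassen-style construction that synthesizes ``virtual'' roots of unity inside an extension of $\mathbb{Z}_{2^s}$ for suitable $s$, which keeps the entire computation in characteristic $2$ while still supporting fast transforms; either route yields the stated $\ostar(|N| \log |N|)$ running time.
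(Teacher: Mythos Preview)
The paper does not prove this lemma; it is quoted from \cite{Rooij20} (Proposition~4) and used as a black box, so there is no ``paper's own proof'' to compare against. Your sketch is a reasonable outline of the standard multidimensional FFT approach that underlies that cited result.

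One imprecision worth flagging: your opening claim that the convolution $T_1 \otimes T_2$ ``factors as the composition of the $k$ partial operators $\otimes_i$'' is not correct for general $T_2$ (that would require $T_2$ to be a rank-one tensor). What \emph{does} factor is the multidimensional DFT, which decomposes as the composition of one-dimensional DFTs along each axis; this is exactly the row--column algorithm you go on to describe in your second paragraph, and that description is correct. The lifting-to-$\mathbb{Z}$ workaround for the absence of roots of unity in $\zfr$ is also standard and sound: each output coefficient is a sum of at most $|N|$ products of representatives in $\{0,1,2,3\}$, so $\bigoh(\log|N|)$-bit integers suffice, and reducing modulo $4$ at the end commutes with the convolution.
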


Based on this lemma, the authors of \cite{Rooij20} show that one can compute acyclic convolution in the same running time (up to a polynomial factor), by filtering the indices by their total sum. This allows to cancel out cyclic dependencies without doubling the base of the exponent. We use the same technique to compute the join operation in our algorithm.

\begin{lemma}\label{tw-ub-lem:acyclic-conv}
    Given two tensors $T_1, T_2 \in (\zfr)^{N}$, one can compute
    in time $\ostar(|N| \cdot \log |N|)$ the tensor $T\in(\zfr)^{N}$ defined by
        \begin{equation}\label{eq:acyclic-conv}
        T[f] = \sum\limits_{f_1\oplus f_2 = f} T_1[f_1] \cdot T_2[f_2].
    \end{equation}
    \end{lemma}

\begin{proof}
    Let $K = \sum_{i=1}^k h_i$. For $\kappa \in [K]_0$, we define the tables $T_1^{\kappa}, T_2^{\kappa} \in \zfr^{N}$, where for $x\in[2]$ and $f\in N$ it holds that 
    \[
    T_x^{\kappa}[f] = \begin{cases}
        T_x[f] & \text{if } \sum_{j=1}^k f_j = \kappa,\\
        0 & \text{otherwise.}
    \end{cases}
    \]
    We define the table $T'\in \zfr^{N}$, where for $f\in N$ and $\kappa = \sum_{j=1}^k f_j$  it holds that
    \[
    T'[f]= \sum_{j=0}^{\kappa} \sum\limits_{f_1+ f_2 = f} T_1^{j}[f_1] \cdot T_2^{\kappa-j}[f_2] = \sum_{j=0}^{\kappa} (T^{j}_1\otimes T^{\kappa-j}_2)(f).
    \]
    Clearly, $T'[f]$ can be computed by $\kappa+1$ applications of the Multidimensional Fast Fourier Transform, and hence, by \cref{tw-ub-lem:mltdimfft}, it can be computed in time $\ostar(|N| \cdot \log |N|)$.

    We claim that $T' = T$. To see this, let $T_1[f_1]\cdot T_2[f_2]$ be a summand of \cref{eq:acyclic-conv}. Then it must hold for $k_1 = \sum_{j=1}^k (f_1)_j$, $k_2 = \sum_{j=1}^k (f_2)_j$ and $\kappa =  \sum_{j=1}^k (f)_j$ that $k_1 + k_2 = \kappa$. Hence, it holds that $T'[f]$ contains the summand $T_1^{k_1}[f_1] \cdot T_2^{k_2}[f_2]$. For the other direction, let $f_1,f_2,k_1,k_2$ be such that $T_1^{k_1}[f_1]\neq 0$, $T_2^{k_2}[f_2]\neq 0$ and $T_1^{k_1}[f_1] \cdot T_2^{k_2}[f_2]$ appears in the sum of $T'[f]$. Then it holds that $k_1 + k_2 = \kappa$. Moreover, if there exists some index $j\in[k]$ such that $(f_1)_j + (f_2)_j > f_j$, then it must hold by the total sum, that there exists some index $j'\in[k]$ such that $(f_1)_{j'} + (f_2)_{j'} < f_{j'}$, and hence  $(f_1)_{j'} + (f_2)_{j'} \not\equiv_{\mathbb{Z}_{j'}} f_{j'}$, which contradicts the assumption, that this summand appears in the sum of $T'[f]$. Hence, it must hold for each $j\in [k]$ that $(f_1)_j + (f_2)_j = f_j$, and hence $f_1\oplus f_2 = f$. It follows that $T'[f]$ contains the summand $T_1[f_1]\cdot T_2[f_2]$. This shows that $T' = T$.
\end{proof}

Now we show how to compute the tables at a join node in the claimed running time:

\begin{lemma}\label{tw-ub-lem:fast-join-node}
    Let $x$ be a join node, and let $a\in \overline M$, $\omega_1\in \overline W$, and $\omega_2\in \overline Z$. Then the table $T_x[a, \omega_1, \omega_2]$ can be computed
    in time $\ostar\big((2r+2)^{\tw}\big)$.
\end{lemma}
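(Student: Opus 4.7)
The plan is to compute $T_x[a, \omega_1, \omega_2]$ by decomposing the join formula into an outer summation over the polynomially many splits of $(a, \omega_1, \omega_2)$ combined, per split and per coloring $c \in \{\LL,\RR\}^{B_x}$, with a multidimensional convolution in the variable $f$. Concretely, for each fixed split $(a^1, a^2, \omega_1^1, \omega_1^2, \omega_2^1, \omega_2^2)$ summing to $(a, \omega_1, \omega_2)$ and each coloring $c$, the task reduces to evaluating
\[
W^c[f] \;=\; \sum_{f^1 + f^2 = f} U^c[f^1]\cdot V^c[f^2]
\]
for every $f \in ([r]_0)^{B_x}$, where $U^c, V^c$ denote the appropriate slices of the child tables $T_{x_1}, T_{x_2}$ restricted to the chosen splits and to coloring $c$, and the addition $f^1 + f^2 = f$ is componentwise over $\mathbb{N}^{B_x}$.

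Next I plan to apply Lemma~\ref{tw-ub-lem:mltdimfft} to each such per-coloring convolution. Using the group $N = \mathbb{Z}_{r+1}^{|B_x|}$ yields a cyclic multidimensional FFT of size $(r+1)^{|B_x|}$ in time $\tilde{\mathcal{O}}((r+1)^{|B_x|})$; however, the cyclic result differs from the desired proper convolution whenever $f^1(v)+f^2(v) \geq r+1$ in some coordinate. The standard workaround of padding each dimension to size $2(r+1)$ costs an overall factor of $2^{|B_x|}$, which would raise the base of the exponent to $4r+4$. To avoid this, the key step is to invoke van Rooij's filter technique, which uses inclusion-exclusion style filters across the $|B_x|$ coordinates to isolate the non-wrapping contributions while keeping the FFT size at $(r+1)^{|B_x|}$ per coloring.

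Summing over the $2^{|B_x|}$ choices of $c$ then yields $\tilde{\mathcal{O}}((2r+2)^{|B_x|})$ work per split. Since $|B_x| \leq \tw + 1$, the factor $2r+2 \leq 2n$ is polynomial in the input, and there are only polynomially many outer splits, the total running time is $\ostar((2r+2)^{\tw+1}) = \ostar((2r+2)^{\tw})$ after absorbing the polynomial factor of $2r+2$ into the $\ostar$ notation.

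The hardest part will be faithfully adapting van Rooij's filter technique to our $\zfr$-valued setting, since $2$ is a zero divisor in $\zfr$ and any cancellations via inclusion-exclusion or via multiplication by filter coefficients must avoid annihilating information that is later read modulo $4$. A secondary technical point is to verify that the filtering overhead remains bounded by a polynomial in $r$ and $|B_x|$, so that it does not silently contribute to the base of the exponent.
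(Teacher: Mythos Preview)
Your proposal is correct and follows essentially the same route as the paper: split over the polynomially many triples $(a^1,a^2,\omega_1^1,\omega_1^2,\omega_2^1,\omega_2^2)$, fix a coloring $c\in\{\LL,\RR\}^{B_x}$, and compute the $f$-convolution over $([r]_0)^{B_x}$ via the multidimensional FFT of Lemma~\ref{tw-ub-lem:mltdimfft}, invoking van Rooij's filters to pass from cyclic to proper convolution without doubling the per-coordinate domain. You are in fact more explicit than the paper about the cyclic-versus-proper issue; the paper simply quotes van Rooij's Proposition~4 as a black box already incorporating the filter step. Regarding your concern about working over $\zfr$: this is a non-issue, since one can lift the tables to the integers (each entry is at most $2^{|E|}$, so polynomially many bits), carry out the FFT and filtering over the integers where van Rooij's technique applies verbatim, and reduce modulo~$4$ only at the end.
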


\begin{proof}
    The algorithm iterates over all values $a^1,a^2,\omega_1^1,\omega_1^2,\omega_2^1, \omega_2^2$ such that $a^1+a^2=a$, $\omega_1^1+\omega_1^2 = \omega_1$ and $\omega_2^1+\omega_2^2 = \omega_2$. For each such tuple, and for each $c\in\{\LL,\RR\}^{B_x}$, let $T_1^c[f] = T_{x_1}[a^1,\omega_1^1,\omega_2^1][f,c]$ and $T_2^c[f] = T_{x_2}[a^2,\omega_1^2,\omega_2^2][f,c]$. The algorithm computes the table $T^c = T_1^c\otimes T_2^c$. Let $T'\in (\zfr)^{\indices_x}$ with $T'[f,c] = T^c[f]$. The algorithm adds the table $T'$ componentwise to $T_x$.

    By definition, it holds that $T_{x_1}[a^1,\omega_1^1,\omega_2^1][f_1,c_1] \cdot T_{x_2}[a^2,\omega_1^2,\omega_2^2][f_2,c_2]$ appears in the sum of $T_x[a,\omega_1,\omega_2][f,c]$ if and only if it holds that $a^1+a^2=a$, $\omega_1^1+\omega_1^2 = \omega_1$ and $\omega_2^1+\omega_2^2 = \omega_2$, $c_1=c_2=c$ and $f_1\oplus f_2 = f$. Hence, the algorithm computes the correct table $T_x$.

    The algorithm iterates over a polynomial number of tuples $a^1,a^2,\omega_1^1,\omega_1^2,\omega_2^1, \omega_2^2$, and over $2^{\tw + 1}$ mappings $c$, and for each the algorithm computes the table $T^c$ in time $\ostar\big((r+1)^{\tw}\big)$ by Lemma~\ref{tw-ub-lem:acyclic-conv}. This concludes the proof. Hence, the algorithm runs in time 
    $\ostar\big(2^{\tw+1}(r+1)^{\tw}\big) = \ostar\big((2r+2)^{\tw}\big)$.
\end{proof} 
With Lemma~\ref{tw-ub-lem:fast-join-node} in hand, we can obtain an upper bound on dynamically computing the tables at each node of $\syntaxtree$ and subsequently use this to establish the main result of this section:

\begin{lemma} \label{tw-ub-lem:running-time}
    All tables $T_x[a, \omega_1, \omega_2]$ can be computed in time $\ostar((2r+2)^{\tw} \cdot W^{\oh(1)} \cdot Z^{\oh(1)})$.
\end{lemma}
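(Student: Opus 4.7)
The plan is to compute the table $T_x[a, \omega_1, \omega_2]$ for every node $x$ of the nice tree decomposition and every triple $(a, \omega_1, \omega_2) \in \overline{M} \times \overline{W} \times \overline{Z}$ in a bottom-up manner, and to bound the time spent per node per triple. Since $\syntaxtree$ has polynomially many nodes and $|\overline{M} \times \overline{W} \times \overline{Z}| = O(m \cdot nW \cdot nZ)$, it suffices to show that for every $x$ and every fixed triple the table $T_x[a, \omega_1, \omega_2]$ (as an element of $(\zfr)^{\indices_x}$) can be produced in time $\ostar((2r+2)^{\tw})$, given that the tables of $x$'s children are already computed.

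For leaf, introduce-vertex and introduce-edge nodes, the recurrences in Algorithm~\ref{alg:tw} compute each entry from a constant number of entries of the child table, so the whole table is filled in time $O(|\indices_x|) \leq \ostar((r+1)^{\tw+1} \cdot 2^{\tw+1}) = \ostar((2r+2)^{\tw})$. For a forget-vertex node, each entry is obtained by summing over $o \in \dsets(v)$ (at most $r+1$ choices) and $X \in \{\LL, \RR\}$, which is an $O(r)$-time operation per entry; the $O(r)$ factor is absorbed by $\ostar$, still giving $\ostar((2r+2)^{\tw})$. For join nodes, Lemma~\ref{tw-ub-lem:fast-join-node} directly yields the same $\ostar((2r+2)^{\tw})$ bound per triple, so every node type is handled within the same budget.

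Aggregating, the total running time is the number of decomposition nodes times the number of triples times the per-triple cost, namely $\mathrm{poly}(n) \cdot O(m \cdot nW \cdot nZ) \cdot \ostar((2r+2)^{\tw}) = \ostar((2r+2)^{\tw} \cdot W^{\oh(1)} \cdot Z^{\oh(1)})$, exactly the claimed bound (identifying the symbol $d$ in the statement with the maximum degree requirement $r$). The only nontrivial ingredient is Lemma~\ref{tw-ub-lem:fast-join-node}; the remainder of the argument is bookkeeping over the recurrences, so I do not expect a substantive obstacle here.
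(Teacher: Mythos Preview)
Your proposal is correct and follows essentially the same approach as the paper's proof: both compute the tables bottom-up over the nice tree decomposition, iterate over all $(a,\omega_1,\omega_2)$ triples, bound the work at leaf/introduce/forget nodes by $|\indices_x|\le (2r+2)^{\tw+1}$ (absorbing the extra factors of $r$ into $\ostar$), and invoke Lemma~\ref{tw-ub-lem:fast-join-node} for join nodes. Your identification of the symbol $d$ in the statement with the maximum requirement $r$ is also how the paper uses it.
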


\begin{proof}
    The algorithm computes all tables $T_x$ by dynamic programming in a bottom up manner over the decomposition tree. For each node $x$, the algorithm iterates over all values $(a, \omega_1, \omega_2) \in \overline{M} \times \overline{W} \times \overline{Z}$, and for each such tuple the algorithm computes $T_x[a, \omega_1, \omega_2]$ as follows: let $x'$ be the child of $x$ if $x$ has a single child or $x_1$ and $x_2$ be the children of $x$ if it has two children.
    \begin{itemize}
        \item For a leaf node $x$, the algorithm initializes $T_x[0,0,0][\phi] = 1$ and $T_x[a, \omega_1, \omega_2][f,c] = 0$ for all other values $a,\omega_1,\omega_2,f,c$.
        \item Introduce vertex node $(v)$, the algorithm iterates over all indices $(f',c')\in\indices_{x'}$. For each the algorithm sets $T_x[a, \omega_1, \omega_2][f,c] = T_{x'}[a, \omega_1, \omega_2][f',c']$ where $f' = f[v\mapsto 0]$ for both extensions $c'$ of $c$ by $v\mapsto \LL$ and $v\mapsto \RR$. The algorithm sets $T_x[a, \omega_1, \omega_2][f,c] = 0$ for all other values $f,c$.
        \item Introduce edge $e=\{u,v\}$: the algorithm iterates over all indices $(f,c)\in\indices_{x'}$. For each the algorithm sets $T_x[a, \omega_1, \omega_2][f,c] = T_{x'}[a, \omega_1, \omega_2][f,c]$. After that, for each coloring $c$ such that $c(u) = c(v)$ and $f(u),f(v)<d$, the algorithm adds $T_{x'}[a-1, \omega_1-w(e), \omega_2-w'(e)][f,c]$ to $T_{x}[a, \omega_1, \omega_2][f',c]$ where $f' = f[u\mapsto f(u)+1, v\mapsto f(v)+1]$.
        \item Forget vertex $v$: the algorithm iterates over all indices $(f,c)\in\indices_{x}$. For each the algorithm sets $T_x[a, \omega_1, \omega_2][f,c] = 0$. After that the algorithm iterates over all indices $(f',c')\in\indices_{x'}$. For each index $f'$ with $f'(v)\in \dsets(v)$, the algorithm adds the value of $T_{x'}[a, \omega_1, \omega_2][f',c']$ to $T_x[f,c]$ where $f$ is the restriction of $f'$ to $B_x$ and $c$ is the restriction of $c'$ to $B_x$.
        \item Join node: This follows by \Cref{tw-ub-lem:fast-join-node} by iterating over all values $a, \omega_1, \omega_2$.
    \end{itemize}
    Clearly, the number of values $a^1, a^2, \omega_1^1, \omega_1^2, \omega_2^1, \omega_2^2$ the algorithm iterates over for each node $x$ is polynomial in $n$, $W$ and $Z$. For a leaf node, an introduce vertex, introduce edge or forget vertex node, the algorithm iterates over each index of $\indices_x$ and $\indices_{x'}$ at most once, and for each index the algorithm performs a constant number of operations. Hence, all these nodes can be processed in time $\ostar((2r+2)^{\tw})$.
    
    For a join node, each table $T_{a^1,a^2,\omega_1^1,\omega_1^2,\omega_2^1,\omega_2^2}$ can be computed in time $\ostar((2r+2)^{\tw})$ by \cref{tw-ub-lem:fast-join-node}. Since the total number of tuples $(a^1,a^2,\omega_1^1,\omega_1^2,\omega_2^1,\omega_2^2)$ is polynomial in $n$, $W$ and $Z$, the algorithm computes all tables $T_x$ in time $\ostar((2r+2)^{\tw} \cdot W^{\oh(1)} \cdot Z^{\oh(1)})$.
\end{proof} 
\begin{proof}[Proof of \cref{thm:tw-ub}]
    The algorithm fixes $Z = 2m = n^{\oh(1)}$, and computes all tables $T_x$ for all nodes $x$ in time $\ostar((2r+2)^{\tw} \cdot W^{\oh(1)})$ by \cref{tw-ub-lem:running-time}.
    The algorithm accepts, if there exists a value $\omega_2\in\overline Z$ such that $T_{\droot}[n-1,\omega_1,\omega_2][\phi] = 2$ and rejects otherwise.

    It holds by \cref{tw-ub-cor:isolation} that with probability at least $1/2$ there exists a value $\omega_2\in\overline Z$ such that $\mathcal{S}[\omega_1,\omega_2]$ contains a single solution only, if a solution exists, and none otherwise. It holds by \cref{tw-ub-cor:tables-count-cuts} that $T_{\droot}[n-1,\omega_1,\omega_2][\phi] \equiv_4 C[\omega_1,\omega_2]$, and by \cref{tw-ub-lem:cnc} that $C[\omega_1,\omega_2] \equiv_4 2$ if $\big|\mathcal{S}[\omega_1,\omega_2]\big| = 1$. Hence, if a solution exists the algorithm accepts with probability at least $1/2$, and rejects otherwise.
\end{proof}

\section{Pathwidth}\label{sec:pw}
\noindent
In this section, we provide an upper bound that uses pathwidth to supersede the one obtained in Theorem~\ref{thm:tw-ub} and also a corresponding lower bound.

\subsection{The Upper Bound}
Our aim here is to prove:
\pwub*

Let $(\syntaxtree, \mathcal{B})$ be a nice path decomposition of width $\pw$, and $r$ the root of $\syntaxtree$.
For a subgraph $H$ of $G$, let $q_H(v) = \min\{\deg_H(v), D(v)\}$ for each vertex $v\in V(H)$.
For $x\in \nodes$, let $q_x = q_{G_x}$.

\para{Dynamic programming tables.}
To obtain our tight runtime bound, we need to extend the notion of coloring as follows:

\begin{definition}
For a vertex set $S$, we call a mapping $c\colon S\rightarrow\{\LL,\RR,\noc\}$ a \defi{partial coloring}. We define the consistency relation $\sim$ over $\{\LL,\RR,\noc\}$ with $x\sim y$ if and only if $\{x,y\}\neq\{\LL,\RR\}$. For a graph $H$, we call a partial coloring $c: V(H)\rightarrow \{\LL,\RR,\noc\}$ \defi{consistent}, if it holds for each edge $\{u,v\}\in E(H)$ that $c(u)\sim c(v)$.
\end{definition}

The dynamic programming table at a node $x$ is indexed by the family $\indices_x$, defined below.

\begin{definition}
Let $\mathcal{D}_x \subseteq \mathbb{Z}^{B_x}$ be the family of all vectors that are upper bounded (componentwise) by $q_x$, i.e.,
\[
\mathcal{D}_x = \{ u \in \mathbb{Z}^{B_x} \colon u_v \leq q_x(v) \text{ for all } v \in B_x \},
\]
and let $\mathcal{C}_x = \{\LL, \RR, \noc\}^{B_x}$. We define
\[
\indices_x = \big\{(f,c) \in \mathcal{D}_x\times\mathcal{C}_x \colon c(v) = \noc \iff f_v \in \{0, d(v)\} \big\}.
\]
\end{definition}

The algorithm we use for pathwidth then proceeds essentially analogously as Algorithm~\ref{alg:tw}, with the following distinction: if a vertex $v$ in the bag has degree $0$ then we do not need to fix its side in the consistent cut, while if it has degree $r$ then we need not store it. Implementing this change yields dynamic programming tables $\hat T$ that require two fewer states per vertex, and this also directly translates into the running time since we do not deal with join nodes.

Formally, the tables are defined recursively over $\mathcal{T}$ as follows:

\begin{longalgorithm}\label{pw-ub-def:tables}
    Let $x \in \nodes$, and let $x'$ be the child of $x$ (if exists). For $(a,\omega_1,\omega_2)\in\overline M \times \overline W \times \overline Z$,
    we define the tables $\hat{T}_x[a,\omega_1,\omega_2]\in (\zfr)^{\indices_x}$, where for $(f,c)\in \indices_x$ we define the value of $\hat{T}_x[a, \omega_1,\omega_2][f,c]$ depending on the kind of node $x$ as follows:
         \begin{itemize}
        \item For a leaf node $x$, 
        let $\phi$ be the pair $(\emptyset_v, \emptyset_c)$, where $\emptyset_v$ is a the vector over an empty set, and $\emptyset_c$ is the empty partial coloring.
        It holds that $(f,c)=\phi$. We set $\hat{T}_x[0,0,0][f,c] = 1$, and $\hat{T}_x[a,\omega_1, \omega_2][f,c] = 0$ for all other values $a, \omega_1, \omega_2$.

        \item Introduce vertex $v$: let $f', c'$ be the restrictions of $f$ and $c$ to $B_{x'}$ respectively. Since $q_x(v)\leq\deg_x(v) = 0$, it must hold that $f(v) = 0$ and $c(v)= \noc$. We set
        \[
            \hat{T}_x[a, \omega_1, \omega_2][f,c] = \hat{T}_{x'}[a, \omega_1, \omega_2][f', c'].
        \]
        
        \item Forget vertex $v$:
        We define
        \[
            \hat{T}_x[a, \omega_1, \omega_2][f,c]= \sumstack{o\in \dsets(v)\\X\in\{\LL,\RR,\noc\}\\X=\noc \iff o = d(v)}  \hat{T}_{x'}[a, \omega_1, \omega_2]\big[f[v\mapsto o], c[v\mapsto X]\big],
        \]

        \item Introduce edge $e=\{u,v\}$:
        If $f(u) = 0$ or $f(v) = 0$, or $c(u)$ and $c(v)$ are not consistent, we set $\hat{T}_x[a, \omega_1, \omega_2][f,c] = \hat{T}_{x'}[a, \omega_1, \omega_2][f,c]$. Assuming this is not the case, let $a'=a-1$, $\omega'_1 = \omega_1-w(e)$ and $\omega'_2 = \omega_2-w'(e)$. Let $f' = f[u\mapsto f_u - 1, v\mapsto f_v -1]$. If $f_u = d_u$ and $f_v = d_v$, we define $Q_u = Q_v = \{\LL, \RR\}$. Otherwise, it holds that the set $Q = \{\LL, \RR\} \cap \{f_u, f_v\}$ is a singleton. Let $Q_u = Q$ if $f(u) \geq 2$ or $Q_u = \{\uparrow\}$ otherwise, and $Q_v = Q$ if $f_v \geq 2$ or $Q_v = \{\uparrow\}$ otherwise. We set
        \begin{align*}
            \hat{T}_x[a, \omega_1, \omega_2][f,c] = &\hat{T}_{x'}[a, \omega_1, \omega_2][f,c]\\
            &+ \sumstack{X_u \in Q_u, X_v\in Q_v\\X_u\sim X_v} \hat{T}_{x'}\big[a', \omega'_1, \omega'_2\big]\big[f', c[u\mapsto X_u, v\mapsto X_v]\big]. \xqed
        \end{align*}
    \end{itemize}
\end{longalgorithm}

\begin{longdefinition}
Let $x\in \nodes$. For a mapping $f\colon B_x\rightarrow \dvecs_x$, we define the set $Q_f = \{v\in B_x\colon f(v) < d(v)\}$. For
$(a, \omega_1, \omega_2)\in \overline M \times \overline W \times \overline Z$, and $(f,c) \in \indices_x$ we define the sets $A_x[\omega_1, \omega_2][f,c]$ as follows:
\begin{align*}
    A_x[a, \omega_1, \omega_2][f,c] = \{(F, z)\in& 2^{E_x}\times \{ \LL,\RR,\noc \}^{V_x}\colon\\
    &|F| = a, w(F) = \omega_1, w'(F) = \omega_2\ \land\\
    &\deg_F(v) \in \dsets(v) \text{ for all } v \in V_x \setminus B_x\ \land\\
    &\deg_F(v) = f(v) \text{ for all } v \in B_x\ \land\\
    &z|_{Q_x} = c|_{Q_f}\ \land\\
    &z(v)\neq \noc \text{ for all } v\in V_x \setminus Q_f\ \land\\
    &z \text{ is a consistent partial coloring of } G_x\}.
\end{align*}
Intuitively, the tables $A_x$ store the set of all consistent partial colorings $z$ of all partial solutions $F$ in $G_x$ that have the footprint $(f,c)$. In particular, $z$ is an extension of $c|_{Q_f}$ that assigns $\LL$ or $\RR$ to each vertex of $V_x\setminus Q_f$. 
\end{longdefinition}

\begin{longlemma}    
\label{lem:pwcomputation}
Let $x\in\nodes$. It holds for all $(a, \omega_1, \omega_2)\in \overline M \times \overline W \times \overline Z$, $(f,c)\in \indices_{x}$ that
\[
\hat{T}_x[a, \omega_1, \omega_2][f,c] \equiv_4 \big|A_x[a, \omega_1, \omega_2][f,c]\big|.
\]
\end{longlemma}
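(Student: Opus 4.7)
I would prove Lemma~\ref{lem:pwcomputation} by induction on the nice path decomposition $\mathcal{T}$, traversed bottom-up; the inductive hypothesis at the child $x'$ will let me reduce $\hat{T}_x[a,\omega_1,\omega_2][f,c]$ to $\bigl|A_x[a,\omega_1,\omega_2][f,c]\bigr|\pmod 4$ at the parent $x$. In the base case of a leaf, $V_x = B_x = \emptyset$, the only admissible index is $\phi$, and the singleton $A_x[0,0,0][\phi] = \{(\emptyset,\emptyset)\}$ matches the initialization $\hat{T}_x[0,0,0][\phi] = 1$, with all other cells zero on both sides. For the inductive step I would treat the three node types by exhibiting, in each case, an explicit partition of $A_x[a,\omega_1,\omega_2][f,c]$ into blocks that biject with the summands in the recursive definition of $\hat{T}_x$; combined with the inductive hypothesis this yields the claim.

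The two easier cases are direct. For an introduce-vertex node introducing $v$, the new vertex is isolated in $G_x$, so $\deg_F(v)=0=f(v)$ is forced, and the definition of $\indices_x$ then forces $c(v)=\noc$; since $v\in Q_f$, this propagates to $z(v)=\noc$, and the restriction $(F,z)\mapsto (F,z|_{V_{x'}})$ is a bijection onto $A_{x'}[a,\omega_1,\omega_2][f|_{B_{x'}},c|_{B_{x'}}]$, all other defining conditions being preserved because $v$ carries no incident edges. For a forget-vertex node forgetting $v$, I would partition $A_x[a,\omega_1,\omega_2][f,c]$ by the pair $(o,X) := (\deg_F(v), z(v))$; the constraint $\deg_F(v)\in D(v)$ from $v\in V_x\setminus B_x$ forces $o\in D(v)$, while membership of $(f[v\mapsto o],\, c[v\mapsto X])$ in $\indices_{x'}$ together with $D(v)\subseteq\mathbb{N}_{>0}$ enforces exactly the side condition on $(o,X)$ written in the algorithm (in particular, $X=\noc$ forces $o=d(v)$). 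Each block then bijects with the corresponding $A_{x'}$ entry via the identity map.

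The main obstacle is the introduce-edge case for $e=\{u,v\}$, which I would handle by partitioning $A_x[a,\omega_1,\omega_2][f,c]$ according to whether $e\in F$. The $e\notin F$ part yields a direct bijection with $A_{x'}[a,\omega_1,\omega_2][f,c]$, giving the first summand. For the $e\in F$ part, deleting $e$ decreases each of $\deg_F(u),\deg_F(v)$ by one, so we land in $A_{x'}[a-1,\omega_1-w(e),\omega_2-w'(e)][f',c']$ with $f' = f[u\mapsto f(u)-1,\, v\mapsto f(v)-1]$. The delicate point is the interplay with the lazy coloring: if $u$ has $c(u)=\noc$ in $A_x$ because $f(u)=d(u)$ has just been reached, then in $A_{x'}$ the vertex $u$ had degree $d(u)-1<d(u)$ and was therefore colored from $\{\LL,\RR\}$, contributing two distinct preimages; whereas if $f(u)-1=0$ then $u$ was itself $\noc$-colored in $A_{x'}$, contributing a single preimage; if instead $c(u)\in\{\LL,\RR\}$ already, that color persists into $A_{x'}$. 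The sets $Q_u,Q_v$ in the algorithm encode exactly this case distinction, and the requirement $X_u\sim X_v$ captures the consistency of $z$ across the new edge of $G[F]$. The bookkeeping-heavy step will be a careful enumeration over all combinations of $(c(u),f(u),c(v),f(v))$ to verify that the count produced by the $Q_u\times Q_v$ sum matches the size of the corresponding block of $A_x$; the degenerate cases $f(u)=0$, $f(v)=0$, or $c(u)\not\sim c(v)$ simply rule out $e\in F$ and collapse to the first branch of the formula.
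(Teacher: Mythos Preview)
Your approach is the same as the paper's: induction on the nice path decomposition, with the leaf, introduce-vertex, forget-vertex and introduce-edge cases handled by partitioning $A_x[a,\omega_1,\omega_2][f,c]$ into blocks that biject with the summands in the recursive definition of $\hat T_x$. The introduce-vertex and introduce-edge cases are described correctly, and your discussion of the lazy coloring in the introduce-edge case matches the paper's reasoning.

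There is one concrete glitch in your forget-vertex case. You propose to partition by $(o,X):=(\deg_F(v),z(v))$ and to match each block with $A_{x'}[\cdot][f[v\mapsto o],c[v\mapsto X]]$. But in $A_x[\cdot][f,c]$ the forgotten vertex $v$ lies in $V_x\setminus B_x\subseteq V_x\setminus Q_f$, so $z(v)\in\{\LL,\RR\}$ always; you never get $X=\noc$ this way. When $o=d(v)$, the index $(f[v\mapsto o],c[v\mapsto z(v)])$ is therefore \emph{not} in $\indices_{x'}$ (membership forces $c[v\mapsto X](v)=\noc$ whenever $f[v\mapsto o](v)=d(v)$), so the claimed bijection fails for that block, and conversely the algorithm's summand for $(o,X)=(d(v),\noc)$ has no block assigned to it. The paper fixes this by taking $X=z(v)$ only when $o<d(v)$ and setting $X=\noc$ when $o=d(v)$; the single block $\{(F,z):\deg_F(v)=d(v)\}$ then bijects (via the identity) with $A_{x'}[\cdot][f[v\mapsto d(v)],c[v\mapsto\noc]]$, because in that $A_{x'}$-cell the vertex $v$ is outside $Q_{f[v\mapsto d(v)]}$ and hence $z(v)$ ranges freely over $\{\LL,\RR\}$. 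With this correction your argument goes through and coincides with the paper's.
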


\begin{proof}
    We prove the claim by induction over $\dtree$. The base case is a leaf node $x$, where the claim holds by definition, since $G_x$ is an empty graph. Let $x$ be an internal node, and assume that the claim holds for $x'$ the child of $x$. We show that the claim holds for $x$ by distinguishing the different kinds of nodes $x$.
    \begin{itemize}
        \item Introduce vertex $v$: It holds that $\deg_x(v) = 0$, and hence, $f(v) = 0$ and $c(v) = \noc$. Let $f', c'$ be the restrictions of $f$ and $c$ to $B_{x'}$ respectively. Then there exists a bijection between $A_x[a, \omega_1, \omega_2][f,c]$ and $A_{x'}[a, \omega_1, \omega_2][f',c']$, by assigning to a pair $(F, z)\in A_x[a, \omega_1, \omega_2][f',c']$ the pair $(F, z')\in A_{x'}[a, \omega_1, \omega_2][f,c]$, where $z'$ is the restriction of $z$ to $B_{x'}$. Hence, it holds that 
        \[
            \hat{T}_x[a,\omega_1, \omega_2][f,c] 
            \equiv_4 \big|A_x[a,\omega_1, \omega_2][f,c]\big|
            = \big|A_{x'}[a,\omega_1, \omega_2][f',c']\big|
            \equiv_4 \hat{T}_{x'}[a,\omega_1, \omega_2][f',c'].
        \]
        
        \item Forget vertex $v$: It holds that $A_x[a,\omega_1, \omega_2][f,c]$ equals the disjoint union of all sets $A_{x'}\big[a,\omega_1, \omega_2\big]\big[f[v\mapsto o], c[v\mapsto X]\big]$ for all $o\in \dsets(v)$ and $X\in \{\LL,\RR,\noc\}$ where $X = \noc$ if and only if $o = d(v)$. This holds since for each pair $(F, z) \in A_x[a,\omega_1, \omega_2][f,c]$ it must hold that $\deg_F(v) \in \dsets(v)$, and hence, for $o = \deg_F(v)$, and $X = z(v)$ if $o < d(v)$ or $X = \uparrow$ otherwise, that $(F, z)\in A_{x'}\big[a,\omega_1, \omega_2\big]\big[f[v\mapsto o], c[v\mapsto X]\big]$. Hence, it holds that
        \begin{align*}
            \hat{T}_x[a,\omega_1, \omega_2][f,c] 
            &\equiv_4 \big|A_x[a,\omega_1, \omega_2][f,c]\big| \\
            &= \sumstack{o\in \dsets(v)\\X\in \{\LL,\RR,\noc\}\\o=d(v)\iff X=\noc} \big|A_{x'}[a,\omega_1, \omega_2][f[v\mapsto o], c[v\mapsto X]]\big| \\
            &\equiv_4 \sumstack{o\in \dsets(v)\\X\in\{\LL,\RR,\noc\}\\o=d(v)\iff X=\noc}
            \hat{T}_{x'}[a,\omega_1, \omega_2][f[v\mapsto o], c[v\mapsto X]].
        \end{align*}

        \item Introduce edge $e=\{u,v\}$: If $f(u) = 0$ or $f(v) = 0$, or $c(u)$ and $c(v)$ are not consistent, then it holds for all $(F, z) \in A_x[a, \omega_1, \omega_2][f,c]$ that $e\notin F$. It follows in this case that $A_x[a, \omega_1, \omega_2][f,c] = A_{x'}[a, \omega_1, \omega_2][f,c]$. Assuming this is not the case, then it holds for each pair $(F, z)\in A_x[a, \omega_1, \omega_2][f,c]$ that either $e\notin F$, and hence $(F, z)\in A_{x'}[a, \omega_1, \omega_2][f,c]$, or $e\in F$. In this case, let $f' = f[u\mapsto f(u)-1, v\mapsto f(v) - 1]$, $a'=a-1$, $\omega'_1 = \omega_1 - w(e)$ and $\omega'_2 = \omega_2 - w'(e)$. Then $(F\setminus \{e\}, z)$ belongs to some set $A_{x'}\big[a', \omega'_1, \omega'_2\big]\big[f', c'\big]$ for $c' = c[u\mapsto X_u, v\mapsto X_v]$ for some values $X_u, X_v$ where $X_u \sim X_v$.
        Note that the sets $Q_u$ and $Q_v$ from \cref{pw-ub-def:tables} are the sets of values $X_u$ and $X_v$ that result in $c(u)$ and $c(v)$ by increasing the degree of $u$ and $v$ by one.
        Hence, it holds that 
        \begin{align*}
            &\hat{T}_x[a,\omega_1, \omega_2][f,c] \equiv_4 \big|A_x[a,\omega_1, \omega_2][f,c]\big|\\
            &= \left|A_{x'}[a,\omega_1, \omega_2][f,c]
            \dot\cup \dot{\bigcup\limits_{\substack{X_u \in Q_u, X_v\in Q_v\\X_u\sim X_v}}} A_{x'}\big[a', \omega'_1, \omega'_2\big]\big[f', c[u\mapsto X_u, v\mapsto X_v]\big]\right|\\
            &\equiv_4\hat{T}_{x'}[a, \omega_1, \omega_2][f,c]
            + \sumstack{X_u \in Q_u, X_v\in Q_v\\X_u\sim X_v} \hat{T}_{x'}\big[a', \omega'_1, \omega'_2\big]\big[f', c[u\mapsto X_u, v\mapsto X_v]\big].
        \end{align*}
    \end{itemize}
\end{proof} 

As an immediate consequence of Lemma~\ref{lem:pwcomputation}, we obtain:

\begin{lemma}\label{pw-ub-cor:tables-count-cuts}
    It holds for each value $w\in\mathbb{N}$ that $\hat{T}_{\droot}[a,\omega_1,\omega_2][\phi] \equiv_4 C[\omega_1,\omega_2]$.
\end{lemma}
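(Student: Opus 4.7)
The plan is to derive the statement by specializing Lemma~\ref{lem:pwcomputation} to the root $\droot$ with $a = n-1$, and then identifying the resulting set $A_{\droot}[n-1,\omega_1,\omega_2][\phi]$ with the collection $\mathcal{C}[\omega_1,\omega_2]$ of consistent cuts of weighted relaxed solutions. In the treewidth case this identification was essentially trivial because the record at the root kept a genuine $\{\LL,\RR\}$-coloring; here the technical point is to check that the three-valued partial colorings we manipulate along the decomposition collapse to ordinary two-valued consistent cuts once we reach the empty root bag.

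First, I would invoke Lemma~\ref{lem:pwcomputation} at $x = \droot$ to obtain $\hat{T}_{\droot}[n-1,\omega_1,\omega_2][\phi] \equiv_{4} |A_{\droot}[n-1,\omega_1,\omega_2][\phi]|$. Since we work with a nice decomposition, $B_{\droot} = \emptyset$, so $\indices_{\droot}$ consists of the single element $\phi = (\emptyset_v,\emptyset_c)$ and, for the empty mapping $f$, the set $Q_{f} = \{v \in B_{\droot} : f(v) < d(v)\}$ is empty as well. The definition of $A_{\droot}[\cdot]$ then forces $V_{\droot}\setminus Q_{f} = V$, so the constraint ``$z(v)\neq \noc$ for all $v\in V_{\droot}\setminus Q_{f}$'' turns $z$ into a total $\{\LL,\RR\}$-valued function. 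The remaining conditions reduce to: $F\subseteq E$ with $|F|=n-1$, $w(F)=\omega_{1}$, $w'(F)=\omega_{2}$, $\deg_{F}(v)\in \dsets(v)$ for every $v\in V$, and $z$ consistent in the sense that $z(u)\sim z(v)$ for every edge $\{u,v\}\in F$.

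Next, because $z$ now takes values only in $\{\LL,\RR\}$, the relation $\sim$ restricted to the image of $z$ coincides with equality, so ``$z$ consistent'' is the same as ``$z$ is constant on every connected component of $G[F]$''. Together with $|F|=n-1$ and the degree constraints, this is exactly the defining conditions for $(F,z)$ to be a consistent cut of a weighted relaxed solution of weights $(\omega_{1},\omega_{2})$ in the sense of Definition~\ref{def:usingconsistentcuts}. Hence $A_{\droot}[n-1,\omega_{1},\omega_{2}][\phi]=\mathcal{C}[\omega_{1},\omega_{2}]$ as sets, and in particular $|A_{\droot}[n-1,\omega_{1},\omega_{2}][\phi]| = C[\omega_{1},\omega_{2}]$. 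Chaining this with the congruence obtained from Lemma~\ref{lem:pwcomputation} yields the claim.

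The main (and really only) obstacle is the bookkeeping of the three-valued coloring: one needs to be confident that the ``$\noc$'' slots — which were introduced purely to save a factor of two per bag vertex during the introduce-edge computation — are forced to disappear at the root, so that the pathwidth table really does count the same objects as the treewidth table. Once this is observed, the remainder of the proof is a routine definition chase.
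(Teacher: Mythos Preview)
Your argument is correct and is precisely the verification the paper has in mind; the paper itself gives no proof and simply states the lemma as an immediate consequence of Lemma~\ref{lem:pwcomputation}. Your explicit check that $Q_f=\emptyset$ at the empty root bag, and hence that the constraint $z(v)\neq\noc$ on $V_{\droot}\setminus Q_f=V$ forces $z$ to be a genuine $\{\LL,\RR\}$-coloring, is exactly the one non-obvious step needed to collapse the pathwidth tables to the treewidth situation.
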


\para{Running Time.}
Our next task is to establish the running time of the algorithm. 
First, we obtain the following straightforward bound on the size of $\indices_x$.

\begin{lemma} \label{pw-ub-lem:indices-size}
    For all $x\in \nodes$, it holds that $|\indices_x| \leq (2\maxreq)^{\pw+1}$.
\end{lemma}

\begin{proof}
    It holds that 
    \begin{align*}
        |\indices_x|
        &\leq \prod\limits_{v\in B_x} (2(d(v)-1) + 2)\\
        &\leq \prod\limits_{v\in B_x} (2(d(v)-1) + 2)\\
        &\leq \prod_{v\in B_x} 2d(v) \leq (2\maxreq)^{|B_x|} \leq (2\maxreq)^{\pw+1},
    \end{align*}
    where the first inequality holds, since there exists at most $b(v) + 1 \leq d(v)+1$ choices of degrees for each vertex $v\in B_x$, where for at most $d(v)-1$ of them we can assign either color $\LL$ or $\RR$, and for the rest we only assign $\noc$.
\end{proof} 
We note that the formulation of the next statement differs from the formulation of Lemma~\ref{tw-ub-lem:running-time}; this is because Lemma~\ref{pw-ub-lem:running-time-relto-indices} will also be useful when establishing the cutwidth upper bound in Section~\ref{sec:ctw}.

\begin{lemma} \label{pw-ub-lem:running-time-relto-indices}
    Let $s$ be the maximum size of $\indices_x$ over all nodes $x\in\nodes$. Then all tables $\hat{T}_x[a, \omega_1, \omega_2]$ can be computed in time $\ostar(s \cdot W^{\oh(1)} \cdot Z^{\oh(1)})$.
\end{lemma}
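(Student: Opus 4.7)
The plan is to process the nodes of $\syntaxtree$ in a bottom-up order and, for each node $x$ together with each tuple $(a, \omega_1, \omega_2) \in \overline M \times \overline W \times \overline Z$, fill in the table $\hat{T}_x[a, \omega_1, \omega_2]$ by direct evaluation of the recurrence of Algorithm~\ref{pw-ub-def:tables}. The first step is to bound the work per table: I will show that each table $\hat{T}_x[a, \omega_1, \omega_2]$ can be computed in time $\ostar(s)$ given the children's tables, after which combining with the polynomial bound on the ranges of $a$, $\omega_1$, $\omega_2$ and on $|\nodes|$ yields the claimed total running time.

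The per-table analysis is a case distinction on the type of $x$. For a leaf node we simply initialize $\ostar(s)$ entries. For an introduce vertex node, the recurrence assigns to each $(f,c) \in \indices_x$ (subject to $f(v)=0$ and $c(v)=\noc$) exactly one value read off from $\hat{T}_{x'}$, giving $\ostar(s)$ time. For an introduce edge node, each entry of $\hat{T}_x$ is computed as a sum of at most $|Q_u|\cdot|Q_v| + 1 = \oh(1)$ lookups into $\hat{T}_{x'}$, again $\ostar(s)$. The only case requiring a moment's care is the forget vertex node $v$: the recurrence sums over pairs $(o, X) \in \dsets(v) \times \{\LL, \RR, \noc\}$ with $X = \noc$ iff $o = d(v)$, so each entry needs at most $2|\dsets(v)| - 1 \leq 2r - 1$ lookups; since $r \leq n$, this is absorbed into the $\ostar$ notation, and the total cost at such a node is still $\ostar(s)$.

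To finish, I will observe that $|\overline M \cdot \overline W \cdot \overline Z| \leq n \cdot ((n-1)W + 1) \cdot ((n-1)Z + 1) = n^{\oh(1)} \cdot W^{\oh(1)} \cdot Z^{\oh(1)}$ and that a nice path decomposition can be taken to have $|\nodes| = n^{\oh(1)}$. Multiplying the per-table cost $\ostar(s)$ by these polynomial factors gives the overall running time $\ostar(s \cdot W^{\oh(1)} \cdot Z^{\oh(1)})$, as required. There is no real obstacle here beyond ensuring each per-node case is a bounded or polynomial-in-$n$ linear combination of children's entries; the only mildly nontrivial point is the forget node, whose linear-in-$r$ blow-up is hidden by the $\ostar$ convention since $r \leq n$.
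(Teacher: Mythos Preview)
Your proposal is correct and follows essentially the same approach as the paper: bottom-up evaluation of the recurrences of Algorithm~\ref{pw-ub-def:tables}, with a per-node per-tuple cost of $\ostar(s)$ combined with the polynomial number of tuples $(a,\omega_1,\omega_2)$ and nodes. The only cosmetic difference is that the paper implements the forget node as a ``push'' (iterate over $(f',c')\in\indices_{x'}$ and add each into the appropriate entry of $\hat T_x$, giving $O(1)$ work per child index), whereas you implement it as a ``pull'' (iterate over $(f,c)\in\indices_x$ and sum $O(|\dsets(v)|)$ child entries); both yield $\ostar(s)$ per table and the asymptotic bound is identical.
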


\begin{proof}
    The algorithm computes all tables $\hat{T}_x$ by dynamic programming in a bottom up manner over the decomposition tree. For each node $x$, the algorithm iterates over all values $(a, \omega_1, \omega_2) \in \overline{M} \times \overline{W} \times \overline{Z}$, and for each such tuple the algorithm computes $\hat{T}_x[a, \omega_1, \omega_2]$ as follows: let $x'$ be the child of $x$ if exists.
    \begin{itemize}
        \item For a leaf node $x$, the algorithm initializes $\hat{T}_x[0,0,0][\phi] = 1$, and $\hat{T}_x[a, \omega_1, \omega_2][\phi] = 0$ for all other values $a,\omega_1,\omega_2$.
        \item Introduce vertex node $(v)$, the algorithm iterates over all indices $(f',c')\in\indices_{x'}$. For each the algorithm sets $\hat{T}_x[a, \omega_1, \omega_2][f,c] = \hat{T}_{x'}[a, \omega_1, \omega_2][f',c']$ where $f' = f[v\mapsto 0]$ and $c' = c[v\mapsto \noc]$.
        \item Introduce edge $e=\{u,v\}$: the algorithm iterates over all indices $(f,c)\in\indices_{x'}$. For each the algorithm sets $\hat{T}_x[a, \omega_1, \omega_2][f,c] = \hat{T}_{x'}[a, \omega_1, \omega_2][f,c]$. After that,
        for each pair $(f, c)\in\indices_{x'}$ such that $c(u)\sim c(v)$ and $f(u),f(v)<d$, let $f' = f[u\mapsto f(u)+1, v\mapsto f(v)+1]$. Then for each $c' = c[u\mapsto X_u, v\mapsto X_v]$ for $X_u, X_v\in\{\LL,\RR,\noc\}$ such that $X_u\sim X_v$, $X_u\sim c(u)$ and $X_v\sim c(v)$ and $(f',c')\in\indices_x$, the algorithm adds $\hat{T}_x'[a-1, \omega_1-w(e), \omega_2-w'(e)][f,c]$ to $\hat{T}_{x}[f',c']$.
        \item Forget vertex $v$: the algorithm iterates over all indices $(f,c)\in\indices_{x}$. For each the algorithm sets $\hat{T}_x[a, \omega_1, \omega_2][f,c] = 0$. After that the algorithm iterates over all indices $(f',c')\in\indices_{x'}$. For each index $(f',c')$ with $f'(v)\in \dsets(v)$, the algorithm adds the value of $\hat{T}_{x'}[a, \omega_1, \omega_2][f',c']$ to $\hat{T}_x[f,c]$ where $f=f'|_{B_x}$ and $c=c'|_{B_x}$.
    \end{itemize}
    Clearly, the number of values $a^1, a^2, \omega_1^1, \omega_1^2, \omega_2^1, \omega_2^2$ the algorithm iterates over for each node $x$ is polynomial in $n$, $W$ and $Z$. For each node, the algorithm iterates over each index of $\indices_x$ and $\indices_{x'}$ a constant number of times, and for each index the algorithm performs a constant number of operations. Hence, all these nodes can be processed in time $\poly(n, W, Z)\cdot c\cdot s = \ostar(s \cdot W^{\oh(1)} \cdot Z^{\oh(1)})$, where $c$ is some constant.
\end{proof} 
Lemma~\ref{pw-ub-lem:running-time-relto-indices} together with \cref{pw-ub-lem:indices-size} allows us to obtain the desired running time bound for Theorem~\ref{thm:pw-ub}, and the proof of that theorem then follows analogously to the one of Theorem~\ref{thm:tw-ub}.

\begin{longcorollary} \label{pw-ub-cor:running-time}
    All tables $\hat{T}_x[a,\omega_1,\omega_2]$ can be computed in time 
    $\ostar\big((2r)^{\pw+1} \cdot W^{\oh(1)} \cdot Z^{\oh(1)}\big)$.
\end{longcorollary}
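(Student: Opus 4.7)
The plan is to combine Lemma~\ref{pw-ub-lem:running-time-relto-indices} with Observation~\ref{pw-ub-lem:indices-size} in a completely direct way. Lemma~\ref{pw-ub-lem:running-time-relto-indices} gives the running time bound as $\ostar(s \cdot W^{\oh(1)} \cdot Z^{\oh(1)})$, where $s = \max_{x \in \nodes} |\indices_x|$. Observation~\ref{pw-ub-lem:indices-size} bounds $s$ by $(2r)^{\pw}$.

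The only thing I would write down explicitly is the substitution: since the bound in Observation~\ref{pw-ub-lem:indices-size} holds uniformly over all nodes $x \in \nodes$, we immediately get $s \leq (2r)^{\pw}$, and plugging this into Lemma~\ref{pw-ub-lem:running-time-relto-indices} yields the stated runtime $\ostar\big((2r)^{\pw} \cdot W^{\oh(1)} \cdot Z^{\oh(1)}\big)$.

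There is no real obstacle here; both ingredients have already been established, and this corollary is purely a conjunction of the two. I would therefore write a one-sentence proof that simply invokes the two results and performs the substitution.
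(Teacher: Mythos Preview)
Your proposal is correct and matches the paper's own proof essentially verbatim: the paper also just invokes Lemma~\ref{pw-ub-lem:running-time-relto-indices} together with Observation~\ref{pw-ub-lem:indices-size} to conclude.
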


\begin{proof}
    This follows directly from \Cref{pw-ub-lem:running-time-relto-indices} since $s\leq (2r)^{\pw+1}$ holds by \Cref{pw-ub-lem:indices-size}.
\end{proof} 
\begin{proof}[Proof of \cref{thm:pw-ub}]
    The algorithm fixes $Z = 2m = n^{\oh(1)}$, and computes all tables $\hat{T}_x$ for all nodes $x$ in time $\ostar((2d)^{\pw} \cdot W^{\oh(1)})$ by \cref{pw-ub-cor:running-time}.
    The algorithm accepts, if there exists a value $\omega_2\in\overline Z$ such that $\hat{T}_{\droot}[n-1,\omega_1,\omega_2][\phi] = 2$ and rejects otherwise.

    It holds by \cref{tw-ub-cor:isolation} that with probability at least $1/2$ there exists a value $\omega_2\in\overline Z$ such that $\mathcal{S}[n-1, \omega_1,\omega_2]$ contains a single solution only, if a solution exists, and none otherwise. It holds by \cref{pw-ub-cor:tables-count-cuts} that $\hat{T}_{\droot}[n-1,\omega_1,\omega_2][\phi] \equiv_4 C[\omega_1,\omega_2]$, and by \cref{tw-ub-lem:cnc} that $C[\omega_1,\omega_2] \equiv_4 2$ if $\big|\mathcal{S}[\omega_1,\omega_2]\big| = 1$. Hence, if a solution exists the algorithm accepts with probability at least $1/2$, and rejects otherwise.
\end{proof}

\subsection{Lower Bound}
We now move on to establish our first tight lower bound. We recall that $\maxreq$ is defined as the maximum degree requirement of any vertex in a given instance. 

\twlb*

For this proof, we need two ingredients.
The first ingredient is the lower bound result for the $q$-CSP-$\cspdom$ problem.
In this problem, we are given a set $X$ of $n$ variables that take values in $[\cspdom]$ and a set of $q$-constraints on $X$.
A \defi{$q$-constraint} is defined by a tuple of $q$ variables of $X$ 
and a subset of $[\cspdom]^q$ of satisfying assignments. An assignment $\alpha \colon X \to [\cspdom]$ satisfies a $q$-constraint if the tuple of values assigned to the variables in the constraint belongs to the subset of satisfying assignments defined by the constraint.
The task is then to decide if there is an assignment of the variables that satisfies all the $q$-constraints.
Lampis~\cite{MR4121882} showed the following hardness result.

\begin{theorem}[\cite{MR4121882}]
\label{thm:csp}
	For any $\cspdom \geq 2$, $\varepsilon > 0$, assuming SETH, there exists $q$ such that $n$-variable $q$-CSP-$\cspdom$ cannot be solved in time $\ostar((\cspdom-\varepsilon)^n)$.
\end{theorem}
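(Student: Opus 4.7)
The plan is to reduce from $q$-CSP-$\cspdom$ with $\cspdom = 2\maxreq$, applying Theorem~\ref{thm:csp}: assuming SETH, for every $\varepsilon > 0$ there is a fixed $q$ such that $n$-variable $q$-CSP-$2\maxreq$ cannot be solved in time $\ostar((2\maxreq - \varepsilon)^n)$. Given such a CSP instance, I would construct (via a Turing reduction) a family of unweighted \sdmstp instances, each of pathwidth at most $n + \bigoh(1)$, together with an explicit path decomposition, whose total count is $2^{o(n)}$; a solution to the CSP exists iff at least one branch is a yes-instance. An $\ostar((2\maxreq - \varepsilon)^{\pw})$ algorithm applied to every branch would then yield $\ostar((2\maxreq - \varepsilon)^n)$ for the CSP, contradicting SETH.

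The central combinatorial task is to build $n$ horizontal ``path gadgets'', one per CSP variable, arranged so that any valid spanning tree passes through each gadget in a canonical way that records the chosen value. The encoding precisely mirrors the upper-bound state space from Observation~\ref{pw-ub-lem:indices-size}: for a single ``slice'' vertex $v$, the admissible indices are $(f(v),c(v))$ with $f(v)\in\{1,\dots,\maxreq-1\}$ paired with $c(v)\in\{\LL,\RR\}$ (giving $2(\maxreq-1)$ options) plus $(0,\noc)$ and $(\maxreq,\noc)$, for a total of $2\maxreq$. I would design small connector sub-gadgets (using degree requirements in $[\maxreq]$) so that each of these $2\maxreq$ state-combinations is realizable by some local tree extension and corresponds bijectively to a value in $[\cspdom]$. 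Constraint gadgets are appended after the columns: since $q$ is a constant, each $q$-ary constraint is verified by a constant-size sub-graph that is forced to admit a spanning subtree exactly when the $q$-tuple of incident column states is satisfying; this adds only $\bigoh(1)$ to the width and polynomially many vertices per constraint.

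The main obstacle is exactly the issue highlighted in the paper's technical overview: a small separator across a slice enforces a tight arithmetic coupling between the degree requirements on the two sides, so one cannot realise all $(2\maxreq)^{n}$ state combinations inside a single \sdmstp instance (summed degree on each side must match the required edge count). The Turing reduction is what allows me to bypass this: I would guess, for each of a constant number of ``interface'' positions per column, a small amount of global information (such as the side of the consistent cut on a few designated anchor vertices, and the precise target degree contributed by each column's gadget to one side of the separator), and emit one \sdmstp instance per combined guess. The number of guesses is $2^{\bigoh(n)/\omega(1)} = 2^{o(n)}$, preserving the lower bound; each instance is locally consistent (every branch is globally satisfiable from the start) and can be decomposed by a nice path decomposition of width $n + \bigoh(1)$ obtained directly from the column-by-column construction. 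Putting everything together reduces an $\ostar((2\maxreq - \varepsilon)^{\pw})$-algorithm for \sdmstp to a $\ostar((2\maxreq - \varepsilon)^{n})$-algorithm for $q$-CSP-$2\maxreq$, yielding the claimed SETH lower bound.
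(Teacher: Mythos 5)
Your proposal does not prove the statement in question. Theorem~\ref{thm:csp} is a result of Lampis~\cite{MR4121882} that the paper merely cites without proof: it asserts that, under SETH, for every $\cspdom\geq 2$ and $\varepsilon>0$ there is some arity $q$ for which $n$-variable $q$-CSP-$\cspdom$ resists $\ostar((\cspdom-\varepsilon)^n)$-time algorithms. What you have written instead is a proof sketch of \cref{thm:tw_lb} -- the SETH-based pathwidth lower bound for \sdmstp -- and your very first sentence invokes \cref{thm:csp} as a black box. In other words, you have assumed the target statement and proved a consequence of it, not the statement itself.

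A genuine proof of \cref{thm:csp} would look nothing like what you wrote. It would start from an $n$-variable CNF-SAT instance, group variables into blocks so that each block of size $\lceil \log_2 \cspdom \rceil$ maps (after handling padding and domain mismatch) to a single CSP variable over $[\cspdom]$, translate clauses into $q$-ary constraints, and carefully analyze how a hypothetical $\ostar((\cspdom-\varepsilon)^{n'})$-algorithm for the resulting $n'$-variable CSP would give a $\ostar((2-\varepsilon')^{n})$-algorithm for SAT, contradicting SETH. There would be no spanning trees, no degree requirements, no pathwidth, and no Turing reduction: the whole content lives in the arithmetic of variable grouping and domain encoding. The machinery you describe -- column gadgets encoding $2r$ states that mirror \cref{pw-ub-lem:indices-size}, degree absorbers, a polynomial family of branches indexed by guessed separator degree totals -- is all appropriate for \cref{thm:tw_lb}, and in broad strokes matches what the paper does in \cref{sec:pw}, but it has no bearing on the CSP hardness result you were asked to prove.
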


The second ingredient is a gadget adopted from~\cite{CyganKN18}.
This gadget relies on the following auxiliary definition.

\begin{longdefinition}
A \defi{label gadget} is a pair $(v, \lambda_v)$ where $\lambda_v$ is a labeling of the edges incident to $v$.
For an unweighted \sdmstp instance $(G,\degreq)$ and a vertex $v$ with $\degreq(v) = 2$, a solution spanning tree $T$ is \defi{consistent} with a label gadget $(v, \lambda_v)$ if $\lambda_v(e) = \lambda_v(e')$ where $e,e'$ are the two edges of $C$ incident to $v$. 
\end{longdefinition}

Then we can phrase the gadget as follows.

\begin{longlemma}[\cite{CyganKN18}]
\label{lem:consistent_labels}
	Let $(G,\degreq)$ be an unweighted \sdmstp instance.
	Let $v$ be a vertex of $G$ with degree requirement two, incident edges $X$, and a labeling $\lambda_v : X \to [\ell]$ for some $\ell$.
	Then we can obtain a new unweighted \sdmstp instance $(G',\degreq')$ from the original instance $(G,\degreq)$ by replacing $v$ by a gadget of $\bigoh(\ell)$ vertices with degree requirement two.
	Further, $(G,\degreq)$ has a solution spanning tree consistent with $(v,\lambda_v)$, if and only if the new instance $(G',\degreq')$ has a solution.
\end{longlemma}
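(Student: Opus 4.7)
The plan is to replace $v$ with a gadget $H_v$ consisting of $\bigoh(\ell)$ vertices, each of degree requirement $2$, designed so that in any spanning tree $T'$ of the new graph satisfying the degree constraints, exactly two of the original edges of $X$ are used and they necessarily share a common label. To do so, I would introduce a \emph{label hub} $h_i$ for each $i \in [\ell]$, reroute each edge of $X_i = \lambda_v^{-1}(i)$ so that its endpoint previously at $v$ becomes $h_i$, and then glue the hubs together with an internal skeleton of auxiliary degree-$2$ vertices.

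The central structural lemma comes from the fact that all gadget vertices have degree requirement $2$: the restriction $T'_H$ of $T'$ to $V(H_v)$ is a subforest of maximum degree $2$, hence a disjoint union of paths. Letting $k$ denote the number of \emph{external} tree-edges (those of $T'$ with exactly one endpoint in $H_v$, which correspond precisely to the selected edges of $X$), a degree count yields that $T'_H$ has exactly $k/2$ components and that the $k$ external edges are incident exclusively to the $k$ path endpoints. One then shows that the skeleton forces $k=2$: the spanning-tree property of $T'$, combined with the fact that $v$ originally had degree requirement two, leaves no room for more than one path component inside $H_v$. Hence $T'_H$ is a single path whose two endpoints are label hubs, contributing exactly two selected edges of $X$.

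The remaining and main design constraint is to guarantee that the two endpoints of this unique path are hubs of the \emph{same} label. I would achieve this with a hub-duplication trick: replace each $h_i$ by a small same-label block (for example, two copies $h_i^{(1)},h_i^{(2)}$ joined by an edge, with the edges of $X_i$ distributed between them via degree-$2$ mediator vertices), and arrange the blocks consecutively along a cycle of the skeleton. The degree-$2$ constraints on the mediators and on the interior of each block force any valid $T'_H$ to ``break'' at exactly one block, producing both path endpoints inside the same same-label block; a finite case analysis on the local configurations of each block then certifies safeness. A naive cycle over the bare hubs does not work because a single missing cycle edge always puts the path endpoints on adjacent hubs of different labels, and engineering this hub-duplication so that both directions go through without side effects is the step I expect to be the main obstacle.

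Finally, the equivalence claimed in the lemma follows by transporting solutions in both directions. Given a spanning tree $T$ of $(G,\degreq)$ consistent with $(v,\lambda_v)$, with chosen edges $e,e'\in X_{i^*}$, we build $T'$ for $(G',\degreq')$ by rerouting $e,e'$ through the label-$i^*$ block and filling the rest of $H_v$ with its canonical pass-through tree through the skeleton. Conversely, any solution $T'$ of $(G',\degreq')$ selects exactly two external edges of the same label $X_{i^*}$ by the analysis above, and contracting $H_v$ back to the single vertex $v$ yields a spanning tree of $G$ with $\deg_T(v)=2$ that is consistent with $(v,\lambda_v)$. Since each same-label block and its portion of the skeleton use only $\bigoh(1)$ vertices, the gadget has size $\bigoh(\ell)$, as required.
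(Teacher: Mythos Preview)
The paper does not prove this lemma at all: it is imported verbatim from~\cite{CyganKN18} (note the citation in the lemma header) and used as a black box in the lower-bound constructions. There is therefore no paper proof to compare your proposal against.

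For what it is worth, your high-level plan---one hub per label, all gadget vertices of degree requirement~$2$, so that the restriction of any solution to the gadget is a path system whose endpoints pick out the external edges---is exactly the shape of the original construction in~\cite{CyganKN18}. However, your write-up leaves the crucial step underspecified: you correctly identify that a naive cycle over the hubs fails (adjacent hubs carry different labels), and you propose a ``hub-duplication trick'' with a ``finite case analysis,'' but you do not actually carry it out. Whether your specific duplication scheme works is not clear from what you wrote; the original gadget is somewhat more delicate than a two-copy block on a cycle. If you want to turn this into a self-contained proof, you would need to either reproduce the actual gadget from~\cite{CyganKN18} or fully specify your block construction and verify both directions of the equivalence, including that no spurious solutions arise when the path endpoints land in different blocks.
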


Given $\maxreq \geq 3$ and $\varepsilon > 0$,
let $q$ be the smallest integer such that $n$-variable $q$-CSP-$(2\maxreq)$ does not admit an  $\ostar((2\maxreq-\varepsilon)^n)$-time algorithm. 
By \cref{thm:csp}, $q$ exists, and it only depends on $2\maxreq$ and $\varepsilon$, assuming SETH.
We will argue that if there exists an algorithm that solves \sdmstp in time $\ostar((2\maxreq-\varepsilon)^{\pw})$ where $\pw = n + g(\maxreq,\varepsilon)$ for some computable function $g$, then there exists an algorithm that solves $q$-CSP-$(2\maxreq)$ in $\ostar((2\maxreq-\varepsilon)^n)$, a contradiction.
 Let $\phi$ be a $q$-CSP-$(2\maxreq)$ instance with $n$ variables $x_1,\dots, x_n$
and $m$ $q$-constraints $C_1,\dots,C_n$.
In particular, we provide a Turing reduction, where we create $\Theta(n\maxreq(n-\maxreq))$ instances of \sdmstp, each of which is parameterized by two parameters $\mu \in [n]_0$ and $\nu \in [n(\maxreq-1)]_0$, and we show that $\phi$ is a YES-instance if and only if one of the \sdmstp instances is a YES-instance.

\para{Variable gadget.} For $i \in [n]$, we construct a variable gadget for $x_i$ as follows; see \cref{fig:tw_lb}(a) for an illustration.
We create $m+1$ vertices $v^{i,0}, \dots, v^{i,m}$.
For $j \in [m]$, we add $\maxreq$ paths of length three between $v^{i,j-1}$ and $v^{i,j}$.
Let each path be $(v^{i,j-1}, u^{i,j}_{j'}, \tilde{u}^{i,j}_{j'}, \bar{u}^{i,j}_{j'},  v^{i,j})$ for $j' \in [\maxreq-1]_0$.
We call $u^{i,j}_{j'}$ a \defi{$j'$-th $+$-neighbors} of $v^{i,j-1}$ and call $\bar{u}^{i,j}_{j'}$ a \defi{$j'$-th $-$-neighbors} of $v^{i,j}$.
Additionally, we call $u^{i,j}_{0}$ and $\bar{u}^{i,j}_{0}$ \defi{root-connected-neighbors} of $v^{i,j-1}$ and $v^{i,j}$, respectively.

Denote by $\tilde{V}$ the set of vertices $\tilde{u}^{i,j}_{j'}$ for all possible $i,j,j'$. 

\para{Constraint gadget.} 
Next, we construct the constraint gadget; refer to \cref{fig:tw_lb}(b) for an illustration.
We define a \defi{state} as a tuple $(c,d)$ where $c \in \{+,-\}$ and $d \in [\maxreq-1]_0$.
Let $\sigma: \{+,-\} \times [\maxreq-1]_0 \to [2\maxreq]$ be the function that maps a state to an integer as follows.
If $c = -$, then $\sigma(c,d) = d + 1$; otherwise, $\sigma(c,d) = \maxreq + d + 1$.
It is easy to see that this mapping is a bijection between the set of all states and $[2\maxreq]$.

Suppose the $q$-constraint $C_j$ has satisfying assignments $\{A_1, \dots, A_{p_j}\} \subseteq [2\maxreq]^q$.
All the following new vertices have $j$ as part of the superscript.
However, for ease of notation, we omit this superscript.
Let $I_j$ be the set of indices $i$ such that $x_i$ is a variable of $C_j$; note that $|I_j| = q$.
For $t \in [p_j]$, suppose $A_t$ is the assignment of $x_i = a_{i,t}$ for $i \in I_j$ and some value $a_{i,t} \in [2\maxreq]$. 
Then we create $q \cdot \maxreq$ new vertices $w^{t}_{i,j'}$ for $i \in I_j$ and $j' \in [\maxreq-1]_0$, and we connect them with a path such that the subscripts appear in the lexicographical order.
We call this path $P^j_t$.
For $i \in I_j$, let $(c,d)$ be the state $\sigma^{-1}(a_{i,t})$.
If $c=+$, we add the edge $w^{t}_{i,0} u^{i,j}_{0}$; otherwise, we add the edge $w^{t}_{i,0} \bar{u}^{i,j}_{0}$. 
Next, for $j' \in [d]$, we add the edge $w^{t}_{i,j'} u^{i,j}_{j'}$, and for $j' \in [r-1] \setminus [d]$, we add the edge $w^{t}_{i,j'} \bar{u}^{i,j}_{j'}$.

Next, for $i \in I_j$ and $j' \in [\maxreq-1]_0$, we create a new vertex $s_{i,j'}$ and connect it with the vertices $w^{t}_{i,j'}$ for all $t \in [p_j]$.
We add $p_j$ new vertices $z_1, \dots, z_{p_j}$.
For $t \in [p_j]$, we denote the two endpoints of $P^j_t$ by $\overleftarrow{\pi}_t$ and $\overrightarrow{\pi}_t$.
For $t \in [p_j-1]$, we add the edges $z_t \overleftarrow{\pi}_t$ and $z_t \overleftarrow{\pi}_{t+1}$.
For $t \in \{2, \dots, p_j\}$, we add the edges $z_t \overrightarrow{\pi}_{t-1}$ and $z_t \overrightarrow{\pi}_t$.

We assign the following labels to the incident edges of $w^t_{i,j'}$ for all applicable $t,i,j'$, effectively defining label gadgets for these vertices. 
Specifically, we assign label $\alpha$ to the edges incident to $s_{i,j'}$ for $i \in I_j$ and $j' \in [\maxreq-1]_0$ and edges connecting a vertex in a variable gadget and a vertex in a clause gadget.
We assign the label $\beta$ to the edges incident to the vertices $z_t$ for $t \in [p_j]$ and the edges along the path $P^j_t$ for $t \in [p_j]$.

\para{Dummy vertices.} For $j \in [m]$, $i \in [n] \setminus I_j$, and $j' \in [\maxreq-1]_0$, we create a dummy vertex $d^{i,j}_{j'}$ and connect it with $u^{i,j}_{j'}$ and $\bar{u}^{i,j}_{j'}$.
In words, a dummy vertex is connected with the $j'$-th $+$-neighbor of $v^{i,j-1}$ and the $j'$-th $-$-neighbor of $v^{i,j}$, if $x_i$ is not a variable of the constraint $C_j$.

\begin{figure}[t!]
    \centering
    \begin{subfigure}[t]{\textwidth}
        \centering
        \begin{tikzpicture}[x=1cm, y=.8cm]
            \pic{tw-lb-vg};      
        \end{tikzpicture} 
        \caption{Variable gadget}
    \end{subfigure}     
    \begin{subfigure}[t]{\textwidth}
        \centering
        \begin{tikzpicture}[x=1.3cm, y=.8cm]
            \pic{tw-lb-cg};
        \end{tikzpicture}
        \caption{Constraint gadget}
    \end{subfigure}

    \begin{subfigure}[t]{\textwidth}
        \centering
        \begin{tikzpicture}[x=1cm, y=1cm]
            \pic{tw-lb-rg};
        \end{tikzpicture}
        \caption{Vertices in the set $R$}
    \end{subfigure}
    \caption{Gadgets for the proof of \cref{thm:tw_lb}. The white vertices are the vertices that are adjacent to a vertex in the set $R$.}
    \label{fig:tw_lb}
\end{figure}

\para{Degree absorber.}
Let $\mu \in [n]_0$ and $\nu \in [n(\maxreq-1)]_0$ be two parameters of the \sdmstp instance.
We create the vertices $\bar{a}_1, \dots, \bar{a}_{\mu}$, $a_1, \dots, a_{n-\mu}$
and the vertices $\bar{b}_1, \dots, \bar{b}_{\nu}$, $b_1, \dots, b_{n(r-1)-\nu}$. 
We add the edges $\bar{a}_t v^{i,0}$, $a_t v^{i,m}$, $\bar{b}_t v^{i,0}$, and $b_t v^{i,m}$, for all possible $i,t$.
For convenience, we call $\bar{a}_t$ a \defi{root-connected $-$-neighbor} of $v^{i,0}$ and $a_{t}$ a \defi{root-connected $+$-neighbor} of $v^{i,m}$.
Further, we call $\bar{b}_t$ a \defi{$-$-neighbor} of $v^{i,0}$ and $b_t$ a \defi{$+$-neighbor} of $v^{i,m}$.

\para{Root.} We have a special vertex called the \defi{root} $\rho$, which ensures the connectivity of all gadgets in the construction.
We add four paths $(\rho, \rho^U_1, \dots, \rho^U_m)$, $(\rho, \rho^A_1, \dots, \rho^A_{n-\mu})$, $(\rho, \rho^{\bar{A}}_1, \dots, \rho^{\bar{A}}_{\mu})$ and $(\rho, \rho^Z_1, \dots, \rho^Z_m)$.

For $j \in [m]$, we add a path $(\rho^U_j, \rho^U_{j,1}, \dots, \rho^U_{j,n})$.
For $i \in [n]$, we add the edge $\rho^U_{j,i} \tilde{u}^{i,j}_0$ for $j \in [m]$.
For $j \in [m]$, we add the edge $\rho^Z_j z^j_1$.
For $t \in [\mu]$, we add the edge $\rho^{\bar{A}}_t \bar{a}_{t}$.
For $t \in [n-\mu]$, we add the edge $\rho^A_t a_{t}$.

Let $R$ be the set of all vertices $\rho$, $\rho^U_j$, $\rho^U_{j,i}$, $\rho^A_i$, and $\rho^{\bar{A}}_i$ for all possible $i$ and $j$.
See \cref{fig:tw_lb}(c) for an illustration.

\para{Degree requirements.} Lastly, we specify the degree requirements.
For $i \in [n], j \in [m]$, the requirement for $v^{i,j}$ is $\maxreq$.
The requirement for each vertex in $R$ and $\tilde{V}$ is equal to its degree.
The requirement for $s^j_{i,j'}$, $d^{i,j}_{j'}$, $b_i$, and $\bar{b}_i$ is one for all possible $i,j,j'$.
The requirement for all other vertices is two. 

Let $\pwinst = (G_{\mu,\nu},\degreq)$ be the unweighted \sdmstp instance obtained from the construction above.
Let $\pwinst' = (G'_{\mu,\nu},\degreq')$ be the unweighted \sdmstp instance obtained from $\pwinst$ by applying \cref{lem:consistent_labels} to replace all label gadgets in $G_{\mu,\nu}$.
Since $\maxreq \geq 3$, the maximum degree requirement in $\degreq$ and $\degreq'$ is $\maxreq$.

\begin{longlemma}
\label{lem:tw_lb_pw}
	The path width of $G'_{\mu,\nu}$ is $n + g(\maxreq,\varepsilon)$ for some computable function~$g$. 
\end{longlemma}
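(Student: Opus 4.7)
The intuition is that $G_{\mu,\nu}$ consists of $n$ parallel ``rails'' (the variable gadgets $v^{i,0},\dots,v^{i,m}$), joined horizontally at each ``slice'' $j$ by gadgets whose size is bounded by a function of $\maxreq$ and $\varepsilon$ alone, together with auxiliary vertices attached at the root and at the two boundary slices. I will therefore construct a path decomposition whose main invariant is that a ``rail profile'' of $n$ variable-gadget vertices lies in every bag, while the local slice-$j$ gadgetry is introduced and forgotten within a short sub-sequence of bags devoted to that slice. Since the label-gadget substitution of \cref{lem:consistent_labels} uses only two labels $\alpha,\beta$, each host vertex is replaced by a gadget of constant size, which can be absorbed into the path decomposition with only an additive constant increase in width; it therefore suffices to bound $\pw(G_{\mu,\nu})$ by $n + g(\maxreq,\varepsilon)$ for a computable $g$.

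Concretely, I would build the bags in the following order. First, a ``left boundary'' segment introduces $\rho$, the root branches $\rho^{\bar A}_\cdot,\rho^A_\cdot$, the degree absorbers $\bar a_t,a_t,\bar b_t,b_t$, and the rail heads $v^{i,0}$; since the absorbers attach only to $v^{i,0}$ (and later $v^{i,m}$), they fit in bags containing the rail profile plus $\mathcal{O}(1)$ root vertices. Second, for $j=1,\dots,m$, a ``slice block'' maintains the persistent content $\{\rho,\rho^U_j\}\cup\{v^{i,j-1}\colon i\in[n]\}$ of size $n+\mathcal{O}(1)$. Inside the block I process the $\maxreq$ parallel length-$3$ paths between $v^{i,j-1}$ and $v^{i,j}$ rail by rail: introduce $v^{i,j}$, then for each $j'$ introduce and forget $u^{i,j}_{j'},\tilde u^{i,j}_{j'},\bar u^{i,j}_{j'}$, the dummy $d^{i,j}_{j'}$ when $i\notin I_j$, the witness vertices $w^t_{i,j'}$ for all $t$ together with the hub $s_{i,j'}$ when $i\in I_j$, and the root-chain vertex $\rho^U_{j,i}$. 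The chained structure of the paths $P^j_t$ is handled by sweeping $t=1,\dots,p_j$ while keeping only $z_{t-1},z_t,z_{t+1}$ and the endpoints $\overleftarrow\pi_t,\overrightarrow\pi_t$ alive at any time. Because $q$ is a function of $\maxreq,\varepsilon$ via \cref{thm:csp} and $p_j\leq(2\maxreq)^q$, the transient content of each block is $\mathcal{O}_{\maxreq,\varepsilon}(1)$. Third, a symmetric ``right boundary'' segment handles the rail tails $v^{i,m}$ and their absorbers.

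The main obstacle is keeping the rail profile at exactly $n$ rather than $2n$ during slice transitions, even though the constraint gadget $C_j$ simultaneously couples the $q$ rails in $I_j$ through the hubs $s_{i,j'}$ and the chained paths $P^j_t$. I would resolve this by keeping both $v^{i,j-1}$ and $v^{i,j}$ in the bag only for $i\in I_j$ during the constraint-gadget phase, contributing an extra $q$ rather than $n$ vertices, and otherwise transitioning $v^{i,j-1}\mapsto v^{i,j}$ independently per rail once its slice-$j$ edges are exhausted; the order of these per-rail transitions is chosen so that each edge's endpoints coexist in some bag and each vertex's bag-set is a contiguous sub-path, both of which follow mechanically from the block-by-block layout. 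Combining this with the constant-size replacement provided by \cref{lem:consistent_labels} yields $\pw(G'_{\mu,\nu})\leq n + g(\maxreq,\varepsilon)$ as claimed.
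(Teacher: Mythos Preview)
Your proposal is correct and follows essentially the same approach as the paper: maintain the ``rail profile'' $\{v^{i,j}\colon i\in[n]\}$ of size $n$, dump all slice-$j$ gadgetry (constraint gadget, per-rail paths, root-chain vertex $\rho^U_{j,i}$) in as transient content bounded by a function of $\maxreq,\varepsilon$, and advance rail by rail. The only difference is that the paper is blunter---it introduces the entire constraint gadget at once rather than sweeping over $t$---so your $t$-sweep is harmless but unnecessary given that you already observe $p_j\leq(2\maxreq)^q$ is $\mathcal{O}_{\maxreq,\varepsilon}(1)$.
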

\begin{proof}
	We specify a path decomposition of $G'_{\mu,\nu}$.	
	We start with a bag that contains the root $\rho$, $\rho^{\bar{A}}_1$, $\bar{a}_1$, and $v^{i,0}$ for $i \in [n]$.
	Then we repeatedly introduce $\bar{a}_{i+1}$, $\rho^{\bar{A}}_{i+1}$ and then forget $\rho^{\bar{A}}_i$ and $\bar{a}_{i}$ for $i \in [\mu-1]$.
	After that, we forget $\rho^v_{\mu}$ and $\bar{a}_{\mu}$.
	We then repeatedly introduce and forget $\bar{b}_i$ for $i \in [\nu]$.

	Next, suppose for some $j \in [m-1]_0$, the current bag contains $\rho$, $v^{i,j}$ for all $i\in [n]$, and only if $i \geq 1$, $\rho^U_{j}$, $\rho^Z_{j}$.
	We specify a sequence of adjacent bags until we obtain the bag containing only $\rho$, $\rho^U_{j+1}$, $\rho^Z_{j+1}$, and, $v^{i,j+1}$ for all $i\in [n]$.
	We first introduce $\rho^U_{j+1}$, $\rho^Z_{j+1}$, and all the vertices in the constraint gadget for $C_{j+1}$ (including the vertices that replace the label gadgets).
	By construction and \cref{lem:consistent_labels}, the number of all these vertices is upper bounded by $g'(\maxreq, \varepsilon)$ for some computable function~$g'$.
	Then for $i \in [n]$, we introduce $v^{i,j+1}$, all vertices in the variable gadgets between $v^{i,j}$ and $v^{i,j+1}$, and the dummy vertices incident to these vertices.
	The number of all these vertices is $\bigoh(\maxreq)$.
	After that, we forget $v^{i,j}$ and all the recently added vertices in the variable gadgets (except for $v^{i,j+1}$) and the dummy vertices.
	When we finish with the previous steps for all $i \in [n]$, we forget $\rho^U_{j}$, $\rho^Z_{j}$, and all vertices in the constraint gadget for $C_{j+1}$.
	The current bag now contains only $\rho$, $\rho^U_{j+1}$, $\rho^Z_{j+1}$, and, $v^{i,j+1}$ for all $i\in [n]$, as claimed.
	
	Finally, we introduce $\rho^{A}_1$ and $a_1$ and then repeatedly introduce $\rho^{A}_{i+1}$ and $a_{i+1}$ and then forget $\rho^{A}_i$ and $a_i$ for $i \in [n-\mu-1]$.
	Further, we repeatedly introduce and forget $b_i$ for $i \in [n(\maxreq-1)-\nu]$.
	It is easy to see that we have finished the path decomposition of $G'_{\mu,\nu}$, and all bags are bounded by $n +  g(\maxreq,\varepsilon)$ for some computable function~$g$.
\end{proof} 
\begin{longlemma}
\label{lem:tw_lb_equiv_1}
	If $\phi$ has a satisfying assignment, then we can construct a solution spanning tree~$T$ for $\pwinst$ for some $\mu \in [n]_0, \nu \in [n(\maxreq-1)]_0$ such that $T$ is consistent to all the label gadgets in~$G_{\mu,\nu}$.
\end{longlemma}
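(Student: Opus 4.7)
My plan is to construct $T$ and choose $(\mu,\nu)$ explicitly from the given satisfying assignment. For each $i$, I set $(c_i,d_i)=\sigma^{-1}(x_i)$ and let $S_i=\{1,\dots,d_i\}\cup(\{0\}\text{ if }c_i=+)$, so that $|S_i|=d_i+1$ when $c_i=+$ and $|S_i|=d_i$ when $c_i=-$; this is precisely the set of indices $j'$ for which the selected constraint-gadget assignment will attach $w^{t_0}_{i,j'}$ to $u^{i,j}_{j'}$ (rather than to $\bar u^{i,j}_{j'}$). I then set $\mu=|\{i:c_i=+\}|$ and $\nu=\sum_i d_i$; both lie in the required ranges and satisfy $\mu+\nu=\sum_i|S_i|$.

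Next I will build $T$ in four layers. First, all edges incident to vertices in $R\cup\tilde V$ go into $T$, since they are forced by the requirement-equals-degree rule. Second, inside each segment $j$ of variable $i$, I put $v^{i,j-1}u^{i,j}_{j'}$ into $T$ exactly when $j'\notin S_i$ and $\bar u^{i,j}_{j'}v^{i,j}$ into $T$ exactly when $j'\in S_i$; the dummy $d^{i,j}_{j'}$ for $i\notin I_j$ takes whichever side was left one unit short. Third, for each constraint $C_j$, since the assignment satisfies $C_j$ I can pick $t_0=t_0^{(j)}$ with $A_{t_0}$ equal to the tuple $(\sigma(c_i,d_i))_{i\in I_j}$; I then add to $T$ the two $\alpha$-labelled edges at every $w^{t_0}_{i,j'}$ (its $s_{i,j'}$-edge and its variable-gadget edge), all $\beta$-labelled edges of $P^j_t$ for $t\neq t_0$, and a consistent choice of two $z_t$-edges per $t$ that links the $z$-vertices and the off-paths $P^j_t$ ($t\neq t_0$) into one tree rooted at $\rho^Z_j z_1$, while $P^j_{t_0}$ is glued on through its $\alpha$-edges via the variable gadget. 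Fourth, at the boundary I include one $\bar a$-edge at $v^{i,0}$ for each $i$ with $c_i=+$ together with $d_i$ distinct $\bar b$-edges, and symmetrically one $a$-edge at $v^{i,m}$ for each $i$ with $c_i=-$ together with $r-d_i-1$ distinct $b$-edges; the totals match $\mu,\nu,n-\mu,n(r-1)-\nu$ by the choice of $(\mu,\nu)$.

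The verification then splits into three parts. Degree bookkeeping: each inner $v^{i,j}$ has $r-|S_i|$ top-edges plus $|S_i|$ bottom-edges totaling $r$; each $v^{i,0}$ and $v^{i,m}$ has its gadget-edges augmented by boundary edges to a total of $r$; the $u,\bar u,\tilde u,d,s,w,z$-vertices and those in $R$ are handled by the case analysis above. Label consistency at each $w$-vertex is by construction (two $\alpha$-edges for $t=t_0$, two $\beta$-edges for $t\neq t_0$). For the spanning-tree property, an edge count gives $|E(T)|=|V|-1$, and connectivity is traced from $\rho$ through the forced $R$-paths to every $\tilde u^{i,j}_0$ and onward to both $u^{i,j}_0$ and $\bar u^{i,j}_0$, and then to exactly one of $v^{i,j-1},v^{i,j}$ depending on whether $0\in S_i$ (equivalently, whether $c_i=+$); the remaining endpoint of each variable chain is reached through the boundary $\bar a$- or $a$-edge that was added exactly for this purpose.

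The main obstacle in carrying this out rigorously is the connectivity argument: one endpoint of each variable chain is linked to $\rho$ ``for free'' via the $0$-th segment path, but the other endpoint must be reached through an $\bar a$- or $a$-absorber. The explicit settings $\mu=|\{i:c_i=+\}|$ and $\nu=\sum_i d_i$ together with the boundary distribution above are designed so that a $\bar a$-edge lands at every $v^{i,0}$ with $c_i=+$ and an $a$-edge lands at every $v^{i,m}$ with $c_i=-$, which is precisely what connectivity demands. Once this is settled, acyclicity is immediate from the edge count and label consistency is by construction.
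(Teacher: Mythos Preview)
Your proposal is correct and follows essentially the same construction as the paper: the same choice $\mu=|\{i:c_i=+\}|$, $\nu=\sum_i d_i$, the same forced edges at $R\cup\tilde V$, the same segment rule (your $S_i$ is exactly the paper's index set of $-$-neighbors taken at each $v^{i,j}$), the same selection of the satisfying path $P^j_{t_0}$ with $\alpha$-edges versus $\beta$-edges, and the same absorber matchings. One wording slip to fix: as written, your fourth step gives $\bar b$-edges only to those $v^{i,0}$ with $c_i=+$ (and $b$-edges only to $v^{i,m}$ with $c_i=-$), which would leave $v^{i,0}$ for $c_i=-$ with degree $r-d_i<r$ and would make the totals $\sum_{c_i=+}d_i\neq\nu$; you need every $v^{i,0}$ to receive $d_i$ $\bar b$-edges and every $v^{i,m}$ to receive $r-d_i-1$ $b$-edges, exactly as the paper does, after which your totals claim and the degree/connectivity bookkeeping go through.
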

\begin{proof}
	Suppose  $(x^*_i)_{i \in [n]}$ is the satisfying assignment of $\phi$.
	For $i \in [n]$, let $(c_i,d_i) \in \{+,-\} \times [\maxreq-1]_0$ be the state $\sigma^{-1}(x^*_i)$.
	We choose $\mu = |\{ i \in [n] \colon c_i = + \}|$ and $\nu = \sum^n_{i=1} d_i$.

	We construct a solution spanning tree $T$ for $\pwinst$ starting from an empty graph.
	We add all edges incident to vertices in $\tilde{V}$ and $R$.

	Choose an arbitrary perfect matching of the complete bipartite graph with $\{\bar{a}_1, \dots, \bar{a}_{\mu}\}$ as one part and $\{v^{i,0} \colon c_i = +\}$ as the other part.
	We add this matching to $T$.
	For $i \in [n]$ and $j \geq 1$, if $c_i=+$, we add to $T$ the edge connecting $v^{i,j}$ with its (only) root-connected $-$-neighbor.
	
	Similarly, we add to $T$ an arbitrary perfect matching of the complete bipartite graph with $\{a_1, \dots, a_{n-\mu}\}$ as one part and $\{v^{i,m} \colon c_i = -\}$ as the other part.
	For $i \in [n]$ and $j < m$, if $c_i = -$, we add to $T$ the edge connecting $v^{i,j}$ with its (only) root-connected $+$-neighbor.

	For each vertex~$v$ in $\bar{B} \coloneqq \{\bar{b}_i \colon i \in [\nu]\}$, we add to $T$ an edge incident to~$v$.
	We choose such an edge in an arbitrary fashion with the only condition that for each $i \in [n]$, $v^{i,0}$ is adjacent to exactly $d_i$ vertices in $\bar{B}$.
	For $i\in [n]$ and $j > 0$, we add to $T$ the edge connecting $v^{i,j}$ to its $j'$-th $-$-neighbor for all $j' \in [d_i]$.

	Similarly, for each vertex~$v$ in $B \coloneqq \{b_i \colon i \in [n(\maxreq-1)-\nu]\}$, we add to $T$ an edge incident to $v$, such that in total, for each $i \in [n]$, $v^{i,m}$ is adjacent to exactly $\maxreq-1-d_i$ vertices in $B$.
	For $i \in [n]$ and $j < m$, we add to $T$ the edge connecting $v^{i,j}$ to its $j'$-th $+$-neighbor  for $j' \in [\maxreq-1] \setminus [d]$.
	
	For $j \in [m]$, suppose $A_1, \dots, A_{p_j}$ are the satisfying assignment of $C_j$, and $A_{p_j^*}$ is the assignment corresponding to $(x^*_i)_{i \in [n]}$.
	Then for each vertex in $P^j_{p_j^*}$, we add the edges labeled $\alpha$ to $T$.
	For the other paths $P^j_t$ for $t \in [p_j] \setminus \{p_j^*\}$ in the constraint gadgets, we add all edges of the paths to $T$.
	For $t \in [p_j^*-1]$, we add $z_t\overleftarrow{\pi}_t$ and $z_t\overleftarrow{\pi}_{t-1}$ (the latter edge is only if $t > 1$).
	For $t \in [p_j] \setminus [p_j^*]$, we add $z_t\overleftarrow{\pi}_{t+1}$ and $z_t\overleftarrow{\pi}_{t}$ (the former edge is only if $t < p_j$).
	
	By construction, it is easy to check that the degree requirement is satisfied.
	It remains to show that $T$ is a spanning tree.
	
	Firstly, all the vertices in $R$ induce a tree in $T$.
	Secondly, in the constraint gadget for a $q$-constraint $C_j$, there is a path that visits all $z^j_i$ and all the paths $P^j_t$ except for the path $P^j_{p^*_j}$.
	Note that all vertices on this path do not have any other incident edges in $T$ outside the path except for $z^j_1$ which is connected to $\rho^Z_j$.
	Thirdly, for each $i$ and $j'$, there is a path from $s^j_{i,j'}$ that visits  $w^{p^*_j}_{i,j'}, u^{i,j}_{j'}, \tilde{u}^{i,j}_{j'}, \bar{u}^{i,j}_{j'}$, where the last three vertices are either visited in that order or in the reversed order.
	If they are visited in this order, the path then continues to $v^{i,j}$; otherwise, it goes to $v^{i,j-1}$.
	Note that these vertices, except for $v^{i,j}$, $v^{i,j-1}$, and $\tilde{u}^{i,j}_{0}$ have no other incident edges in $T$ outside this path.
	Lastly, each vertex $v^{i,j}$ is incident to exactly one of its $+$- and $-$-root-connected neighbor.
	From each root-connected neighbor, there is a unique path to the root~$\rho$ via some $\tilde{u}^{i,j}_0$, $\rho^{\bar{A}}_t$, or $\rho^{A}_t$.
	Lastly, each vertex in $B$ and $\bar{B}$ is connected to exactly one vertex in $\{v^{i,0}, v^{i,m} \colon i \in [n] \}$.

	All the above imply that each vertex in $G_{\mu,\nu}$ has exactly one unique path to the root $\rho$ in $T$. 
	Hence, $T$ is a spanning tree of $G_{\mu,\nu}$, as required.
\end{proof} 
\begin{longlemma}
\label{lem:tw_lb_equiv_2}
	If for some $\mu \in [n]_0$ and $\nu \in [n(\maxreq-1)]_0$, $\pwinst$ has a solution spanning tree~$T$ consistent to all the label gadgets in $G_{\mu,\nu}$, then $\phi$ has a satisfying assignment.
\end{longlemma}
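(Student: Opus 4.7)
The plan is to reverse-engineer a satisfying assignment of $\phi$ from the spanning tree~$T$, roughly inverting the construction used in Lemma~\ref{lem:tw_lb_equiv_1}. I proceed in three stages: first force a rigid ``skeleton'' of edges of~$T$, then extract a variable-wise state $(c_i,d_i)\in\{+,-\}\times[\maxreq-1]_0$ and set $x^*_i:=\sigma(c_i,d_i)$, and finally verify that each $q$-constraint $C_j$ is satisfied by~$x^*$.

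For the skeleton, every vertex in $R\cup\tilde V$ has degree requirement equal to its degree in $G_{\mu,\nu}$, so all edges incident to such vertices lie in~$T$. In particular, for each lane $(u^{i,j}_{j'},\tilde u^{i,j}_{j'},\bar u^{i,j}_{j'})$ the two interior edges $u\tilde u$ and $\tilde u\bar u$ belong to~$T$. Since $u^{i,j}_{j'}$ and $\bar u^{i,j}_{j'}$ have degree requirement~$2$, exactly one additional incident edge at each is chosen; I call the lane \emph{up-active} when $v^{i,j-1}u^{i,j}_{j'}\in T$ and \emph{down-active} when $\bar u^{i,j}_{j'}v^{i,j}\in T$. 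Using that $\deg_T(v^{i,j})=\maxreq$ for every $j$ and that each absorber $\bar a_t$ or $a_t$ can join at most one $v^{i,\cdot}$ without creating a cycle through the forced $R$-skeleton, I read off from $v^{i,0}$ a pair $(c_i,d_i)$, where $c_i=-$ iff an edge from $v^{i,0}$ to some $\bar a_t$ is used and $d_i$ equals the number of used $\bar b$-incidences at $v^{i,0}$. The analogous counts at each interior $v^{i,j}$ and at $v^{i,m}$ must produce the same pair; otherwise either the degree requirement at some $v^{i,j}$ would be violated, or the forced path $\tilde u^{i,j}_0$–$\rho^U_{j,i}$–$\cdots$–$\rho$ would combine with the variable chain to form a cycle in~$T$.

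Finally, fix a constraint~$C_j$ with satisfying assignments $A_1,\dots,A_{p_j}$. Every path $P^j_t$ consists of $\beta$-labeled edges and every $z_t$ is incident only to $\beta$-labeled edges, so the fact that $T$ is consistent with the label gadgets on all $w$-vertices forces, at each $w^{t}_{i,j'}$, that either both selected edges are $\alpha$-labeled (outward) or both are $\beta$-labeled (along $P^j_t$). A counting argument on the $z$-chain then forces exactly one index $t^*\in[p_j]$ such that all paths $P^j_t$ with $t\neq t^*$ are traversed entirely by $\beta$-labeled edges and connected to $\rho$ through the $z$-chain and $\rho^Z_j$, while $P^j_{t^*}$ has all its outward $\alpha$-edges active. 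For this $t^*$, the outward incidences at every $w^{t^*}_{i,j'}$ are precisely those dictated by $(c_i,d_i)$ through the definition of~$\sigma$, so $A_{t^*}$ coincides with $(x^*_i)_{i\in I_j}$ and $C_j$ is satisfied by~$x^*$.

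The principal obstacle is the consistency-across-segments argument in the second stage: one must rule out ``mixed'' lane patterns using a delicate combination of degree bookkeeping at each $v^{i,j}$ and acyclicity across the forced $R\cup\tilde V$-skeleton, where the escape routes via $\tilde u^{i,j}_0$–$\rho^U_{j,i}$ and the dummy vertices $d^{i,j}_{j'}$ (used when $i\notin I_j$) complicate the case analysis most.
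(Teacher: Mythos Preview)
Your overall strategy matches the paper's, but the \emph{order} of your argument creates a real gap. You try to extract the state $(c_i,d_i)$ from the variable gadget in stage~2 and only analyze the constraint gadget in stage~3; the paper does the reverse, and for good reason. Your definition of ``up-active''/``down-active'' tacitly assumes that the extra edge at $u^{i,j}_{j'}$ (resp.\ $\bar u^{i,j}_{j'}$) goes to $v^{i,j-1}$ (resp.\ $v^{i,j}$), but in fact this extra edge may instead go into the constraint gadget (for $i\in I_j$) or to the dummy $d^{i,j}_{j'}$ (for $i\notin I_j$). Until you rule out the wrong combinations, your per-segment complementarity---that for each lane exactly one of the two edges to a $v$-vertex is chosen---is not established, and without it the degree bookkeeping at $v^{i,j}$ does \emph{not} force the state to propagate. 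This is precisely the ``mixed pattern'' you flag as the principal obstacle, and your sketch does not actually resolve it.

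The paper closes this by first using the label gadgets together with the degree-one vertices $s^j_{i,j'}$ to show that for each $C_j$ exactly one path $P^j_{p_j^*}$ takes only $\alpha$-edges, then forces the $z$-chain, and only then reads off the second incident edge of each $u^{i,j}_{j'},\bar u^{i,j}_{j'}$ (for $i\in I_j$; dummies handle $i\notin I_j$). This yields the per-segment statements (i) exactly one of $u^{i,j}_0 v^{i,j-1}$, $\bar u^{i,j}_0 v^{i,j}$ is in $T$, and (ii) for each $j'\ge 1$ exactly one of $u^{i,j}_{j'}v^{i,j-1}$, $\bar u^{i,j}_{j'}v^{i,j}$ is in $T$. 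From (i)+(ii) the telescoping degree argument at each $v^{i,j}$ gives a single $(c_i,d_i)$ across all $j$, which is your stage~2 conclusion. So your stages are right but stage~3 must feed stage~2, not the other way around. A minor point: by the paper's convention $c_i=+$ (not $-$) when $v^{i,0}$ is matched to some $\bar a_t$; your sign is flipped.
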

\begin{proof}
	Firstly, all edges incident to vertices in $\tilde{V}$ and $R$ have to be in $T$, due to the degree requirements.
	Secondly, consider the vertices on a path $P^j_t$ in a constraint gadget. 
	Either all incident edges to these vertices in $T$ have label $\beta$ (i.e., the whole path is in $T$) or all of them have label $\alpha$ (i.e., the whole path is not in $T$).
	Since the vertices $s^j_{i,j'}$ have degree requirement one, exactly one path $P^j_{p_j^*}$ is not in $T$ for some $p_j^* \in [p_j]$.
	We call this path the $\alpha$-path of $C_j$.
	
	Thirdly, given an $\alpha$-path $P^j_{p_j^*}$ of $C_j$, looking at the neighbors of $z^j_{p_j^*}$, we see that the edges with label $\beta$ available left for this vertex is exactly equal to its degree requirement.
	Hence, these edges have to be in $T$.
	Iteratively argue for vertices $z^j_t$ for $t < p_j^*$ and $t > p_j^*$, we see that there is exactly one way to meet all the degree requirements of these vertices.
	Specifically, for $t \leq p_j^*$, the edges $z^j_t\overleftarrow{\pi}^j_t$ and $z^j_t\overrightarrow{\pi}^j_{t-1}$ are in $T$ (the latter edge is only for $t > 1$).
	For $t > p_j^*$, the edges $z^j_t\overleftarrow{\pi}^j_{t+1}$ and $z^j_t\overrightarrow{\pi}^j_{t}$ are in $T$ (the former edge is only for $t < p_j$).
	
	Fourthly, once we have fixed the incident edges of the vertices on all the paths $P^j_t$ in $T$, since all edges incident to vertices in $\tilde{V}$ are in $T$, there is also one fixed choice for the incident edges of all the vertices $u^{i,j}_{j'}$, $\bar{u}^{i,j}_{j'}$ for $j \in [m]$, $i \in I_j$, and $j' \in [\maxreq-1]_0$.
	Further, since the vertices on the path $P_{p^*}$ are the only vertices in the constraint gadget for $C_j$ to have edges with label $\alpha$ in $T$, and since the dummy vertices have degree requirement one, it is easy to see that (i) exactly one of $u^{i,j}_0 v^{i,j-1}$ and $\bar{u}^{i,j}_0 v^{i,j}$ is in $T$ and that (ii) for each $j' \in [\maxreq-1]$, exactly one of $u^{i,j}_{j'} v^{i,j-1}$ and $\bar{u}^{i,j}_{j'} v^{i,j}$ is in $T$.
	These imply that the number of $+$-neighbors of $v^{i,j-1}$ and $-$-neighbors of $v^{i,j}$ is exactly $\maxreq$.
	Combining this with (i), we can encode a unique state $(c_i,d_i)$ for each variable gadget, where $d_i$ is the number of $-$ neighbors each of $v^{i,j}$ is incident to in $T$; $c_i = -$ if all vertices $v^{i,j}$ are incident to a root-connected $+$-neighbors in $T$; and $c_i = +$ otherwise.
	Using the mapping $\sigma$, we can translate this into an assignment of the variables of $\phi$.
	The $\alpha$-path $P^j_{p_j^*}$ of each $q$-constraint $C_j$ is then a certificate that $C_j$ is satisfied by this assignment.
	The lemma then follows.
\end{proof}
 
Combining the previous two lemmas with \cref{lem:consistent_labels}, we obtain the following correspondence.
 
\begin{longlemma}
\label{lem:tw_lb_equiv}
	$\phi$ has a satisfying assignment if and only if $\pwinst'$ is a YES instance for some $\mu \in [n]_0$ and $\nu \in [n(\maxreq-1)]_0$.
\end{longlemma}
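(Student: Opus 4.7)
The proof plan is essentially a bookkeeping combination of the three preceding results, with \cref{lem:consistent_labels} serving as the bridge between the instance $\pwinst$ (which has label gadgets) and the actual unweighted \sdmstp instance $\pwinst'$ (obtained by replacing those label gadgets). Since the two directions of the biconditional correspond almost exactly to \cref{lem:tw_lb_equiv_1,lem:tw_lb_equiv_2}, the only real work is to translate between solutions of $\pwinst'$ and solutions of $\pwinst$ that are consistent with all label gadgets.

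For the forward direction, I will assume $\phi$ has a satisfying assignment. Applying \cref{lem:tw_lb_equiv_1} directly yields values $\mu \in [n]_0$ and $\nu \in [n(\maxreq-1)]_0$ along with a solution spanning tree $T$ of $\pwinst$ that is consistent with every label gadget in $G_{\mu,\nu}$. Invoking the ``only if'' part of \cref{lem:consistent_labels} (applied once for each label gadget in turn; note that the replacements are local and independent, so iterating the lemma is safe) then produces a solution spanning tree of $\pwinst'$ for the same choice of $\mu$ and $\nu$, witnessing that $\pwinst'$ is a YES-instance.

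For the backward direction, suppose $\pwinst'$ is a YES-instance for some $\mu, \nu$. Applying the ``if'' direction of \cref{lem:consistent_labels} (again iteratively, one label gadget at a time) converts a solution spanning tree of $\pwinst'$ into a solution spanning tree $T$ of $\pwinst$ that is consistent with every label gadget in $G_{\mu,\nu}$. Then \cref{lem:tw_lb_equiv_2} immediately yields a satisfying assignment of $\phi$.

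The only subtlety I anticipate is ensuring that the iterative application of \cref{lem:consistent_labels} is legitimate: each application introduces a new constant-size gadget but does not touch the other label gadgets, so the ``consistency'' conditions for the remaining label gadgets are preserved, and the labelings remain well-defined at each step. Once this is noted, the proof is a two-line composition in each direction and can be stated very compactly.
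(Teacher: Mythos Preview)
Your proposal is correct and follows essentially the same approach as the paper: combine \cref{lem:tw_lb_equiv_1,lem:tw_lb_equiv_2} to establish the equivalence between $\phi$ being satisfiable and $\pwinst$ having a label-consistent solution, then invoke \cref{lem:consistent_labels} to pass between $\pwinst$ and $\pwinst'$. The paper's proof is even more terse and does not spell out the iterative application of \cref{lem:consistent_labels}, but your added remark about locality of the replacements is a harmless (and arguably helpful) clarification.
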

\begin{proof}
	By Lemmas~\ref{lem:tw_lb_equiv_1} and~\ref{lem:tw_lb_equiv_2}, $\phi$ has a satisfying assignment if and only if for some $\mu \in [n]_0$ and $\nu \in [n(\maxreq-1)]_0$, $\pwinst$ has a solution spanning tree~$T$ consistent to all the label gadgets in $G_{\mu,\nu}$.
	By \cref{lem:consistent_labels}, the latter holds if and only if $\pwinst'$ is a YES-instance.
	The lemma then follows.
\end{proof} 
We are now ready to prove \cref{thm:tw_lb}.

\begin{proof}[Proof of \cref{thm:tw_lb}]
	For the sake of contradiction, suppose there is an algorithm that solves \sdmstp in time $\ostar\Big((2\maxreq-\varepsilon)^{\pw}\Big)$.
	Combined with \cref{lem:tw_lb_pw}, this implies that we can solve $\pwinst'$ in the construction above in time $\ostar\Big((2\maxreq-\varepsilon)^{n + g(\maxreq,\varepsilon)}\Big)$.
	By \cref{lem:tw_lb_equiv}, trying all possible combinations of $\mu \in [n]_0$ and $\nu \in [n(r-1)]_0$, we can then solve any $\phi$ in $\ostar\big(n^2(2\maxreq-\varepsilon)^{n + g(\maxreq,\varepsilon)}\big)$.
	However, this contradicts \cref{thm:csp}.
\end{proof} 

\section{Cutwidth}\label{sec:ctw}
\noindent
\subsection{Upper Bound}
Our next aim is to prove the upper bound for cutwidth:

\ctwub*

Let $\ell = v_1,\dots,v_n$ be the given linear arrangement of $G$ of cutwidth $\ctw$.
For each $i\in[n]$, we define $A_i = E(H_i)$. We define the set $L_i$ as the set of all endpoints of $A_i$ in $V_i$, and the set $R_i$ as the endpoints of $A_i$ in $\overline{V}_i$.
 Let $E_i = E(G[V_i])\cup A_i$ and $N_i = R_i\setminus R_{i-1} = N_{H_i}(v_i)\setminus R_{i-1}$. Finally, we define the set $S_i = \{v\in R_i\colon \deg_{H_i}(v) = 1\}$.

We will define a path decomposition $\nctbags$ from $\ell$, and compute the tables $\hat{T}_x$ (from \cref{sec:pw}) for each node $x$ of $\nctbags$. We then carry out a careful analysis to show that the size of the families $\indices_x$ is bounded by $3^{\ctw}$. The theorem then follows from \cref{pw-ub-lem:running-time-relto-indices}.

By definition, the set $A_i$ for each $i\in[n]$ is a cut of $G$. Hence, the sets $R_i$ form separators of size at most $\ctw$. It is not hard to see that the sequence $\ctbags = B'_1,\dots,B'_n$, with $B'_i = R_i\cup \{v_i\}$ is a path decomposition of $G$. We complete $\ctbags$ into a nice path decomposition $\nctbags$ by replacing each set $B'_i$ by a sequence $B_i^1,\dots,B_i^{n_i}$, of introduce and forget bags, where $B_i^{n_i}=B'_i$.

\begin{longdefinition}\label{ctw-ub-def:ctbags}
    We define the sequence $\ctbags = B'_1, \dots, B'_{n}$ with $B'_i = \{v_i\}\cup \{R_i\}$. We define the path decomposition $\nctbags=B_1,\dots,B_{n'}$ as a super-sequence of $\ctbags$, where we replace each bag $B'_i$ by the following sequence: first, if $v_i\notin R_{i-1}$, then we add the bag $B_{i}^1$ as an introduce vertex bag $(v_i)$.
    After that, for each vertex $v_j \in N_i$ in an arbitrary order, we add an introduce vertex bag $(v_j)$. Then for each vertex $v_j \in N_G(v_i)$ with $j>i$ in an arbitrary order, we add an introduce edge  bag $(\{v_i, v_j\})$. We finish with a forget vertex bag ($v_i$). Let 
    $B_i^1,\dots,B_i^{n_i}$ be this sequence of bags, replacing $B'_i$.
\end{longdefinition}

\begin{longlemma}\label{ctw-ub-lem:linarr-to-pdec}
    It holds that $\nctbags$ is a nice path decomposition of $G$ of width at most $\ctw$.
    Moreover, it holds that $B_i^{n_i-1} = B'_i$ for all $i\in[n]$, and for $B_x$ the bag of $\nctbags$ corresponding to $B_i^{n_i-1}$, it holds that $E_x = E_i$.
\end{longlemma}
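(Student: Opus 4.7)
The plan is to verify all claims by explicitly tracking how the bag evolves through each block $B_i^1, \dots, B_i^{n_i}$ that replaces $B'_i$, proceeding by induction on $i$ with the convention $R_0 := \emptyset$. I will show that just before the $B'_i$ block begins the current bag equals $R_{i-1}$. Tracing through the block: the optional introduction of $v_i$ (step 1) produces bag $R_{i-1} \cup \{v_i\}$; after introducing every vertex in $N_i$ (step 2), the bag equals $R_{i-1} \cup \{v_i\} \cup N_i$. Using $N_i = R_i \setminus R_{i-1}$ together with the key algebraic fact that $R_{i-1} \setminus R_i \subseteq \{v_i\}$ (any $v_j \in R_{i-1} \setminus R_i$ with $j > i$ would contradict the monotonicity $V_{i-1} \subseteq V_i$, which forces $v_j \in R_i$), this simplifies to $R_i \cup \{v_i\} = B'_i$. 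The subsequent introduce-edge operations (step 3) keep this bag invariant, so $B_i^{n_i - 1} = B'_i$ as claimed, and the closing forget of $v_i$ (step 4) leaves bag $R_i$, completing the induction.

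Given this bag tracking, I will verify the three path decomposition axioms and the niceness property. Vertex coverage is immediate since $v_i \in B'_i$. For edge coverage, any edge $\{v_k, v_j\}$ with $k < j$ is introduced during step 3 of the $B'_k$ block, and both endpoints are present in the bag at that moment (namely $v_j \in R_k$, since $v_k \in V_k$ is a neighbor of $v_j \in \overline{V}_k$). For contiguity, letting $k^*$ denote the least index of a neighbor of $v_j$ (or $k^* := j$ if no smaller-index neighbor exists), $v_j$ is first added either in step 1 of block $B'_j$ or as a member of $N_{k^*}$ in block $B'_{k^*}$; it remains in every intermediate bag since $v_j \in R_k$ for all $k \in [k^*, j-1]$ by $V_{k^*} \subseteq V_k$; and it is forgotten exactly in step 4 of block $B'_j$. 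Niceness follows from the construction since each transition between consecutive bags in a block corresponds to exactly one introduce-vertex, introduce-edge, or forget-vertex operation in accordance with the definition, with an empty leaf bag prepended at the very beginning (and the final bag $R_n = \emptyset$ naturally serving as the root) to satisfy the boundary condition $B_r = \emptyset$.

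For the width bound, the above tracking shows that every bag occurring within the $B'_i$ block is contained in $R_i \cup \{v_i\} = B'_i$, giving size at most $|R_i| + 1 \leq |A_i| + 1 \leq \ctw + 1$ and thus width at most $\ctw$. For the final identity $E_x = E_i$ at the bag $B_x$ corresponding to $B_i^{n_i - 1}$, I will track which edges have been introduced up to this point: block $B'_k$ introduces exactly the edges $\{v_k, v_j\}$ with $j > k$ and $\{v_k, v_j\} \in E$ (each such edge being introduced exactly once, in the block indexed by its smaller endpoint), and by the bag $B_i^{n_i - 1}$ all such introductions have been completed for every $k \leq i$. Taking the union over $k \in [i]$ yields precisely the set of edges with their smaller-indexed endpoint in $V_i$, which coincides with $E(G[V_i]) \cup A_i = E_i$. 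The principal technical subtlety throughout is the careful verification of $R_{i-1} \setminus R_i \subseteq \{v_i\}$, but as indicated above this reduces to a short case analysis on whether a hypothetical violator $v_j$ satisfies $j = i$ or $j > i$.
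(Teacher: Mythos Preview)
Your proof is correct and follows essentially the same approach as the paper: an inductive tracking of the bag contents through each block $B_i^1,\dots,B_i^{n_i}$, verification of the three path-decomposition axioms and niceness, the width bound via $|B'_i|=|R_i|+1\leq \ctw+1$, and the edge-set identity by counting which edges have been introduced. Your treatment is in fact more careful than the paper's sketch---in particular you make explicit the key inclusion $R_{i-1}\setminus R_i\subseteq\{v_i\}$ that drives the simplification $R_{i-1}\cup\{v_i\}\cup N_i=B'_i$, whereas the paper's one-line induction (stating $B_i^{n_i}=R_i\setminus N_i$) appears to contain a typo (it should read $B_i^{n_i}=R_i$, as your argument correctly yields).
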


\begin{proof}
    Let $i\in[n]$. It holds that $v_i$ is either in $R_{i-1}$, and hence, is in $N_j$ for some $j<i$, which means it was already introduced in $B_j^k$ for some value $k$, or $v_i$ is introduced in $B_i^1$. For each value $i$, $v_i$ is included in all bags from its introduction bag to $B_i^{n_i}$ (exclusively) that forgets this vertex. Hence, the bags containing $v_i$ form a connected subgraph of the decomposition path.
    Finally, for each edge $e=\{v_i,v_j\}\in E$, let $i\leq j$. Then it holds that there is a bag $B_i^k$ introducing $e$ in the sequence replacing $B'_i$. Hence, $\nctbags$ is a nice path decomposition of $G$. Since all bags of the sequence replacing $B'_i$ only contain a subset of $R_i$ and the vertex $v_i$, it holds that the width of the decomposition is at most $R_i + 1 -1\leq \ctw$.

    The second claim follows by induction over $i$, where one can easily show that $B_i^{n_i} = R_i\setminus N_i$.
    Moreover, the edges introduced in the bags $B_i^k$ are exactly the edges $\{v_i, v_j\}$ for $j>i$. Hence, $E_x$ contains exactly the edges $\{v_i, v_j\}$, where $\min\{i,j\}\leq n$, which is exactly $E_i$.
\end{proof} 
\begin{lemma}\label{ctw-ub-lem:indices-size}
    Let $t\in[n']$. It holds that $|\indices_t| \leq 2n\cdot 3^{\ctw}$.
\end{lemma}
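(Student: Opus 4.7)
The plan is to bound $|\indices_t|$ by factoring it as a product of per-vertex state counts and then controlling that product via the cut structure. For any $v \in B_t$, the definition of $\indices_x$ gives exactly one state when $f(v) \in \{0, d(v)\}$ (the color is forced to $\noc$) and two states for each value $f(v) \in \{1, \dots, d(v)-1\} \cap [0, q_x(v)]$ (the color is in $\{\LL, \RR\}$). This totals at most $2 q_x(v) + 1 \leq 2 \deg_{G_x}(v) + 1$ per vertex, since $q_x(v) \leq \deg_{G_x}(v)$. Hence $|\indices_t| \leq \prod_{v \in B_t}(2 \deg_{G_x}(v) + 1)$.

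Let $i$ be the position such that $B_t$ appears in the sequence of bags replacing $B'_i$ from \cref{ctw-ub-def:ctbags}. The key structural claim is that $B_t \setminus \{v_i\} \subseteq R_i$. I would verify this by a routine case analysis over the four phases of position $i$ (introducing $v_i$, introducing the vertices of $N_i$, introducing the edges $\{v_i, v_j\}$ for $j > i$, and finally forgetting $v_i$); in every phase the bag lies inside $R_{i-1} \cup \{v_i\} \cup N_i$. The claim then follows from two simple facts: $R_{i-1} \setminus \{v_i\} \subseteq R_i$ because any $u \in R_{i-1}$ with $u \neq v_i$ lies in $\overline V_i$ and has a neighbor in $V_{i-1} \subseteq V_i$, and $N_i \subseteq R_i$ by definition.

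Now for any $v \in B_t \setminus \{v_i\} \subseteq R_i \subseteq \overline V_i$, the vertex $v$ has no incident edge inside $G[V_i]$, so all edges of $E_x \subseteq E_i = E(G[V_i]) \cup A_i$ incident to $v$ lie in $A_i$; this gives $\deg_{G_x}(v) \leq \deg_{H_i}(v)$. Since every edge of $A_i$ contributes exactly $1$ to $\sum_{v \in R_i} \deg_{H_i}(v)$, we obtain $\sum_{v \in B_t \setminus \{v_i\}} \deg_{G_x}(v) \leq |A_i| \leq \ctw$. Combined with the elementary inequality $2k + 1 \leq 3^k$ for every integer $k \geq 0$, this yields $\prod_{v \in B_t \setminus \{v_i\}}(2 \deg_{G_x}(v) + 1) \leq 3^{\sum_v \deg_{G_x}(v)} \leq 3^{\ctw}$. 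If $v_i \in B_t$, the remaining factor for $v_i$ is at most $2 \deg_G(v_i) + 1 \leq 2n - 1 < 2n$; otherwise it is $1$. Multiplying, we conclude $|\indices_t| \leq 2n \cdot 3^{\ctw}$.

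The only mildly delicate step is the structural observation $B_t \setminus \{v_i\} \subseteq R_i$, which has to hold across all the (up to $|N_i| + |N_G(v_i) \cap \overline V_i| + 2$) intermediate bags refining $B'_i$; beyond that, the argument reduces to the elementary inequality $2k + 1 \leq 3^k$ and the defining property $|A_i| \leq \ctw$ of cutwidth.
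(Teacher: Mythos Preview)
Your argument is correct. It also differs from the paper's proof in two noteworthy ways. First, the paper reduces to the canonical bag $B'_i = B_i^{n_i-1}$ via a monotonicity observation (the index-set size can only grow along introduce bags and only shrink along forget bags), whereas you handle every intermediate bag directly by establishing $B_t\setminus\{v_i\}\subseteq R_i$ and $E_x\subseteq E_i$; both are valid, and your route avoids having to justify monotonicity separately for introduce-vertex and introduce-edge bags. Second, and more substantively, the paper splits $R_i$ into the degree-one vertices $S_i$ (each contributing at most $3$ states) and the remaining vertices, and for the latter bounds $\prod(2\deg_{H_i}(v)+1)\leq \prod 2(\deg_{H_i}(v)+1)$ and then controls $2^{|M|}\cdot\prod(\deg_{H_i}(v)+1)$ via the facts $|M|\leq(\ctw-|S_i|)/2$ and $d+1\leq 2^d$, finally using $2^{3/2}<3$. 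Your uniform inequality $2k+1\leq 3^k$ collapses this entire case distinction in one line and yields the same bound with less machinery; the paper's route is presumably a residue of the AM/GM technique it cites, but here your elementary inequality is strictly cleaner.
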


\begin{proof}
    We show that the claim hold for each bag $B_t = B_i^{n_i-1} = B'_i$. This suffices, since all bags $B_i^0,\dots,B_i^{n_i-2}$ are introduce bags, where $|\indices_t|$ can only increase, and the bag $B_i^{n_i}$ is a forget bag, where $|\indices_t|$ can only decrease.
    For a set of vertices $S\subseteq B^i_j$, we define 
    \[
        \indices_t[S] = \{(f_{|S},c_{|S})\colon (f,c) \in \indices_t\}
    \]
    Clearly, it holds that
    \[
    |\indices_t| \leq |\indices_t[S_i]|\cdot |\indices_t[B'_i\setminus S_i]|,
    \]
    since one can easily define an injective mapping from $\indices_t$ to $\indices_t[S_i]\times \indices_t[B'_i\setminus S_i]$, by $(f,c) \mapsto \big((f_{|S_i}, c_{|S_i}), (f_{|B'_i\setminus S_i}, c_{|B'_i\setminus S_i})\big)$. The mapping is injective, since $f$ is the disjoint union of $f_{|S_i}$ and $f_{|B'_i\setminus S_i}$, and similarly for $c$. We bound $|\indices_t[S_i]|$ and $|\indices_t[B'_i\setminus S_i]|$ independently.

    First, we start with $\indices_t[S_i]$. By definition of $\indices_t$, it holds for $v\in S_i$ and $(f,c)\in \indices_t$ that $f(v)\in[1]_0$, and that $c(v) = \noc$ if $f(v)=0$. It follows that
    \begin{align*}
        |\indices_t[S_i]| &\leq \sum\limits_{S\subseteq S_i} 1^{|S|}2^{|S_i|-|S|}\\
        &= \sum\limits_{k=0}^{|S_i|} \binom{|S_i|}{k}2^{|S_i|-k} = 3^{|S_i|}.
    \end{align*}

    On the other hand, it holds that 
    \begin{align*}
        |\indices_t[B'_i\setminus (S_i\cup v_i)]|
        &\leq \prod\limits_{v\in B'_i\setminus (S_i\cup \{v_i\})} (2(\deg_H(v)) + 1)\\
        &\leq \prod\limits_{v\in B'_i\setminus (S_i\cup \{v_i\})} (2(deg_H(v)) + 2)\\
        &\leq 2^{B'_i\setminus (S_i\cup \{v_i\})} \cdot \prod\limits_{v\in B'_i\setminus (S_i\cup \{v_i\})} (\deg_H(v)+1)\\
        &\leq 2^{\frac{\ctw-|S_i|}{2}} \cdot 2^{\ctw-|S_i|} \leq (2^{3/2})^{\ctw - |S_i|} \leq 3^{\ctw-|S_i|}.
    \end{align*}
    Note that the first equality follows from the fact, that each vertex $v \in R_i$ has degree at most $\deg_{H_i}(v)$ in $G_i$. The fourth inequality follows from the AM-GM inequality (see \cite{DBLP:journals/tcs/JansenN19}), and form the fact, that each vertex in $M_i$ is incident to at least two distinct edges of $H_i$.
    It follows that $|\indices_t[B'_i\setminus S_i]|\leq 3^{\ctw-|S_i|}\cdot 2n$, and $|\indices_t| \leq 3^{\ctw}\cdot 2n$.
\end{proof}

\begin{proof}[Proof of \cref{thm:ctw-ub}]
    We fix $Z = 2m = n^{\oh(1)}$.
    The algorithm starts by building the decomposition $\nctbags$ as defined in Definition~\ref{ctw-ub-def:ctbags}. By Lemma~\ref{ctw-ub-lem:linarr-to-pdec}, it holds that $\nctbags$ is a nice path decomposition of $G$ of width $\ctw$.
    The algorithm computes all tables $\hat{T}_x$ for all nodes $x$ in time $\ostar(3^{\ctw}\cdot W^{\oh(1)})$ by \cref{pw-ub-lem:running-time-relto-indices} and \cref{ctw-ub-lem:indices-size}.
    The algorithm accepts, if there exists a value $\omega_2\in\overline Z$ such that $\hat{T}_{\droot}[n-1,\omega_1,\omega_2][\phi] = 2$ and rejects otherwise.

    It holds by \cref{tw-ub-cor:isolation} that with probability at least $1/2$ there exists a value $\omega_2\in\overline Z$ such that $\mathcal{S}[\omega_1,\omega_2]$ contains a single solution only, if a solution exists, and none otherwise. It holds by \cref{pw-ub-cor:tables-count-cuts} that $\hat{T}_{\droot}[n-1,\omega_1,\omega_2][\phi] \equiv_4 C[\omega_1,\omega_2]$, and by \cref{tw-ub-lem:cnc} that $C[\omega_1,\omega_2] \equiv_4 2$ if $\big|\mathcal{S}[\omega_1,\omega_2]\big| = 1$. Hence, if a solution exists the algorithm accepts with probability at least $1/2$, and rejects otherwise.
\end{proof} 
\subsection{Lower Bound}
\begin{figure}[t!]
    \centering
    \begin{subfigure}[t]{\textwidth}
        \centering
        \begin{tikzpicture} [x=1cm, y=.8cm]
            \pic{ctw-lb-vg};
        \end{tikzpicture}
        \begin{tikzpicture}
		\pic{ctw-lb-absorber};
	\end{tikzpicture}
        \caption{Variable gadget (left) and degree absorbers (right)}
    \end{subfigure} 
    \begin{subfigure}[t]{\textwidth}
        \centering
        \begin{tikzpicture} [x=1cm, y=.8cm]
            \pic{ctw-lb-state-1};
        \end{tikzpicture} \\
                \begin{tikzpicture} [x=1cm, y=.8cm]
            \pic{ctw-lb-state-2};
        \end{tikzpicture} \\
                \begin{tikzpicture} [x=1cm, y=.8cm]
            \pic{ctw-lb-state-3};
        \end{tikzpicture}
        \caption{Three possible states of the variable gadget and the corresponding degree absorbers}
    \end{subfigure}

    \caption{Gadgets for the proof of \cref{thm:ctw_lb}}
    \label{fig:ctw_lb}
\end{figure}

In this section, we obtain a lower bound complementing the previous cutwidth-based algorithm:
      \ctwlb*

This proof has the general template as the proof of \cref{thm:tw_lb}.
In particular, we also reduce from a suitable $q$-CSP-$\cspdom$ instance.

Given $\varepsilon > 0$, let $q$ be the smallest integer such that $n$-variable $q$-CSP-$3$ does not admit an  $\ostar((3-\delta)^n)$ by \cref{thm:csp} for some appropriate $\delta$ that we will choose later. 
Let $\phi$ be a $q$-CSP-$3$ instance with $n$ variables $x_1, \dots, x_n$ and $m$ $q$-constraints $C_1, \dots, C_m$.

\para{Variable gadget.} For $i \in [n]$, we construct a variable gadget for $x_i$ as follows; see \cref{fig:ctw_lb}(a) for an illustration.
We first create $m+1$ vertices $v^{i,0}, \dots, v^{i,m}$.
For $j \in [m]$, we add a 5-cycle $v^{i,j} u^{i,j}_1 u^{i,j}_2 u^{i,j}_3 u^{i,j}_4$, and the edge $v^{i,j-1} u^{i,j}_3$.

\para{Constraint gadget.} 
The idea for the clause gadget is the same as the clause gadget in the proof of \cref{thm:tw_lb}, except that instead of having $r$ vertices to encode a state, we only have 1 vertex.
In particular, we use the same notations of $C_j$, $A_1, \dots, A_{p_j}$, $I_j$, and $a_{i,t}$ for $i \in I_j$ and $t\in [p_j]$.
Then we create $q$ new vertices $w^{t,j}_{i}$ for $i \in I_j$, and we connect them with a path such that the subscripts appear in the lexicographical order.
We call this path $P^j_t$.
For $i \in I_j$, we connect $w^{t,j}_{i}$ with $u^{i,j}_{a_{i,t}}$.

Next, for $i \in I_j$, we create a new vertex $s^j_{i}$ and connect it with the vertices $w^{t,j}_{i}$ for all $t \in [p_j]$.
We add $p_j$ new vertices $z^j_1, \dots, z^j_{p_j}$ and connect to the paths $P^j_t$ and assign labels to the edges incident to the vertices $w^{t,j}_i$ in a similar fashion as in the clause gadget for \cref{thm:tw_lb}.

\para{Dummy vertices.} For $j \in [m]$, $i \in [n] \setminus I_j$, we create a dummy vertex $d^{i,j}$ and connect it with $u^{i,j}_{1}$, $u^{i,j}_{2}$, and $u^{i,j}_{3}$.
In words, the three vertices $u^{i,j}_{1}$, $u^{i,j}_{2}$, and $u^{i,j}_{3}$ connect with the constraint gadget for $C_j$ if $x_i$ is a variable of $C_j$, and otherwise, they connect with the dummy vertex $d^{i,j}$.

\para{Degree absorbers.}
See \cref{fig:ctw_lb}(a) for an illustration of the description below.
Let $\gamma$ be a constant that we will choose later.
We split the set $\{v^{i,0} \colon i \in [n]\}$ into $\gamma$ disjoint subsets $S_1$, \dots, $S_{\gamma}$ of equal size.
For each subset $i \in [n/\gamma]$, we create a vertex $a_i$ and $n/\gamma$ vertices $b^i_1, \dots, b^i_{n/\gamma}$.
Let $B_i$ be the set of the latter $n/\gamma$ vertices.
We then add an edge between $a_i$ and each vertex in $S_i$ and $B_i$.
Then we add a matching between the set $S_i$ and $B_i$ (that is, we add $n/\gamma$ edges, each of which is incident to exactly one vertex in $S_i$ and one vertex in $B_i$, and each vertex in the two sets is incident to exactly one edge in the matching).

Similarly, we also split the set $\{v^{i,m} \colon i \in [n]\}$ into $\gamma$ disjoint subsets $\bar{S}_1$, \dots, $\bar{S}_{\gamma}$ of equal size.
For each subset $i \in [n/\gamma]$, we create a vertex $\bar{a}_i$ and $n/\gamma$ vertices $\bar{b}^i_1, \dots, \bar{b}^i_{n/\gamma}$.
Let $\bar{B}_i$ be the set of the latter $n/\gamma$ vertices.
We then add an edge between $\bar{a}_i$ and each vertex in $S_i$ and $\bar{B}_i$.
Then we add a matching between the set $S_i$ and $\bar{B}_i$.

\para{Root.} We have a special vertex called the \defi{root} $\rho$, which ensures that the connectivity of all gadgets in the construction.
We add four paths $(\rho, \rho^U_1, \dots, \rho^U_m)$, $(\rho, \rho^A_1, \dots, \rho^A_{n/\gamma})$, $(\rho, \rho^{\bar{A}}_1, \dots, \rho^{\bar{A}}_{n/\gamma})$ and $(\rho, \rho^Z_1, \dots, \rho^Z_m)$.

For $j \in [m]$, we add a path $(\rho^U_j, \rho^U_{j,1}, \dots, \rho^U_{j,n})$.
For $i \in [n]$, we add the edge $\rho^U_{j,i} u^{i,j}_4$ for $j \in [m]$.   For $j \in [m]$, we add the edge $\rho^Z_j z^j_0$.
For $t \in [n/\gamma]$, we add the edge $\rho^A_t a_{t}$ and $\rho^{\bar{A}}_t \bar{a}_{t}$.

Let $R$ be the set of all vertices $\rho$, $\rho^U_j$, $\rho^U_{j,i}$, $\rho^A_i$, and $\rho^{\bar{A}}_i$ for all possible $i$ and $j$.

\para{Degree requirements.} Lastly, we specify the degree requirements.
Here, each \sdmstp instance is identified with a sequence of parameters $M = (\mu_1, \dots, \mu_{\gamma}, \bar{\mu}_1, \dots, \bar{\mu}_{\gamma})$ such that each parameter is in the range $[2n/\gamma]_0$.
Then for $t \in [\gamma]$, the requirement for $a_t$ is $\mu_t$ while the requirement for $\bar{a}_t$ is $\bar{\mu}_t$.
The requirement for each vertex in $R$ is equal to its degree.
The requirement for $s^j_i, b^i_t, \bar{b}^i_t$ is one for all possible $i,j,t$.
The requirement for all other vertices is two. 

Let $\ctwinst = (G_M, \degreq)$ be unweighted \sdmstp instance obtained from the construction above.
Let $\ctwinst'$ be the unweighted \sdmstp instance obtained from $\ctwinst$ by applying \cref{lem:consistent_labels} to replace all label gadgets in $G_{M}$.

\begin{longlemma}
\label{lem:ctw_lb_pw}
	The cutwidth of $G'_M$ is at most $\frac{\gamma+1}{\gamma}n + g(\delta)$ for some computable function~$g$. 
\end{longlemma}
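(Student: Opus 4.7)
The plan is to construct an explicit linear arrangement $\ell$ of $G_M$ of width at most $\frac{\gamma+1}{\gamma}n + g_1(\delta)$ for a computable function $g_1$, and then invoke \cref{lem:consistent_labels} to argue that replacing each label gadget by a gadget of size bounded by a computable function of $\delta$ increases the width at any position by at most an additive $g_2(\delta)$, provided that each label gadget is placed as a contiguous block within $\ell$. The lemma then follows by setting $g := g_1 + g_2$.

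The arrangement proceeds in $m+1$ \emph{phases} indexed by the variable-gadget column $j \in [m]_0$. \emph{Phase $0$} begins with $\rho$ and the heads of the four root-paths; then, for each absorber group $t = 1, \ldots, \gamma$ in turn, it places $\rho^A_t$, $a_t$, the $n/\gamma$ vertices of $B_t$, and finally $\{v^{i,0} : i \in S_t\}$, interleaved so that each vertex of $B_t$ is placed adjacent in $\ell$ to its matched partner in $S_t$. Each \emph{middle phase $j$}, for $1 \leq j \leq m-1$, begins with $\rho^U_j$, $\rho^Z_j$, and all vertices of the (label-replaced) constraint gadget for $C_j$, and then processes, for each variable $i \in [n]$ in an arbitrary order, the block $\rho^U_{j,i}, u^{i,j}_4, u^{i,j}_3, u^{i,j}_2, u^{i,j}_1, v^{i,j}$ (together with $d^{i,j}$ if $i \notin I_j$). \emph{Phase $m$} is processed as a middle phase, but additionally interleaves the absorber vertices $\rho^{\bar A}_t, \bar a_t, \bar B_t$ with $\{v^{i,m} : i \in \bar S_t\}$ in analogy to Phase $0$.

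To bound the width, at any position we partition the open edges into three categories. The \emph{variable-backbone edges} of the form $v^{i,j-1}u^{i,j}_3$ or $v^{i,j}u^{i,j+1}_3$ contribute at most $n$ open edges, since for each variable $i$ at most one such edge is simultaneously open. The \emph{absorber edges} are only open during their own group's processing in Phase $0$ (resp.\ Phase $m$); the critical moment is just after placing $a_t$ and $B_t$ of group $t$, when $2n/\gamma$ absorber edges are open together with the $(t-1) \cdot n/\gamma$ backbone edges already open from previously processed groups (no middle-phase backbone edge exists yet). This sum peaks at $t = \gamma$ with $(\gamma+1)n/\gamma = \frac{\gamma+1}{\gamma}n$; within a group, each newly placed $v^{i,0}$ closes two absorber edges and opens at most one backbone edge, so the width does not grow further. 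The \emph{constraint and root edges} contribute at most $g_1(\delta)$ at any position, because each constraint gadget has size bounded by a function of $q$ and $(2r)^q$ (hence of $\delta$), and each root-path is processed as a contiguous walk that leaves only $O(1)$ pending edges at any moment.

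The main obstacle is the careful coordination at phase and group boundaries, so that the various gadget contributions do not peak simultaneously. In particular, placing $\rho^U_j$, $\rho^Z_j$, and the constraint gadget for $C_j$ at the very start of Phase $j$ ensures that all their dangling edges reach into the same short prefix of Phase $j$ and contribute only $O_\delta(1)$ open edges at any point; interleaving $B_t$ with $S_t$ (and $\bar B_t$ with $\bar S_t$) analogously prevents matching edges from overlapping with backbone edges of later groups. Making these placement choices precise and bookkeeping the open-edge count across phase boundaries is the technical heart of the argument.
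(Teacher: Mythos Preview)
Your proposal is correct and follows essentially the same approach as the paper: both build an explicit linear arrangement that processes the root, the $\gamma$ front-absorber groups (with $a_t,B_t$ placed immediately before the corresponding $v^{i,0}$'s), then the $m$ phases of variable and constraint gadgets, then the back-absorbers, and both analyze the cut by splitting into at most $n$ backbone edges, at most $2n/\gamma$ absorber edges from the current group, and a $g(\delta)$-bounded remainder, with the peak $(t+1)n/\gamma$ at $t=\gamma$. Your explicit decision to interleave the back-absorbers $\bar a_t,\bar B_t$ with the $v^{i,m}$ of $\bar S_t$ during Phase~$m$ is in fact essential and slightly more careful than the paper's description (which places all $v^{i,m}$ before all $\bar b,\bar a$ vertices and then appeals to ``a similar argument''); without this interleaving one would momentarily have $2n$ open absorber edges. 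One small slip: the constraint-gadget size is bounded by a function of $q$ and $3^q$ (domain $3$ here), not $(2r)^q$.
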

\begin{proof}
	We specify a linear arrangement.
	We start with the root $\rho$.
	For $t \in [\gamma]$, we add $\rho^A_t, a_t, b^t_1, \dots, b^t_{n/\gamma}$ and vertices in $S_t$ in some arbitrary order.
	For $j \in [m]$, we add $\rho^U_j, \rho^Z_j$, and then $v^{i,j}_1, u^{i,j}_1, u^{i,j}_2, u^{i,j}_3, u^{i,j}_4$, and $d^{i,j}$ (if exists) for $i \in [n]$, and then all vertices in the constraint gadget for $C_j$.
	For the analysis later, we denote by $L_j$ the set of all vertices listed in the previous sentence.
	Finally, for $t \in [\gamma]$, we add $\bar{b}^t_1, \dots, \bar{b}^t_{n/\gamma}, \bar{a}_t, \rho^{\bar{A}}_t$.
	
	For every vertex $v$, we denote by $\overleftarrow{V}(v)$ the set of all vertices up to and including $v$ in the linear arrangement.
	We denote by $\overrightarrow{V}(v)$ the set of all vertices after $v$ in the linear arrangement.
	Let $H(v) \coloneqq G'_M[\overleftarrow{V}(v), \overrightarrow{V}(v)]$.
	It remains to show that for any vertex $v$, 
	\begin{equation} 
	\label{eq:ctw_lb}
		E(H(v)) \leq \frac{\gamma+1}{\gamma}n + g(\delta) 
	\end{equation}
	for some computable function~$g$.
	
	This is true for $v = \rho$, since $\rho$ has degree four.
	Fix $t \in [\gamma]$.
	We show that (\ref{eq:ctw_lb}) holds for $v \in \{\rho^A_t, a_t, b^t_1, \dots, b^t_{n/\gamma}\}$.
	Note that the edges in $E(H(\rho^A_t))$ that are not incident to $\rho^A_t$ or $\rho$ have the form $v^{i',0} u^{i',1}_3$ where $v^{i',0}$ is a vertex in $\bigcup^{t-1}_{i=1} S_i$.
	Hence, there are $(t-1)n/\gamma$ such edges.
	Since there are three edges incident to $\rho^A_t$, $2n/\gamma+1$ edges incident to $a_t$, and since $t \leq \gamma$, it follows that (\ref{eq:ctw_lb}) holds for $v \in \{\rho^A_t, a_t\}$.
	For $i \in [n/\gamma]$, observe that in order to obtain $E(H(b^t_i))$ from $E(H(b^t_{i-1}))$ (or from $E(H(a_t))$ if $i=1$), we remove the edge $a_t b^t_i$ and add the edge $b^t_i v$ for some $v \in S_t$.
	Hence, $|E(H(b^t_i))| = |E(H(a_t))|$, and hence (\ref{eq:ctw_lb}) also holds for $v = b^t_i$.
	Next, for $v \in S_t$, in order to obtain $E(H(v))$ from the edge set of the immediately previous cut graph, we remove two edges $b^t_j v$ and $a_t v$ for some $j \in [n/\gamma]$ and add an edge $vu^{i',1}_3$ for some $i'$.
	Hence, $|E(H(v))| < |E(H(a_t))|$, and therefore (\ref{eq:ctw_lb}) holds for $v$.
	
	Now suppose that (\ref{eq:ctw_lb}) holds for the cut graph immediately preceding $H(\rho^U_j)$ for some $j \in [m]$.
	Note that each of $\rho^U_j$ and $\rho^Z_j$ has one adjacent vertex preceding itself in the linear arrangement, one adjacent vertex in $L_j$, and one remaining adjacent vertex.
	Hence, it is easy to see that (\ref{eq:ctw_lb}) also holds for $v \in \{\rho^U_j,\rho^Z_j \}$.
	For other vertices in $L_j$, their only neighbors outside $L_j$ are $v^{i,j-1}$ (which is the neighbor of $u^{i,j}_3$ and precedes all vertices in $L_j$) and $u^{i,j+1}_3$ (which is the neighbor of $v^{i,j}$).
	Furthermore, $|L_j|$ is a function of $q$, which in turn is a function of $\delta$.
	It then follows that (\ref{eq:ctw_lb}) holds for $v \in L_j$.
	
	Finally, for the remaining vertices, we use a similar argument as for the vertices in $\rho^A_t, a_t, b^t_1, \dots, b^t_{n/\gamma}$.
\end{proof} 
\begin{longlemma}
\label{lem:ctw_lb_equiv_1}
	If $\phi$ has a satisfying assignment, then we can construct a solution spanning tree~$T$ for $\ctwinst$ for some $M \in [2n/\gamma]^{2\gamma}_0$ such that $T$ is consistent to all the label gadgets in~$G_M$.
\end{longlemma}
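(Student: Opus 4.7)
The plan is to mimic the structure of the proof of Lemma~\ref{lem:tw_lb_equiv_1} for the treewidth reduction, but adapted to the more compact variable gadget (a 5-cycle per clause instead of $r$ length-three paths) and the grouped degree-absorber structure of the cutwidth construction. Given a satisfying assignment $x^* = (x^*_1,\ldots,x^*_n)$, I would build $T$ in four stages, then verify the three required properties.

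\emph{Stage 1 (forced skeleton).} I include in $T$ every edge incident to a vertex of $R$, which is forced because every vertex of $R$ has degree requirement equal to its degree. This already assembles the four ``comb'' paths rooted at $\rho$ and all attachments $\rho^U_{j,i}u^{i,j}_4$, $\rho^Z_j z^j_1$, $\rho^A_t a_t$, $\rho^{\bar A}_t\bar a_t$.

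\emph{Stage 2 (clause gadgets).} For each $C_j$, let $t_j^*\in[p_j]$ be the (unique or chosen) index with $A_{t_j^*}=(x^*_i)_{i\in I_j}$. I consistently use the label $\alpha$ on all incident edges of $w^{t_j^*,j}_i$ and the label $\beta$ on all incident edges of $w^{t,j}_i$ for $t\neq t_j^*$. Accordingly, I add to $T$ the path-edges of $P^j_t$ for every $t\neq t_j^*$, together with the edges $w^{t_j^*,j}_i s^j_i$ and $w^{t_j^*,j}_i u^{i,j}_{a_{i,t_j^*}}=w^{t_j^*,j}_i u^{i,j}_{x^*_i}$. As in the proof of Lemma~\ref{lem:tw_lb_equiv_1}, the degree requirements at the $z^j_t$ then force a unique choice: $z^j_t\overleftarrow\pi_t$ and $z^j_t\overrightarrow\pi_{t-1}$ for $t\le t_j^*$, and $z^j_t\overleftarrow\pi_{t+1}$ and $z^j_t\overrightarrow\pi_t$ for $t>t_j^*$. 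For $i\notin I_j$, exactly one of the three edges $d^{i,j}u^{i,j}_k$ ($k\in\{1,2,3\}$) is added, namely the one with $k=x^*_i$.

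\emph{Stage 3 (variable gadgets).} For each $i\in[n]$, $j\in[m]$ and each $k\in\{1,2,3\}$, exactly one of the three ``outward'' edges at $\{u^{i,j}_1,u^{i,j}_2,u^{i,j}_3\}$ is in $T$, and by Stage 2 it is the one at $u^{i,j}_{x^*_i}$. I then choose edges of the $5$-cycle $v^{i,j}u^{i,j}_1u^{i,j}_2u^{i,j}_3u^{i,j}_4$ together with the linking edge $v^{i,j-1}u^{i,j}_3$ according to a short case analysis on $x^*_i$, in such a way that: (i) each $u^{i,j}_k$ attains degree exactly $2$; (ii) the four 5-cycle vertices are connected in the resulting subforest via a path to $v^{i,j}$; and (iii) for exactly one of $j\in\{j-1,j\}$, the vertex $v^{i,j-1}$ (respectively $v^{i,j}$) receives a degree-contribution from $u^{i,j}_3$ (respectively $u^{i,j}_4$), so that each $v^{i,j}$ attains its requirement $2$ using precisely two of its at most three cycle-neighbours. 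The cases $x^*_i=1$, $2$, $3$ each admit such a pattern, essentially because the $5$-cycle has enough flexibility to ``route'' the unit of activated degree leftwards (towards $v^{i,j-1}$) or rightwards (towards $v^{i,j}$) while maintaining degree $2$ at every interior cycle vertex.

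\emph{Stage 4 (degree absorbers and parameter choice).} For each $t\in[\gamma]$, I partition the set $S_t$ (respectively $\bar S_t$) according to the flow direction chosen in Stage 3 for the vertices $v^{i,0}$ (respectively $v^{i,m}$) it contains: those whose second incident edge in $T$ is $a_t v^{i,0}$ form a subset of size $\mu_t$, and the remaining $|S_t|-\mu_t$ vertices are paired with distinct partners in $B_t$ via the matching edges, each such partner also being linked to $a_t$. I then set $\mu_t$ and $\bar\mu_t$ to these counts; this choice is well-defined because $|S_t|=n/\gamma$ and each $v^{i,0}\in S_t$ needs exactly one absorber edge. The degree requirements at $a_t$, $\bar a_t$, and at the one-degree vertices $b^t_*, \bar b^t_*$ are then met by construction.

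\emph{Verification.} Three things remain to check. (i) Degree requirements: handled bag-by-bag in the four stages above. (ii) Consistency with label gadgets: the choice in Stage 2 assigns a single label to all edges incident to each $w^{t,j}_i$. (iii) $T$ is a spanning tree: the edge count equals $|V(G_M)|-1$ by construction, and following the arguments in the proof of Lemma~\ref{lem:tw_lb_equiv_1}, every vertex has a unique path to $\rho$: the root-paths plus the stage-3 cycle patterns provide a walk from every $u^{i,j}_k$ and every $v^{i,j}$ down to $\rho^A_t$ or $\rho^{\bar A}_t$ via a root-connected absorber, while every clause-gadget vertex reaches $\rho$ via $\rho^Z_j$.

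The main obstacle will be Stage~3: unlike in the treewidth proof, the cutwidth variable gadget encodes three states using only five cycle-vertices, so the case analysis is subtle and must be done in a way that matches up with the at-most-three ``flow'' choices (via $v^{i,j-1}u^{i,j}_3$, $v^{i,j}u^{i,j}_1$, $v^{i,j}u^{i,j}_4$) so that the left-right flow across successive cycles is globally consistent with the chosen state of $x^*_i$. Everything else is a direct analogue of the treewidth argument.
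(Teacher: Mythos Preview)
Your overall plan mirrors the paper's proof, but there is a concrete misunderstanding in Stages~3 and~4 about how the three states of the 5-cycle distribute degree to $v^{i,j-1}$ and $v^{i,j}$.

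In Stage~3 you assume that each 5-cycle contributes exactly one edge to each of $v^{i,j-1}$ and $v^{i,j}$ (``for exactly one of $j\in\{j-1,j\}$, the vertex \ldots\ receives a degree-contribution''). This is false for one of the three states. In the paper's explicit case analysis, states $x^*_i=1$ and $x^*_i=2$ do contribute one edge to each side (via $u^{i,j}_3v^{i,j-1}$ together with $v^{i,j}u^{i,j}_4$ or $v^{i,j}u^{i,j}_1$, respectively), but state $x^*_i=3$ uses both $v^{i,j}u^{i,j}_1$ and $v^{i,j}u^{i,j}_4$ and omits $u^{i,j}_3v^{i,j-1}$ entirely. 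So the split of $\deg_T(v^{i,j})=2$ across its two adjacent cycles is $1{+}1$ in states 1 and 2 but $2{+}0$ in state~3.

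This asymmetry breaks your Stage~4. Your claim that ``each $v^{i,0}\in S_t$ needs exactly one absorber edge'' is wrong: for $x^*_i=3$, $v^{i,0}$ receives nothing from cycle~$1$ and hence needs \emph{two} absorber edges (both $a_tv^{i,0}$ and the matching edge $b^t_j v^{i,0}$), while $v^{i,m}$ needs none. Even between states~1 and~2 the absorber usage differs (e.g.\ $a_tv^{i,0}$ together with $a_tb^t_j$ in state~1, versus $v^{i,0}b^t_j$ alone in state~2), so $\mu_t=\deg_T(a_t)$ is not the simple count you sketch. The paper handles this by an explicit per-state prescription of absorber edges for each $(v^{i,0},v^{i,m})$ and then reads off $M$ from the resulting degrees. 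Your connectivity sketch is also slightly off: interior $v^{i,j}$ reach $\rho$ via $\rho^U_{j,i}$ (through $u^{i,j}_4$ or $u^{i,j+1}_3,u^{i,j+1}_4$), not via the absorbers, and the $\alpha$-path vertices $w^{t_j^*,j}_i,s^j_i$ reach $\rho$ through the variable gadgets rather than through $\rho^Z_j$.
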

\begin{proof}
	Suppose  $(x^*_i)_{i \in [n]}$ is the satisfying assignment of $\phi$.
	We construct $T$ starting from an empty graph.
	We first add all edges incident to vertices in $R$.

	For $i \in [n]$, if $x^*_i  = 1$, then for $j \in [m]$, we add $v^{i,j} u^{i,j}_4, u^{i,j}_1 u^{i,j}_2, u^{i,j}_2 u^{i,j}_3, u^{i,j}_3 v^{i,j-1}$ to $T$ (see \cref{fig:ctw_lb}(b)(top) for an illustration.).
	If $x^*_i  = 2$, then for $j \in [m]$, we add $v^{i,j} u^{i,j}_1, u^{i,j}_1 u^{i,j}_2, u^{i,j}_3 u^{i,j}_4, u^{i,j}_3 v^{i,j-1}$ to $T$ (see \cref{fig:ctw_lb}(b)(middle) ).
	If $x^*_i  = 3$, then for $j \in [m]$, we add $v^{i,j} u^{i,j}_4, v^{i,j} u^{i,j}_1, u^{i,j}_1 u^{i,j}_2, u^{i,j}_2 u^{i,j}_3$ to $T$ (see \cref{fig:ctw_lb}(b)(bottom)).
	For $i \in [n], j \in [m]$, if $d^{i,j}$ exists, then we add to $T$ the edge $d^{i,j} u^{i,j}_{x^*_i}$.
	
	For $j \in [m]$, suppose $A_1, \dots, A_{p_j}$ are the satisfying assignment of $C_j$, and $A_{p^*_j}$ is the assignment corresponding to $(x^*_i)_{i \in [n]}$.
	Then for each vertex in $P_{p^*_j}$, we add the edges labeled $\alpha$ to $T$.
	For other paths $P_t$ for $t \in [p_j] \setminus \{p^*_j\}$ in the constraint gadgets, we add all edges of the paths to $T$.
	For $t \in [p^*_j-1]$, we add $z_t\overleftarrow{\pi}_t$ and $z_t\overleftarrow{\pi}_{t-1}$ (the latter edge is only if $t > 1$).
	For $t \in [p_j] \setminus [p^*_j]$, we add $z_t\overleftarrow{\pi}_{t+1}$ and $z_t\overleftarrow{\pi}_{t}$ (the former edge is only if $t < p_j$).
	
	The remaining vertices are those in the degree absorbers (see \cref{fig:ctw_lb}(b) for an illustration).
	For $i \in [n]$, suppose $v^{i,0}$ is adjacent to $a_t$ and $b^t_j$ in $G$, and $v^{i,m}$ is adjacent to $\bar{a}_{\bar{t}}$ and $\bar{b}^{\bar{t}}_{\bar{j}}$, for some $t, j, \bar{t}, \bar{j}$.
	If $x^*_i = 1$, then we add $a_t v^{i,0}, a_t b^t_j, v^{i,m} \bar{b}^{\bar{t}}_{\bar{j}}$ to $T$.
	If $x^*_i = 2$, then we add $v^{i,0} b^t_j, v^{i,m} \bar{a}_{\bar{t}}, \bar{b}^{\bar{t}}_{\bar{j}} \bar{a}_{\bar{t}}$ to $T$.
	If $x^*_i = 3$, then we add $a_t v^{i,0}, b^t_j v^{i,0}, \bar{b}^{\bar{t}}_{\bar{j}} \bar{a}_{\bar{t}}$ to $T$.
	
	The degrees of $a_1, \dots, a_{\gamma}, \bar{a}_1, \dots, \bar{a}_{\gamma}$ in $T$ are trivially in the range $[2n/\gamma]_0$.
	Hence the degree requirements for these vertices are satisfied for some appropriate choice of $M \in [2n/\gamma]^{2\gamma}_0$.
	Further, by construction, it is easy to check that the degree requirements are always satisfied for all other vertices.
	
	Hence, it remains to show that $T$ is a spanning tree.
	Observe that the vertices in $R$ induce a tree in $T$.
	Hence, it is sufficient to prove that any other vertex~$v$ in $G$, the following property holds: (*) $v$ has a unique path to a vertex in $R$ and hence a unique path to the root $\rho$.	
	
	Firstly, by construction, for $i \in [n], j \in [m-1]$, it is easy to see that this unique path for $v^{i,j}$ is either (i) $(v^{i,j}, u^{i,j}_4, \rho^U_{j,i})$ or (ii) $(v^{i,j}, u^{i,j+1}_3, u^{i,j+1}_4, \rho^U_{j+1,i})$.
	This is because the other paths from $v^{i,j}$ only visit some other vertices in the variable gadgets, before either visiting the dummy vertex $d^{i,j}$ or $d^{i,j+1}$ (which has degree one in $T$) or visiting some $w^{p^*_{j'}, j'}_{i}$ and then $s^{j'}_{i}$ (which also has degree one).
	This also implies that property (*) also holds for all vertices $u^{i,j}_t, d^{i,j}, s^j_i$ for $i \in [n], j \in [m], t \in [4]$.
	Further it also holds for all vertices of the form $w^{p^*_j, j}_{i}$, where we recall that $p^*_j$ is the index of the satisfying assignment for constraint $C_j$.
	
	Secondly, for $j = 0$, it is also easy to see that the unique path for $v^{i,j}$ is either the path (ii) above or the path $(v^{i,j}, a_t, \rho^A_{t})$ for some $t$.
	For $j = m$, this unique path is either the path (i) or $(v^{i,j}, \bar{a}_{t'}, \rho^{\bar{A}}_{t'})$ by some $t'$.
	Note that each $a_t$ or $\bar{a}_t$ is connected to $R$ only via $\rho^A_t$ or $\rho^A_{\bar{t}}$, because its other potential neighbors in $T$ are of the form $b^t_j, \bar{b}^t_j$ (which have degree one in $T$) or of the form $v^{i,0}, v^{i,m}$ (but in these cases, the unique paths connecting these vertices to $R$ are via $a_t$ and $\bar{a}_t$ themselves.) 
	For a vertex of the form $b^t_j$, either it is connected to $a_t$ which is in turn adjacent to $\rho^A_t$, or it is connected to some $v^{i,0}$ which also has a unique path to a vertex in $R$.
	An analogous argument holds for vertices of the form $\bar{b}^t_j$.
	
	Thirdly, in the constraint gadget for a $q$-constraint $C_j$, there is a path that visits all $z^j_i$ and all the paths $P_t$ except for the path $P_{p^*_j}$.
	Note that all vertices in this path do not have any other incident edges in $T$ outside the path except for $z^j_1$ which is connected to $\rho^Z_j$.
	
	All the above imply that property (*) holds for all vertices in $G$. 
	Hence, $T$ is a spanning tree of $G$, as required.
\end{proof} 
\begin{longlemma}
\label{lem:ctw_lb_equiv_2}
	If for some $M \in [2n/\gamma]^{2\gamma}_0$, $\ctwinst$ has a solution spanning tree~$T$ consistent to all the label gadgets in $G_M$, then $\phi$ has a satisfying assignment.
\end{longlemma}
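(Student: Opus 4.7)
The plan is to mirror the structure of the pathwidth lower bound (Lemma~\ref{lem:tw_lb_equiv_2}), adapted to the more compact 5-cycle variable gadget. Given a solution spanning tree $T$ of $\ctwinst$ consistent with the label gadgets, I first propagate the degree constraints to pin down the ``global'' edges of $T$: all edges incident to vertices of $R$ must lie in $T$, since each such vertex has degree requirement equal to its degree in $G_M$. The label-gadget consistency then implies that each path $P^j_t$ is either fully in $T$ (using its $\beta$-edges) or fully absent (using the $\alpha$-edges to $s^j_i$ and $u^{i,j}_{a_{i,t}}$), and combined with the degree-one requirement on $s^j_i$, exactly one path $P^j_{p^*_j}$ is $\alpha$-chosen; iterating on the degree-two vertices $z^j_t$ then forces the remainder of the constraint gadget exactly as in the pathwidth proof. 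Consequently, for every $i \in I_j$ the external edge $w^{p^*_j,j}_i u^{i,j}_{a_{i,p^*_j}}$ lies in $T$, and for $i \notin I_j$ the degree-one dummy $d^{i,j}$ contributes a single external edge to exactly one of $\{u^{i,j}_1, u^{i,j}_2, u^{i,j}_3\}$. I will then define the \emph{state} $x^j_i \in \{1,2,3\}$ as the unique index such that $u^{i,j}_{x^j_i}$ carries this external edge.

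A short local analysis of the step-$j$ subgraph will pin down the tree edges within it. The degree-two constraint on $u^{i,j}_4$ together with the forced edge $u^{i,j}_4 \rho^U_{j,i}$ leaves $u^{i,j}_4$ only one more edge to choose; combined with the degree-two constraints on $u^{i,j}_1, u^{i,j}_2, u^{i,j}_3$, a case split on $x^j_i$ uniquely determines the tree edges within the subgraph induced by $\{v^{i,j-1}, v^{i,j}, u^{i,j}_1, \dots, u^{i,j}_4\}$, and the three resulting configurations coincide with those depicted in \cref{fig:ctw_lb}(b).

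The main obstacle will be establishing state consistency across $j$, i.e., $x^j_i = x^{j+1}_i$ for all $0 < j < m$. The crucial vertex is $v^{i,j}$, which has degree requirement two and neighbors $u^{i,j}_1, u^{i,j}_4, u^{i,j+1}_3$. From the canonical configurations, the $T$-edges incident to $v^{i,j}$ contributed by step $j$ are $\{u^{i,j}_4\}$, $\{u^{i,j}_1\}$, or $\{u^{i,j}_1, u^{i,j}_4\}$ in states $1$, $2$, $3$ respectively, while the edge $v^{i,j} u^{i,j+1}_3$ lies in $T$ iff $x^{j+1}_i \in \{1,2\}$. Of the nine pairs $(x^j_i, x^{j+1}_i)$, the degree constraint at $v^{i,j}$ already rules out the four pairs $(1,3), (2,3), (3,1), (3,2)$. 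Among the five remaining pairs, the diagonals $(1,1), (2,2), (3,3)$ are consistent with the forward direction; I need to rule out the two mixed pairs. For $(1,2)$, both $T$-edges incident to $v^{i,j}$ reach $R$ via two distinct backbone vertices $\rho^U_{j,i}$ and $\rho^U_{j+1,i}$ (since state~$2$ at step $j+1$ puts the path $u^{i,j+1}_3 - u^{i,j+1}_4 - \rho^U_{j+1,i}$ in $T$), producing a cycle in $T$. For $(2,1)$, both sides of $v^{i,j}$ dead-end at a degree-one gadget vertex (either $s^j_i$ via $w^{p^*_j,j}_i$ or $d^{i,j}$, and analogously on the step-$(j{+}1)$ side), so the $T$-component of $v^{i,j}$ does not reach $R$, contradicting that $T$ spans $G_M$. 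Hence $x^j_i$ is independent of $j$.

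Finally, defining $x^*_i \in \{1,2,3\}$ as the common value of $x^j_i$ across all $j$ yields an assignment to the variables of $\phi$. For each constraint $C_j$ and $i \in I_j$, the construction gives $x^*_i = a_{i, p^*_j}$, so the restriction of $(x^*_i)_i$ to the variables of $C_j$ coincides with the satisfying assignment $A_{p^*_j}$ and hence satisfies $C_j$. Therefore $\phi$ is satisfiable, completing the proof.
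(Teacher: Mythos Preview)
Your proof is correct and follows essentially the same approach as the paper. Where the paper merely asserts that ``it is easy to see'' that the external index $k$ is independent of $j$, you supply an explicit argument: the degree-two constraint at $v^{i,j}$ rules out four of the nine transition pairs, and you dispose of the remaining off-diagonal pairs $(1,2)$ and $(2,1)$ via a cycle argument and a disconnection argument, respectively.
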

\begin{proof}
	Firstly, we start with a few similar arguments as in the proof of \cref{lem:tw_lb_equiv_2}.
	In particular, all edges incident to vertices in $R$ have to be in $T$, due to the degree requirements.
	Next, for $j \in [m]$, and $i \in [n]$, there is one $\alpha$-path of the form $(s^j_i, w^{p^*_{j}, j}_i, u^{i,j}_k)$ for some $p^*_j \in [p_j]$ and $k \in [3]$.
	Further, there is a path in $T$ that visits all $z^j_t$ for $t \in [p_j]$ and the vertices on the paths $P^j_t$ $t \in [p_j] \ \{p^*_j\}$.
	Except for $z^j_1$, no vertex can have other incident edges outside the path, and the other incident edge of $z^j_1$ has to be $z^j_1 \rho^Z_j$. 
	
	Secondly, for $i \in [n], j \in [m]$, we argue that the three states as depicted in \cref{fig:ctw_lb}(b)  are the only possible configurations we can have the vertices in the variable gadget between $v^{i,j-1}$ and $v^{i,j}$.
	Note that if one of $u^{i,j}_1, u^{i,j}_2, u^{i,j}_3$ has an incident edge outside of the cycle $\mathcal{C}^{i,j} \coloneqq v^{i,j}u^{i,j}_1u^{i,j}_2u^{i,j}_3u^{i,j}_4$, then this edge is either incident to the dummy vertex $d^{i,j}$ (which has degree one) or has to be part of the $\alpha$-path that visits $s^j_i$.
	Hence, there is exactly one of these vertices with an incident edge outside of the cycle~$\mathcal{C}^{i,j}$.
	Next, since the edge $u^{i,j}_4 \rho^U_{j,i}$ is incident to a vertex in $R$ and has to be in $T$, $u^{i,j}_4$ can only have either $v^{i,j}$ or $u^{i,j}_2$ as its other neighbor in $T$.
	Suppose this is $v^{i,j}$.
	If the edge $v^{i,j} u^{i,j}_1$ is not in $T$, then $u^{i,j}_1, u^{i,j}_2, u^{i,j}_3$ can only connect to the root $\rho$ via $v^{i,j-1}$; we must have \cref{fig:ctw_lb}(b)(top).
	If $v^{i,j} u^{i,j}_1$ is in $T$, then because exactly one of $u^{i,j}_1, u^{i,j}_2, u^{i,j}_3$ can have an incident edge outside of~$\mathcal{C}^{i,j}$, we must have \cref{fig:ctw_lb}(b)(bottom).
	Now suppose that $ u^{i,j}_4 u^{i,j}_3$ is in $T$.
	If the edge $u^{i,j}_3 v^{i,j-1}$ is in $T$, then $u^{i,j}_1, u^{i,j}_2$ can only connect to the root $\rho$ via $v^{i,j}$; we then have \cref{fig:ctw_lb}(b)(middle).
	Otherwise, the edge $u^{i,j}_3 u^{i,j}_2$ is in $T$. 
	However, on the one hand, if $u^{i,j}_2$ then has an incident edge outside of~$\mathcal{C}^{i,j}$, then $u^{i,j}_1$ cannot have degree two in $T$ without having also an incident edge outside of~$\mathcal{C}^{i,j}$, a contradiction.
	On the other hand, if $u^{i,j}_2 u^{i,j}_1$ is an edge in $T$, then $u^{i,j}_1$ has to have an incident edge outside of~$\mathcal{C}^{i,j}$, and $v^{i,j}$ does not have degree two in $T$ and hence does not meet the degree requirement.
	
	Finally, it is easy to see that for $i \in [n]$, there exists exactly one $k \in [3]$ such that for all $j \in [m]$, $u^{i,j}_k$ has an incident edge outside of the cycle $\mathcal{C}^{i,j}$.
	We can interpret this $k$ as the value assignment of the variable $x_i$.
	The $\alpha$-path $P^j_{p^*_j}$ of each $q$-constraint $C_j$ is then a certificate that $C_j$ is satisfied by this assignment.
	The lemma then follows.
\end{proof} 
Combining the previous two lemmas with \cref{lem:consistent_labels}, we obtain the following correspondence.

\begin{longlemma}
\label{lem:ctw_lb_equiv}
	$\phi$ has a satisfying assignment if and only if $\ctwinst'$ is a YES instance for some $M \in [2n/\gamma]^{2\gamma}_0$.
\end{longlemma}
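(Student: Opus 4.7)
The plan is to obtain Lemma~\ref{lem:ctw_lb_equiv} as a direct consequence of the two preceding lemmas together with the label-gadget replacement result of Lemma~\ref{lem:consistent_labels}. Recall that $\ctwinst' = (G'_M, \degreq')$ was defined as the instance obtained from $\ctwinst = (G_M, \degreq)$ by applying Lemma~\ref{lem:consistent_labels} to each label gadget in $G_M$; in particular, every vertex $w^{t,j}_i$ carrying a label gadget gets replaced by a bounded-size gadget, and the crucial property preserved is: $\ctwinst$ has a solution spanning tree consistent with all label gadgets in $G_M$ if and only if $\ctwinst'$ admits a solution.

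For the forward direction, I would assume $\phi$ has a satisfying assignment. Lemma~\ref{lem:ctw_lb_equiv_1} then produces a choice of $M \in [2n/\gamma]_0^{2\gamma}$ together with a solution spanning tree $T$ for $\ctwinst$ that is consistent with all label gadgets in $G_M$. Applying Lemma~\ref{lem:consistent_labels} (iteratively, once per label gadget), this consistency condition precisely certifies that $\ctwinst'$ is a YES-instance for the same $M$. For the reverse direction, I would assume that $\ctwinst'$ is a YES-instance for some $M \in [2n/\gamma]_0^{2\gamma}$. The same application of Lemma~\ref{lem:consistent_labels}, now read backwards, yields that $\ctwinst$ admits a solution spanning tree consistent with all the label gadgets in $G_M$. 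Feeding this into Lemma~\ref{lem:ctw_lb_equiv_2} returns a satisfying assignment for $\phi$.

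Since the forward and backward directions are fully taken care of by the three cited lemmas, the argument is essentially a two-line chain of equivalences; there is no real obstacle to overcome beyond making sure the quantification over $M$ is preserved in both directions (it is, since Lemma~\ref{lem:ctw_lb_equiv_1} produces a specific $M$ and Lemma~\ref{lem:ctw_lb_equiv_2} accepts any $M$ in the range). Consequently, the only thing the proof needs to articulate explicitly is this composition, in the same spirit as the proof of Lemma~\ref{lem:tw_lb_equiv}.
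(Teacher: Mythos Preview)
Your proposal is correct and follows essentially the same approach as the paper's proof: chain Lemmas~\ref{lem:ctw_lb_equiv_1} and~\ref{lem:ctw_lb_equiv_2} to get the equivalence between satisfiability of $\phi$ and the existence of a label-consistent solution for $\ctwinst$ for some $M$, then apply Lemma~\ref{lem:consistent_labels} to pass to $\ctwinst'$. The paper's proof is the two-line version of exactly this argument.
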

\begin{proof}
	By Lemmas~\ref{lem:ctw_lb_equiv_1} and~\ref{lem:ctw_lb_equiv_2}, $\phi$ has a satisfying assignment if and only if for some $M \in [2n/\gamma]^{2\gamma}_0$, $\ctwinst$ has a solution spanning tree~$T$ consistent to all the label gadgets in $G_{M}$.
	By \cref{lem:consistent_labels}, the latter holds if and only if $\ctwinst'$ is a YES-instance.
	The lemma then follows.
\end{proof} 
We are now ready to prove \cref{thm:ctw_lb}.

\begin{proof}[Proof of \cref{thm:ctw_lb}]
	For the sake of contradiction, suppose there is an algorithm that solves \sdmstp in time $\ostar((3-\varepsilon)^{\ctw})$.
	Combined with \cref{lem:ctw_lb_pw}, this implies that we can solve $\ctwinst'$ in the construction above in time $\ostar((3-\varepsilon)^{\frac{\gamma+1}{\gamma}n + g(\delta)})$.
	By \cref{lem:ctw_lb_equiv}, we can then solve any $\phi$ in time $(2n/\gamma)^{2\gamma}\ostar((3-\varepsilon)^{\frac{\gamma+1}{\gamma}n + g(\delta)})$.
	We choose constants $\gamma, \delta$ such that 
	\[
	(2n/\gamma)^{2\gamma}\ostar((3-\varepsilon)^{\frac{\gamma+1}{\gamma}n + g(\delta)}) \leq \ostar((3-\delta)^{n}).
	\]
	However, this contradicts \cref{thm:csp}.
\end{proof} 

\section{Exact Leaf Spanning Tree}\label{sec:elst}

In this section, we will show that our shifting technique can also be applied to solve the \textsc{Exact Leaf Spanning Tree} problem in single exponential XP running time.

\problembox{\textsc{Exact Leaf Spanning Tree}}{Graph $G=(V,E)$, integer $\eltarget$}{Does $G$ admit a spanning tree $F$ with exactly $\eltarget$ leaves?}

In this problem, we are given a graph $G=(V,E)$ and a positive integer $t$. The goal is to decide whether there exists a spanning tree $F$ of $G$ with exactly $t$ leaves, where a leaf is a vertex of degree exactly $1$ in $F$. 

We assume that $|V|\geq 2$, and hence, $t\geq 2$ since a spanning tree of $G$ cannot consist of a single vertex.

For the remainder of this section,
let $(G, \eltarget)$ be the given instance of \textsc{Exact Leaf Spanning Tree},
where $G$ is an $n$-vertex graph of NLC-width $k$, and let us consider a given NLC-decomposition 
$\phi$ of $G$ using $k$ labels.

\subsection{Partial Solutions and Patterns}

For this problem, we define fixed forests simply as labeled forests. For technical reasons, it is convenient to introduce a label 0, which represents ``virtual'' vertices, i.e. vertices that cannot be used for connecting different components. This label will not appear in partial solutions, but will appear in canonical fixed forests.
For convenience, we call a vertex with label in $[k]$ a \defi{\pos} vertex.

 \begin{definition}
A \defi{fixed forest} is a forest $F$ together with an (extended) labeling function $\lab: V(F) \to [k]_0$.
For a labelled graph $H$, we define a \defi{partial solution} as a fixed forest $F$ that is a spanning subgraph of $H$.
\end{definition}

Although we allow the label $0$ in fixed forests, all considered partial solution, by definition, do not use this label. We allow this label in order to realize representations of patterns created by the shifting process.

We represent a fixed forest by a pattern which consists of a vector and a multiset. The vector represents the number of isolated vertices of each label, and the elements of the multiset correspond to connected components of the fixed forest. For each component, we store a vector representing the number of leaves of each label, and a set of internal vertices, excluding those of label zero (since they cannot be used to create connections to other components).  

Let us now formally define patterns.
For a labeled graph $F$, we denote by $\leaf(F)$ the set of degree-one vertices in $F$. Further, we denote by $\leaf_i(F)$ the set of degree-one vertices in $F$ with label $i$.

 \begin{definition}
    We define a \defi{$k$-pattern} as a pair $(\iso, \mset)$ where $\iso\in\mathbb{N}_0^k$ and $\mset$ is a multiset of pairs $(u,S)$ with $u \in \mathbb{N}_0^k$, and $S \subseteq [k]$ (we omit $k$ when clear from context). We denote by $\mathcal{P}$ the set of all patterns.

    For a fixed forest $F$, its \defi{corresponding pattern} $\repr(F)$ is the pair $(\iso, \mset)$ such that
    \[
    \iso_i = |\{v \in \lab^{-1}(i)\colon \deg_F(v) = 0\}|, \text{ for } i\in [k]
    \]
    and $\mset$ contains exactly the pairs $p(C)$ for each connected component $C$ of $F$, where $p(C) = (u,S)$ such that for $i\in[k]$,
    \begin{align*}
    u_i &= |\leaf_i(C)|, \\
    S &= \lab(C \setminus \leaf(C)) \setminus \{0\}.
    \end{align*} 
\end{definition}

Intuitively, $\iso$ counts the number of isolated vertices of each label. We have to distinguish isolated vertices from leaves because if we connect an isolated vertex to another component, it becomes a leaf, whereas if we connect a leaf to a component, it will no longer be a leaf. For each connected component, $u$ counts the number of leaves of each label, and $S$ is the set of non-zero labels among the internal vertices. Note that $S$ is not a multiset, i.e. we do not count the multiplicity of internal vertices. 

The definition above maps each fixed forest to a pattern. Next, we define a mapping in the other direction, i.e., a mapping that assigns a canonical fixed forest to each pattern. 

 \begin{definition}\label{def:elst:canonical_forest}
Given a pattern $A = (\iso,\mset) \in \mathcal{P}$, we define the \defi{canonical (fixed) forest}, 
denoted $F_A$, as the fixed forest constructed from an empty graph as follows.
For every $i \in [k]$, we add $\iso_i$ isolated vertices with label $i$ to $F_A$.
Next, for each pair $(u, S)\in \mset$, we construct a tree that contains a path with endpoints of label zero, and a vertex of label $i$ for each element $i$ of $S$. 
Then we attach to one endpoint of the tree $u_i$ vertices of label $i$ for each $i\in [k]$ and a single vertex of label 0.
Finally, we add the resulting tree to $F_A$.
 \end{definition}

In our dynamic programming table, at each node of the $k$-expression, the records that we store will contain the corresponding patterns of all partial solutions.

\begin{definition}\label{def:elst:record}
Given a labeled graph $H$ we define the \defi{record} $\mathcal{R}_{H}$ of $H$ as the set of the patterns corresponding to all partial solutions of $H$. 
For $x\in \nodes$, we define $\mathcal{R}_x\eqcolon\mathcal{R}_{G_x}$. 
\end{definition}

\begin{definition}
    \label{def:elst:sol-pat} We call a pattern $\elpat = (\iso, \mset)$ a \defi{solution pattern}, if it holds that $\iso = \mathbf{0}$, $|\mset|=1$, and for the unique element $(u,S)\in \mset$, it holds that $\sum_{i\in[k]} u_i = \eltarget$.
\end{definition}

\begin{observation}
\label{obs:elst:sol-pat}
The instance $(G,\eltarget)$ is a YES-instance if and only if the record $\mathcal{R}_G$ contains a solution pattern.
\end{observation}

\para{Compatibility and Equivalence.}
  
Analogously to Section~\ref{sec:cw}, we define the notions of compatibility and equivalence. Informally, two fixed forests are compatible if they can be completed to a solution, i.e. to a spanning tree with a given number of leaves. While this suffices for correctness of the algorithm and our reduction rules, we face challenge when joining partial solution in the algorithm. In order to overcome this challenge, we define a more fine-grained notion of compatibility that allows to add edges one by one, reducing the resulting family after each addition.

From now on, let $\allpairs = [k]\times[k]$, and let $\mathcal{A} = (\mathbb{N}_0)^{\allpairs}$ be the set of all mappings that assign to each pair $(i,j)\in [k]\times[k]$ a non-negative integer.
 \begin{definition}    
\label{def:elst:cw_compatible_forests}
For two fixed forests $F$ and $F'$, a mapping $\sigma\in \mathcal{A}$ and an integer $\ell\in\mathbb{N}_0$, we define a \defi{$(\sigma, \ell)$-spanning tree}
for $F$ and $F'$ as a spanning tree~$T$ of $F\oplus_{\allpairs} F'$ where the following holds:
\begin{itemize}
    \item  $E(F), E(F')\subseteq E(T)$,
    \item $T$ has exactly $\ell$ {\pos} leaves (i.e., leaves with non-zero labels), and
    \item For each $(i,j,z)\in \sigma$ it holds that $E(T)$ contains exactly $z$ edges between a vertex of label $i$ in $F$ and a vertex of label $j$ in $F'$.
\end{itemize}
    We say that $F$ and $F'$ are \defi{($\sigma, \ell)$-compatible} if there exists a $(\sigma, \ell)$-spanning tree for $F$ and $F'$. 
\end{definition}

Towards deriving reduction rules later, we now define the notion of equivalence, denoted by $\simeq$, for a range of different objects, starting from fixed forests. Informally, fixed forests $F$ and $F'$ are equivalent if for all $(\sigma, \ell)$,  $(\sigma, \ell)$-compatibility with $F$ implies $(\sigma, \ell)$-compatibility of $F'$ and vice versa.
If the implication only goes one way, we can speak of weaker fixed forests.

\begin{definition}
\label{def:elst:cw_equiv_ff}
    A set $R$ of fixed forests is \defi{weaker} than a set $R'$ of fixed forests, denoted by $R \preceq R'$, if for each mapping $\sigma\in \mathcal{A}$, for each integer $\ell\in \mathbb{N}_0$ and for each fixed forest $F^*$ it holds that if $F^*$ is $(\sigma, \ell)$-compatible with some fixed forest $F \in R$, then it is also $(\sigma, \ell)$-compatible with some fixed forest $F' \in R'$.
    We call $R$ and $R'$ \defi{equivalent} ($R \simeq R'$), if both $R \preceq R'$ and $R' \preceq R$ hold.

    Given two fixed forests $F$ and $F'$, we say that $F$ is \defi{weaker} than (resp. \defi{equivalent} to) $F'$, if the singleton set $\{F\}$ is weaker than (resp. equivalent to) the singleton set $\{F'\}$.
\end{definition}

Definition~\ref{def:elst:cw_equiv_ff} also allows us to extend the notion of equivalence to patterns and sets of patterns. Recall that each pattern $A$ is associated with a canonical fixed forest $F_A$.

\begin{definition}
\label{def:elst:equiv_patt}
    Let $A, A'\in\mathcal{P}$. We say $A$ is \defi{weaker} than $A'$ if $F_A$ is weaker than $F_{A'}$.
   We call $A$ and $A'$ \defi{equivalent}, if their canonical fixed forests are equivalent, i.e., $F_A\simeq F_{A'}$. 

   Let $R, R'\in 2^\mathcal{P}$ and let $F_R=\{F_A:\: A\in R \}$, $F_{R'}=\{F_A:\: A\in R' \}$. We say that $R$ is \defi{weaker} than $R'$ if $F_R$ is weaker than $F_{R'}$. We say $R$ and $R'$ are \defi{equivalent} if $F_R\simeq F_{R'}$.
\end{definition}

It is easy to see that all notions of equivalence above are equivalence relations.
In the following, \cref{cor:elst:cw_equiv_repr} states that in order to prove that two fixed forests are equivalent, it suffices to look at their patterns.
By definition, two patterns are equivalent if their \emph{canonical} fixed forests are equivalent.
We show that the reverse is true: two fixed forests are equivalent if their corresponding patterns are.
 For convenience, we first prove this for the case when the patterns are the same in the following lemma.

\begin{longlemma}
\label{lem:elst:cw_same_repr}
    Let $F,F'$ be two fixed forests. If $\repr(F) = \repr(F')$, then $F\simeq F'$.
\end{longlemma}
 \begin{proof}
    Fix an edge mapping $\sigma\in \mathcal{A}$, an integer $\ell\in\mathbb{N}$ and a fixed forest $\tilde{F}$. Suppose that $\tilde{F}$ is $(\sigma, \ell)$-compatible with $F$, and let $T$ be a $(\sigma, \ell)$-spanning tree for $\tilde{F}$ and $F$. We aim to construct a $(\sigma, \ell)$-spanning tree $T'$ of $\tilde{F}$ and $F'$. 
    We initialize $T'$ as the disjoint union of $\tilde{F}$ and $F'$.
    
    By the definition of a pattern corresponding to a fixed forest, since $\repr(F) = \repr(F')$, there is a bijection $\gamma$ that maps each connected component of $F$ to a connected component of $F'$.
    Further, for every connected component $C$ of $F$, $C$ and $\gamma(C)$ have the same number of leaves in every non-zero label class, and the same set of non-zero labels among their internal vertices.
    Define a mapping $\lambda$ between positive vertices of $F$ and $F'$ as follows. For each connected component $C$ of $F$, $\lambda$ bijectively maps positive leaves of each label to leaves of $\gamma(C)$ of the same label (i.e. $\lambda|_{L_i(C)}:L_i(C)\rightarrow L_i(\gamma(C))$ is a bijection for each $i\in [k]$). Each (positive) internal vertex of $C$ is mapped by $\lambda$ to the internal vertex of $\gamma(C)$ of the same label.  
         Now for each edge $e$ of $T$ between a vertex $v$ in $\tilde{F}$ and a vertex $w$, we add an edge $e'$ in $T'$ between $v$ and $\lambda(w)$.
    
    Clearly, $T'$ is a subgraph of $\tilde{F} \oplus_{\allpairs} F'$, since the cut between $F'$ and $\tilde{F}$ is a complete bipartite graph.
    First, we argue that $T'$ is a tree.
    For convenience, we extend $\gamma$ to map also each connected component of $\tilde{F}$ to itself.
    Then for any two connected components $C$ and $C'$ of $\tilde{F} \cup F$, $C$ and $C'$ are connected by an edge in $T$, if and only if $\gamma(C)$ and $\gamma(C')$ are connected by an edge in $T'$.
    This implies that $T'$ is connected.
    Further, since the connected components of $\tilde{F} \cup F'$ are trees, if $T'$ has a cycle, then some connected components of $\tilde{F} \cup F'$ are connected in a cyclic manner.
    This then implies the same cyclic connection between some component of $\tilde{F} \cup F$ in $T$, a contradiction to $T$ being a tree.
    Therefore, $T'$ is a tree as claimed.
    
    Now we show that $T'$ has $\ell$ positive leaves.
    First, by construction, it holds for every vertex $v$ of $\tilde{F}$, that $\deg_T(v) = \deg_{T'}(v)$.
    Hence, such a vertex is a leaf in $T$ if and only if it is a leaf in $T'$.
    Moreover, by the same argument, for every isolated vertex $v$ in $F$, its degree in $T$ is the same as the degree of $\gamma(v)$ in $F'$.
    Finally, consider a connected component $C$ of $F$.
    Note that a vertex of $C$ (or $\gamma(C)$, resp.) is a positive leaf of $T$ ($T'$, resp.) if and only if it is a leaf of $C$ ($\gamma(C)$, resp.) with the same label and it is not connected with a vertex in $\tilde{F}$ by an edge of $T$ ($T'$, resp.).
    Since $\repr(F) = \repr(F')$, the number of positive leaves in $F$ and $F'$ are the same.
    Further, by construction, a vertex $v$ of $C$ is adjacent to a vertex of $\tilde{F}$ in $T$ if and only if $\lambda(v)$ is adjacent to a vertex of $\tilde{F}$ in $T'$.
    The number of vertices in $C$ that are positive leaves in $T$ is the same as the number of vertices in $\gamma(C)$ that are positive leaves in $T'$.
    Overall, we obtain that $T'$ has the same number of positive leaves as $T$ (i.e., $\ell$).

    Finally, since we replace each edge $e$ of $T$ between $\tilde{F}$ and $F$ with an edge $e'$ between $\tilde{F}$ and $F'$ whose endpoints have the same labels, it holds that for each $(i,j)\in \allpairs$, the number of edges between a vertex of label $i$ in $\tilde{F}$ and a vertex of label $j$ in $F'$ is exactly $\sigma(i,j)$.
    
    Thereby we have shown that $T'$ is a $(\sigma, \ell)$-spanning tree of $\tilde{F}$ and $F'$.
    Consequently, it holds that $\tilde{F}$ is $(\sigma, \ell)$-compatible with $F'$. It follows that $F \preceq F'$, since we have chosen $\tilde{F}$, $\sigma$ and $\ell$ arbitrarily.
    It follows by a symmetric argument that $F' \preceq F$, and therefore, $F \simeq F'$.
     \end{proof} 
\begin{corollary}
\label{cor:elst:cw_equiv_repr}
    Let $F$ and $F'$ be two fixed forests. If $\repr(F)\simeq\repr(F')$ holds, then it holds that $F\simeq F'$.
\end{corollary}
\begin{proof}
    It holds by Lemma~\ref{lem:elst:cw_same_repr} that $F \simeq F_{\repr(F)}$, and $F' \simeq F_{\repr(F')}$.
    It follows from \cref{def:elst:equiv_patt} that $F_{\repr(F)} \simeq F_{\repr(F')}$, and hence, $F \simeq F'$.
\end{proof} 
\para{Nice Patterns and Reduction Rules.}

In order to get the claimed running time, we have to reduce the number of records in our dynamic programming tables. In order to achieve this, we define the family of nice patterns, whose size is bounded by $n^{\bigoh(k)}$. We describe reduction rules that can be used to turn any set of patterns into an equivalent set of nice patterns efficiently.

\begin{definition}\label{def:elst:nice_pattern}
Given a vector $u\in\mathbb{N}_0^k$ and a set $S$, let $\omega = |S|+\sum_{i\in[k]} u_i$. We call $(u, S)$ a \defi{unit pair} if $\omega = 1$, and $\defi{zero pair}$ if $\omega = 0$. We say a pattern $(\iso, \mset)$ is \defi{nice} if $\mset$ does not contain any zero pair, and for every $i\in [k]$, $\mset$ contains at most one non-unit pair $(u, S)$ with $i\in S$, and at most one non-unit pair $(u, S)$ with $u_i >0$.

We denote by $\mathcal{P}^*$ the set of all the nice patterns in $\mathcal{P}$.
\end{definition}
 Let us now bound the size of $\mathcal{P}^*$. It is easy to see that we have:
\begin{itemize}
    \item at most $\binom{n}{2k}$ choices for the non-singleton pairs,
    \item at most $n^k$ choices for the number of occurrences of each unit pair of the form $(\mathbf{0}, S)$ with $|S|=1$, where we bound the number of occurrences for each label in $[k]$ by $n$,
    \item at most $n^k$ choices for the ones of the form $(\idv_i, \emptyset)$, and
    \item at most $n^k$ choices for the vector $\iso$.
\end{itemize}
Thus we observe the following bound:
\begin{observation}
\label{obs:elst:sizenice}
It holds that $|\mathcal{P}^*|\leq n^{\bigoh(k)}$.
\end{observation}

\subsection{Requirement Shifting and Pattern Representation}

Now we define a new structure called ``ordered pattern'' that allows us to identify and manipulate specific vectors of a pattern more systematically.

\begin{definition}\label{def:elst:ord_pat}
    An \defi{ordered pattern} is a tuple $\ord = (\iso, O, b)$, where
\begin{itemize}
 \item $\iso\in\mathbb{N}_0^k$ is a vector;   
\item    $O := (u_1, S_1), \dots, (u_m, S_m)$ is a sequence of pairs, where $u_i\in\mathbb{N}_0^k$ and $S_i\subseteq [k]$ for all $i\in [m]$; and
 \item   $b : [k]\times\{\sleaf,\sinner\} \rightarrow [m]\cup\{\perp\}$ is a mapping      such that for $i\in [k]$ the following holds:
    \begin{itemize}
        \item For $j := b(\ell, \sleaf)$ it holds that $(u_{j})_{\ell} \geq 1$ if such a vector $u_{j}$ exists, or $\perp$ otherwise.
        \item For $j := b(\ell, \sinner)$ it holds that $\ell\in S_{j}$ if such a set $S_{j}$ exists or $\perp$ otherwise.
    \end{itemize}
\end{itemize}
    We then call $b(\ell,\sinner)$ the \defi{inner big index} of $\ell$ in $\ord$, and $b(\ell, \sleaf)$ the \defi{outer big index} of $\ell$ in $\ord$.
    Further, we call $b$ a \defi{big-index mapping} of $O$.

    Given a pattern $(\iso, \mset)\in\mathcal{P}$, an \defi{ordered pattern} of $(\iso, \mset)$ is an ordered pattern $\ord_A = (\iso, O_A, b_A)$, where $O_A$ is an arbitrary sequence of the pairs in $A$. Given an ordered pattern $(\iso, O, b)$ over $m$ pairs, we define the \defi{underlying pattern} $p(\iso, O,b)$ as the pair $(\iso, \mset)$ where $\mset$ is the multiset $\langle O_j \colon j\in [m] \rangle$. 
    We call a family of ordered patterns $\ord$ \defi{weaker} than (respectively equivalent to) another family $\ord'$, if the underlying pattern $p(\ord)$ is weaker than (respectively equivalent to) the underlying pattern $p(\ord')$.
\end{definition}

We first define bad indices and bad coordinates as the ``obstacles'' that we need to remove from an ordered pattern to make it nice:

\begin{definition}
    Let $\ord = (\iso, O, b)$ be an ordered pattern with $O=(u_1,S_1),\dots, (u_m, S_m)$.
    For $i\in[m]$ we define the set of \defi{bad coordinates} of $i$ as $Z_{\ord}(i) = \emptyset$ if $(u_i, S_i)$ is a unit pair, and otherwise as $Z_{\ord}(i) := Z^{\sleaf}_{\ord}(i) \cup Z^{\sinner}_{\ord}(i)$, where
	\begin{align*}
	  	Z^{\sleaf}_{\ord}(i) &\coloneq \big\{(q, \sleaf)\colon q \in [k] \land b(q,\sleaf)\neq i \land (u_i)_{q}\geq 1\big\}, \\
    		Z^{\sinner}_{\ord}(i) &\coloneq \big\{(q, \sinner)\colon q \in [k] \land b(q,\sinner)\neq i \land q \in S_i\big\}.
	\end{align*}
         We define the set of \defi{bad indices} of $\ord$ as $I_{\ord} \coloneq I^{\sleaf}_{\ord} \cup I^{\sinner}_{\ord}$ with $I^{\sleaf}_{\ord} \coloneq \{i \in [m]\colon Z^{\sleaf}_{\ord}(i)\neq \emptyset\}$ and $I^{\sinner}_{\ord} \coloneq \{i \in [m]\colon Z^{\sinner}_{\ord}(i)\neq \emptyset\}$.
\end{definition}

\begin{observation}\label{obs:elst:cw_ord_nice}
    Let $\ord = (\iso, O,b)$ be an ordered pattern. Then $p(\ord)$ is nice if and only if it holds that $I_{\ord} = \emptyset$ and $O$ doesn't contain any zero pairs.
\end{observation}

Now we introduce two operations that allow us to ``shift'' requirements between vectors in an ordered pattern. Intuitively, this corresponds to rerouting edges in a solution. Essentially, we show that the patterns resulting from these operations on a pattern correctly replace this pattern, i.e., the initial pattern behaves like an ``OR'' of the resulting patterns, which means that the initial pattern can be extended to a solution if and only if at least one of the resulting patterns can be extended to a solution.

Informally, applying $\pi_1$ and $\pi_2$ to an ordered pattern produces two new ordered patterns as follows. We select one bad index $i\in I_\ord$ and one bad coordinate $(q, X)\in Z_\ord(i)$. In the resulting patterns, the index $(q, X)$ will no longer be a bad coordinate for $i$.

\begin{definition}
\label{def:elst:cw_pi}
    We define the functions \defi{$\rrule_1, \rrule_2$} that take as input
    an ordered pattern $\ord \coloneq (\iso, O, b)$ with $O = (u_1, S_1), \dots, (u_m, S_m)$, an index $i\in[m]$ and a bad coordinate $(q, X)\in Z_{\ord}(i)$.
         Let $j = b(q, X)$.

    We define $\rrule_1(\ord, i, q, X)$ as the ordered pattern $(\iso, O^1,b^1)$  with $O^1 = (u'_1,S'_m),\dots (u'_m, S'_m)$ as follows:
    For $X = \sleaf$ we define
    \[
    u'_h = \begin{cases}
        u_j + u_i - \idv_{q} & \colon h = j,\\
        \idv_{q} & \colon h = i,\\
        u_h & \colon \text{otherwise,}
    \end{cases}
    \qquad\text{and}\qquad
    S'_h = \begin{cases}
        S_j \cup S_i & \colon h = j,\\
        \emptyset & \colon h=i,\\
        S_h & \colon \text{otherwise,}
    \end{cases}
    \]
    and for $X = \sinner$ we define 
    \[
    u'_h = \begin{cases}
        u_j + u_i & \colon h = j,\\
        \mathbf{0} & \colon h=i,\\
        u_h & \colon \text{otherwise.}
    \end{cases}
    \qquad\text{and}\qquad
    S'_h = \begin{cases}
        S_j \cup (S_i \setminus \{q\}) = S_j \cup S_i & \colon h = j,\\
        \{q\} & \colon h = i,\\
        S_h & \colon \text{otherwise,}
    \end{cases}
    \]
    We define $b^1$ by setting $b^1(q') = j$ for all values $q' \in b^{-1}(i)$, and $b^1(q') = b(q')$ for all other values of $q'$.

    We define the ordered pattern $(\iso, O^2, b^2)\coloneq\rrule_2(\ord, i, q, X)$ with $O^2 = (u'_1,S'_1),\dots (u'_m, S'_m)$ as follows:
    For $X=\sleaf$ we define $S'_h=S_h$ for all $h\in [m]$ and 
    \[
    u'_h = \begin{cases}
        u_j + (u_i)_{q} \cdot \idv_{q} & \colon h = j,\\
        u_i - (u_i)_{q} \cdot \idv_{q} & \colon h = i,\\
        u_h & \colon \text{otherwise,}
    \end{cases}
    \]
    and for $X = \sinner$, we define $u'_h = u_h$ for all $h\in [m]$ and 
    \[
    S'_h = \begin{cases}
        S_j \cup \{q\} = S_j & \colon h = j,\\
        S_i \setminus \{q\} & \colon h = i,\\
        S_h & \colon \text{otherwise.}
    \end{cases}
    \]
    Finally, we define $b^2 = b$.
\end{definition}

Intuitively, $\pi_1$ shifts all inner vertices and leaves except for a single vertex of label $q$ to the big index of $q$, while $\pi_2$ shifts all inner vertices or all leaves of label $q$, keeping all the other labels untouched.
Note that in the definition above, the first index of each of $\ord$, $\rrule_1(\ord, i, q, X)$, and $\rrule_2(\ord, i, q, X)$ is the same, i.e., both mappings $\rrule_1$ and $\rrule_2$ preserve the vector $\iso$.

In the following lemmas we show that the mappings $\rrule_1$ and $\rrule_2$ produce two ordered patterns equivalent to the original pattern, proving the correctness of our reduction rules.
For convenience, we say the \defi{type} of a vertex in a graph is either isolated, leaf, or internal, if its degree is 0, 1, or at least two, respectively.

\begin{longlemma}\label{lem:elst:cw_equiv_rrule_1}
    Let $\ord \coloneq (\iso, O, b)$ be an ordered pattern with $O = (u_1, S_1), \dots, (u_m, S_m)$, and $i\in[m]$. For a bad coordinate $(q, X) \in Z_{\ord}(i)$ where $q\in [k]$ and $X\in \{\sleaf, \sinner\}$, let $\ord^1 = \rrule_1(\ord, i, q, X)$.
   Then it holds that $\ord^1 \preceq \ord$.
\end{longlemma}
\begin{proof}
   Let $\ord^1 = (\iso, O^1, b^1)$ with $O^1 = (u'_1,S'_m),\dots (u'_m, S'_m)$ and let $j=b(q, X)$.
   Let $F$ and $F'$ be the corresponding fixed forests of $p(\ord)$ and $p(\ord^1)$, respectively.
   For $t \in [m]$, let $C_t$ (resp., $C'_t$) be the trees in $F$ (resp., $F'$) corresponding to $(u_t, S_t)$ (resp. $(u'_t, S'_t)$).
   By definition of $\rrule_1$ and by construction, for $t \in [m] \setminus \{i, j\}$, there is an isomorphism between $C_t$ and $C'_t$ that also preserves the labels.
   For convenience, we consider $C_t$ and $C'_t$ to be identical for these values of $t$.
   Further, observe that regardless of $X=\sleaf$ or $X=\sinner$, all positive vertices of $C'_i$ have label $q$.

   For a mapping $\sigma\in\mathcal{A}$, an integer value $\ell\in\mathbb{N}_0$ and a fixed forest $F''$, suppose that there exists a $(\sigma, \ell)$-spanning tree $T'$ for $F''$ and $F'$.
   We construct a $(\sigma, \ell)$-spanning tree for $F''$ and $F$ as follows.
   We start by removing all vertices of $C'_i$ and $C'_j$ and their incident edges from $T'$.
   Then we add $C_i$ and $C_j$ (by a disjoint union) to the resulting graph to obtain $\bar{T}$.
   We now aim to add some edges to $\bar{T}$ to form a $(\sigma, \ell)$-spanning tree for $F''$ and $F$.
   
   Consider a connected component $C$ of $\bar{T}$ that is not $C_i$ or $C_j$.
   There are three cases.
   
   \noindent
   \textbf{Case 1.} \textit{$C$ is incident to only $C'_i$ in $T$.}
   In this case, there is a removed edge in the construction of $\bar{T}$ that connects a vertex $u \in C$ and a vertex $v' \in C'_i$.
   As argued above $v'$ has label $q$.
   By assumption, $C_i$ has a vertex $v$ of the same type and label as $v'$.
   Then we add an edge to $\bar{T}$ connecting $u$ and $v$.

   \noindent   
   \textbf{Case 2.} \textit{$C$ is incident to only $C'_j$ in $T$.}
   In this case, there is a removed edge in the construction of $\bar{T}$ that connects a vertex $u \in C$ and a vertex $v' \in C'_j$.
   If $C_j$ has a vertex $v$ of the same type and label as $v'$, then we add an edge to $\bar{T}$ connecting $u$ and $v$.
   Otherwise, by definition of $\rrule_1$, there is a vertex $w$ with the same type and label as $v'$ in $C_i$, we then add an edge to $\bar{T}$ connecting $u$ and $w$.

   \noindent   
   \textbf{Case 3.} \textit{$C$ is incident to both $C'_i$ and $C'_j$ in $T$.}
   Note that there is only one such connected component, since $T$ is a tree.
   There must be two removed edges, one between a vertex $u_i \in C$ and a vertex $v'_i \in C'_i$ and another between a vertex $u_j \in C$ and a vertex $v'_j \in C'_j$.
   As argued above, $v'_i$ has label $q$, and $v'_i$ must be a leaf in $C'_i$ if $X = \sleaf$ and an internal vertex otherwise.
   Consider two subcases.

   \noindent   
   \textbf{Subcase 3.1.} \textit{$C_j$ has a vertex $v_j$ of the same type and label as $v'_j$.} 
   Then we add an edge to $\bar{T}$ connecting $u_j$ and $v_j$.
   Further, similar to Case 1, $C_i$ has a vertex $v_i$ of the same type and label as $v'_i$.
   We then also add an edge to $\bar{T}$ connecting $u_i$ and $v_i$.

   \noindent   
   \textbf{Subcase 3.2.} \textit{$C_j$ does not have a vertex $v_j$ of the same type and label as $v'_j$.} 
   Then similar to the argument in Case 2, by definition of $\rrule_1$, this implies that $C_i$ has a vertex $v_i$ of the same type and label as $v'_j$.
   Further, by assumption that $j = b(q, X)$, $C_j$ has a vertex $v_j$ with the same type and label as $v'_i$.
   Then we add two edges $u_j v_i$ and $u_i v_j$ to $\bar{T}$.
   
   Let $T$ be the resulting graph after adding the edges as described above.
   By construction, there is a unique component of $\bar{T}$ connecting to $C_i$ and $C_j$ in $T$, and other components of $\bar{T}$ only connects with exactly one of $C_i$ or $C_j$.
   Since we only use one edge for each of these connections, and since each component of $\bar{T}$ is a tree, $T$ then is a spanning tree of $F'' \oplus_\alpha F$.
   Further, the incident vertices of the added edges in the process above have the same type (in $\bar{T}$) and labels as the incident vertices of the removed edges from $T'$.
   Hence, the number of positive leaves in $T$ and $T'$ are the same.
   Finally, since we only replaced edges between $F''$ and $F'$ with edges between $F''$ and $F$ whose endpoints have the same labels, it holds for each $(i,j)\in \allpairs$, that the number of edges between a vertex of label $i$ in $F''$ and a vertex of label $j$ in $F$ is exactly $\sigma(i,j)$.
   Therefore, $T$ is a $(\sigma, \ell)$-spanning tree for $F''$ and $F$ which concludes the proof.
\end{proof}

\begin{longlemma}\label{lem:elst:cw_equiv_rrule_2}
    Let $\ord \coloneq (\iso, O, b)$ be an ordered pattern with $O = (u_1, S_1), \dots, (u_m, S_m)$, and $i\in[m]$. For a bad coordinate $(q, X) \in Z_{\ord}(i)$ for $q\in [k]$ and $X\in \{\sleaf, \sinner\}$, let $\ord^2 = \rrule_2(\ord, i, q, X)$.
   Then it holds that $\ord^2 \preceq \ord$.
\end{longlemma}
\begin{proof}
   We follow the same approach as in the proof of \cref{lem:elst:cw_equiv_rrule_1}.
   In particular, let $\ord^2 = (\iso, O^2, b^2)$ with $O^2 = (u'_1,S'_m),\dots (u'_m, S'_m)$. Let $j=b(q, X)$.
   Let $F$ and $F'$ be the corresponding fixed forests of $p(\ord)$ and $p(\ord^2)$, respectively.
   Then we define $C_t, C'_t, T', \bar{T}$ the same way as in the proof of \cref{lem:elst:cw_equiv_rrule_1}.
   We now also aim to add some edges to $\bar{T}$ to form a $(\sigma, \ell)$-spanning tree for $F''$ and $F$.
   
   Consider a connected component $C$ of $\bar{T}$ that is not $C_i$ or $C_j$.
   There are three cases.

   \noindent   
   \textbf{Case 1.} \textit{$C$ is incident to only $C'_i$ in $T$.}
   In this case, there is a removed edge in the construction of $\bar{T}$ that connects a vertex $w \in C$ and a vertex $v' \in C'_i$.
   By definition of $\rrule_2$, regardless of $X = \sleaf$ or $X = \sinner$, $C_i$ has a vertex $v$ of the same type and label as $v'$.
   Then we add the edge $wv$ to $\bar{T}$.

   \noindent   
   \textbf{Case 2.} \textit{$C$ is incident to only $C'_j$ in $T$.}
   In this case, there is a removed edge in the construction of $\bar{T}$ that connects a vertex $w \in C$ and a vertex $v' \in C'_j$.
   By definition of $\rrule_2$ and by the fact that $j = b(q,X)$, we can see that for every combination of type and label that appears in $C'_j$, it also appears in $C_j$.
   Hence, there is a vertex $v$ in $C_j$ with the same type and label as $v'$.
   Then we add the edge $wv$ to $\bar{T}$.

   \noindent   
   \textbf{Case 3.} \textit{$C$ is incident to both $C'_i$ and $C'_j$ in $T$.}
   Note that there is only one connected component in this case, since $T$ is a tree.
   There must be two removed edges, one between a vertex $w_i \in C$ and a vertex $v'_i \in C'_i$ and another between a vertex $w_j \in C$ and a vertex $v'_j \in C'_j$.
   Similar to the argument in Case 1, $C_i$ has a vertex $v_i$ with the same type and label as $v'_i$.
   Similar to the argument in Case 2, $C_j$ has a vertex $v_j$ with the same type and label as $v'_j$.
   Then we add the edges $v_i w_i$ and $v_j w_j$ to $\bar{T}$.
   
   The argument for the resulting $T$ to be a $(\sigma, \ell)$-spanning tree for $F''$ and $F$ is similar to the proof of \cref{lem:elst:cw_equiv_rrule_1}.
\end{proof}

\begin{longlemma}\label{lem:elst:cw_equiv_rrule_12}
     Let $\ord \coloneq (\iso, O, b)$ be an ordered pattern with $O = (u_1, S_1), \dots, (u_m, S_m)$, and $i\in[m]$. For a bad coordinate $(q, X) \in Z_{\ord}(i)$ for $q\in [k]$ and $X\in \{\sleaf, \sinner\}$, and let $\ord^1 = \rrule_1(\ord, i, q, X)$ and $\ord^2 = \rrule_2(\ord, i, q, X)$.
    Then it holds that $\{ \ord \} \preceq \{ \ord^1, \ord^2 \}$.
\end{longlemma}
\begin{proof}
   Let $\ord^1 = (\iso, O^1, b^1)$ with $O^1 = (u^1_1,S^1_m),\dots (u^1_m, S^1_m)$. Let $j=b(q, X)$.
   Let $\ord^2 = (\iso, O^2, b^2)$ with $O^2 = (u^2_1,S^2_m),\dots (u^2_m, S^2_m)$.
   Let $F$, $F^1$, and $F^2$ be the corresponding fixed forests of $p(\ord)$, $p(\ord^1)$, and $p(\ord^2)$, respectively.
   For $t \in [m]$, let $C_t$ be the tree of $F$ corresponding to $(u_t, S_t)$.
   We define $C^1_t$ and $C^2_t$ analogously for $F^1$ and $F^2$.
   Similar to the proof of \cref{lem:elst:cw_equiv_rrule_1}, we consider $C_t$, $C^1_t$, and $C^2_t$ to be identical for $t \in [m] \setminus \{i,j\}$.

   For a mapping $\sigma\in\mathcal{A}$, an integer value $\ell\in\mathbb{N}_0$ and a fixed forest $F''$,
   suppose there exists a $(\sigma, \ell)$-spanning tree $T$ for $F''$ and $F$.
   We need to construct a $(\sigma, \ell)$-spanning tree either for $F''$ and $F^1$ or for $F''$ and $F^2$.
   Let $T'$ be the forest resulting from removing all the vertices of $C_i$ and $C_j$ from $T$.
   Let $C^*$ be the unique connected component of $T'$ that is connected to $C_i$ and $C_j$ directly by an edge in $T$.
   That is, there exist two edges in $T$, one connecting a vertex $w_i \in C^*$ with a vertex $v_i \in C_i$ and one connecting a vertex $w_j \in C^*$ with a vertex $v_j \in C_j$.
   We distinguish two cases depending on the type and label of $v_i$.
   For a vertex $v$ in $F$, we define the \emph{label-type} of $v$ as the pair $(q, X)$ where $q = \lab_F(v)$, and $X=\sleaf$ if $v$ is a leaf of $F$, or $X=\sinner$ otherwise.

   \noindent   
   \textbf{Case 1.} If the label-type of $v_i$ is not equal to $(q, X)$:
   Since $C^2_i$ results from $C_i$ by removing all vertices of the label-type $(q, X)$, and adding them to $C^2_j$ (where we upper bound the multiplicity of inner vertices by one), it holds first that $C^2_i$ has a vertex $v^2_i$ of the same type and label as $v_i$, and $C^2_j$ has a vertex $v^2_j$ of the same type and label as $v_j$.
   We then add the edges $v^2_i w_i$ and $v^2_j w_j$ to $T'$.
   Second, We can define an injective mapping between the vertices of $C_j$ and the vertices of $C^2_j$ that preserves both type and label. We can extend this mapping to cover the vertices of $C_i$ of label-type $(q, X)$, where the restriction of the resulting mapping to leaves (on both sides) is even bijective. Let us extend this mapping by the identity over the rest of $T$. Then we can replace each edge between $F$ and $F''$ in $T$ that is not incident to $C_i$ or $C_j$ with the edge between the images of its endpoints under this mapping. Intuitively, we have shifted all edges whose endpoints in $C_j$ have label-type $(q, X)$ to $C^2_j$.
   
   By the same argument as in the end of the proof of \cref{lem:elst:cw_equiv_rrule_1}, we obtain that the resulting graph is a $(\sigma,\ell)$-spanning tree for $F''$ and $F^2$.
   
   \noindent   
   \textbf{Case 2.} Otherwise, the vertex $v_i$ has label-type $(q, X)$:
   Since $C^1_i$ results from $C_i$ by removing all vertices but a single vertex of label-type $(q, X)$, and adding them to $C^1_j$, it holds first that $C^1_i$ 
   has a vertex $v^1_i$ with the same type and label as $v_i$ and $C^1_j$ has a vertex $v^1_j$ of the same label and type as $v_j$.
   We then add the edges $v^1_i w_i$ and $v^1_j w_j$ to $T'$. 
   With analogous arguments as in Case 1, we can 
   define a mapping between all vertices of $C_i$ and $C_j$ except for a single vertex of the label-type $(q, X)$ and the vertices of $C^1_j$ that preserve both type and label, and such that the restriction of this mapping to leaves is bijective. We extend this mapping by the identity over the rest of $T$.
   We then replace each edge between $F$ and $F''$ in $T$ that is not incident to $C_i$ or $C_j$ with the edge between the images of its endpoints under this mapping. Intuitively we shift all edges from $C_i$ to $C_j$, except for the single edge tha connects them.
   Again, we conclude that the resulting graph is a $(\sigma,\ell)$-spanning tree for $F''$ and $F^1$.
   Thus we obtain a $(\sigma, \ell)$-spanning tree for $F''$ and $F^1$.
\end{proof} 
Combining Lemmas~\ref{lem:elst:cw_equiv_rrule_1}, \ref{lem:elst:cw_equiv_rrule_2} and~\ref{lem:elst:cw_equiv_rrule_12} immediately yields the following corollary:
\begin{corollary}
\label{cor:elst:cw_equiv_rrule12}
     Let $\ord \coloneq (\iso, O, b)$ be an ordered pattern with $O = (u_1, S_1), \dots, (u_m, S_m)$, an index $i\in[m]$ and a coordinate $q\in[k]$ and $X\in \{\sleaf, \sinner\}$ such that $(q, X)\in Z_\ord (i)$.
           Let $j = b(q, X)$.
    Further, let $\ord^1 = \rrule_1(\ord, i, q, X)$ and $\ord^2 = \rrule_2(\ord, i, q, X)$. 
    Then $\{ \ord \} \simeq \{ \ord^1, \ord^2 \}$.
\end{corollary}

\para{Subroutines.}
Based on the functions $\pi_1$ and $\pi_2$, we formulate a subroutine that transforms a single pattern into an equivalent family of nice patterns.
The first two subroutines $\operatorname{ShiftVectorL}$ and $\operatorname{ShiftVectorI}$ take as input an ordered pattern $\ord \coloneq (\iso, O, b)$, and a bad index $i\in I_{\ord}$, and output a set of ordered patterns, that result from an exhaustive application of the operations $\rrule_1$ and $\rrule_2$ to all bad coordinates of either type ($\sleaf$ or $\sinner$) of $i$. We attach to each resulting pattern a single index. This is the index we shifted to by an application of $\rrule_1$, and is the only index that may have turned bad in the resulting pattern.

\begin{procedure}
\label{def:elst:cw_shift_vector_L}
    The operation \defi{$\operatorname{ShiftVectorL}$} takes as input a tuple $(\ord, i)$, where $\ord \coloneq (\iso, O, b)$ is an ordered pattern and  $i\in I_{\ord}^{\sleaf}$.
    Let $O = (u_1, S_1), \dots (u_m, S_m)$.
         Further, let $Z_\ord^\sleaf(i)=\{(z_1,\sleaf),\dots, (z_t, \sleaf)\}$.
    The operation outputs a set of pairs $\{(\ord^1, p_1),\dots,(\ord^{t+1}, p_{t+1})\}$, where $p_j=b(z_j, \sleaf)$ for $j\in [t]$ and $p_{t+1}=\perp$. Let  
    We define the pattern $\ord^j = (\iso, O^j, b^j)$ for $j\in[t+1]$ with 
    $O^j = (u^j_1, S^j_1), \dots, (u^j_m, S^j_m)$ as follows:

    For $j \in [t]$ and $d\in[m]$ we define the vector $(u^j_d, S^j_d)$ as follows:
    If $d = i$, we define $(u^j_d, S^j_d) = (\idv_{z_j}, \emptyset)$. 
    If $d = b(z_j, \sleaf)$, we define $S^j_d = S_d \cup S_i$ and 
    \[
    (u^j_d)_z = \begin{cases}
        (u_d)_z & \colon z \in \{z_1, \dots, z_{j-1}\} \text{ and } d \neq b(z, \sleaf),\\
        (u_d)_z + (u_i)_z - 1 & \colon z = z_j,\\
        (u_d)_z + (u_i)_z & \colon \text{otherwise.}
    \end{cases}
    \]
    Otherwise, we define $S^j_d = S_d$ and
    \[
    (u^j_d)_z = \begin{cases}
        (u_d)_z + (u_i)_z & \colon z \in \{z_1, \dots, z_{j-1}\} \text{ and } d = b(z, \sleaf),\\
        (u_d)_z & \colon \text{otherwise.}
    \end{cases}
    \]
    We define $b^j$ as the mapping such that $b^j(z', X') = b(z_j,  \sleaf)$ for all values $(z',X') \in b^{-1}(i)$, and $b^j(z', X') = b(z', X')$ for all other pairs $(z', X') \in [k]\times\{\sinner, \sleaf\}$.
    
    Finally, we define $b^{t+1} = b$. 
    For $d\in[m]$, we define $S^{t+1}_d = S_d$ and
    \[
    (u^{t+1}_d)_z = \begin{cases}
        0 & \colon d = i \land (z, \sleaf) \in Z_{\ord}(i),\\
        (u_j)_z + (u_i)_z & \colon d = b(z, \sleaf) \land (z, \sleaf) \in Z_{\ord}(i),\\
        (u_d)_z & \colon \text{otherwise.}
    \end{cases}
    \]
    \xqed
\end{procedure}

\begin{observation}\label{obs:elst:shift_vector_L}
    Let $(\ord', p)$ be some pair in the output of $\operatorname{ShiftVectorL}(\ord, i)$ with $\ord'= (\iso, O', b')$. If $p\neq perp$, then all following hold:
    \begin{itemize}
        \item $I^{X}_{\ord'} = I^{X}_{\ord}\setminus \{i\}$ or $I^{X}_{\ord'} = (I^{X}_{\ord}\setminus \{i\}) \cup \{p\}$ for each value $X \in \{\sleaf, \sinner\}$, and
        \item $b'([k]\times \{\sleaf, \sinner\})= b([k]\times \{\sleaf,\sinner\})\setminus \{i\}$.
    \end{itemize}

    For $p=\perp$, the following points hold:
    \begin{itemize}
        \item $I^{\sleaf}_{\ord'} = I^{\sleaf}_{\ord}\setminus \{i\}$,
        \item $I^{\sinner}_{\ord'} = I^{\sinner}_{\ord}$,
        \item and $b' = b$.
    \end{itemize}
\end{observation}

\begin{procedure}
\label{def:elst:cw_shift_vector_I}
    The operation \defi{$\operatorname{ShiftVectorI}$} takes as input a tuple $(\ord, i)$, where $\ord \coloneq (\iso, O, b)$ is an ordered pattern and $i\in I_{\ord}^{\sinner}$.
    Let $O = (u_1, S_1), \dots (u_m, S_m)$.
    Further, let $Z_\ord^\sinner(i)=\{(z_1,\sinner),\dots,(z_t, \sinner)\}$, where $z_1<\dots<z_t$. 
    The operation outputs a set of pairs $\{(\ord^1, p_1),\dots,(\ord^{t+1}, p_{t+1})\}$, where $p_j=b(z_j, \sinner)$ for $j\in [t]$ and $p_{t+1}=\perp$. We define the patterns
    $\ord^j = (\iso, O^j, b^j)$ for $j\in[t+1]$ with 
    $O^j = (u^j_1, S^j_1), \dots, (u^j_m, S^j_m)$ as follows:
    
    For $j \in [t]$ and $d \in [m]$, we define
    \[
    u^{j}_d = \begin{cases}
        \mathbf{0} & \colon d = i,\\
        u_d + u_i & \colon d = b(z_j, \sinner),\\
        u_d & \colon \text{otherwise,}
    \end{cases}
    \qquad
    S^{j}_d = \begin{cases}
        \{ z_j \} & \colon d = i,\\
         S_d \cup (S_i \setminus \{z_1, \dots, z_{j-1}\}) & \colon d = b(z_j, \sinner),\\
        S_d & \colon \text{otherwise.}
    \end{cases}
    \]

    We define $b^j$ as the mapping such that $b^j(z', X') = b(z_j,  \sinner)$ for all values $(z',X') \in b^{-1}(i)$, and $b^j(z', X') = b(z', X')$ for all other pairs $(z', X') \in [k]\times\{\sinner, \sleaf\}$.

    Finally, we define $b^{t+1} = b$. 
    For $d\in[m]$, we define $u^{t+1}_d = u_d$ and
    \[
        S^{t+1}_d = \begin{cases}
        S_d \setminus \{z_1, \dots, z_t\} & \colon d = i \\
        S_d & \colon \text{otherwise.}
    \end{cases}
    \]
\end{procedure}

\begin{observation}\label{obs:elst:shift_vector_I}
    Let $(\ord', p)$ be some pair in the output of $\operatorname{ShiftVectorI}(\ord, i)$ with $\ord'= (\iso, O', b')$. If $p\neq \perp$, then all of the following hold:
    \begin{itemize}
        \item $I^{X}_{\ord'} = I^{X}_{\ord}\setminus \{i\}$ or $I^{X}_{\ord'} = (I^{X}_{\ord}\setminus \{i\}) \cup \{p\}$ for each value $X \in \{\sleaf, \sinner\}$, and
        \item $b'([k]\times \{\sleaf, \sinner\})= b([k]\times \{\sleaf,\sinner\})\setminus \{i\}$.
    \end{itemize}

    For $p=\perp$, the following points hold:
    \begin{itemize}
        \item $I^{\sinner}_{\ord'} = I^{\sinner}_{\ord}\setminus \{i\}$,
        \item $I^{\sleaf}_{\ord'} = I^{\sleaf}_{\ord}$,
        \item and $b' = b$.
    \end{itemize}
\end{observation}

\begin{lemma}\label{lem:elst:cw_shift_vector_equiv}
    Given an ordered pattern $\ord$ and an index $i\in I^{\sleaf}_{\ord}$, let $P$ be the set resulting from $\operatorname{ShiftVectorL}(\ord, i)$ and let $S=\{\ord\colon (\ord, p)\in P\}$. Then $S$ is equivalent to $\ord$. The same holds if $i\in I^{\sinner}_{\ord}$ and we use $\operatorname{ShiftVectorI}(\ord, i)$ instead.
\end{lemma}

\begin{proof}
    We prove the claim for $\operatorname{ShiftVectorL}(\ord, i)$ and $i\in I^{\sleaf}_{\ord}$.
    The proof for $\operatorname{ShiftVectorI}(\ord, i)$ follows from the analogous arguments applied to $Z^{\sinner}_{\ord}(i)$ instead of $Z^{\sleaf}_{\ord}(i)$.

    Let $Z_\ord^\sleaf(i)=\{(z_1,\sleaf),\dots, (z_t, \sleaf)\}$, and let $\ord^1, \dots, \ord^{t+1}$ be the set of ordered patterns as defined in \cref{def:elst:cw_shift_vector_L}.
         We define the sequence of ordered patterns $\gamma^0, \gamma^1, \dots, \gamma^t$ with
    $\gamma^0 = \ord$, and $\gamma^j=\rrule_2(\gamma^{j-1}, i, z_j, X)$ for $j\in[t]$.
    Then it holds for each $j\in[t]$ that $\ord^j=\rrule_1(\gamma^{j-1}, i, z_j, X)$.
    We show by induction over $j$ that $\{\ord^1, \dots, \ord^j\}\cup \{\gamma^j\}$ is equivalent to $\{\ord\}$. 
    The lemma then follows from the observation that $\ord^{t+1} = \gamma^t$.

    For $j=0$ the claim holds trivially since $\gamma^0 = \ord$.
    Assume now that the claim holds for $j-1$, i.e., that $\{\ord^1, \dots, \ord^{j-1}\}\cup \{\gamma^{j-1}\}$ is equivalent to $\{\ord\}$. 
    Since it holds that $\ord^j=\rrule_1(\gamma^{j-1}, i, z_j, X)$ and $\gamma^j=\rrule_2(\gamma^{j-1},i,z_j,X)$, 
    it holds again by \cref{cor:elst:cw_equiv_rrule12} that $\{\ord^1, \dots, \ord^{j}\}\cup \{\gamma^j\}$ is equivalent to $\{\ord\}$.
\end{proof}

Now we define the procedure $\operatorname{ShiftVector}$ that combines the two operations above. 

\begin{procedure}
    \label{def:elst:cw_shift_vector}
    The operation \defi{$\operatorname{ShiftVector}$} takes as input a tuple $(\ord, i)$, where $\ord \coloneq (\iso, O, b)$ is an ordered pattern and $i\in I_{\ord}$. Let $\{(\ord^1, p_1), \dots, (\ord^{t+1}, p_{t+1})\}=\operatorname{ShiftVectorL}(\ord,i)$, where $(\ord^{t+1}, p_{t+1})$ is the unique pair such that $p_{t+1}=\perp$.   
    Let $\{(\tilde{\ord}^1, \tilde{p}_1), \dots, (\tilde{\ord}^{s+1}, \tilde{p}_{s+1})\} = \operatorname{ShiftVectorI}(\ord^{t+1}, i)$. Then the operation outputs the set \[\{(\ord^1, p_1), \dots, (\ord^{t},p_t), (\tilde{\ord}^1, \tilde{p}_1), \dots, (\tilde{\ord}^{s+1}, \tilde{p}_{s+1})\}.\]
\end{procedure}

Then the following statement is an immediate consequence of \cref{lem:elst:cw_shift_vector_equiv}.

\begin{corollary}
\label{cor:elst:cw_shift_vector_equiv}
    Given an ordered pattern $\ord$ and an index $i\in I_{\ord}$, let $P$ be the set resulting from $\operatorname{ShiftVector}(\ord, i)$ and let $S=\{\ord\colon (\ord, p)\in P\}$. Then $S$ is equivalent to $\ord$.
\end{corollary}

The following observation follows by Observations~\ref{obs:elst:shift_vector_I} and \ref{obs:elst:shift_vector_L}, since the union operation preserves equivalence.

\begin{observation}
    Let $(\ord', p)$ be some pair in the output of $\operatorname{ShiftVector}(\ord, i)$ with $\ord'= (\iso, O', b')$. Then all following hold:
    \begin{itemize}
        \item either $I_{\ord'} = I_{\ord}\setminus \{i\}$ or $I_{\ord'} = (I_{\ord}\setminus \{i\}) \cup p$ for each value $X\in\{\sleaf,\sinner\}$; and
        \item $b'([k]\times \{\sleaf,\sinner\})\subseteq b([k]\times\{\sleaf,\sinner\})$.
    \end{itemize}
    In particular, it holds for the unique pair $(\ord, p)$ with $p=\perp$ that $I_{\ord'} = I_{\ord}\setminus \{i\}$.
\end{observation}

Now we define the recursive operation $\operatorname{ReduceVector}$ that exhaustively applies the operation $\operatorname{ShiftVector}$ to remove a single bad index, without creating further bad indices.

\begin{procedure}
    \label{def:elst:cw_reduce_vector}
    The recursive operation \defi{$\operatorname{ReduceVector}$} takes as input an ordered pattern $\ord$ over $m$ pairs, an index $i\in[m]$, and outputs a family of ordered patterns $C$, defined as follows.
    If $i \notin I_{\ord}$, we define $C \coloneq \{\ord\}$.
         Otherwise, let $\{(\ord^1, p_1),\dots,(\ord^{t+1}, p_{t+1})\}= \operatorname{ShiftVector}(\ord,i)$, and let $(\ord^{t+1}, p_{t+1})$ be the unique pair with $p_{t+1}=\perp$.
    Then for $j\in[t]$ we define the sets $C_j \coloneq \operatorname{ReduceVector}(\ord^j, p_j)$.
    Finally, we define $C \coloneq \{\ord^{t+1}\} \cup \bigcup_{j\in[t]} C_j$.
\end{procedure}

Now we prove some properties of this operation that we need later to show the correctness of our procedures. We start with the following definition:

\begin{definition}
    Let $\ord$ be an ordered pattern over $m$ pairs and $i\in[m]$.
    We define the tree $\mathcal{T}_{\ord, i}$ as the recursion tree of the operation $\operatorname{ReduceVector}(\ord, i)$ (we omit the subscript $\ord$ when clear from context). 
    We also label each node of $\mathcal{T}_{\ord,i}$ corresponding to a call $(\ord', i')$ with the index $i'$.
\end{definition}

\begin{observation}\label{obs:elst:cw_reduce_vector}
              Let $\ord=(\iso, O, b)$ and $\ord'=(\iso, O', b')$ be ordered patterns such that $(\ord', i')$ is a child of the root of the recursion tree $\mathcal{T}_{\ord,i}$.
    It holds that:
    \begin{enumerate}
        \item $(O')_i$ is a unit pair. In particular, it holds that $i\notin I_{\ord'}$;
        \item $(b')^{-1}(i) = \emptyset$;
        \item $i' = b(q, X)$ for some bad coordinate $(q, X)$ of $i$ with respect to $b$; and
        \item $b'([k] \times \{\sleaf, \sinner\}) \subseteq b([k] \times \{\sleaf, \sinner\})$.
    \end{enumerate}
\end{observation}

\begin{lemma}\label{lem:elst:cw_reduce_vector_distinct_labels}
    There cannot exist two distinct nodes on the same root-leaf path of the recursion tree $\mathcal{T}_i$ sharing the same label $j\in[m]$.
\end{lemma}

\begin{proof}
    It holds by induction over the depth of the nodes in the recursion tree, using item (2) of \cref{obs:elst:cw_reduce_vector}, that for a node corresponding to a call $\operatorname{ReduceVector}(\ord', i')$, and for a descendant node corresponding to a call $\operatorname{ReduceVector}(\ord'', i'')$, it holds that $(b'')^{-1}(i')= \emptyset$. The lemma follows then by item (3) of \cref{obs:elst:cw_reduce_vector}.
\end{proof}

\begin{lemma}\label{lem:elst:cw_reduce_vector_size}
    Let $\mathcal{T}_{\ord,i}$ be the recursion tree of a call $\operatorname{ReduceVector}(\ord, i)$.
    Then $\mathcal{T}_{\ord,i}$ has at most $(2k)^k$ nodes. In particular, the procedure runs in time $k^{\oh(k)}$, and it holds that $|\operatorname{ReduceVector}(\ord, i)|\leq k^{\bigoh(k)}$.
\end{lemma}

\begin{proof}
    A call of $\operatorname{ReduceVector}(\ord, i)$ makes at most $|Z_{\ord}(i)|\leq 2k$ recursive calls, and hence, each node of $\mathcal{T}_{\ord,i}$ has at most $2k$ children. Moreover, by \cref{lem:elst:cw_reduce_vector_distinct_labels}, the labels assigned to the nodes of each root-leaf path of the recursion tree are pairwise distinct. Hence, $\mathcal{T}_{\ord,i}$ has height at most $k$. It follows that $\mathcal{T}_{\ord,i}$ has at most $(2k)^{k}$ nodes. The bound on the running time follows then, since each node of the recursion tree, aside from the recursive calls, only makes a call of $\operatorname{ShiftVector}$, which runs in polynomial time.
    
    Now let $T_h$ be the maximum size of a family returned by a call $\operatorname{ReduceVector}(\ord, i)$ with a recursion tree of height $h$. Then it holds that $T_0 = 1$, and for $h\geq 1$, it holds that $T_h\leq k\cdot T_{h-1} + 1\leq (k+1)T_{h-1}$. This implies that $T_k\leq (k+1)^k\leq k^{\bigoh(k)}$.
\end{proof}

\begin{lemma}\label{lem:elst:cw_reduce_vector_equiv}
    Let $\ord$ be an ordered pattern over $m$ pairs, $i\in[m]$. Then the operation $\operatorname{ReduceVector}(\ord, i)$ produces a family of ordered patterns equivalent to $\{\ord\}$, such that $I_{\ord'}\subseteq I_{\ord}\setminus\{i\}$ and $b'([k] \times \{\sleaf, \sinner\})\subseteq b([k] \times \{\sleaf, \sinner\})$ for each pattern $\ord'$ in the resulting family.
\end{lemma}

\begin{proof}
    Let $(\ord^0,i_0),\dots,(\ord^r,i_r)$ with be some sequence of calls of the $\operatorname{ReduceVector}$ operation corresponding to some root-leaf path of the recursion tree $\mathcal{T}_{\ord,i}$. 
    For $t \in [r]_0$, let $\ord^t = (\iso, O^t, b^t)$.
    Note that $i_0,\dots,i_r$ are the labels assigned to the nodes of this path. Then in each iteration, the coordinates of $i_j$ are shifted to either their big indices or to $i_{j+1}$. 
    Hence, $i_{j+1}$ is the only index that can turn bad in $O^{j+1}$ with respect to $b^{j+1}$, i.e., $I_{\ord^{j+1}}\subseteq (I_{\ord^j}\setminus\{i_j\})\cup \{i_{j+1}\}$.
    It follows by induction, that $I_{\ord^r}\subseteq I_{\ord}\setminus\{i\}$ holds for the last pattern $(\ord^r)$ in the sequence.

    Moreover, it holds by \cref{obs:elst:cw_reduce_vector}, $b^{j+1}([k] \times \{\sleaf, \sinner\})\subseteq b^j([k] \times \{\sleaf, \sinner\})$, and hence, $b^{r}([k] \times \{\sleaf, \sinner\})\subseteq b([k] \times \{\sleaf, \sinner\})$ holds also by induction. 
    The equivalence of the families also holds by induction using \cref{cor:elst:cw_shift_vector_equiv}, since we only replace a pattern with the family resulting from an application of $\operatorname{ShiftVector}$ on this pattern.
\end{proof}

 Finally, we define the $\operatorname{ReduceToNice}$ operation that, given a pattern $A$, outputs a family of nice patterns equivalent to $A$. The first step of this operation is a subroutine called $\operatorname{ProcessBig}$ that takes care of the big vectors, i.e., ensures that the big vectors have no bad coordinates.

\begin{procedure}\label{def:elst:cw_process_big}
    We define the subroutine \defi{$\operatorname{ProcessBig}$} that takes an ordered pattern $\ord$ as input, and outputs a family of ordered patterns as follows:
    Let $H_0 = \{\ord\}$. We compute $H_j$ from $H_{j-1}$ for $j\in[k]$ as follows:
    For each pattern $\ord'\in H_{j-1}$, we define the set $C_{\ord'}$, where $C_{\ord'} \coloneq \{\ord'\}$ if $I_{\ord'}\cap b'([k] \times \{\sleaf, \sinner\}) = \emptyset$, and $C_{\ord'} \coloneq \operatorname{ReduceVector}(\ord', i)$ for the smallest $i \in I_{\ord'}\cap b'([k] \times \{\sleaf, \sinner\})$ otherwise.

    We define $H_j = \bigcup_{\ord'\in H_{j-1}} C_{\ord'}$. Let $s$ be the smallest integer such that $H_j = H_{j+1}$. The operation outputs the family $H_s$.
\end{procedure}

\begin{lemma}\label{lem:elst:cw_process_big}
    It holds for each ordered pattern $\ord = (\iso, O, b)$, that $\operatorname{ProcessBig}(\ord)$ runs in time $k^{\bigoh(k)}$ and outputs a family $H$ of ordered patterns equivalent to $\ord$, such that for each pattern $\ord' = (\iso, O', b') \in H$, it holds that $I_{\ord'}\cap b'([k] \times \{\sleaf, \sinner\}) = \emptyset$, $b'([k] \times \{\sleaf, \sinner\})\subseteq b([k] \times \{\sleaf, \sinner\})$ and $O_j = O'_j$ for all $j\notin b([k] \times \{\sleaf, \sinner\})$. Moreover, the process finishes after at most $2k$ iterations.
\end{lemma}

\begin{proof}
    We start with the last claim. 
    For each ordered pattern $\ord' \in H_j$ with $\ord' \in \operatorname{ReduceVector}(\ord'',i)$ for some pattern pattern $\ord''\in H_{j-1}$, it holds that $i \in I_{\ord'}\cap b'([k], X)$. By \cref{lem:elst:cw_reduce_vector_equiv}, $I_{\ord'} \subseteq I_{\ord''}\setminus \{i\}$. Hence,  $I_{\ord'}\cap b'([k], \{\sleaf, \sinner\})$ is strictly smaller than $I_{\ord''}$. Therefore, it holds by item (3) of \cref{obs:elst:cw_reduce_vector} that the largest size of such a set over all patterns in $H_j$ decreases after each iteration. Hence, it must hold after $2k$ iterations for all patterns $\ord'\in H_j$ that $I_{\ord'}\cap b'([k]\times \{\sleaf, \sinner\}) = \emptyset$.

    The claim $b'([k]\times \{\sleaf, \sinner\})\subseteq b([k]\times \{\sleaf, \sinner\})$ holds by \cref{obs:elst:cw_reduce_vector}. The claim $O_j = O_j'$ for all $j\notin b([k]\times \{\sleaf, \sinner\})$ also holds by \cref{obs:elst:cw_reduce_vector},
    since we only call $\operatorname{ReduceVector}$ on indices in $b([k] \times \{\sleaf, \sinner\})$, which only shifts indices to $b([k]\times \{\sleaf, \sinner\})$. $H_j$ is equivalent to $H_{j-1}$ by \cref{lem:elst:cw_reduce_vector_equiv}, since we only replace a pattern with the family resulting from an application of $\operatorname{ReduceVector}$ on this pattern. It follows by induction over $j$ that $H$ is equivalent to $\ord$.

    We claim that $|H_j|\leq k^{\bigoh(k)}$ for all $j\geq 0$. The lemma then follows by \cref{lem:elst:cw_reduce_vector_size}. For $\ord'\in H_j$ with $\ord' = (\iso, O', b')$, by \cref{lem:elst:cw_reduce_vector_equiv} for each index $i\in I_{\ord'}$,  $O'_i = O_i$. This holds since each time we shift to an index $i$ (possibly making it a bad index), we directly call $\operatorname{ReduceVector}$ on $i$, and hence, by \cref{lem:elst:cw_reduce_vector_equiv}, $i$ cannot be a bad index in the patterns resulting from such a call. Therefore, each pattern of $H_j$ can be uniquely identified by the indices of $b([k] \times \{\sleaf, \sinner\})$ that were ``reduced'', and by their respective values. Let $R = \{i_1, \dots, i_r\}$ be the set of these indices that do not correspond to unit or zero pairs. 
    Let $(u_{i_1}, S_{i_1}), (u_{i_2}, S_{i_2}), \dots, (u_{i_r}, S_{i_r})$ be the pairs corresponding to the indices in $R$.
    Since all the indices of $R$ are not bad indices, the non-zero coordinates of the $u_{i_j}$'s for $j \in [r]$ must form a partition of a subset of $[k]$; likewise, every element of $[k]$ appears in at most one set among $S_{i_1}, \dots, S_{i_r}$. 
    Hence, we have at most $3^{(2k)}\cdot (2k)^{(2k)} \cdot (2k)^{(2k)} \cdot (2k+1)^{(2k)} = k^{\bigoh(k)}$ possible patterns in $H_j$, where the first term counts for each index $i\in b([k]\times\{\sinner, \sleaf\})$ whether $O'_i = O_i$, $O'_i$ is a unit or zero pair, or $i \notin I_{\ord'}$ and $O_i$ is not a unit or zero vector. 
    The second term counts the number of partitions of the non-zero coordinates, while the third term counts the number of ways to distribute $[k]$ among $S_{i_1}, \dots, S_{i_r}$.
    Finally, the last term counts the different unit and zero pairs for each index corresponding to a unit or zero vector $O'_i$.
\end{proof}

Now we are ready to describe the procedure $\operatorname{ReduceToNice}$, that takes as input a pattern and outputs a family of nice patterns equivalent to the input. 

\begin{procedure}\label{def:elst:cw_reduce_to_nice}
    The procedure \defi{$\operatorname{ReduceToNice}$} takes as input a pattern $A\in\mathcal{P}$, and outputs a family of patterns.
    The procedure first fixes an arbitrary ordered pattern $\ord_A\coloneqq(\iso, O,b)$ of $A$. It computes the family $S_0=\operatorname{ProcessBig}(\ord_A)$.
     
    Let $\{i_1,\dots,i_s\} \coloneqq I_{\ord}\setminus b([k]\times\{\sleaf, \sinner\})$. 
    For each value $j\in[s]$, we define $ S_j$ as follows: First, let $ S'_j = \bigcup_{\ord'\in  S_{j-1}} \operatorname{ReduceVector}(\ord', i_j)$.
    Then the family $ S_j$ results from $ S'_j$ as follows: we iterate over all patterns $\ord' \in  S'_j$, and we add $\ord'$ to $ S_j$ if the following two conditions hold: first, if $\ord'$ does not contain any zero pair, and second if $ S_j$ does not already contain a pattern $\ord''$, that shares the same underlying pattern with $\ord'$, i.e., $p(\ord') = p(\ord'')$. In other words, for each pattern $p\in\mathcal{P}$, we only add one ``representative'' ordered pattern of $p$ to $ S_j$ if such a pattern exists in $ S'_j$.
    The operation outputs the family
    $\big\{p(\ord)\colon \ord\in S_s\big\}$.
    \xqed
\end{procedure}

In the following we show that this subroutine runs in time $n^{\bigoh(k)}$, and correctly produces a family of nice patterns equivalent to the input pattern.

\begin{lemma}\label{lem:elst:cw_reduce_to_nice}
    It holds for all $j\in[s]_0$ and $\ord'\in  S'_j$ with $\ord' = (\iso, O', b')$ that $I_{\ord'} = \{i_{j+1},\dots, i_s\}$ and $b'([k]\times \{\sleaf, \sinner\}) \subseteq b([k]\times \{\sleaf, \sinner\})$. Further, it holds that $O'_{\ell} = O_{\ell}$ for all $\ell\in\{i_{j+1},\dots,i_s\}$.
\end{lemma}
\begin{proof}
    We prove the claim by induction over $j\in[s]$.   
    For $j=0$, the claim holds by \cref{lem:elst:cw_process_big}, since $I_{\ord'} = I_{\ord}\setminus b([k]\times \{\sleaf, \sinner\})$ for all $\ord' \in  S_0$. Moreover, $b'([k]\times\{\sleaf, \sinner\}) \subseteq b([k]\times \{\sleaf, \sinner\})$ holds since each pattern resulting from $\operatorname{ShiftVector}$ only shifts big vector assignments to big indices in $b([k]\times \{\sleaf, \sinner\})$.

    Now assume the claim holds for $j-1$. Let $\ord' = (\iso, O', b')$ be a pattern in $ S'_{j}$, and let $\ord^* = (\iso, O^*, b^*)$ be a pattern in $ S_{j-1}$ such that $\ord'\in \operatorname{ReduceVector}(\ord^*, i^{\sleaf}_j)$.
    Then by \cref{lem:elst:cw_reduce_vector_equiv}, we have that 
    $I_{\ord'}\subseteq I_{\ord^*}\setminus\{i_j\}$ and $b'([k\times \{\sleaf, \sinner\}])\subseteq b^*([k\times \{\sleaf, \sinner\}])$. Moreover, it holds by the induction hypothesis that $I_{\ord^*} = \{i_{j},\dots,i_s\}$, and $b^*([k\times \{\sleaf, \sinner\}])\subseteq b([k\times \{\sleaf, \sinner\}])$. It follows that $I_{\ord'} \subseteq \{i_{j+1},\dots,i_s\}$, and $b'([k\times \{\sleaf, \sinner\}])\subseteq b([k\times \{\sleaf, \sinner\}])$.
    
    Finally, the inclusion $\{i_{j+1},\dots,i_s\}\subseteq I_{\ord'}$ follows from the fact that $\operatorname{ShiftVector}$ only shifts to big indices, and that it holds by induction hypothesis that $b^*([k\times \{\sleaf, \sinner\}])\subseteq b([k\times \{\sleaf, \sinner\}])$ is distinct from all indices $i_1,\dots, i_s$. By the same observation, we also have that $O'_{i_{\ell}} = O^*_{i_{\ell}} = O_{i_{\ell}}$ for all $\ell\in\{j+1,\dots,s\}$.
\end{proof}

\begin{corollary}\label{cor:elst:cw_reduce_to_nice_output}
    For each $A\in\mathcal{P}$, the operation $\operatorname{ReduceToNice}(A)$ produces a family of nice patterns equivalent to $A$.
\end{corollary}
\begin{proof}
    It holds by \cref{lem:elst:cw_reduce_to_nice} that $I_{\ord'} = \emptyset$ for all $\ord' \in \bar{ S}_{\bar{s}}$. Since we remove all patterns containing a zero pair, by \cref{obs:elst:cw_ord_nice}, the output family is nice. 
    
    It holds by \cref{lem:elst:cw_process_big} that $S_0$ is equivalent to $\{\ord_A\}$ 
    We show that $S_j$ is equivalent to $S_{j-1}$ for all $j\in[s]$, and hence, it hold by induction over $j\in[s]_0$ that $ S_j$ is equivalent to $\{\ord_A\}$.
    It holds that $ S'_j$ is equivalent to $ S_{j-1}$ for all $j\in[s]$ by \cref{lem:elst:cw_reduce_vector_equiv}. The family $ S_j$ is then equivalent to $ S'_j$ by the fact that we only remove patterns that are either equivalent to other patterns in the family, or contain at least a zero pair. Since the latter patterns are not compatible with any partial solution, they are equivalent to the empty family, and they can be removed safely.
\end{proof}

\begin{lemma}\label{lem:elst:cw_reduce_to_nice_size}
    For each $j\in[s]_0$, $| S_j| \leq |\mathcal{P^*}| \leq n^{\bigoh(k)}$.
    Similarly, for each $j\in[\bar{s}]_0$, $|\bar{ S}_j| \leq |\mathcal{P^*}| \leq n^{\bigoh(k)}$.
\end{lemma}
\begin{proof}
    For $j=0$ this holds by \cref{lem:elst:cw_process_big}. Now assume $j>0$.
    By \cref{lem:elst:cw_reduce_to_nice}, all patterns in $ S_j$ have the same set of bad indices and identical vectors at bad indices, i.e., it holds for all $(O',b')\in  S_j$, $I_{O',b'} = \{i_{j+1}, \dots, i_s \}$, and for $i\in I_{O',b'}$, $O'_i = O_i$.
    Let $ S_j^*$ be the family resulting from $ S_j$ by removing all pairs $\{O_i\colon i\in \{i_{j+1}, \dots, i_s \}\}$. Then $ S_j^*$ contains nice patterns only, each containing at most $n$ vectors. Moreover, for any two patterns $(O',b'),(O'',b'')\in  S_j^*$, it holds that $p(O',b') \neq p(O'',b'')$, since otherwise, the original ordered patterns in $ S_j$ would share the same underlying pattern, a contradiction to our choice of of $ S_j$. Hence, the size of $ S_j^*$ (and hence, of $ S_j$) is bounded by the number of nice patterns with at most $n$ vectors (i.e., $|\mathcal{P^*}|$), which is at most $n^{\bigoh(k)}$, by \cref{obs:elst:sizenice}.
\end{proof}

\begin{corollary}\label{cor:elst:cw_reduce_to_nice_time}
    The operation $\operatorname{ReduceToNice}(A)$ runs in time $n^{\bigoh(k)}$ for each pattern $A\in\mathcal{P}$.
\end{corollary}
\begin{proof}
    By \cref{lem:elst:cw_process_big}, $\operatorname{ProcessBig}(\ord)$ runs in time $k^{\bigoh(k)}$.
    For each $j\in [s]$, we apply $\operatorname{ReduceVector}$ on each element of the family $ S_{j-1}$. This runs in time $| S_{j-1}|\cdot k^{\bigoh(k)}\poly(n) \leq n^{\bigoh(k)}$ by Lemmas~\ref{lem:elst:cw_reduce_vector_size} and \ref{lem:elst:cw_reduce_to_nice_size}, since each call of the $\operatorname{ShiftVector}$ runs in polynomial time. 
    Moreover, it holds by these two lemmas that $ S'_j$ has size $n^{\bigoh(k)}$. Hence, $ S_j$ can be computed from $ S'_j$ in time $(n^{\bigoh(k)})^2\poly(n) = n^{\bigoh(k)}$.
    Since $s\leq n$, the total running time is $n^{\bigoh(k)}$.
\end{proof}

\subsection{The Algorithm}

Before defining the algorithm, we describe an auxiliary operator that corresponds to relabeling vertices in the NLC-expression, i.e., it describes how a pattern corresponding to a partial solution changes after a relabel operation.
 \begin{definition}
    Let $\beta : [k] \to [k]$
    be some relabeling function. For a vector $v\in \mathbb{N}^{k}$, we define $\rho_{\beta}(v) = v'$, where $v'_i = \sum_{j \in \beta^{-1}(i)} v_j$, and for a set $S \subseteq [k]$, we define $\rho_{\beta}(S) = \big\{\beta(i)\colon i\in S\big\}$.
    Given a pattern $(\iso, \mset)\in \mathcal{P}$, we define
    $\rho_{\beta}(\iso, \mset) = \big(\rho_{\beta}(\iso), \rho_{\beta}(\mset)\big)$,
    where $\rho_{\beta}(\mset)=\big\langle\big(\rho_{\beta}(u), \rho_{\beta}(S)\big)\colon (u, S)\in \mset\big\rangle$.
\end{definition}

Now, we describe our algorithm. Intuitively, at each join node $x$ of the NLC-decomposition, in order to create all partial solutions at $x$ we iterate over all pairs of partial solutions at its children, and for each such pair we iterate over all possible ways of adding edges between them. However, to keep the number of created patterns bounded, we add the edges one by one, applying the $\operatorname{ReduceToNice}$ procedure on all resulting patterns after each addition. After that we apply the corresponding relabeling procedure on the resulting patterns and make our record nice one final time by applying $\operatorname{ReduceToNice}$ again.

Along this algorithm, we will deal with $2k$-labeled graphs, distinguishing between the labels of the vertices coming from the first and the second child of a join node. This will allow us to unify the partial solutions from the two children without conflicts, allowing us to add edges between them correctly.
 \begin{algorithm}
\label{def:elst:cw_algo}
    For each $x\in\nodes$, we define the families $T_x \subseteq \mathcal{P}^*$ recursively over $\syntaxtree$ as follows. For an initial node $\bullet_i$, we define $T_x = \left\{\big(\idv_i, \langle\rangle\big)\right\}$.

    For a join node $x$ with children $x_1$ and $x_2$, let $\alpha\subseteq [k]^2$ and $\beta : [k]\to[k]$ such that $G_x = G_{x_1} \oplus_{\alpha}^{\beta} G_{x_2}$.
    The algorithm computes the table $T_x$ as follows:
    
    First, the algorithm computes the set $T'$ that corresponds to a representation of all combinations of partial solutions from $T_{x_1}$ and $T_{x_2}$ after applying the join operation $\oplus_{\alpha}$. In order to do this, we first define the following sets $T^{(i)}$ for $i\in\mathbb{N}_0$. Let $\beta_0 : [2k]\rightarrow[2k]$ be the relabeling function defined as $\beta_0(i) = i + k$ for $i\in[k]$ and $\beta_0(i) = i$ otherwise, and $\beta_1 : [2k]\rightarrow[2k]$ be defined as $\beta_1(i) = i$ for $i\in[k]$ and $\beta_1(i) = i - k$ otherwise.
    Moreover, let $\alpha' := \{(x,\beta_0(y)) \colon (x,y) \in \alpha \}$.
    We define the set
    \[
    T^{(0)} := \bigcup_{(A_1, A_2) \in T_{x_1} \times T_{x_2}} \operatorname{ReduceToNice}(\repr(F_{A_1} \dot \cup \rho_{\beta_0}(F_{A_2}))).
    \]
    This forms all disjoint unions of partial solutions from the two children, where the labels of the second child are shifted by $k$.
    For $z > 0$, we define the sets $T^{(z)}$ from $T^{(z-1)}$ as follows:
    First, we define $T^{(z)}_0$, where for each pattern $A\in T^{(z-1)}$, and for each pair $\{u,v\} \in \big(V(F_A) \times V(F_A)\big) \setminus E(F_A)$
         with $(\lab_{F_A}(u), \lab_{F_A}(v) ) \in \alpha'$
    and such that the graph $F'$ resulting from adding the edge $\{u,v\}$ to $F_A$ is acyclic, the algorithm adds $\repr(F')$ to $T_0^{(z)}$. Finally, the algorithm defines $T^{(z)} = \bigcup_{A\in T_0^{(z)}} \operatorname{ReduceToNice}(A)$.
    
    We define the sets  $T^{(z)}_R$ for each value $z\geq 0$ as the set $T^{(z)}_R = \rho_{\beta_1}(T^{(z)})$, and $T^{(z)}_N = \bigcup_{A\in T^{(z)}_R} \operatorname{ReduceToNice}(A)$.
    Let $r$ be the smallest integer such that $T^{(r)} = T^{(r+1)}$. The algorithm computes $T' =  \bigcup_{z\in[r]_0} T^{(z)}_N$.
    Finally, the algorithm computes the set $T'' = \bigcup_{A\in T'} \rho_{\beta}(A)$ and the set $T_x =\bigcup_{A\in T''} \operatorname{ReduceToNice}(A)$.
    The algorithm outputs YES if $T_r$ contains a solution pattern $\elpat$ at the root node $r$, or outputs NO otherwise. 
    \xqed
\end{algorithm}

Now we show that the computed set $T_x$ is equivalent to the record $\mathcal{R}_x$ at each node $x\in\nodes$. First, we prove that adding a single edge between two labels preserves the representation of partial solutions. This lemma requires the finer notion of compatibility defined in \cref{def:elst:cw_compatible_forests}, which is essential for proving the correctness of the algorithm. We also show that the relabeling operation preserves equivalence.

\begin{lemma}\label{lem:elst:cw_single_edge_equiv}
    Let $\mathcal{R}_1$ and $\mathcal{R}_2$ be two families of patterns such that $\mathcal{R}_1\simeq \mathcal{R}_2$. For $i,j\in[k]$ with $i\neq j$, we define the families $\mathcal{R}'_1$ and $\mathcal{R}'_2$, where for each pattern $A\in \mathcal{R}_1$ and for each pair $(u,v)$ such that $\lab_{F_A}(u) = i$, $\lab_{F_A}(v) = j$, $uv \notin E(F_A)$, and the graph $F'$ obtained from $F_A$ by adding the edge $uv$, we add $\repr(F')$ to $\mathcal{R}'_1$. We define $\mathcal{R}'_2$ similarly from $\mathcal{R}_2$.
    Then, it holds that $\mathcal{R}'_1 \simeq \mathcal{R}'_2$.
\end{lemma}

\begin{proof}
    Let $\delta'\in \mathcal{A}$, $\ell\in\mathbb{N}_0$ and $F^*$ be a fixed forest such that there exists $F'_2 \in \mathcal{R}'_2$ that is $(\delta', \ell)$-compatible with $F^*$. We need to show that there exists $F'_1 \in \mathcal{R}'_1$ that is $(\delta', \ell)$-compatible with $F^*$. The other direction then follows by symmetry.
    
    Let $F^{**}$ be the forest obtained from $F^*$ by adding two new vertices $w_1$ and $w_2$ with labels $k+1$ and $k+2$ with an edge between them, and let $\delta:[k+2]\times[k+2]\rightarrow \mathbb{N}_0$ be the extension of $\delta'$ by setting $\delta(i, k+1) = \delta(j, k+2) = 1$ and $\delta(t,t') = 0$ for all other values of $t$ and $t'$ with $\{t, t' \} \cap \{k+1,k+2\} \neq \emptyset$.
    
    We claim that a fixed forest $F$ is $(\delta, \ell)$-compatible with the forest $F^{**}$ if and only if there exists a forest $F'$ that is $(\delta', \ell)$-compatible with $F^*$, such that $F'$ is obtained by adding an edge between a vertex labeled $i$ and a vertex labeled $j$ in $F$.

    Before we prove the claim, we show that the lemma follows from this claim. Since $F'_2$ is $(\delta', \ell)$-compatible with $F^*$, there exists by this claim a forest $F_2 \in \mathcal{R}_2$ that is $(\delta, \ell)$-compatible with $F^{**}$. But since $\mathcal{R}_1\simeq \mathcal{R}_2$, there exists a forest $F_1$ in $\mathcal{R}_1$ that is $(\delta, \ell)$-compatible with $F^{**}$. It follows again by the claim above that $\mathcal{R}'_1$ contains a forest $F_1'$ that is $(\delta', \ell)$-compatible with $F^*$ which proves the lemma.
    
    Now we prove the claim. Let $T$ be a $(\delta, \ell)$-spanning tree for $F$ and $F^{**}$. In $T$, $w_1$ has exactly one edge to a vertex in $F$ of label $i$, and similarly $w_2$ has exactly one edge to a vertex in $F$ of label $j$. Let $u$ and $v$ be these two vertices respectively.
          By replacing the path $uw_1w_2v$ path with the edge $\{u,v\}$, we obtain a $(\delta', \ell)$-spanning tree $T'$ for the forest $F'$ obtained from $F$. The other direction of the claim follows by reversing this argument, where we replace the edge $\{u,v\}$ added to create $F'$ in $T'$ with the path $uw_1w_2v$. Note that the number of leaves is the same in $T$ and $T'$, since the degrees of $u$ and $v$ do not change, and since $w_1$ and $w_2$ both have degree two in $T$, and hence, are not leaves.
\end{proof}

\begin{lemma}
\label{lem:elst:cw_relabel_equiv}
    Let $\mathcal{R}_1$ and $\mathcal{R}_2$ be two families of patterns such that $\mathcal{R} \simeq \mathcal{R}'$. For a relabeling function $\beta : [k]\to[k]$, we define the families $\mathcal{R}'_1 := \rho_{\beta}(\mathcal{R})$ and $\mathcal{R}'_2 := \rho_{\beta}(\mathcal{R}')$.
    Then, it holds that $\mathcal{R}'_1 \simeq \mathcal{R}'_2$.
\end{lemma}

\begin{proof}
    Let $\delta \in \mathcal{A}$, $\ell\in\mathbb{N}_0$ and $F^*$ be a fixed forest such that there exists $A'_1 \in \mathcal{R}'_1$ with $F_{A'_1}$ $(\delta, \ell)$-compatible with $F^*$. We need to show that there exists $A'_2 \in \mathcal{R}'_2$ such that $F_{A'_2}$ is $(\delta, \ell)$-compatible with $F^*$. The other direction then follows by symmetry.

    For a fixed forest $F$ and for $F' = \rho_{\beta}(F)$, we claim that $F'$ is $(\delta, \ell)$-compatible with $F^*$ if and only if there exists a mapping $\delta_0$ such that $\delta(i,j) = \sum\limits_{i' \in[k]\\ \beta(i') = i} \delta_0(i',j)$ for all $i,j\in[k]$, and $F$ is $(\delta_0, \ell)$-compatible with $F^*$. Before we prove the claim, we show that the lemma follows from this claim. Since $F_{A'_1}$ is $(\delta, \ell)$-compatible with $F^*$, there exists by this claim a mapping $\delta_0$ and a forest $F_{A_1}$ in $\mathcal{R}$ that is $(\delta_0, \ell)$-compatible with $F^*$. But since $\mathcal{R}\simeq \mathcal{R}'$, there exists a forest $F_{A_2}$ in $\mathcal{R}_2$ that is $(\delta_0, \ell)$-compatible with $F^*$. It follows again by the claim above that $\mathcal{R}'_2$ contains a pattern $A_2$ where $F_{A'_2}$ is $(\delta, \ell)$-compatible with $F^*$ which proves the lemma.

    Now we prove the claim. Let $T$ be a $(\delta, \ell)$-spanning tree for $F'$ and $F^*$. Then the tree $T_0$ resulting from $T$ by reverting the labels of the vertices of $F'$ to their original label in $F$ is a $(\delta_0, \ell)$-spanning tree for $F$ and $F^*$, where $\delta_0$ is defined as the number of cut edges between each pair of labels between $F$ and $F^*$ in $T_0$. It is not hard to see that $\delta$ and $\delta_0$ satisfy the relation in the claim. The other direction of the claim follows by reversing this argument, where we relabel the vertices of $F$ in $T_0$ according to $\beta$ to obtain $F'$, and we define $\delta$ as in the claim.
\end{proof}

\begin{lemma}
\label{lem:elst:cw_alg_corr}
    It holds for each $x\in \nodes$ that $T_x$ is equivalent to the record $\mathcal{R}_x$, i.e., $T_x\simeq \mathcal{R}_x$.
\end{lemma}

\begin{proof}
    We prove this lemma by induction over $\syntaxtree$. 
    Let $x$ be a $\bullet_i$ node and let $v$ be the only vertex of $G_x$. 
    Clearly, $G_x$ admits a single partial solution, namely, the graph $G_x$ itself, consisting of the isolated vertex $v$. Hence, it holds by the definition of $\repr$ that $\mathcal{R}_x=T_x$.

    For a join node $x\in\nodes$, let $x_1$ and $x_2$ be the children of $x$. 
    Let $\alpha_0 = \{(i, j+k)\colon (i,j)\in \alpha\}$, and $\hat{G} = G_{x_1}\oplus_{\alpha_0} \rho_{\beta_0}(G_{x_2})$. Let $\mathcal{R}'$ be the record of $\hat{G}$. By the induction hypothesis, $T_{x_1}\simeq \mathcal{R}_{x_1}$ and $T_{x_2}\simeq\mathcal{R}_{x_2}$.
    \begin{longclaim}\label{clm:elst:alg_equiv}
    It holds that $T' \simeq \mathcal{R}'$.
    \end{longclaim}
    \begin{claimproof}[Proof of \cref{clm:elst:alg_equiv}]
        Let $\mathcal{R}^{(0)}$ be the record corresponding to the disjoint union of all pairs of partial solutions from $\mathcal{R}_{x_1}$ and $\mathcal{R}_{x_2}$.
        We define the families $\mathcal{R}^{(z)}$ defined as the sets of all partial solutions that result from adding exactly $z$ edges between vertices whose labels belong to $\alpha_0$ to the partial solutions in $\mathcal{R}^{(0)}$.
        
        We claim that $T^{(z)} \simeq \mathcal{R}^{(z)}$ holds for all values of $z$. Before we prove the claim, we note that \cref{clm:elst:alg_equiv} follows from it by~\cref{cor:elst:cw_reduce_to_nice_output}, since it follows from the definition of equivalence and by \cref{lem:elst:cw_relabel_equiv} that both the union operation and relabeling preserve equivalence.
        
        Now we prove the claim by induction over $z$.
        It holds by \cref{cor:elst:cw_reduce_to_nice_output} that $T^{(0)} \simeq \mathcal{R}^{(0)}$. Now assume that $T^{(z-1)} \simeq \mathcal{R}^{(z-1)}$ holds for some $z>0$. It holds by \cref{lem:elst:cw_single_edge_equiv} that the partial solution resulting from adding a single edge between two labels preserves equivalence. Again, since it holds that the union operation preserves equivalence, and $\mathcal{R}^{(z)}$ can be defined from $\mathcal{R}^{(z-1)}$ by the union over all $(i, k+j) \in \alpha_0$ of adding a single edge between a vertex of label $i$ and a vertex of label $k+j$ to some partial solution, and since $T_0^{(z)}$ results from $T^{(z-1)}$ in the same way, it holds that $T^{(z)}_0 \simeq \mathcal{R}^{(z)}$.
        
        Hence, it follows from the transitivity of equivalence, and from \cref{cor:elst:cw_reduce_to_nice_output} that $T^{(z)} \simeq T^{(z)}_0 \simeq \mathcal{R}^{(z)}$, which completes the proof.
    \end{claimproof}

    Since $\mathcal{R}_x$ results from $\mathcal{R}'$ by applying the relabeling function $\beta$, it holds by \cref{lem:elst:cw_relabel_equiv} that $T'' \simeq \mathcal{R}_x$.
    Again, it follows from \cref{cor:elst:cw_reduce_to_nice_output} that $T_x\simeq \mathcal{R}_x$.
\end{proof} 
We now characterize the YES-instances:

\begin{corollary}
\label{cor:elst:cw_zero_acc}
    The instance $(G, \eltarget)$ is a YES-instance if and only if $T_r$ contains a solution pattern.
\end{corollary}

\begin{proof}
    Let $\elpat'$ be the empty pattern. Then it holds that $\elpat$ is $(\emptyset, \eltarget)$-compatible with $\elpat'$, if and only if $\elpat$ is a solution pattern.
    Since it holds by \cref{lem:elst:cw_alg_corr} that $T_r$ is equivalent to $\mathcal{R}_r$, it holds that $T_r$ contains a solution pattern if and only if $\mathcal{R}_r$ contains such a pattern. Finally, it holds by \cref{obs:elst:sol-pat} that this is the case if and only if $G$ is a YES-instance.
\end{proof} 
Finally, we bound the running time of our algorithm:
\begin{lemma}\label{lem:elst:cw_time}
    Algorithm~\ref{def:elst:cw_algo} runs in time $n^{\bigoh(k)}$.
\end{lemma}

\begin{proof}
    Since the size of the decomposition tree is polynomial in $n$, it suffices to bound the running time for a single node $x$. This is trivial for a leaf node. For an inner node $x$ with children $x_1$ and $x_2$, since $T_{x_1}$ and $T_{x_2}$ contain at most $n^{\bigoh(k)}$ patterns each, by \cref{cor:elst:cw_reduce_to_nice_time}, the computation of $T^{(0)}$ takes time $n^{\bigoh(k)}$. Since it holds by \cref{cor:elst:cw_reduce_to_nice_output} that $T^{(z)}$ contains only nice patterns, and since for each such pattern we iterate over at most $n^2$ different ways of adding a single edge, and since we call $\operatorname{ReduceToNice}$ once after adding each edge, it follows by \cref{cor:elst:cw_reduce_to_nice_time} that the computation of each $T^{(z)}$ takes time $n^{\bigoh(k)}$ for each value of $z$. Since $r$ is bounded by the number of added edges, and hence, by $n^2$, computing $T'$ can be done in $n^{\bigoh(k)}$. Since relabeling a single pattern can be done in polynomial time, also $T''$ can be computed in time $n^{\bigoh(k)}$. Thus, by \cref{cor:elst:cw_reduce_to_nice_time}, computing $T_x$ takes time $n^{\bigoh(k)}$.
\end{proof}

Now we are ready to prove the main theorem of this section:

\begin{proof}[Proof of Theorem~\ref{thm:elst-cw-ub}]
    Given an instance $(G,\eltarget)$ of \textsc{Exact Leaf Spanning Tree} together with an NLC-decomposition of width $k$, we run Algorithm~\ref{def:elst:cw_algo}. By Corollary~\ref{cor:elst:cw_zero_acc}, the input instance is a YES-instance if and only if $T_r$ contains a solution pattern. The total running time is $n^{\bigoh(k)}$, by Lemma~\ref{lem:elst:cw_time}.
\end{proof}

\section{Concluding Remarks}
\noindent
Our results not only provide a comprehensive and in-depth overview of the fine-grained complexity of computing a fundamental class of spanning trees, but also develop new techniques which we believe are applicable to other problems, including those related to graph factoring. 
One task left for future work is closing the small remaining gap between the upper bound for treewidth (Theorem~\ref{thm:tw-ub}) and the lower bound for pathwidth (Theorem~\ref{thm:tw_lb}); presently, it seems difficult to obtain further progress in either of these two directions. We remark that a similar gap also exists for the classical \textsc{Hamiltonian Cycle} problem~\cite{CyganKN18}, which has a tight $\bigoh^*((2+\sqrt{2})^{\pw(G)})$ lower and upper bound, but no matching tight bound for treewidth is known.

Alternatively, one may ask whether using more restrictive structural parameterizations---such as the \emph{treedepth} of the input graph---can be used to design fixed-parameter algorithms for instances with large $r$, similarly as was achieved in Theorem~\ref{thm:ctw-ub}.
Finally, a long-standing open question in the field~\cite{CyganNPPRW22,NederlofPSW23} is whether Cut\&Count based algorithms can be derandomized while achieving the same asymptotic running time. To the best of our knowledge, all known deterministic techniques for connectivity problems parameterized by pathwidth (and treewidth) yield single exponential running time but with a blow up in the base of the exponent \cite{BodlaenderCKN15, Wlodarczyk19, FominLS14}.

\bibliography{ref}

\end{document}